\let\oldFootnote\footnote
\newcommand\nextToken\relax
\renewcommand\footnote[1]{%
	\oldFootnote{#1}\futurelet\nextToken\isFootnote}
\newcommand\isFootnote{%
	\ifx\footnote\nextToken\textsuperscript{,}\fi}
\providecommand{\tabularnewline}{\\}
\declaretheoremstyle[
headfont=\normalfont\bfseries, 
%numbered=unless unique,
bodyfont=\normalfont,
spaceabove=1em plus 0.75em minus 0.25em,
spacebelow=1em plus 0.75em minus 0.25em,
qed={$\triangle$},
]{defstyle2}
\definecolor{ForestGreen}{rgb}{0.1333,0.5451,0.1333}
\definecolor{DarkRed}{rgb}{0.8,0,0}
\definecolor{Red}{rgb}{1,0,0}
\declaretheorem[numberwithin=section,refname={Theorem,Theorems},Refname={Theorem,Theorems},name={Theorem}]{thm}
\declaretheorem[numberlike=thm,refname={Theorem,Theorems},Refname={Theorem,Theorems},name={Theorem}]{theorem}
\declaretheorem[numberlike=thm,refname={Lemma,Lemmas},Refname={Lemma,Lemmas},name={Lemma}]{lem}
\declaretheorem[numberlike=thm,refname={Corollary,Corollaries},Refname={Corollary,Corollaries},name={Corollary}]{cor}
\declaretheorem[numberlike=thm,refname={Question,Questions},Refname={Question,Questions},name={Question}]{question}
\declaretheorem[numberlike=thm,refname={Fact,Facts},Refname={Fact,Facts},name={Fact}]{fact}
\declaretheorem[numberlike=thm,refname={Proposition,Propositions},Refname={Proposition,Propositions},name={Proposition}]{prop}
\declaretheorem[numberlike=thm,refname={Assumption,Assumptions},Refname={Assumption,Assumptions}]{assumption}
\declaretheorem[style=defstyle2,numberlike=thm,refname={Example,Examples},Refname={Example,Examples}]{example}
\declaretheorem[style=defstyle2,numberlike=thm,refname={Definition,Definitions},Refname={Definition,Definitions},name={Definition}]{defn}
\declaretheorem[style=remark,numberwithin=section,refname={Remark,Remarks},Refname={Remark,Remarks},name={Remark}]{rem}
\declaretheorem[style=remark,numberlike=thm,refname={Claim,Claims},Refname={Claim,Claims}]{claim}
\renewcommand\equationautorefname{\@gobble}
\let\ref\Cref
\renewcommand{\paragraph}{%
	\@startsection{paragraph}{4}%
	{\z@}{1.25ex \@plus 1ex \@minus .2ex}{-1em}%
	{\normalfont\normalsize\bfseries}%
}
\DeclareFixedFootnote{\repQuasi}{Again, in the word-RAM model, quasi-polynomial preprocessing time and space is allowed.}
\DeclareFixedFootnote{\repOblivious}{A note to readers who are familiar with the {\em oblivious adversaries assumption} for randomized dynamic algorithms: This assumption plays no role for decision problems, since an algorithm that is correct with high probability (w.h.p.) under this assumption is also correct w.h.p. without the assumption (in other words, its output reveals no information about its randomness). Because of this, we do not discuss this assumption in this paper.}
\def\thatchaphol#1{\marginpar{$\leftarrow$\fbox{T}}\footnote{$\Rightarrow$~{\sf\textcolor{purple}{#1 --Thatchaphol}}}}
\def\danupon#1{\marginpar{$\leftarrow$\fbox{D}}\footnote{$\Rightarrow$~{\sf\textcolor{orange}{#1 --Danupon}}}}
\def\sayan#1{\marginpar{$\leftarrow$\fbox{S}}\footnote{$\Rightarrow$~{\sf\textcolor{green}{#1 --Sayan}}}}
\newcommand\dan[1]{{\sf \textcolor{orange}{#1}}}
\def\danupon#1{}
\def\thatchaphol#1{}
\def\sayan#1{}
\newcommand\dan[1]{}
\begin{document}

\title{Coarse-Grained Complexity for Dynamic Algorithms}
\date{}

\author{
Sayan Bhattacharya\thanks{
	University of Warwick, UK} 
\and 
Danupon Nanongkai\thanks{
 	KTH Royal Institute of Technology, Sweden}
\and
Thatchaphol Saranurak\thanks{
	Toyota Technological Institute at Chicago, USA. Work partially done while at KTH Royal Institute of Technology.}
}

\global\long\def\R{\mathcal{R}}
\global\long\def\L{\mathcal{L}}
\global\long\def\I{\mathcal{I}}
\global\long\def\A{\mathcal{A}}
\global\long\def\M{\mathcal{M}}
\global\long\def\D{\mathcal{D}}
\global\long\def\O{\mathcal{O}}
\global\long\def\V{\mathcal{V}}
\global\long\def\T{\mathcal{T}}
\global\long\def\F{\mathcal{F}}
\global\long\def\U{\mathcal{U}}
\global\long\def\X{\mathcal{X}}
\global\long\def\Y{\mathcal{Y}}
\global\long\def\Z{\mathcal{Z}}
\global\long\def\P{\mathcal{P}}
\global\long\def\G{\mathcal{G}}
\global\long\def\E{\mathcal{E}}
\global\long\def\X{\mathcal{X}}
\global\long\def\C{\mathcal{C}}
\global\long\def\opt{\textsf{opt}}
\global\long\def\B{\mathcal{B}}

\global\long\def\drankNP{\mathsf{NP}^{dy}}
\global\long\def\drankcoNP{\mathsf{rankcoNP}^{dy}}
\global\long\def\poly{\mathrm{poly}}
\newcommand{\polylog}{\operatorname{polylog}}
\global\long\def\dDNF{\text{DNF}^{dy}}
\global\long\def\sDNF{\text{Search-DNF}^{dy}}
\global\long\def\dOV{\text{OV}^{dy}}
\global\long\def\dAW{\text{Allwhite}^{dy}}
\global\long\def\dIndep{\text{Indep}^{dy}}

\global\long\def\fDNF{\text{First-DNF}^{dy}}
\global\long\def\fDT{\text{First-DT}^{dy}}

\global\long\def\mem{\mathtt{mem}}
\global\long\def\init{\mathtt{prep}}
\global\long\def\time{\mathsf{Time}}
\global\long\def\inp{\mathtt{in}}
\global\long\def\qsize{\mathsf{BlowUp}}
\global\long\def\outp{\mathtt{out}}
\global\long\def\out{\mathtt{out}}
\global\long\def\work{\mathtt{work}}
\global\long\def\prf{\mathtt{pf}}
\global\long\def\inst{\mathtt{inst}}
\global\long\def\priv{\mathtt{priv}}

\global\long\def\kconn{k\mbox{-}\mathsf{EdgeConn}^{dy}}
\global\long\def\conn{\mathsf{Conn}^{dy}}
\global\long\def\spanningForest{\mathsf{SpanningForest}^{dy}}
\global\long\def\kminTreeCut{k\mbox{-}\mathsf{MinTreeCut}^{dy}}

\newcommand{\dP}{{\sf P}^{dy}}
\newcommand{\dPram}{{\sf RAM-P}^{dy}}
\newcommand{\dPprb}{{\sf Probe-P}^{dy}}
\newcommand{\dFP}{{\sf FP}^{dy}}
\newcommand{\dprmP}{{\sf promise-P}^{dy}}

\newcommand{\dBPP}{{\sf BPP}^{dy}}
\newcommand{\dBPPram}{{\sf RAM-BPP}^{dy}}
\newcommand{\dBPPprb}{{\sf Probe-BPP}^{dy}}
\newcommand{\dFBPP}{{\sf FBPP}^{dy}}

\newcommand{\dNP}{{\sf NP}^{dy}}
\newcommand{\dNPram}{{\sf RAM-NP}^{dy}}
\newcommand{\dNPprb}{{\sf Probe-NP}^{dy}}
\newcommand{\dFNP}{{\sf FNP}^{dy}}
\newcommand{\dTFNP}{{\sf TFNP}^{dy}}

\newcommand{\dMA}{{\sf AM}^{dy}}
\newcommand{\dcoMA}{{\sf coAM}^{dy}}
\newcommand{\dAM}{{\sf AM}^{dy}}
\newcommand{\dcoAM}{{\sf coAM}^{dy}}

\newcommand{\dcoNP}{{\sf coNP}^{dy}}
\newcommand{\dcoNPram}{{\sf RAM-coNP}^{dy}}
\newcommand{\dcoNPprb}{{\sf Probe-coNP}^{dy}}

\newcommand{\drankNPram}{{\sf RAM-rankNP}^{dy}}
\newcommand{\drankNPprb}{{\sf Probe-rankNP}^{dy}}

\newcommand{\dPH}{{\sf PH}^{dy}}
\newcommand{\dPHram}{{\sf RAM-PH}^{dy}}
\newcommand{\dPHprb}{{\sf Probe-PH}^{dy}}
\newcommand{\dprmPH}{{\sf promise-PH}^{dy}}
\newcommand{\dSigma}{\Sigma^{dy}}
\newcommand{\dPi}{\Pi^{dy}}

\newcommand{\relaxed}{\mbox{\sf relaxed}\text{-}}
\newcommand{\patrascu}{P{\v a}tra{\c s}cu \xspace}

\begin{titlepage}	
	\pagenumbering{roman}
	\maketitle
	\ifdefined\DEBUG
	\begin{center}
		{\centering\huge\textcolor{red}{DEBUG VERSION}}
	\end{center}
	\fi
	
	\begin{abstract}

	To date, the only way to argue {\em polynomial lower bounds} for dynamic algorithms is via {\em fine-grained complexity arguments}. These arguments rely on strong assumptions about {\em specific problems} such as the Strong Exponential Time Hypothesis (SETH) and the Online Matrix-Vector Multiplication Conjecture (OMv). While they have led to many exciting discoveries,  dynamic algorithms still miss out some benefits and lessons from the traditional ``coarse-grained'' approach that relates together classes of problems such as P and NP. In this paper we initiate the study of coarse-grained complexity theory for dynamic algorithms. Below are among questions that this theory can answer.

{\bf What if dynamic Orthogonal Vector (OV) is easy in the cell-probe model?} 
A research program for proving {\em polynomial unconditional lower bounds} for dynamic OV in the cell-probe model is motivated by the fact that many conditional lower bounds can be shown via reductions from the dynamic OV problem (e.g. [Abboud, V.-Williams, FOCS 2014]). 
Since the cell-probe model is more powerful than word RAM and has historically allowed smaller upper bounds (e.g. [Larsen, Williams, SODA 2017; Chakraborty, Kamma,  Larsen, STOC 2018]), it might turn out that dynamic OV is easy in the cell-probe model, making this research direction infeasible. 
Our theory implies that if this is the case, there will be very interesting algorithmic consequences: If dynamic OV can be maintained in polylogarithmic worst-case update time in the cell-probe model, then so are several important dynamic problems such as $k$-edge connectivity,  $(1+\epsilon)$-approximate mincut, $(1+\epsilon)$-approximate matching, planar nearest neighbors, Chan's subset union and 3-vs-4 diameter. The same conclusion can be made when we replace dynamic OV by, e.g.,  subgraph connectivity, single source reachability, Chan's subset union, and 3-vs-4 diameter. 

{\bf Lower bounds for $k$-edge connectivity via dynamic OV?} The ubiquity of reductions from dynamic OV 
raises a question whether we can prove conditional lower bounds for, e.g., $k$-edge connectivity, approximate mincut, and approximate matching, via the same approach. Our theory provides a method to refute such possibility (the so-called {\em non-reducibility}). In particular, we show that there are no ``efficient'' reductions (in both cell-probe and word RAM models) from dynamic OV to $k$-edge connectivity under an assumption about the classes of dynamic algorithms whose analogue in the static setting is widely believed. 
We are not aware of any existing assumptions that can play the same role. (The NSETH of Carmosino et al. [ITCS 2016] is the closest one, but is not enough.)
To show similar results for other problems, one only need to develop efficient randomized {\em verification protocols} for such problems. 
\end{abstract}

	\newpage 
	
	\setcounter{tocdepth}{2}
	\tableofcontents
	
\end{titlepage}

\newpage

\part{Extended Abstract}
\label{part:extended:abstract}
\pagenumbering{arabic}

\iffalse

\input{introduction}

\input{contributions}

\input{related}

\input{open}

\input{table}

\input{overview}

\newpage

\fi

\section{Introduction}\label{sec:intro}

In  a {\em dynamic problem}, we first get an input instance for {\em preprocessing}, and subsequently we have to handle a sequence of {\em updates} to the input.
For example, in the graph connectivity problem \cite{HenzingerK99,KapronKM13}, an $n$-node graph $G$ is given to an algorithm to preprocess. 
Then the algorithm has to answer whether $G$ is connected or not after each edge insertion and deletion to $G$. (Some dynamic problems also consider {\em queries}. (For example, in the connectivity problem an algorithm may be queried whether two nodes are in the same connected component or not.) Since queries can be phrased as input updates themselves, we will focus only on updates in this paper 
(see~\ref{sec:model} for formal definitions and examples.) 
Algorithms that handle dynamic problems are known as {\em dynamic algorithms}. 
The {\em preprocessing time} of a dynamic algorithm is the time it takes to handle the initial input, whereas the worst-case {\em update time} of a dynamic algorithm is the {\em maximum} time it takes to handle any update.
Although dynamic algorithms are also analyzed in terms of their {\em amortized update times}, %
we emphasize that the results in this paper deal only with worst-case update times.
A holy grail for many dynamic problems -- especially those concering dynamic {\em graphs}  under edge deletions and insertions -- is to design algorithms with
{\em polylogarithmic update times}. From this perspective, the computational status of many classical dynamic problems still remain widely open.

\paragraph{Example: Family of Connectivity Problems.} 
A famous example of a widely open question is for the family of connectivity problems:
(i) The problem of maintaining whether the input dynamic graph is connected (the dynamic {\em  connectivity} problem) admits a {\em randomized} algorithm with polylogarithmic worst-case update time. It is an active, unsettled line of research to determine whether it admits  deterministic polylogarithmic worst-case update time (e.g.~\cite{Frederickson85,EppsteinGIN92,HenzingerK99,HolmLT98,Thorup00,PatrascuD06,KapronKM13,Wulff-Nilsen17,NanongkaiS17,NanongkaiSW17}). 
(ii) The problem of maintaining whether the input dynamic graph can be disconnected by deleting an edge (the dynamic {\em $2$-edge connectivity} problem) admits polylogarithmic amortized update time \cite{HolmLT98,HolmRT18}, but its worst-case update time (even with randomization) remains polynomial \cite{Thorup01}. (iii) For dynamic $k$-edge connectivity with $k\geq 3$, the best update time -- amortization and randomization allowed -- suddenly jumps to $\tilde O(\sqrt{n})$ where 
$\tilde O$ hides polylogarithmic terms. Indeed, it is a major open problem to maintain a $(1+\epsilon)$-approximation to the value of global minimum cut in a dynamic graph in polylogarithmic update time~\cite{Thorup01}. Doing so for $k$-edge connectivity with $k=O(\log n)$ is already sufficient to solve the general case.

Other dynamic  problems that are not known to admit polylogarithmic update times include approximate matching, shortest paths, diameter, max-flow, etc.~\cite{Thorup01,Sankowski07}.
Thus, it is natural to ask: {\em can one argue that these problems do not admit efficient dynamic algorithms?}

A traditionally popular approach to  answer the question above is to use information-theoretic arguments in the {\em bit-probe/cell-probe model}. 
In this model of computation,  all the operations are free {\em except} memory accesses. (In more details, the bit-probe model concerns the number of bits accessed, while the cell-probe model concerns the number of accessed cells, typically of logarithmic size.) Lower bounds via this approach are usually {\em unconditional}, meaning that it does not rely on any assumption. 
Unfortunately, this approach could only give small lower bounds so far; and getting a super-polylogarithmic lower bound for {\em any} natural dynamic problem is an outstanding open question is this area~\cite{Larsen17boolean}. 

More recent advances towards answering this question arose from a new area called {\em fine-grained complexity}. While traditional complexity theory (henceforth we refer to it as  {\em coarse-grained complexity}) focuses on classifying problems based on resources and relating together resulting classes (e.g. P and NP), fine-grained complexity gives us {\em conditional} lower bounds in the word RAM model based on various assumptions about specific problems. For example, assumptions that are particularly useful for dynamic algorithms are the Strong Exponential Time Hyposis (SETH), which concerns the running time for solving SAT,  and the Online Matrix-Vector Multiplication Conjecture (OMv), which concerns the running time of certain matrix multiplication methods (more at, e.g., \cite{Patrascu10,AbboudW14,OMV}). 
In sharp contrast to cell-probe lower bounds, these assumptions often lead to polynomial lower bounds in the word RAM model, many of which are tight.

While the  fine-grained complexity approach has led to many exciting lower bound results, there are a number of traditional results in the static setting that seem to have {\em no} analogues in the dynamic setting. 
For example, one reason that makes the $P\neq NP$ assumption so central in the static setting is that proving and disproving it will both lead to stunning consequences:  If the assumption is false then hundreds of problems in NP and bigger complexity classes like the polynomial hierarchy (PH) admit efficient algorithms; otherwise the situation will be the opposite.\footnote{For further consequences see, e.g., \cite{Aaronson17,Cook2006p,Cook03} and references therein.}
In contrast, we do not see any immediate consequence to dynamic algorithms if someone falsified SETH, OMv, or any other assumptions.\footnote{An indirect consequence would be that some barriers were broken and we might hope to get better upper bounds. This is however different from when P=NP where many hard problems would immediately admit efficient algorithms. Note that some consequences of falsifying SETH have been shown recently (e.g. \cite{AbboudBDN18,Williams13,GaoIKW17,CyganDLMNOPSW16,SanthanamS12,JahanjouMV15,DantsinW13}); however, we are not aware of any consequence to dynamic algorithms. It might also be interesting to note that Williams~\cite{Williams18-likelihood} estimates the likelihood of SETH to be only 25\%.}
As another example, comparing complexity classes allows us to speculate on various situations such as non-reducibility (e.g. \cite{AharonovR05,KlivansM02,Condon92}), the existence of NP-intermediate problems \cite{Ladner75} and the derandomization possibilities (e.g. \cite{ImpagliazzoW97}). (See more in~\ref{sec:open}.)
We cannot anticipate results like these in the dynamic setting without the coarse-grained approach, i.e. by considering analogues of P, NP, BPP and other complexity classes that are defined based on computational resources.

\paragraph{Our Main Contributions.} We initiate a systematic study of coarse-grained complexity theory for dynamic problems in the bit-probe/cell-probe model of computation. We now mention a couple of concrete implications that follow from this study.

Consider  the {\em dynamic Orthogonal Vector} (OV) problem (see~\ref{main:def:OV}). Lower bounds conditional on SETH for many natural problems (e.g.  Subgraph connectivity, ST-reachability, Chan's subset union,  3-vs-4 Diameter) are based on reductions from dynamic OV~\cite{AbboudW14}.
This suggests two research directions: (I) Prove strong {\em unconditional} lower bounds for many natural problems in one shot by proving a polynomial cell-probe lower bound for dynamic OV. (II) Prove lower bounds conditional on SETH for the family of connectivity problems mentioned in the previous page via reductions (in the word RAM model) from dynamic OV. 
Below are some questions about the feasibility of these research directions that our theory can answer. We are not aware of any other technique in the existing literature that can provide similar conclusions.

\smallskip
\noindent {\bf (I) What if dynamic OV is easy in the cell-probe model?}
For the first direction, there is a risk that dynamic OV might turn out to admit a polylogarithmic update time algorithm in the cell-probe model. This is because  lower bounds  in the word RAM model do not necessarily extend to the cell-probe model. For example, it was shown by Larsen and Williams~\cite{LarsenW17} and later by Chakraborty~et~al~\cite{ChakrabortyKL17} that the OMv conjecture \cite{OMV} is false in the cell-probe model. 

Will all the efforts be wasted if dynamic OV turns out to admit polylogarithmic update time in the cell-probe model? Our theory implies that this will also lead to a very interesting algorithmic consequence: If dynamic OV admits polylogarithmic update time in the cell-probe model, so do several important dynamic problems such as $k$-edge connectivity,  $(1+\epsilon)$-approximate mincut, $(1+\epsilon)$-approximate matching, planar nearest neighbors, Chan's subset union and 3-vs-4 diameter.
The same conclusion can be made when we replace dynamic OV by, e.g., subgraph connectivity, single source reachability, Chan's subset union, and 3-vs-4 diameter (see~\ref{thm:intro:results_summary}). 
Thus, there will be interesting consequences regardless of the outcome of this line of research.%

Roughly, we reach the above conclusions by proving in the dynamic setting an analogue of the fact that if P=NP (in the static setting), then the polynomial hierarchy (PH) collapses. This is done by carefully defining the classes $\dP$, $\dNP$ and $\dPH$ as dynamic analogues of P, NP, and PH, so that we can prove such statements, along with $\dNP$-completeness and $\dNP$-hardness results for natural dynamic problems including dynamic OV. We sketch how to do this in~\ref{sec:intro:contribution},~\ref{sec:overview}.

\smallskip
\noindent {\bf (II) Lower bounds for $k$-edge connectivity via dynamic OV?} As discussed above, whether dynamic $k$-edge connectivity admits  polylogarithmic update time for  $k\in [3, O(\log n)]$ %
 is a very important open question. There is  a hope to answer this question negatively via reductions (in the word RAM model) from dynamic OV. 
Our theory provides a method to refute such a possibility (the so-called {\em non-reducibility}). First, note that any reduction from dynamic OV in the word RAM model will also hold in the (stronger) cell-probe model. Armed with this simple observation, we show that there are no ``efficient'' reductions from dynamic OV to $k$-edge connectivity under an assumption about the complexity classes for dynamic problems in the cell-probe model, namely $\dPH \not\subseteq \dMA \cap \dcoMA$ (see~\ref{thm:intro:ruleout_reductions}). We defer defining the classes $\dMA$ and $\dcoMA$, but note two things. (i) Just as the classes AM and  coAM (where AM stands for {\em Arthur-Merlin}) extend NP in the static setting, the classes $\dMA$ and $\dcoMA$ extend the class $\dNP$ in a similar manner. (ii) In the static setting it is widely believed that PH $\not\subseteq$ AM $\cap$ coAM because otherwise the PH collapses.
Roughly, the phrase ``{\em efficient reduction}" from problems $X$ to $Y$  refers to a way of processing each update for problem $X$ by quickly feeding polylogarithmic number of updates as input to an algorithm for $Y$ (see~\ref{sec:reduction} for details). 
All reductions from dynamic OV in the literature%
that we are aware of are efficient reductions. 

\paragraph{Remark.} We define our complexity classes in the cell-probe model, whereas the reductions from dynamic OV are in the word RAM model. This does not make any difference, however, since any reduction in the  word RAM reduction continues to have the same  guarantees in the (stronger) cell-probe model.

\medskip
To show a similar non-reducibility result for any problem $X$, one needs to prove that $X\in \dMA \cap \dcoMA$, which boils down to developing efficient randomized {\em verification protocols} for such problems. We explain this in more details in~\ref{sec:intro:contribution} and~\ref{sec:rand_classes}.

We are not aware of any existing assumptions that can lead the same conclusion as above. 
To our knowledge, the only conjecture that can imply results of this nature is the  Nondeterministic Strong Exponential Time Hypothesis (NSETH) of~\cite{CarmosinoGIMPS16}. However, it needs a stronger property of $k$-edge connectivity that is not yet known to be true. (In particular,~\ref{thm:intro:ruleout_reductions} follows from the fact that $k$-edge connectivity is in  $\dMA \cap \dcoMA$. To use NSETH, we need to show that it is in $\dNP\cap \dcoNP$.) Moreover, even if such a property holds it would only rule out deterministic reductions since NSETH only holds for deterministic algorithms.

\smallskip
\noindent {\bf Paper Organization.} \ref{part:extended:abstract} of the paper is organized as follows.~\ref{sec:intro:contribution} explains our contributions in details, including the conclusions above and beyond. We discuss related works and future directions in~\ref{sec:related:work},~\ref{sec:open}. An overview of our main $\dNP$-completeness proof is in~\ref{sec:overview}. 

In~\ref{part:formalization} we formalize the required notions. Although some of them are tedious, they are crucial to get things right.~\ref{part:completeness} provides detailed proofs of our $\dNP$-completeness results.~\ref{part:further_classes} extends our study to other classes, including those with randomization. This is needed for the non-reducibility result mentioned above, discussed in details in~\ref{part:connectivity}.

The focus of this paper is on the cell-probe model. Our results in the word RAM model are mostly similar. We discuss in more details in~\ref{part:RAM}.

\section{Our Contributions in Details} 
\label{sec:intro:contribution}

We show that coarse-grained complexity results similar to the static setting can be obtained for dynamic problems in the bit-probe/cell-probe model of computation,\footnote{Throughout the paper, we use the cell-probe and bit-probe models interchangeably since the complexity in these models are the same up to polylogarithmic factors.}   provided the notion of ``nondeterminism'' is carefully defined.
Recall that the cell-probe model is similar to the word RAM model, but the time complexity is measured by the number of memory reads and writes (probes); other operations are free (see~\ref{sec:model:dynamic}). 
Like in the static setting, we only consider {\em decision} dynamic problems,
meaning that the output after each update is either ``yes'' or ``no''.
Note the following remarks. 
\begin{itemize}
\item Readers who are familiar with the traditional complexity theory may wonder why we do not consider the Turing machine. This is because the Turing machine is not suitable for implementing dynamic algorithms, since we cannot access an arbitrary tape location in $O(1)$ time. There is no efficient algorithm even for a basic data structure like the binary search tree. 
\item Our results for decision problems extend naturally to {\em promised problems} which are useful when we discuss approximation algorithms. We do not discuss promised problems here to keep the discussions simple. 
\item  Readers who are familiar with the oblivious adversaries assumption for randomized dynamic algorithms may wonder if we consider this assumption here.
This assumption plays no role for decision problems, since an algorithm that is correct with high probability (w.h.p.)
under this assumption is also correct w.h.p. without the assumption (in other words, its output reveals no information
about its randomness). Because of this, we do not discuss this assumption in this paper.
\end{itemize}

We start with our main results which can be obtained with 
appropriate definitions of complexity classes $\dP \subseteq \dNP \subseteq \dPH$ for dynamic problems:
These classes are described in details later.  For now they should be thought of as analogues of the classes P, NP and PH (polynomial hierarchy). 

\begin{theorem}($\dP$ vs. $\dNP$) \label{thm:intro:results_summary} Below, the phrase ``efficient algorithms'' refers to dynamic algorithms that are deterministic and require  polylogarithmic worst-case update time and  polynomial space to handle a polynomial number of updates in the bit-probe/cell-probe model. 
\begin{enumerate}%
	\item \label{item:NP-hard-problems} The dynamic orthogonal vector (OV) problem is ``$\dNP$-complete", and there are a number of dynamic problems that are ``$\dNP$-hard'' in the sense that if $\dP\neq \dNP$, then they admit {\em no} efficient algorithms.
	These problems include decision versions of Subgraph connectivity, ST-reachability, Chan's subset union,  and 3-vs-4 Diameter (see~\ref{table:BPP-NP},~\ref{table:NP hard} for more).%
	\item \label{item:PH-problems} If $\dP=\dNP$ then $\dP=\dPH$, meaning that all problems in $\dPH$ (which contains the class $\dNP$) admit efficient algorithms. These problems include decision versions of $k$-edge Connectivity, $(1+\epsilon)$-approximate Matching,\footnote{Technically speaking, $(1+\epsilon)$-approximate matching is a {\em promised} or {\em gap} problem in the sense that for some input instance all answers are correct. It is in $\dprmPH$ which is bigger than $\dPH$. We can make the same conclusion for promised problems: If $\dP=\dNP$, then all problems in $\dprmPH$ admit efficient algorithms.} $(1+\epsilon)$-approximate mincut, Planar nearest neighbors, Chan's subset union and 3-vs-4 Diameter (see~\ref{table:BPP-NP},~\ref{table:PH} for more). 
\end{enumerate}
\end{theorem}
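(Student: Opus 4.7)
The plan is to prove the two parts separately, reusing the $\dNP$-completeness of $\dDNF$ (Theorem 1.3 / \ref{main:thm:NP:complete}) as a black box together with existing SETH-style reductions from $\dOV$. For \ref{item:NP-hard-problems}, I would first argue that $\dOV \in \dNP$: on each update, the proof is the index $j$ of a vector $v_j \in V$ that is claimed to be orthogonal to $u$, together with (if needed) a list of its at most $\polylog(m)$ nonzero coordinates. The verifier reads the $\polylog(m)$ designated bits of $u$ at these coordinate positions and outputs $x=\mathrm{YES}$ with reward $y=1$ iff all of them are $0$; otherwise $x=\mathrm{NO}$ with $y=0$. This clearly satisfies the soundness/reward-maximization conditions of~\ref{main:def:NP}. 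For $\dNP$-hardness I would give a direct reduction from $\dDNF$ to $\dOV$: variables $z_i$ become coordinates of $u$ (with an entry for $z_i$ and one for $\neg z_i$), and each clause $C_j$ is encoded as a vector $v_j$ whose nonzero entries mark exactly the indices of the literals appearing in $C_j$; the clause evaluates to true iff $u$ (under the obvious encoding of the assignment $\phi$) is orthogonal to $v_j$. Since each clause has $\polylog$ literals, the sparsity condition of $\dOV$ is satisfied, and an $\dOV$ update corresponds to flipping a single coordinate of $u$, matching a $\dDNF$ variable flip.

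The $\dNP$-hardness of the other problems listed in~\ref{table:NP hard} (Subgraph connectivity, ST-reachability, Chan's subset union, 3-vs-4 Diameter, etc.) is then obtained for free by composing our reduction from $\dDNF$ to $\dOV$ with the Abboud--Williams reductions from $\dOV$ to each of these problems in~\cite{AbboudW14}, after verifying that each of those reductions is an efficient reduction in the sense of~\ref{sec:reduction} (they all process each $\dOV$ update using $\polylog$ updates to the target problem, so this is essentially bookkeeping).

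For \ref{item:PH-problems}, the key step is the collapse statement: if $\dP = \dNP$ then $\dP = \dPH$. I would prove this by induction on the level $i$ of the hierarchy, using that $\dSigma_i$ is defined as $\dNP$ with an oracle for a $\dSigma_{i-1}$ problem (see~\ref{sec:PH}). The base case $i=1$ is the assumption. For the inductive step, $\dP = \dSigma_{i-1}$ means any $\dSigma_{i-1}$-oracle call can be replaced by an efficient (polylog worst-case update time, polynomial space) deterministic dynamic algorithm; plugging such algorithms into a $\dSigma_i = \dNP^{\dSigma_{i-1}}$ verifier turns it into a plain $\dNP$ verifier (the oracle calls add only polylog overhead per update), so $\dSigma_i \subseteq \dNP = \dP$, and symmetrically for $\dPi_i = \dcoNP^{\dSigma_{i-1}}$. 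Taking the union over all $i$ gives $\dPH = \dP$. The promise variant $\dprmPH \subseteq \dP$ follows from the same induction lifted to promise classes.

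What remains is to place each named problem inside $\dPH$. For connectivity this is~\ref{ex:intro:connectivity in NP}; for $k$-edge connectivity and $(1+\epsilon)$-approximate mincut this is~\ref{ex:intro:k-con in Pi2} (a $\dPi_2$ protocol where the outer adversary guesses a cut of size $<k$ and the inner $\dNP$ subroutine maintains connectivity of the residual graph); for $(1+\epsilon)$-approximate matching, Chan's subset union, planar nearest neighbors and 3-vs-4 diameter I would design analogous constant-level verifier protocols, where the proof at each level is a succinct witness (a matching edge, a subset index, a candidate nearest neighbor, or a pair of vertices at distance $>3$) that the next-level subroutine can validate efficiently. The main obstacle here is the case-by-case verifier design for each problem, since each requires its own combinatorial witness together with a verifier whose reward structure guarantees that locally greedy proofs aggregate into globally consistent certificates across the update sequence; this is where the reward-maximizing formulation of~\ref{main:def:NP} does the heavy lifting, as illustrated in the spanning-forest maintenance of~\ref{ex:intro:connectivity in NP}.
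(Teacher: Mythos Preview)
Your proposal is correct and follows essentially the same route as the paper. The paper likewise derives $\dNP$-completeness of $\dOV$ from that of $\dDNF$ via the literal-as-coordinate encoding you describe (this is exactly the $\dDNF \leftrightarrow \dOV$ equivalence in~\ref{prop:DNF equiv}), then imports the Abboud--Williams reductions to obtain~\ref{cor:list rankNP hard}; the $\dPH$ collapse is proved by the same level-by-level induction replacing the $\dSigma_{i-1}$ oracle by a $\dP$ algorithm (\ref{th:ph:collapse:1}), and the placement of the listed problems in $\dPH$ is done problem-by-problem with exactly the kind of short-witness verifiers you sketch (\ref{sec:PH:examples}).
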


Thus, proving or disproving $\dP\neq \dNP$ will both lead to interesting consequences: If $\dP\neq \dNP$, then many dynamic problems do not admit efficient algorithms. Otherwise, if $\dP = \dNP$, then many problems  admit efficient algorithms which are not known or even believed to exist.

\paragraph{Remark.} We can obtain  similar results in the word-RAM model, but we need a notion of ``efficient algorithms'' that is slightly non-standard in that a quasi-polynomial preprocessing time is allowed. (In contrast, all our results hold in the standard cell-probe setting.) We postpone discussing word-RAM results to later in the paper to avoid confusions. 

\medskip
As another showcase, our study implies a way to show {\em non-reducibility}, like below. 

\begin{theorem}\label{thm:intro:ruleout_reductions}
Assuming  $\dPH \not\subseteq \dMA \cap \dcoMA$, the $k$-edge connectivity problem cannot be $\dNP$-hard.  Consequently, there is no ``efficient reduction'' from the dynamic Orthogonal Vector (OV) problem to $k$-edge connectivity. 
\end{theorem}

\iffalse

\fi

From the discussion in~\ref{sec:intro}, recall that the $k$-edge connectivity problem is currently known to admit a polylogarithmic amortized update time algorithm for $k\leq 2$, and a $O(\sqrt{n}\polylog (n))$ update time algorithm for $k\in [3, O(\log n)].$
It is a very important open problem whether it admits polylogarithmic worst-case update time.~\ref{thm:intro:ruleout_reductions} rules out a way to prove lower bounds and suggest that an efficient algorithm might exist. 

A more important point beyond the $k$-edge connectivity problem is that one can prove a similar result for any dynamic problem $X$ by showing that  $X\in  \dMA \cap \dcoMA$ or, even better, $X\in  \dNP \cap \dcoNP$. See~\ref{sec:open} for some candidate problems for $X$.
This is easier than showing a dynamic algorithm for $X$ itself. Thus, this method is an example of the by-products of our study that we expect to be useful for developing algorithms and lower bounds for dynamic problems in  future. See~\ref{sec:contribution:other:results}  and~\ref{part:connectivity} 
for more details. 
As noted in~\ref{sec:intro}, we are not aware of any  existing technique 
that is capable of deriving a non-reducibility result of this kind.

The key challenge in deriving the above results is to come up with the right set of definitions for  various dynamic complexity classes. 
 We provide some of these definitions and discussions here, but defer more details to later in the paper.%

\subsection{Defining the Complexity Classes $\dP$ and $\dNP$}
\label{sec:contribution:classes}

\noindent {\bf Class $\dP$.}
We start with $\dP$, the class of dynamic problems that admit ``efficient'' algorithms in the cell-probe model.  
For any dynamic problem, define its {\em update size} to be the number of bits needed to describe each update.	Note that we have not yet defined what dynamic problems are formally. Such a definition is needed for a proper, rigorous  description of our complexity classes, and can be found in the full version of the paper. For an intuition, it suffices to keep in mind that most dynamic graph problems -- where each update is an edge deletion or insertion -- have logarithmic update size (since it takes $O(\log n)$ bits to specify an edge in an $n$-node graph).
\begin{defn}[$\dP$; brief]\label{def:intro:dP}
A dynamic problem with polylogarithmic update size is in $\dP$ if it admits a deterministic algorithm with polylogarithmic worst-case update time for handling a sequence of polynomially many updates.%
\end{defn}

Examples of problems in $\dP$ include connectivity on plane graphs and predecessor; for more, see~\ref{table:P}. 
Note that one can define $\dP$ more generally to include problems with larger update sizes. Our complexity results hold even with this more general definition. However, since our results are most interesting for problems with polylogarithmic update size, we focus on this case in this paper to avoid cumbersome notations.

\paragraph{Class $\dNP$ and nondeterminism with rewards.}
Next, we introduce our complexity class $\dNP$. Recall that in the static setting the class NP consists of the set of problems that admit efficiently verifiable proofs or, equivalently, that are solvable in polynomial time by a nondeterministic algorithm. 
Our notion of nondeterminism is captured by the proof-verification definition where, after receiving  a proof, the verifier does not only output YES/NO, but also a {\em reward}, which is supposed to be maximized  at every step. 

Before defining $\dNP$ more precisely, we remark that the notion of reward is a key for our $\dNP$-completeness proof. Having the constraint about rewards potentially makes $\dNP$ contains less problems.  Interestingly, all natural problems that we are aware of remains in $\dNP$ even with this constraint. This might not be a big surprise, when it is realized that in the static setting imposing a similar constraint about the reward does {\em not} make the class (static) NP smaller; see more discussions below.
We now define $\dNP$ more precisely.
\begin{defn}[$\dNP$; brief]\label{def:intro:dNP}
	A dynamic problem $\Pi$ with polylogarithmic update size is in $\dNP$ if there is a verifier %
	that can do the following over a sequence of  polynomially many updates:
	(i) after every update, the verifier takes the update and a polylogarithmic-size {\em proof} as an input, and
	(ii) after each update, the verifier outputs in polylogarithmic time a pair $(x, y)$, where $x\in \{YES,NO\}$%
	and $y$ is an integer (representing a reward) with the following properties.\footnote{Later in the paper, we use $x=1$ and $x=0$ to represent $x=YES$ and $x=NO$, respectively.}
	\begin{enumerate}%
		\item If the current input instance is an YES-instance and the verifier has so far always received a proof that maximizes the reward at every step, then the verifier outputs $x=YES$.
		\item If the current input instance is a NO-instance, then the verifier outputs $x=NO$  regardless of the sequence of proofs it has received so far. %
	\end{enumerate}
	\end{defn}

To digest the above definition, first consider the {\em static}  setting. One can redefine the class NP for static problems in a similar fashion to~\ref{def:intro:dNP} by removing the preprocessing part and letting  the only update be the whole input. Let us refer to this new (static) complexity class as ``{\em reward-NP}''.
To show that a static problem is in reward-NP, a verifier has to output some reward in addition to the YES/NO answer. Since usually a proof received in the static setting is a solution itself, a natural choice for reward is the {\em cost} of the solution (i.e., the proof). For example, a ``proof'' in the maximum clique problem is a big enough clique, and in this case an intuitive reward would be the size of the clique given as a proof. Observe that this is sufficient to show that max clique is in reward-NP. In fact, it turns out that {\em in the static setting the complexity classes NP and reward-NP are equal}. (Proof sketch: Let $\Pi$ be any problem in the original static NP and $V$ be a corresponding verifier. We extend $V$ to $V'$  which outputs $x=YES/NO$ as $V$ and outputs $y=1$ as a reward if $x=YES$ and $y=0$ otherwise. It is not hard to check that $V'$ satisfies conditions in~\ref{def:intro:dNP}.)

To further clarify~\ref{def:intro:dNP},  we now consider examples of some well-known dynamic problems that happen to be in $\dNP$.

\begin{example}[Subgraph Detection]\label{ex:subgraphDetect_inNP}
In the {\em dynamic subgraph detection} problem, an $n$-node and $k$-node graphs $G$ and $H$ are given at the preprocessing, for some $k=\polylog(n)$.  Each update is an edge insertion or deletion in $G$. We want an algorithm to output YES if and only if $G$  has $H$ as a subgraph. 

This problem is in $\dNP$ due to the following verifier: the verifier outputs $x=YES$ if and only if the proof (given after each update) is a mapping of the edges in $H$ to the edges in a subgraph of $G$ that is isomorphic to $H$. With output $x=YES$, the verifier gives  reward $y=1$. With output $x=NO$, the verifier gives reward $y=0$.  Observe that the proof  is of polylogarithmic size (since $k = \polylog (n))$, and the verifier can calculate its outputs $(x,y)$ in polylogarithmic time. Observe further that the properties stated in~\ref{def:intro:dNP} are satisfied: if the current input instance is a YES-instance, then the reward-maximizing proof is a mapping between $H$ and the subgraph of $G$ isomorphic to $H$, causing the verifier to output $x=YES$; otherwise, no proof will make the verifier output $x=YES$.  
\end{example}

The above example is in fact too simple to show the strength of our definition that allows $\dNP$ to include many natural problems (for one thing, $y$ is simply 0/1 depending on $x$). 
The next example demonstrates how the definition allows us to develop more sophisticated verifiers for other problems.

\begin{example}[Connectivity]\label{ex:intro:connectivity in NP}
In the {\em dynamic connectivity} problem, an $n$-node graph $G$ is given at the preprocessing.  Each update is an edge insertion or deletion in $G$. We want an algorithm to output YES if and only if $G$ is connected. 

This problem is in $\dNP$ due to the following verifier. After every update, the verifier   maintains a forest $F$ of $G$. A proof (given after each update) is an edge insertion to $F$ or an $\bot$ symbol indicating that there is no update to $F$.  It handles each update as follows.
\begin{itemize}%
	\item After an edge $e$ is inserted into $G$, the verifier checks if $e$ can be inserted into $F$ without creating a cycle. This can be done in $O(\polylog(n))$ time using a link/cut tree data structure \cite{SleatorT81}. It outputs reward $y=0$. (No proof  is needed in this case.)
	\item After an edge $e$ is deleted from $G$,  the verifier checks if $F$ contains $e$. If not, it outputs  reward $y=0$ (no proof  is needed in this case). If $e$ is in $F$, the verifier reads the proof (given after $e$ is deleted). If the proof is $\bot$ it outputs reward $y=0$. Otherwise, let the proof be an edge $e'$. The verifier checks if $F'=F\setminus\{e\}\cup \{e'\}$ is a forest; this can be done in $O(\polylog(n))$ time using a link/cut tree data structure \cite{SleatorT81}. If $F'$ is a forest, the verifier sets $F\gets F'$ and outputs reward $y=1$; otherwise, it outputs reward $y=-1$.  
\end{itemize}
After each update, the verifier outputs $x=YES$ if and only if $F$ is a spanning tree of $G$.

Observe that if the verifier gets a proof that maximizes the reward after every update, the forest $F$ will always be a spanning forest (since inserting an edge $e'$ to $F$ has higher reward than giving $\bot$ as a proof). Thus, the verifier will always output $x=YES$ for YES-instances in this case. It is not hard to see that the verifier never outputs $x=YES$ for NO-instances, no matter what proof it receives. 
\end{example}

In short, a proof for the connectivity problem is the maximal spanning forest. Since such proof is too big to specify and verify after every update, our definition allows such proof to be updated over input changes. (This is as opposed to specifying the densest subgraph from scratch every time as in~\ref{ex:subgraphDetect_inNP}.) Allowing this is crucial for most problems to be in $\dNP$, but create difficulties to prove $\dNP$-completeness. We remedy this by introducing rewards.

Note that if there is no reward in~\ref{def:intro:dNP}, then it is even easier to show that dynamic connectivity and other problems are in $\dNP$. Having an additional constraint about rewards potentially makes less problems verifiable.  Luckily, all natural problems that we are aware of that were verifiable without rewards remain verifiable with rewards.
Problems in $\dNP$ include decision/gap versions of $(1+\epsilon)$-approximate matching, planar nearest neighbor, and dynamic 3SUM; see~\ref{table:BPP-NP} for more. The concept of rewards (introduced while defining the class $\dNP$) will turn to be crucial when we attempt to show the existence of a complete problem in  $\dNP$. See~\ref{sec:contribution:np:completeness} and~\ref{sec:overview} for more details.

It is fairly easy to show that $\dP \subseteq \dNP$,  and we conjecture that $\dP \neq \dNP$.

\paragraph{Previous nondeterminism in the dynamic setting.}
The idea of nondeterministic dynamic algorithms is not completely new.
This was considered by Husfeldt and Rauhe \cite{HusfeldtR03} and their follow-ups \cite{PatrascuD06,PatrascuThesis,Yin10,LarsenNonDeterministic,WangY14}, and has played a key role in proving cell-probe lower bounds in some of these papers. 
As discussed in \cite{HusfeldtR03}, although it is straightforward to define a nondeterministic dynamic algorithm as the one that can make nondeterministic choices to process each update and query, there are different ways to handle how nondeterministic choices affect the states of algorithms which in turns affect how the algorithms handle future updates (called the ``side effects'' in \cite{HusfeldtR03}). For example, in \cite{HusfeldtR03} nondeterminism is allowed only for answering a query, which happens to occur only once at the very end. In \cite{PatrascuD06}, nondeterministic query answering may happen throughout, but an algorithm is allowed to write in the memory (thus change its state) only if all nondeterministic choices lead to the same memory state.

In this paper we define a different notion of nondeterminism and thus the class $\dNP$. It is more general than the previous definitions in that if a dynamic problem admits an efficient nondeterministic algorithm according to the previous definitions, it is in our $\dNP$.  
In a nutshell, the key differences are that (i) we allow nondeterministic steps while processing both updates and queries and (ii) different choices of nondeterminism  can affect the algorithm's states in different ways; however, we distinct different choices by giving them different rewards. 
These differences allow us to include more problems to our $\dNP$ (we do not know, for example, if dynamic connectivity admits nondeterministic algorithms according to previous definitions).

\subsection{$\dNP$-Completeness}
\label{sec:contribution:np:completeness}

Here, we sketch the idea behind our $\dNP$-completeness and hardness results. 
We begin by introducing a  problem is called {\em dynamic narrow DNF evaluation problem} (in short, $\dDNF$), as follows.

\begin{defn}[$\dDNF$; informally]\label{defn:intro:dnf}
Initially, we have to preprocess (i) an $m$-clause $n$-variable DNF formula\footnote{Recall that a DNF formula is in the form $C_{1}\vee\dots\vee C_{m}$, where each ``clause'' $C_i$ is a conjunction (AND) of literals.} where each clause contains $O(\polylog(m))$ literals, and (ii) an assignment of (boolean) values to the variables. Each update  changes the value of one variable. After each update, we have to answer whether the DNF formula is true or false. 
\end{defn}

It is fairly easy to see that $\dDNF \in \dNP$: After each update, if the DNF formula happens to be true, then the proof only needs to point towards one satisfied clause, and the verifier can quickly check if this clause is satisfied or not since it contains only $O(\polylog (m))$ literals. Surprisingly, it turns out that this is also a complete problem  in the classs $\dNP$.

\begin{theorem}[$\dNP$-completeness of $\dDNF$]\label{thm:intro:DNF}
The $\dDNF$ problem is $\dNP$-complete. This means that  $\dDNF\in \dNP$, and if $\dDNF\in \dP$, then $\dP=\dNP$.
\end{theorem}

To start with, recall the following intuition for proving NP-completeness in the static setting  (e.g. \cite[Section 6.1.2]{AroraBarak_book} for details): Since {\em Boolean circuits} can simulate polynomial-time Turing machine computation (i.e. $P \subseteq P \slash \mathsf{poly}$), we view the computation of the verifier $V$ for any problem $\Pi$ in NP as a circuit $C$. The input of $C$ is the proof that $V$ takes as an input. Then, determining whether there is an input (proof) that satisfies this circuit (known as  {\em CircuitSAT}) is NP-complete, since such information will allow the verifier to find a desired proof on its own.  Attempting to extend this intuition to the dynamic setting might encounter the following roadblocks.

\begin{enumerate}%
\item Boolean circuits cannot efficiently simulate algorithms in the RAM model without losing a linear factor in running time.%
Furthermore, an alternative such as circuits with ``indirect addressing'' gates seems useless, because this complex gate makes the model more complicated. 
This makes it more difficult to prove  $\dNP$-hardness.%
\item Since the verifier has to work through several updates in the dynamic setting, the YES/NO output from the verifier alone is insufficient to indicate proofs that can be useful for {\em future} updates. For example, suppose that in~\ref{ex:intro:connectivity in NP}   the connectivity verifier is allowed to output only $x\in \{YES, NO\}$, and we get rid of the concept of a reward.  
Consider a scenario where 
an edge $e$ (which is part of $F$) gets deleted from $G$, and $G$ was disconnected even before this deletion. In this case, the verifier can indicate no difference between having $e'$ (i.e. finding a reconnecting edge) and $\bot$ (i.e. doing nothing) as a proof (because it has to output $x=0$ in both cases). Having $e'$ as a proof, however, is more useful for the future, since it  helps maintain a spanning forest.
\end{enumerate}

It so happens that we can solve (ii) if the verifier additionally outputs an integer $y$ as a reward. Asking more from the verifier  makes less problems verifiable (thus smaller $\dNP$).  Luckily, all natural problems we are aware of that were verifiable without rewards remain verifiable with rewards!

To solve (i), we use the fact that in the cell-probe model a polylogarithmic-update-time algorithm can be modeled by a polylogarithmic-depth {\em decision assignment tree} \cite{Miltersen99cellprobe}, which naturally leads to a complete problem about a decision tree (we leave details here; see~\ref{sec:overview}  and~\ref{sec:DT} 
for more).
It turns out that we can reduce from this problem to $\dDNF$ (\ref{defn:intro:dnf}); the intuition being that each bit in the main memory corresponds to a boolean variable and each root-to-leaf path in the decision assignment tree can be thought of as a DNF clause. The only downside of this approach is that a  polylogarithmic-depth decision tree  has quasi-polynomial size. A straightforward reduction would cause quasi-polynomial space in the cell-probe model. By exploiting the special property of $\dDNF$ and the fact that the cell-probe model only counts the memory access, we can avoid this space blowup by ``hardwiring'' some space usage into the decision tree and reconstruct some memory when needed.

The fact that the $\dDNF$ problem is $\dNP$-complete (almost) immediately implies  that many well-known dynamic problems are $\dNP$-hard. To explain why this is the case, we first recall the definition of the \emph{dynamic sparse orthogonal vector} ($\dOV$) problem.

\begin{defn}[$\dOV$]\label{def:OV}\label{main:def:OV}
Initially, we have to preprocess a collection of $m$ vectors  $V = \{ v_1, \ldots, v_{m} \}$ where each $v_j \in \{0,1\}^n$, and another vector $u \in \{0,1\}^n$.  It is guaranteed that each $v_j \in \{0,1\}^n$ has at most $O(\polylog (m))$ many nonzero entries. Each update flips the value of one entry in the vector $u$. After each update, we have to answer if there is a vector $v \in V$ that is orthogonal to $u$ (i.e., if $u^T v = 0$).
\end{defn}

\medskip
The key observation is that the $\dOV$ problem is equivalent to the $\dDNF$ problem, in the sense that $\dOV \in \dP$ iff $\dDNF \in \dP$. The proof is relatively straightforward (the vectors $v_j$ and the individual entries of $u$ respectively correspond to the clauses and the variables in $\dDNF$), and we defer it to~\ref{sec:DNF equiv proof}. 
In~\cite{AbboudW14}, Abboud and Williams show SETH-hardness for all
of the problems in~\ref{main:table:NP hard}. In fact, they
actually show a reduction from $\dOV$ to these problems. Therefore, we immediately obtain the following result.
\begin{cor}
\label{main:cor:list rankNP hard}All problems in~\ref{main:table:NP hard}
are $\dNP$-hard.
\end{cor}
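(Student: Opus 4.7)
The plan is to derive the corollary as a direct composition of two ingredients already in hand: (i) the $\dNP$-hardness of $\dOV$, and (ii) the reductions from $\dOV$ to each of the listed problems supplied by Abboud and Williams~\cite{AbboudW14}. First, I would invoke the equivalence $\dOV \equiv \dDNF$ (proved in~\ref{sec:DNF equiv proof}): the clauses of the DNF correspond to the vectors $v_j$, the variables to the coordinates of $u$, and a clause being satisfied by the current assignment corresponds exactly to the existence of a $v_j$ orthogonal to $u$. Combined with~\ref{thm:intro:DNF}, this gives that $\dOV$ is $\dNP$-complete, hence $\dNP$-hard. Since $\dNP$-hardness is transitive under $\dP$-reducibility, it then suffices to exhibit, for every problem $\Pi$ listed in~\ref{main:table:NP hard}, a $\dP$-reduction from $\dOV$ to $\Pi$.

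Next, I would walk through each row of the table and check that the reduction of Abboud-Williams~\cite{AbboudW14} from $\dOV$ to $\Pi$ is an efficient reduction in our sense (see~\ref{sec:reduction}): each update to the $\dOV$ instance is translated into at most $\polylog$ updates to an instance of $\Pi$, with preprocessing and per-step translation time polylogarithmic in the size of the $\dOV$ instance. These properties follow essentially verbatim from~\cite{AbboudW14}, because the reductions there are engineered precisely so that an $n^{1-\epsilon}$-time algorithm for $\Pi$ would yield an $n^{2-\epsilon'}$-time algorithm for $\dOV$; in particular the reduction is update-by-update and introduces only $\polylog$ overhead per update (after possibly a polynomial preprocessing). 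Since every reduction in the word-RAM model is automatically a valid reduction in the (stronger) cell-probe model, the existence of these efficient reductions carries over to our setting without modification.

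The main obstacle, and really the only non-routine step, is to verify that each individual reduction in the table indeed satisfies the formal requirements of a $\dP$-reduction as defined in~\ref{sec:reduction}: namely that the update-size, update-time, and preprocessing bounds all meet the polylog/poly thresholds demanded by~\ref{def:intro:dP}. For problems like Subgraph connectivity, ST-reachability, $\#s$-reach, SC2, $st$-max-flow, 3-vs-4 diameter, Chan's subset union, and Pagh's problem this is immediate because~\cite{AbboudW14} builds a graph whose size is polynomial in the $\dOV$ instance and encodes each flip of a coordinate of $u$ by a $O(1)$ or $\polylog$ sequence of edge insertions/deletions (or node on/off switches). I would handle these case by case but only sketch the template: describe the constructed $\Pi$-instance at preprocessing, describe the $\polylog$-size update sequence simulating one coordinate flip in $u$, and check that the answer to the $\dOV$-query at the current step can be read off from the $\Pi$-answer (or from a constant number of $\Pi$-queries, each of which is itself absorbed into an update).

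Finally, once all rows of~\ref{main:table:NP hard} have been certified, the corollary follows by transitivity: $\dOV$ is $\dNP$-hard, and $\dOV \leq_{\dP} \Pi$ for each $\Pi$ in the table, so $\Pi$ is $\dNP$-hard. I do not foresee any technical surprises beyond bookkeeping the update sizes; the conceptual content is entirely in~\ref{thm:intro:DNF} and the preexisting reductions of~\cite{AbboudW14}.
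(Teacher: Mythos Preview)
Your proposal is correct and follows essentially the same approach as the paper: invoke the $\dNP$-completeness of $\dDNF$, pass to $\dOV$ via the equivalence, and then cite the Abboud--Williams reductions as $\dP$-reductions to conclude by transitivity. The paper's argument is terser (it simply asserts that the reductions in~\cite{AbboudW14} are reductions from $\dOV$ and stops there), whereas you propose a more explicit case-by-case verification; one detail worth flagging when you do that walk-through is that for $3$-vs-$4$ diameter and $ST$-reachability the reduction in~\cite{AbboudW14} actually goes through a dynamic $3$-OV variant rather than $\dOV$ directly, so you will need the (trivial) intermediate reduction $\dOV \le 3\text{-OV}^{dy}$ to complete those two rows.
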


\subsection{Dynamic Polynomial Hierarchy}

By introducing the notion of {\em oracles}  (see~\ref{sec:model:dynamic}), 
it is not hard to extend the class $\dNP$ into a {\em polynomial-hierarchy} for dynamic problems, denoted by $\dPH$. Roughly,  $\dPH$ is the union of classes $\dSigma_i$ and $\dPi_i$, where (i) $\dSigma_1= \dNP$, $\dPi_1=\dcoNP$,  and (ii) we say that a dynamic problem is in class $\dSigma_i$ (resp. $\dPi_i$) if we can show that it is in $\dNP$  (resp. $\dcoNP$) assuming that there are efficient dynamic algorithms for problems in $\Sigma_{i-1}$.   The details appear in~\ref{sec:PH}.

\begin{example}[$k$- and (<$k$)-edge connectivity]\label{ex:intro:k-con in Pi2}
In the {\em dynamic $k$-edge connectivity} problem,  an $n$-node graph $G = (V, E)$ and a parameter $k = O(\polylog (n))$ is given at the time of preprocessing.  Each update is an edge insertion or deletion in $G$.  We want an algorithm to output YES if and only if $G$ has connectivity {\em at least $k$}, i.e. removing at most $k-1$ edges will {\em not} disconnect $G$. We claim that this problem is in $\dPi_2$. To avoid dealing with $\dcoNP$, we consider the complement of this problem called {\em dynamic (<$k$)-edge connectivity}, where $x=YES$ if and only if $G$ has connectivity less than $k$. We show that  (<$k$)-edge connectivity is in $\dSigma_2$.

We already argued in~\ref{ex:intro:connectivity in NP} that dynamic connectivity is in $\dNP=\dSigma_1$. Assuming that there exists an efficient (i.e. polylogarithmic-update-time) algorithm  $\cal A$ for dynamic connectivity, we will show that  (<$k$)-edge connectivity is in $\dNP$. 
Consider the following verifier $\cal V$.  After every update in $G$, the verifier $\cal V$ reads a proof that is supposed to be a set $S \subseteq E$ of at most $k-1$ edges. $\cal V$ then  sends the update to $\cal A$ and also tells $\cal A$ to delete the edges in $S$ from $G$. If $\cal A$ says that $G$ is not connected at this point, then the verifier $\cal V$ outputs $x=YES$ with reward $y=1$; otherwise, the verifier $\cal V$  outputs $x=NO$ with reward $y=0$. Finally, $\cal V$ tells $\cal A$ to add the edges in $S$ back in $G$. 

Observe that if $G$ has connectivity less than $k$ and the verifier always receives a proof that maximizes the reward, then the proof will be a set of edges disconnecting the graph and $\cal V$ will answer YES. Otherwise, no proof can make $\cal V$ answer YES. Thus the dynamic (<$k$)-edge connectivity problem is in $\dNP$ if $\cal A$ exists. In other words, the problem is in $\dSigma_2$.
\end{example}

By arguments similar to the above example, we can show that other problems such as Chan's subset union and small diameter are in $\dPH$; see~\ref{table:PH} and~\ref{sec:PH:examples} 
for more.

The theorem that plays an important role in our main conclusion (\ref{thm:intro:results_summary}) is the following. 

\begin{theorem}\label{thm:intro:PH collapse}
If $\dP=\dNP$, then $\dPH=\dP$. 
\end{theorem}

To get an idea how to proof the above theorem, observe that if $\dP=\dNP$, then $\cal A$ in Example~\ref{ex:intro:k-con in Pi2} exists and thus dynamic (<$k$)-edge connectivity are in $\dSigma_1=\dNP$ by the argument in~\ref{ex:intro:connectivity in NP}; consequently, it is in $\dP$! This type of argument can be extended to all other problems in $\dPH$. 

\subsection{Other Results and Remarks}
\label{sec:contribution:other:results}

In previous subsections, we have stated two complexity results, namely $\dNP$-completeness/hardness and the collapse of $\dPH$ when $\dP=\dNP$. With right definitions in place, it is not a surprise that more can be proved. For example, we  obtain the following results:

\begin{enumerate}%
	\item \label{thm:intro:NPcoNP} If $\dNP\subseteq \dcoNP$, then $\dPH=\dNP\cap \dcoNP$.
	\item \label{thm:intro:MAcoMA} If $\dNP\subseteq  \dMA \cap \dcoMA$, then $\dPH \subseteq \dMA \cap \dcoMA$.
\end{enumerate}
Here, $\dcoNP$, $\dMA$, and $\dcoMA$ are analogues of complexity classes coNP, AM, and coAM. %
 See~\ref{th:PH:collapse:next,lem:if NP in coMA} for more details.

While the coarse-grained complexity results in this paper are mostly resource-centric (in contrast to fine-grained complexity results that are usually centered around problems), we also show that this approach is helpful for understanding the complexity of specific problems as well, in the form of {\em non-reducibility}.
In particular, the following results  are  shown in~\ref{part:connectivity}:%

\begin{enumerate}%
	
	\item \label{item:intro:non_reducibility_1}Assuming $\dPH\neq \dNP\cap \dcoNP$, the two statements {\em cannot} hold at the same time.
	\begin{enumerate}%
		\item Connectivity is in $\dcoNP$. (This would be the case if it is in $\dP$.)
		\item One of the following problems is $\dNP$-hard: approximate minimum spanning forest (MSF), $d$-weight MSF,%
		bipartiteness, 
		and $k$-edge connectivity. 
	\end{enumerate}
	
	\item $k$-edge connectivity is in $\dMA \cap \dcoMA$. Consequently,  assuming  $\dPH \not\subseteq \dMA \cap \dcoMA$, $k$-edge connectivity cannot be $\dNP$-hard. 
\end{enumerate}

Note that both ${\sf PH}\neq {\sf NP}\cap {\sf coNP}$ and ${\sf PH} \not\subseteq {\sf AM} \cap {\sf coAM}$ are widely believed in the static setting since refuting them means collapsing ${\sf PH}$. While we can show that $\dPH$ would also collapse if $\dPH= \dNP\cap \dcoNP$, it remains open whether this is the case for $\dPH \subseteq \dMA \cap \dcoMA$; in particular is $\dPH \supseteq\dMA \cap \dcoMA$?

When a problem $Y$ cannot be $\dNP$-hard, there is no efficient reduction from an $\dNP$-hard problem $X$ to  $Y$, where an efficient reduction is roughly to a way to handle each update for problem $X$ by making polylogarithmic number of updates to an algorithm for $Y$ (such reduction would make $Y$ an $\dNP$-hard problem). Consequently, this rules out efficient  reductions from dynamic OV, since it is $\dNP$-complete. As a result, this rules out a common way  to prove lower bounds based on SETH, since previously this was mostly done via reductions from dynamic OV \cite{AbboudW14}. (A lower bound for dynamic diameter is among a very few exception \cite{AbboudW14}.)

\subsection{Relationship to Fine-Grained Complexity} 
\label{sec:contribution:fine:grained}

As noted earlier, it turns out that the dynamic OV problem is $\dNP$-complete. Since most previous reductions from SETH to dynamic problems (in the word RAM model) are in fact reductions from dynamic OV~\cite{AbboudW14}, and since any reduction in the word RAM model applies also in the (stronger) cell-probe model, we get many $\dNP$-hardness results for free. In contrast, our results above  imply that the following two statements are equivalent: (i)   ``problem $\Pi$ cannot be $\dNP$-hard'' and (ii) ``there is no {\em efficient} reduction from dynamic OV to $\Pi$'', where ``efficient reductions'' are reduction that only polynomially blow up the instance size (all reductions in \cite{AbboudW14} are efficient). In other words, we may not expect reductions from SETH that are similar to the previous ones for $k$-edge connectivity, bipartiteness, etc. 

Finally, we emphasize that the coarse-grained approach should be viewed as a complement of the fine-grained approach, as the above results exemplify. We do not expect to replace results from one approach by those from another.

\subsection{Complexity classes for dynamic problems in the word RAM model}
\label{sec:contribution:word:RAM}

As an aside, we managed to define complexity classes and completeness results for dynamic problems in the word RAM model as well. 
We refer to $\dP$ and  $\dNP$ as $\dPram$ and $\dNPram$ in the word-RAM model. One caveat is that for technical reasons we need to allow for quasipolynomial preprocessing time and space while defining the complexity classes $\dPram$ and $\dNPram$.  See~\ref{part:RAM} for more details.

\section{Related Work}
\label{sec:related:work}

There are several previous attempts to classify dynamic problems.
First, there is a line of works called ``dynamic complexity theory''
(see e.g. \cite{DattaKMSZ15,WeberS05,SchwentickZ16}) where the general
question asks whether a dynamic problem is in the class called $\mbox{DynFO}$.
Roughly speaking, a problem is in $\mbox{DynFO}$ if it admits a dynamic
algorithm expressible by a first-order logic. This means, in particular, that given
an update, such algorithm runs in $O(1)$ parallel time, but might
take arbitrary $\mbox{poly}(n)$ works when the input size is $n$.
A notion of reduction is defined and complete problems of $\mbox{DynFO}$
and related classes are proven in \cite{HesseI02,WeberS05}.
However, as the total work of algorithms from this field can be large
(or even larger than computing from scratch using sequential algorithms), they do
not give fast dynamic algorithms in our sequential setting. 
Therefore, this setting is somewhat irrelevant to our setting.

Second, a problem called the \emph{circuit evaluation} problem has been shown to be
complete in the following sense. First, it is in $\mathsf{P}$ (the
class of static problems). Second, if the dynamic version of circuit
evaluation problem, which is defined as $\dDNF$ where a DNF-formula
is replaced with an arbitrary circuit, admits a dynamic algorithm
with polylogarithmic update time, then for any static problem $L\in\mathsf{P}$,
a dynamic version of $L$ also admits a dynamic algorithm with polylogarithmic
update time. This idea is first sketched informally since 1987 by
Reif \cite{Reif87}. Miltersen et al. \cite{MiltersenSVT94} then
formalized this idea and showed that other $\mathsf{P}$-complete
problems listed in \cite{miyano1990list,greenlaw1991compendium} also
are complete in the above sense.\footnote{But they also show that this is not true for all $\mathsf{P}$-complete
	problems.} The drawback about this completeness result is that the dynamic
circuit evaluation problem is extremely difficult. 
Similar to the case for static problems that reductions from $\mathsf{EXP}$-complete problems to problems in $\mathsf{NP}$ are unlikely, 
reductions from the dynamic circuit evaluation problem to other natural dynamic problems studied in the field seem unlikely. 
Hence, this does not give a framework for
proving hardness for other dynamic problems. 

Our result can be viewed as a more fine-grained completeness
result than the above. As we show that a very special case of the
dynamic circuit evaluation problem which is $\dDNF$ is already a
complete problem. An important point is that $\dDNF$ is simple enough
that reductions to other natural dynamic problems are possible. 

Finally, Ramalingam and Reps \cite{RamalingamR96} classify dynamic problems
according to some measure,\footnote{They measure the complexity dynamic algorithms by comparing the update
	time with the size of \emph{change in input and output }instead of
	the size of input itself.} but did not give any reduction and completeness result.

\section{Future Directions}\label{sec:open}

One byproduct of our paper is a way to prove non-reducibility.
It is interesting to use this method to shed more light on the hardness of other dynamic problems. To do so, it suffices to show that such problem is in  $\dMA \cap \dcoMA$ (or, even better, in $\dNP\cap \dcoNP$). One particular problem is whether connectivity is in $\dNP\cap \dcoNP$. It is known to be in $\dMA \cap \dcoMA$ due to the randomized algorithm of Kapron~et~al \cite{KapronKM13}. It is also in $\dNP$ (see~\ref{ex:intro:connectivity in NP}). The main question is whether it is in  $\dcoNP$. (Techniques from \cite{NanongkaiSW17,Wulff-Nilsen17,NanongkaiS17} almost give this, with verification time $n^{o(1)}$  instead of polylogarithmic.) Having connectivity in  $\dNP\cap \dcoNP$ would be a strong evidence that it is in $\dP$, meaning that it admits a deterministic algorithm with polylogarithmic update time. Achieving such algorithm will be a major breakthrough. Another specific question is whether the promised version of the $(2-\epsilon)$ approximate matching problem is in $\dMA\cap \dcoMA$. This would rule out efficient reductions from dynamic OV to this problem. Whether this problem admits a randomized algorithm with polylogarithmic update time is a major open problem. 
Other problems that can be  studied in this direction include approximate minimum spanning forest (MSF), $d$-weight MSF,%
bipartiteness, 
dynamic set cover, dominating set, and $st$-cut.

It is also very interesting to rule out efficient reductions from the following variant of the OuMv conjecture: At the preprocessing, we are given a boolean $n\times n$ matrix $M$ and boolean $n$-dimensional row and column vectors $u$ and $v$. Each update changes one entry in either $u$ or $v$. We then have to output the value of $uMv$. Most lower bounds that are hard under the OMv conjecture \cite{OMV} are via efficient reductions from this problem. It is interesting to rule out such efficient reductions since SETH and OMv are two conjectures that imply most lower bounds for dynamic problems.

Now that we can prove completeness and relate some basic complexity classes of dynamic problems, one big direction to explore is whether more results from coarse-grained complexity for static problems can be reconstructed for dynamic problems. Below are a few samples.

\begin{enumerate}%
\item {\em Derandomization:} Making dynamic algorithms deterministic is an important issue. Derandomization efforts have so far focused on specific problems (e.g. \cite{NanongkaiS17,NanongkaiSW17,BernsteinC16,BernsteinC17,Bernstein17,BhattacharyaHI18,BhattacharyaCHN18-coloring,BhattacharyaHN16}). Studying this issue via the class $\dBPP$ might lead us to the more general understanding. 
For example, the {\em Sipser-Lautermann theorem} \cite{Sipser83a,Lautemann83} states that $BPP\subseteq \Sigma _{2}\cap \Pi _{2}$, Yao \cite{Yao82a} showed that the existence of some pseudorandom generators would imply that $P = BPP$, and Impagliazzo and Wigderson \cite{ImpagliazzoW97} suggested that $BPP=P$ (assuming that any problem in $E =DTIME(2^{O(n)})$ has circuit complexity $2^{\Omega(n)}$).  We do not know anything similar to these for dynamic problems.

\item {\em NP-Intermediate:} Many static problems (e.g. graph isomorphism and factoring) are considered good candidates for being NP-intermediate, i.e. being neither in P nor NP-complete. This paper leaves many natural problems in $\dNP$ unproven to be $\dNP$-complete. Are these problems in fact $\dNP$-intermediate? The first step towards this question might be proving an analogues of {\em Ladner's theorem} \cite{Ladner75}, i.e. that an $\dNP$-intermediate dynamic problem exists, assuming $\dP\neq \dNP$. 
It is also interesting to prove the analogue of the {\em time-hierarchy theorems}, i.e. that with more time, more dynamic problems can be solved. (Both theorems are proved by diagonalization in the static setting.)

\item This work and lower bounds from fine-grained complexity has focused mostly on {\em decision}  problems. There are also {\em search dynamic problems}, which always have valid solutions, and the challenge is how to maintain them. These problems include maximal matching, maximal independent set, minimal dominating set, coloring vertices with  $(\Delta+1)$ or more colors, and coloring edges with $(1+\epsilon)\Delta$ or more colors, where $\Delta$ is the maximum degree (e.g. \cite{BaswanaGS15,BhattacharyaCHN18-coloring,AssadiOSS18-MIS,SolomonW18-coloring,DuZ18,GuptaK18,OnakSSW18}).  These problems  do not seem to correspond to any decision problems. Can we define complexity classes for these problems and argue that some of them might not admit polylogarithmic update time? Analogues of TFNP and its subclasses (e.g. PPAD) might be helpful here.
\end{enumerate}

\noindent There are also other concepts that have not been discussed in this paper at all, such as interactive proofs, probabilistically checkable proofs (PCP), counting problems (e.g. Toda's theorem), relativization and other barriers. Finally, in this paper we did not discuss amortized update time. It is a major open problem whether similar results, especially an analogue of NP-hardness, can be proved for algorithms with amortized update time.

\begin{table}
\footnotesize
\begin{tabular}{|>{\raggedright}p{0.15\textwidth}|>{\raggedright}p{0.15\textwidth}|>{\raggedright}p{0.15\textwidth}|>{\raggedright}p{0.35\textwidth}|>{\centering}p{0.15\textwidth}|}
\hline 
\textbf{Dynamic Problems} & \textbf{Preprocess} & \textbf{Update} & \textbf{Queries} & \textbf{Ref.}\tabularnewline
\hline 
\hline 
\multicolumn{5}{|l|}{\textbf{Numbers}}\tabularnewline
\hline 
{Sum/max} & \multirow{2}{0.15\textwidth}{{a set $S$ of numbers}} & \multirow{2}{0.15\textwidth}{{insert/delete a number in $S$}} & {return $\sum_{x\in S}x$ or $\max_{x\in S}x$} & \multirow{2}{0.15\textwidth}{{Binary search trees}}\tabularnewline
\cline{1-1} \cline{4-4} 
{Predecessor} &  &  & {given $x$, return the maximum $y\in S$ where $y\le x$.} & \tabularnewline
\hline 
\multicolumn{5}{|l|}{\textbf{Geometry}}\tabularnewline
\hline 
{2-dimensional range counting} & \multirow{2}{0.15\textwidth}{{a set $S$ of points on a plane}} & {insert/delete a point in $S$} & {given $[x_{1},x_{2}]\times[y_{1},y_{2}]$, return $|S\cap([x_{1},x_{2}]\times[y_{1},y_{2}])|$} & {\cite[after Theorem 7.6.3]{Overmars}}\tabularnewline
\cline{1-1} \cline{3-5} 
{Incremental planar nearest neighbor} &  & {insert a point to $S$} & {given a point $q$, return $p\in S$ which is closest to
$q$} & {\cite[Theorem 7.3.4.1]{Overmars}}\tabularnewline
\hline 
{Vertical ray shooting} & {a set $S$ of segments on a plane} & {insert/delete a segment in $S$} & {given a point $q$, return the segment immediately above
$q$} & {\cite[Theorem 3.7]{ChanN15}}\tabularnewline
\hline 
\multicolumn{5}{|l|}{\textbf{Graphs}}\tabularnewline
\hline 
\multirow{3}{0.15\textwidth}{{Dynamic problems on forests}} & \multirow{3}{0.15\textwidth}{{a forest $F$}} & \multirow{2}{0.15\textwidth}{{insert/delete an edge in $F$ s.t. $F$ remains a forest}} & {given two nodes $u$ and $v$, decide if $u$ and $v$ are
connected in $F$} & \multirow{3}{0.15\textwidth}{{\cite{SleatorT81,HenzingerK99,AlstrupHLT05}}}\tabularnewline
\cline{4-4} 
 &  &  & {given a node $u$, return the size of the tree containing
$u$} & \tabularnewline
\cline{3-4} 
 &  & {many more kinds of updates} & {many more kinds of query} & \tabularnewline
\hline 
{Connectivity on plane graphs} & \multirow{2}{0.15\textwidth}{{a plane graph $G$ (i.e. a planar graph on a fixed embedding)}} & \multirow{2}{0.15\textwidth}{{insert/delete an edge in $G$ such that $G$ has no crossing
on the embedding }} & {given two nodes $u$ and $v$, decide if $u$ and $v$ are
connected in $G$} & {\cite{Frederickson85,EppsteinITTWY92}}\tabularnewline
\cline{1-1} \cline{4-5} 
{2-edge connectivity on plane graphs} &  &  & {given two nodes $u$ and $v$, decide if $u$ and $v$ are
2-edge connected in $G$} & {\cite{Frederickson97}}\tabularnewline
\hline 
{$(2+\epsilon)$-approx. size of maximum matching} & {a general graph $G$} & {insert/delete an edge in $G$ } & {decide whether the size of maximum matching is at most $k$
or at least $(2+\epsilon)k$ for some $k$ and constant $\epsilon>0$} & {\cite{BhattacharyaHN17}}\tabularnewline
\hline 
\end{tabular}{\footnotesize \par}

\caption{Problems in $\protect\dP$. Some problems are strictly promise problems, but our class can be extended easily to include them (see \Cref{sec:promise}). }
\label{table:P}
\end{table}

\begin{table}
\footnotesize
\begin{tabular}{|>{\raggedright}p{0.15\textwidth}|>{\raggedright}p{0.15\textwidth}|>{\raggedright}p{0.15\textwidth}|>{\raggedright}p{0.35\textwidth}|>{\centering}p{0.15\textwidth}|}
\hline 
\textbf{Dynamic Problems} & \textbf{Preprocess} & \textbf{Update} & \textbf{Queries} & \textbf{Ref.}\tabularnewline
\hline 
\hline 
{Connectivity } & \multirow{2}{0.15\textwidth}{{a general undirected unweighted graph $G$}} & \multirow{2}{0.15\textwidth}{{insert/delete an edge in $G$ }} & {given two nodes $u$ and $v$, decide if $u$ and $v$ are
connected in $G$} & {\ref{prop:conn in NP}}\tabularnewline
\cline{1-1} \cline{4-5} 
{$(1+\epsilon)$-approx. size of maximum matching} &  &  & {decide whether the size of maximum matching is at most $k$
or at least $(1+\epsilon)k$ for some $k$ and constant $\epsilon>0$} & \ref{prop:matching in NP}\tabularnewline
\hline 
{Subgraph detection} & {a general graph $G$ and $H$ where $|V(H)|=\mbox{polylog}(|V(G)|)$} & {insert/delete an edge in $G$ } & {decide whether $H$ is a subgraph of $G$} & \multirow{6}{0.15\textwidth}{{\ref{prop:small cert in NP}}}\tabularnewline
\cline{1-4} 
{$uMv$ (entry update)} & {$u,v\in\{0,1\}^{n}$ and $M\in\{0,1\}^{n\times n}$} & {update an entry of $u$ or $v$} & {decide whether $u^{T}Mv=1$ (multiplication over Boolean
semi-ring).} & \tabularnewline
\cline{1-4} 
{3SUM} & {a set $S$ of numbers} & {insert/delete a number in $S$} & {decide whether there is $a,b,c\in S$ where $a+b=c$} & \tabularnewline
\cline{1-4} 
{Planar nearest neighbor} & {a set $S$ of points on a plane} & {insert a point to $S$} & {given a point $q$, return $p\in S$ which is closest to
$q$} & \tabularnewline
\cline{1-4} 
Erikson's problem \cite{Patrascu10} & a matrix $M$ & choose a row or a column and increment all number of such row or column & given $k$, is the maximum entry in $M$ at least $k$? & \tabularnewline
\cline{1-4} 
Langerman's problem \cite{Patrascu10} & an array $A$ & given $(i,x)$, set $A[i]=x$ & is there a $k$ such that $\sum_{i=1}^{k}A[i]=0$? & \tabularnewline
\hline 
\end{tabular}{\footnotesize \par}

\caption{Problems in $\protect\dNP$ that are not known to be in $\protect\dP$. Some problems are strictly promise problems, but our class can be extended easily to include them (see \Cref{sec:promise}). }
\label{table:BPP-NP}
\end{table}

\begin{table}
\footnotesize
\begin{tabular}{|>{\raggedright}p{0.2\textwidth}|>{\raggedright}p{0.2\textwidth}|>{\raggedright}p{0.2\textwidth}|>{\raggedright}p{0.35\textwidth}|}
\hline 
\textbf{Dynamic Problems} & \textbf{Preprocess} & \textbf{Update} & \textbf{Queries}\tabularnewline
\hline 
\hline 
{Pagh's problem with emptiness query \cite{AbboudW14}} & {A collection $X$ of sets $X_{1},\dots,X_{k}\subseteq[n]$} & {given $i,j$, insert $X_{i}\cap X_{j}$ into $X$} & {given $i$, is $X_{i}=\emptyset$?}\tabularnewline
\hline 
{Chan's subset union problem \cite{AbboudW14}} & {A collection of sets $X_{1},\dots,X_{n}\subseteq[m]$. A
set $S\subseteq[n]$.} & {insert/deletion an element in $S$} & {is $\cup_{i\in S}X_{i}=[m]$?}\tabularnewline
\hline 
{Single source reachability Count (\#$s$-reach)} & {a directed graph $G$ and a node $s$} & {insert/delete an edge} & {count the nodes reachable from $s$.}\tabularnewline
\hline 
{2 Strong components (SC2)} & {a directed graph $G$ } & {insert/delete an edge} & {are there more than 2 strongly connected components?}\tabularnewline
\hline 
{$st$-max-flow} & {a capacitated directed graph $G$ and nodes $s$ and $t$} & {insert/delete an edge} & {the size of $s$-$t$ max flow.}\tabularnewline
\hline 
{Subgraph global connectivity} & {a fixed undirected graph $G$} & {turn on/off a node} & {is a graph induced by turned on nodes connected?}\tabularnewline
\hline 
{3 vs. 4 diameter} & {an undirected graph $G$} & {insert/delete an edge} & {is a diameter of $G$ $3$ or $4$?}\tabularnewline
\hline 
{$ST$-reachability} & {a directed graph $G$ and sets of node $S$ and $T$} & {insert/delete an edge} & {is there $s\in S$ and $t\in T$ where $s$ can reach $t$? }\tabularnewline
\hline 
\end{tabular}{\footnotesize \par}
\caption{Problems that are $\protect\dNP$-hard. The proof is discussed in \Cref{sub:list NP hard}.}
\label{main:table:NP hard}
\label{table:NP hard}
\end{table}

\begin{table}
\footnotesize	
	
	\begin{tabular}{|>{\centering}p{0.15\textwidth}|>{\centering}p{0.15\textwidth}|>{\centering}p{0.13\textwidth}|>{\centering}p{0.35\textwidth}|>{\centering}p{0.15\textwidth}|}
		\hline 
		\textbf{Dynamic Problems} & \textbf{Preprocess} & \textbf{Update } & \textbf{Queries} & \textbf{Ref.}\tabularnewline
		\hline 
		\hline 
		{Small dominating set} & {a graph $G$} & {insert/delete an edge} & {Is there a dominating set of size at most $k$?} & {~\ref{lem:small dominating set}} \tabularnewline
		\hline 
		{Small vertex cover} & {a graph $G$} & {insert/delete an edge} & {Is there a vertex cover of size at most $k$?} & {~\ref{lem:small vertex cover}} \tabularnewline
		\hline 
		{Small maximal independent set } & {a graph $G$} & {insert/delete an edge} & {Is there a maximal independent set of size at most $k$?} & {~\ref{lem:small maximal independent set}} \tabularnewline
		\hline 
		{Small maximal matching} & {a graph $G$} & {insert/delete an edge} & {Is there a maximal matching of size at most $k$?} & {~\ref{lem:small maximal matching}} \tabularnewline
		\hline 
		{Chan\textquoteright s Subset Union Problem} & {a collection of sets $X_{1},\dots,X_{n}$ from universe $[m]$,
			and a set $S\subseteq[n]$} & {insert/delete an index in $S$} & {is $\cup_{i\in S}X_{i}=[m]$?} & {~\ref{lem:subset union}} \tabularnewline
		\hline 
		{3 vs. 4 diameter} & {a graph $G$ } & {insert/delete an edge} & {Is the diameter of $G$ 3 or 4?} & {~\ref{lem:small diameter}} \tabularnewline
		\hline 
		{Euclidean $k$-center} & {a point set $X\subseteq\mathbb{R}^{d}$ and a threshold $T\in\mathbb{R}$} & {insert/delete a point} & {Is there a set $C\subseteq X$ where $|C| \leq k$ and $\max_{u\in X}\min_{v\in C}d(u,v)\le T$} & {~\ref{lem:small k center}} \tabularnewline
		\hline 
		{$k$-edge connectivity} & {a graph $G$} & {insert/delete an edge} & {Is $G$ $k$-edge connected?} & {~\ref{lem:k edge connectivity}} \tabularnewline
		\hline 
	\end{tabular}
	
	\caption{Problems in $\protect\dPH$ that are not known to be in $\protect\dNP$.
		The parameter $k$ in every problem must be at most $\mbox{polylog}(n)$
		where $n$ is the size of the instance.}
	
	\label{table:PH}
	
\end{table}

\section{An Overview of the $\dNP$-Completeness Proof}
\label{sec:overview}

In this section, we present an overview of one of our main technical contributions (the proof of~\ref{thm:intro:DNF}) at a finer level of granularity.   In order to explain the main technical insights we focus on a nonuniform model of computation called the \emph{bit-probe} model, which has been studied
since the 1970's~\cite{Fredman78, Miltersen99cellprobe}.

\subsection{Dynamic Complexity Classes $\dP$ and $\dNP$}
\label{main:sec:class}

We begin by reviewing (informally) the concepts of a dynamic problem and an algorithm in the bit-probe model. See~\ref{sec:model} for a formal description.
Consider any dynamic problem $\D_n$. Here, the subscript $n$ serves as a reminder that the bit-probe model is nonuniform and it also indicates that each  instance $I$ of this problem can be specified using $n$ bits. We will will mostly be concerned with dynamic {\em decision} problems, where the {\em answer} $\D_n(I) \in \{0,1\}$ to every instance $I$ can be specified using a single bit. We say that $I$ is an YES instance if $\D_n(I) = 1$, and a NO instance if $\D_n(I) = 0$. An algorithm $\A_n$ for this dynamic problem $\D_n$ has access to a memory $\mem_n$, and the total number of bits available in this memory is called the {\em space complexity} of $\A_n$.  The algorithm $\A_n$ works in steps $t = 0, 1, \ldots,$ in the following manner.

\medskip
\noindent {\em Preprocessing:} At step $t = 0$ (also called the preprocessing step), the algorithm gets a {\em starting instance} $I_0 \in \D_n$ as input. Upon receiving this input, it initializes the bits in its memory $\mem_n$ and then it {\em outputs} the {\em answer} $\D_n(I_0)$ to the {\em current instance} $I_0$.

\medskip
\noindent {\em Updates:} Subsequently, at each step $t \geq 1$, the algorithm gets an {\em instance-update} $(I_{t-1}, I_t)$ as input. The sole purpose of this instance-update is to change the current instance  from $I_{t-1}$ to $I_t$. Upon receiving this input, the algorithm probes (reads/writes) some bits in the memory $\mem_n$, and then outputs the answer $\D_n(I_t)$ to the  current instance $I_t \in \D_n$. The {\em update time} of  $\A_n$ is the maximum number of bit-probes it needs to make in $\mem_n$ while handling an instance-update.

\medskip
One way  to visualize the above description as follows. An adversary keeps constructing an {\em instance-sequence} $(I_0, I_1, \ldots, I_k, \ldots)$  one step at a time. At each step $t$, the algorithm $\A_n$ gets the corresponding instance-update $(I_{t-1}, I_t)$, and at this point it is only aware of the prefix $(I_0, \ldots, I_t)$. Specifically, the algorithm does not know the future instance-updates. After receiving the instance-update at each step $t$, the algorithm has to output the answer to the current instance $\D_n(I_t)$. This framework is flexible enough to capture dynamic problems that allow for both {\em update} and {\em query} operations, because we can easily model a query operation as an instance-update. This fact is illustrated in more details in~\ref{exa:query}. 
Furthermore,  w.l.o.g. we assume that an instance-update in a dynamic problem $\D_n$ can be specified using $O(\log n)$ bits. See the discussion preceding~\ref{assume:logn update size} for a more detailed explanation of this phenomenon.

For technical reasons, we will work under the following assumption. This assumption will be implicitly present in the definitions of the complexity classes $\dP$ and $\dNP$ below.

\begin{assumption}
 \label{assume:poly:main}
A dynamic algorithm $\A_n$ for a dynamic problem $\D_n$ has to handle at most $\poly(n)$ many instance-updates.
 \end{assumption}

We  now define  the complexity class $\dP$.

\begin{defn}
[Class $\dP$]\label{main:def:P} A dynamic decision problem $\D_n$ is in $\dP$ iff there is an algorithm $\A_n$
solving $\D_n$ which has update time $O(\polylog (n))$ and space-complexity $O(\poly (n))$.
\end{defn}

In order to define the  class $\dNP$, we first introduce the notion of a {\em verifier} in~\ref{main:def:verifier}. Subsequently, we introduce the class $\dNP$ in~\ref{main:def:NP}. We have already discussed the intuitions behind these concepts in~\ref{sec:intro} after the statement of~\ref{def:intro:dNP}.

\begin{defn}[Dynamic verifier]
\label{main:def:verifier}
We say that a dynamic algorithm $\V_n$ with space-complexity $O(\poly (n))$ is a {\em verifier} for a dynamic decision problem $\D_n$ iff it works as follows.

\smallskip
\noindent {\em Preprocessing:} At step $t = 0$, the algorithm $\V_n$ gets a {\em starting instance} $I_0 \in \D_n$ as input, and it outputs an ordered pair $(x_0, y_0)$ where $x_0 \in \{0,1\}$ and $y_0 \in \{0,1\}^{\polylog(n)}$.

\smallskip
\noindent {\em Updates:} Subsequently, at each step $t \geq 1$, the algorithm $\V_n$ gets an {\em instance-update} $(I_t, I_{t-1})$ and a {\em proof} $\pi_t \in \{0,1\}^{\polylog (n)}$ as input, and it outputs an ordered pair $(x_t, y_t)$ where $x_t \in \{0,1\}$ and $y_t \in \{0,1\}^{\polylog (n)}$. The algorithm $\V_n$ has $O(\polylog (n))$ update time, i.e., it makes at most $O(\polylog (n))$ bit-probes in the memory during each step $t$. Note that the output $(x_t, y_t)$ depends on  the instance-sequence $(I_0, \ldots, I_t)$  and the {\em proof-sequence} $(\pi_1, \ldots, \pi_t)$ seen so far. 
\end{defn}

\begin{defn}
[Class $\dNP$]\label{main:def:NP} A decision problem $\D_n$ is in $\dNP$ iff it admits a verifier $\V_n$ which satisfy the following properties.  Fix any instance-sequence $(I_0, \ldots, I_k)$. Suppose that the verifier  $\V_n$ gets $I_0$ as input at step $t = 0$ and the ordered pair $((I_{t-1},I_t), \pi_t)$ as input at every step $t \geq 1$. Then:
\begin{enumerate}
\item  For every proof-sequence $(\pi_1, \ldots, \pi_k)$,  we have  $x_t = 0$ for each $t \in \{0, \ldots, k\}$ where $\D_n(I_t) = 0$.
\item If the proof-sequence $(\pi_1, \ldots, \pi_k)$ is {\em reward-maximizing} (defined below), then we have $x_t = 1$ for each $t \in \{0, \ldots, k\}$ with $\D_n(I_t) = 1$, 
\end{enumerate}
The proof-sequence $(\pi_1, \ldots, \pi_k)$ is {\em reward-maximizing} iff the following holds.  At each step $t \geq 1$, given the past history $(I_0, \ldots, I_t)$ and $(\pi_1, \ldots, \pi_{t-1})$, the proof $\pi_t$ is chosen in such a way that maximizes the value of $y_t$. We say that such a proof $\pi_t$ is {\em reward-maximizing}.
\end{defn}

Just as in the static setting, we can easily prove that $\dP \subseteq \dNP$ and we conjecture that $\dP \neq \dNP$. The big question left open in this paper is to resolve this conjecture.

\begin{cor}
\label{main:cor:pnp}
We have $\dP \subseteq \dNP$.
\end{cor}

\subsection{A complete problem in $\dNP$}
\label{main:sec:complete}

One of the main results in this paper shows that a natural  problem called {\em dynamic narrow DNF evaluation} (denoted by $\dDNF$) is $\dNP$-complete. Intuitively, this means that (a) $\dDNF \in \dNP$, and (b) if $\dDNF \in \dP$ then $\dP = \dNP$.\footnote{To be more precise, condition (b) means that  every problem in $\dP$ is $\dP$-reducible to $\dDNF$. See~\ref{sec:reduction} for a formal definition.
} We now give an informal description of this problem.  

\medskip
\noindent {\bf Dynamic narrow DNF evaluation ($\dDNF$):} An instance $I$ of this problem consists of a triple $(\Z, \C, \phi)$, where $\Z = \{z_1, \ldots, z_N\}$ is a set of $N$ variables,  $\C = \{ C_1, \ldots, C_M\}$ is a set of $M$ DNF clauses, and  $\phi : \Z \rightarrow \{0,1\}$ is an assignment of values to the variables. Each clause $C_j$ is a conjunction (AND) of at most $\polylog (N)$ literals, where each literal is of the form $z_i$ or $\neg z_i$ for some variable  $z_i \in \Z$. This is an YES instance if   at least one clause $C \in \C$  is true under the assignment $\phi$, and this is a NO instance if every clause in $\C$ is false under the assignment $\phi$. Finally, an {\em instance-update}  changes the assignment $\phi$ by flipping the value of exactly one variable in $\Z$.

It is easy to see that the above problem is in $\dNP$. Specifically, if the current instance is an YES instance, then a proof $\pi_t$ simply points to a specific clause $C_j \in \C$ that is true under the current assignment $\phi$. The proof $\pi_t$ can be encoded using $O(\log M)$ bits. Furthermore, since each clause contains at most $\polylog (N)$ literals, the verifier can check that the clause $C_j$ specified by the proof $\pi_t$ is true under the assignment $\phi$ in $O(\polylog (N))$ time. On the other hand, no proof can fool the verifier if the current instance is a NO instance (where every clause is false). All these observations can be formalized in a manner consistent with~\ref{main:def:NP}. We will prove the following theorem.

\begin{theorem}
\label{main:thm:NP:complete}
The $\dDNF$ problem described above is $\dNP$-complete.
\end{theorem}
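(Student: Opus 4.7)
First I would introduce an auxiliary functional problem $\fDNF$ --- like $\dDNF$ but required to return the \emph{first} true clause under a given total order $\prec$ on clauses --- and prove (i) $\fDNF$ is $\dNP$-hard, and (ii) $\fDNF$ reduces to $\dDNF$ in polylogarithmic time. Combined with the observation that $\dDNF \in \dNP$ (a proof simply names one satisfied clause, whose $\polylog(N)$ literals the verifier can check), these two steps yield $\dNP$-completeness.

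For step (i), fix an arbitrary $\D_n \in \dNP$ with verifier $\V_n$. Since $\V_n$ has update time $\polylog(n)$ and proofs of size $\polylog(n)$, the standard decision-assignment-tree representation for cell-probe algorithms~\cite{Miltersen99cellprobe} lets me represent, for each possible proof string $\pi$, the verifier's one-step behavior as a $\polylog(n)$-depth decision tree $T_\pi$ over the memory bits of $\V_n$. Each root-to-leaf path $P$ of $T_\pi$ reads a fixed $\polylog(n)$-sized subset of memory bits with fixed values, so it corresponds to a conjunction $C_P$ of at most $\polylog(n)$ literals --- a narrow DNF clause --- and the leaf of $P$ is labeled by the verifier's output $(x,y)$ along that path. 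I will bundle all such clauses (over every $\pi$) into an $\fDNF$ instance whose variables mirror the memory of $\V_n$ and whose assignment $\phi$ tracks the current memory contents. The total order $\prec$ is defined so that clauses with larger $y$-label come first, making the first satisfied clause under $\phi$ precisely a reward-maximizing proof at that step.

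Given an assumed $\fDNF$ algorithm $\A^*$, I then build an algorithm $\A_n$ for $\D_n$ that on each instance-update (a) calls $\A^*$ to identify the first satisfied clause and thereby extract a reward-maximizing proof $\pi_t$, (b) simulates $\V_n$ on $((I_{t-1}, I_t), \pi_t)$, and (c) propagates the $\polylog(n)$ bit-flips performed by $\V_n$ back to $\A^*$ as variable-flip updates. Correctness follows directly from the two clauses of~\ref{main:def:NP}: no proof induces $x_t = 1$ on NO instances, while the reward-maximizing sequence forces $x_t = 1$ on YES instances. For step (ii), a routine $O(\log M) = O(\polylog(n))$-round binary search locates the first satisfied clause by asking $\dDNF$ about prefixes of the $\prec$-ordered clause list.

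The chief obstacle will be space. The constructed $\fDNF$ instance has $M = 2^{\polylog(n)}$ clauses (one per root-to-leaf path across all $2^{\polylog(n)}$ proofs), so although $\A^*$ has update time $\polylog(M,N) = \polylog(n)$, its memory footprint is a priori $2^{\polylog(n)}$ --- far above the polynomial budget required by $\dP$. I plan to resolve this by interposing a ``virtual memory'' subroutine $\mathcal{S}^*_n$ between $\A^*$ and its working memory: it explicitly stores only the polynomially many cells that can actually be written during the $\poly(n)$ updates (using~\ref{assume:poly:main}) and reconstructs other cells on demand from the hard-wired description of $\{T_\pi\}$. This keeps the overall space at $\poly(n)$ while preserving $\polylog(n)$ per-probe simulation time, completing the proof.
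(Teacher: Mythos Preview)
Your proposal is correct and follows essentially the same approach as the paper: the intermediate problem $\fDNF$, the decision-tree representation $\{T_\pi\}$ of the verifier yielding narrow DNF clauses ordered by reward $y$, the binary-search reduction from $\fDNF$ to $\dDNF$, and the virtual-memory interface $\mathcal{S}^*_n$ that exploits~\ref{assume:poly:main} to keep space polynomial. The paper's detailed version additionally names an intermediate ``First Shallow Decision Tree'' problem $\fDT$ sitting between the verifier and $\fDNF$, but this is only a presentational refinement of the same argument you outline.
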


In order to prove~\ref{main:thm:NP:complete}, we consider an intermediate dynamic problem called $\fDNF$.

\medskip
\noindent {\bf $\fDNF$:} An instance $I$ of $\fDNF$ consists of a tuple $(\Z, \C, \phi, \prec)$. Here, the symbols $\Z, \C$ and $\phi$ denote exactly the same objects as in the $\dDNF$ problem described above. In addition, the symbol $\prec$ denotes a total order on the set of clauses $\C$. The answer to this instance $I$ is defined as follows. If every clause in $\C$ is false under the current assignment $\phi$, then the answer to $I$ is $0$. Otherwise, the answer to $I$ is the {\em first clause}  $C_j \in \C$ according to the total order $\prec$ that is true under $\phi$. It follows that $\fDNF$ is {\em not} a decision problem.  Finally, as before, an instance-update for the $\fDNF$  changes the assignment $\phi$ by flipping the value of exactly one variable in $\Z$.

\medskip
 We prove~\ref{main:thm:NP:complete} as follows. (1) We first show that  $\fDNF$ is $\dNP$-hard. Specifically, if there is an algorithm for $\fDNF$ with polylog update time and polynomial space complexity, then $\dP = \dNP$. We explain this in more details in~\ref{main:sec:NP:hard}. (2) Using a standard binary-search trick, we show that there exists an $O(\polylog (n))$ time {\em reduction} from $\fDNF$ to $\dDNF$. Specifically, this means that if $\dDNF \in \dP$, then we can use an algorithm for $\dDNF$ as a subroutine to design an algorithm for $\fDNF$ with polylog update time and polynomial space complexity.~\ref{main:thm:NP:complete} follows from (1) and (2), and the observation that $\dDNF \in \dNP$.

\subsubsection{$\dNP$-hardness of $\fDNF$}
\label{main:sec:NP:hard}

Consider any dynamic decision problem $\D_n \in \dNP$. Thus, there exists a verifier $\V_n$ for $\D_n$ with the properties mentioned in~\ref{main:def:NP}. Throughout~\ref{main:sec:NP:hard}, we assume that  there is an algorithm for $\fDNF$ with polynomial space complexity and polylog update time. Under this assumption, we will show that there exists an algorithm $\A_n$ for $\D_n$ that also has $O(\poly (n))$ space complexity and $O(\polylog (n))$ update time. This will imply the $\dNP$-hardness of $\fDNF$.

\medskip
\noindent {\bf The high-level strategy:} The algorithm $\A_n$ will use the following two {\em subroutines}: (1) The verifier $\V_n$ for $\D_n$ as specified in~\ref{main:def:verifier} and~\ref{main:def:NP}, and (2) A dynamic algorithm $\A^*$ that solves the $\fDNF$ problem with polylog update time and polynomial space complexity.   

To be more specific, consider any instance-sequence $(I_0, \ldots, I_k)$ for the problem $\D_n$. At step $t = 0$, after receiving the starting instance $I_0$, the algorithm $\A_n$ calls the subroutine $\V_n$ with the same input $I_0$. The subroutine $\V_n$ returns an ordered pair $(x_0, y_0)$. At this point, the algorithm $\A_n$ outputs the bit $x_0$. Subsequently, at each step $t \geq 1$, the algorithm $\A_n$ receives the instance-update $(I_{t-1}, I_t)$ as input. It then calls the subroutine $\A^*$ in such a manner which ensures that $\A^*$ returns a reward-maximizing proof $\pi_t$ for the verifier $\V_n$ (see~\ref{main:def:NP}). This is explained in more details below. The algorithm $\A_n$ then calls the verifier $\V_n$ with the input $((I_{t-1}, I_t), \pi_t)$, and the verifier returns an ordered pair $(x_t, y_t)$. At this point, the algorithm $\A_n$ outputs the bit $x_t$.

To summarize, the algorithm $\A_n$ uses $\A^*$ as a dynamic subroutine to construct a reward-maximizing proof-sequence $(\pi_1, \ldots, \pi_k)$ -- one step at a time. Furthermore, after each step $t \geq 1$, the algorithm $\A_n$ calls the verifier $\V_n$ with the input $((I_{t-1}, I_t), \pi_t)$. The verifier $\V_n$ returns $(x_t, y_t)$, and the algorithm $\A_n$ outputs $x_t$. Item (1) in~\ref{main:def:NP} implies that the algorithm $\A_n$ outputs $0$ on all the NO instances (where $\D_n(I_t) = 0$). Since the proof-sequence $(\pi_1, \ldots, \pi_k)$ is reward-maximizing, item (2) in~\ref{main:def:NP} implies that the algorithm $\A_n$ outputs $1$ on all the YES instances (where $\D_n(I_t) = 1$). So the algorithm $\A_n$ always outputs the correct answer and solves the problem $\D_n$. We now explain how the algorithm $\A_n$ calls the subroutine $\A^*$, and then analyze the space complexity and update time of $\A_n$. The key observation is that we can represent  the verifier $\V_n$ as a collection of decision trees, and each root-to-leaf path in each of these  trees can be modeled as a DNF clause.%

\medskip
\noindent {\bf The decision trees that define the verifier $\V_n$:}
Let $\mem_{\V_n}$ denote the memory of the verifier $\V_n$. We assume that during each step $t \geq 1$, the instance-update $(I_{t-1}, I_t)$ is written in a designated region $\mem_{\V_n}^{(0)} \subseteq \mem_{\V_n}$ of the memory, and the proof $\pi_t$ is written in another designated region $\mem_{\V_n}^{(1)} \subseteq \mem_{\V_n}$ of the memory. Each bit in $\mem_{\V_n}$ can be thought of as a boolean variable $z \in \{0,1\}$. We view the region $\mem_{\V_n} \setminus \mem_{\V_n}^{(1)}$ as a collection of boolean variables $\Z = \{ z_1, \ldots, z_N \}$ and the contents of $\mem_{\V_n} \setminus \mem_{\V_n}^{(1)}$ as an assignment $\phi : \Z \rightarrow \{0,1\}$. For example, if $\phi(z_j) = 1$ for some $z_j \in \Z$, then it means that the bit $z_j$ in $\mem_{\V_n} \setminus \mem_{\V_n}^{(1)}$ is currently set to $1$. Upon receiving an input $((I_{t-1}, I_t), \pi_t)$, the verifier $\V_n$ makes some probes in $\mem_{\V_n} \setminus \mem_{\V_n}^{(1)}$ according to some pre-defined procedure, and then outputs an answer $(x_t, y_t)$. This procedure can be modeled as a decision tree $T_{\pi_t}$. Each internal node (including the root) in this decision tree is either a "read" node  or a "write" node. Each read-node has two children and is labelled with a variable $z \in \Z$. Each write-node has one child and is labelled with an ordered pair $(z, \lambda)$, where $z \in \Z$ and $\lambda \in \{0,1\}$. Finally, each leaf-node of $T_{\pi_t}$ is labelled with an ordered pair $(x, y)$, where $x \in \{0,1\}$ and $y \in \{0,1\}^{\polylog (n)}$.  Upon receiving the input $((I_{t-1}, I_t), \pi_t)$, the verifier $\V_n$ {\em traverses} this decision tree $T_{\pi_t}$. Specifically, it starts at the root of $T_{\pi_t}$, and then inductively applies the following steps until it reaches a leaf-node. 
\begin{itemize}
\item Suppose that it is currently at a read-node of $T_{\pi_t}$  labelled with $z \in \Z$. If $\phi(z) = 0$ (resp. $\phi(z) = 1$), then it goes to the left (resp. right) child of the node. On the other hand, suppose that it is currently at a write-node of $T_{\pi_t}$ which is labelled with $(z, \lambda)$. Then it writes $\lambda$ in the memory-bit $z$ (by setting $\phi(z) = \lambda$) and then moves on to the only child of this node. 
\end{itemize}
Finally, when it reaches a leaf-node, the verifier $\V_n$ outputs the corresponding label $(x, y)$. This is the way the verifier operates when it is called with an input $((I_{t-1}, I_t), \pi_t)$. The depth of the decision tree (the maximum length of any root-to-leaf path) is at most $\polylog (n)$, since as per~\ref{main:def:verifier} the verifier makes at most $\polylog (n)$ many bit-probes in the memory while handling any input.  

Each possible proof $\pi$ for the verifier can be specified using $\polylog (n)$ bits. Hence, we get a collection of $O(2^{\polylog (n)})$ many decision trees $\T = \{ T_{\pi} \}$ -- one tree $T_{\pi}$ for each possible input $\pi$. This collection of decision trees $\T$ completely characterizes the verifier $\V_n$.

\medskip
\noindent {\bf DNF clauses corresponding to a decision tree $T_{\pi_t}$:} Suppose that the proof $\pi$ is given as part of the input to the verifier during some update step. Consider any root-to-leaf path $P$ in a decision tree $T_{\pi}$. We can naturally associate a DNF clause $C_P$  corresponding to this path $P$. To be more specific, suppose that the path $P$ traverses a read-node labelled with $z \in \Z$ and then goes to its left (resp. right) child. Then we have a literal $\neg z$ (resp. $z$) in the clause $C_P$ that corresponds to this read-node.\footnote{W.l.o.g. we can assume that no two read nodes on the same path are labelled with the same variable.} The clause $C_P$ is the conjunction (AND) of these literals, and $C_P$ is true iff the verifier $\V_n$ traverses the path $P$ when $\pi$ is the proof given to it as part of the input. Let $\C = \{ C_P : P \text{ is a root-to-leaf path in some tree } T_{\pi} \in \T \}$ be the collection of all these DNF clauses. 

\medskip
\noindent {\bf Defining a total order $\prec$ over $\C$:} We now define a total order $\prec$ over $\C$ which satisfies the following property: Consider any two root-to-leaf paths $P$ and $P'$ in the collection of decision trees $\T$. Let $(x, y)$ and $(x', y')$ respectively denote the labels associated with the leaf nodes of the paths $P$ and $P'$. If $C_{P} \prec C_{P'}$, then $y \geq y'$.  Thus, the paths with higher $y$ values appear earlier in $\prec$. 

\medskip
\noindent {\bf Finding a reward-maximizing proof:} Suppose that $(I_0, \ldots, I_{t-1})$ is the instance-sequence of $\D_n$ received by $\A_n$ till now. By induction, suppose that  $\A_n$ has managed to construct a reward-maximizing proof-sequence $(\pi_1, \ldots, \pi_{t-1})$ till this point, and has fed this as input to the verifier $\V_n$ (which is used as a subroutine). At the present moment, suppose that $\A_n$ receives an instance-update $(I_{t-1}, I_t)$ as input. Our goal now is  to find a reward-maximizing proof $\pi_t$ at the current step $t$.

Consider the tuple $(\Z, \C, \phi, \prec)$ where $\Z = \mem_{\V_n} \setminus \mem_{\V_n}^{(1)}$ is the set of variables, $\C = \{ C_P : P \text{ is a root-to-leaf path in some decision tree } T_{\pi}\}$ is the set of DNF clauses,  the assignment $\phi : \Z \rightarrow \{0,1\}$ reflects the current contents of the memory-bits in $\mem_{\V_n} \setminus \mem_{\V_n}^{(1)}$, and $\prec$ is the total ordering over $\C$ described above.  Let $C_{P'} \in \C$ be the answer to this $\fDNF$ instance $(\Z, \C, \phi, \prec)$, and suppose that the path $P'$ belongs to the decision tree $T_{\pi'}$ corresponding to the proof $\pi'$. A moment's thought will reveal that $\pi_t = \pi'$ is the desired reward-maximizing proof at step $t$ we were looking for, because of the following reason. Let $(x', y')$ be the label associated with the leaf-node in $P'$. By definition, if the verifier gets the ordered pair $((I_{t-1}, I_t), \pi')$ as input at this point, then  it will traverse the path $P'$ in the decision tree $T_{\pi'}$ and return the ordered pair $(x', y')$. Furthermore, the path $P'$ comes first according to the total ordering $\prec$, among all the paths that can be traversed by the verifier at this point. Hence, the path $P'$ is chosen in such a way that maximizes $y'$, and accordingly, we conclude that $y_t = y'$ is a reward-maximizing proof at step $t$. 

\medskip
\noindent {\bf Wrapping up: Handling an instance-update $(I_{t-1}, I_t)$.} To summarize, when the algorithm $\A_n$ receives an instance-update $(I_{t-1}, I_t)$, it works as follows. It first writes down in the instance-update $(I_{t-1}, I_t)$ in $\mem_{\V_n}^{(0)}$ and accordingly updates the assignment $\phi : \Z \rightarrow \{0,1\}$. It then calls the subroutine $\A^*$ on the $\fDNF$ instance $(\Z, \C, \phi, \prec)$. The subroutine $\A^*$ returns a reward-maximizing proof $\pi_t$. The algorithm $\A_n$ then calls the verifier $\V_n$ as a subroutine with the ordered pair $((I_{t-1}, I_t), \pi_t)$ as input. The verifier updates at most $\polylog (n)$ many bits in $\mem_{\V_n}$  and returns an ordered pair $(x_t, y_t)$. The algorithm $\A_n$ now  updates the assignment $\phi : \Z \rightarrow \{0,1\}$ to ensure that it is synchronized with the current contents of  $\mem_{\V_n}$. This requires $O(\polylog (n))$ many calls to the subroutine $\A^*$ for the $\fDNF$ instance. Finally,  $\A_n$ outputs the bit $x_t \in \{0,1\}$.

\medskip
\noindent {\bf Bounding the update time of $\A_n$:} 
Notice that after each instance-update $(I_{t-1}, I_t)$, the algorithm $\A_n$ makes one call to the verifier $\V_n$ and at most $\polylog (n)$ many calls to $\A^*$. By~\ref{main:def:verifier}, the call to  $\V_n$ requires $O(\polylog (n))$ time. Furthermore, we have assumed that  $\A^*$ has polylog update time. Hence, each call to $\A^*$ takes $O(\polylog (N, M)) = O(\polylog (2^{\polylog (n)})) = O(\polylog (n))$ time. Since the algorithm $\A_n$ makes at most $\polylog (n)$ many calls to $\A^*$, the total time spent in these calls is still $O(\polylog (n))$. Thus, we conclude that  $\A_n$ has $O(\polylog (n))$ update time.

\renewcommand{\S}{\mathcal{S}}

\medskip
\noindent {\bf Bounding the space complexity of $\A_n$:} The space complexity of $\A_n$ is dominated by the space complexities of the subroutines $\V_n$ and $\A^*$. As per~\ref{main:def:verifier},   the verifier $\V_n$ has space complexity $O(\poly (n))$.

We next bound the memory space used by the subroutine $\A^*$. Note that in the $\fDNF$ instance, we have a DNF clause $C_P \in \C$ for every root-to-leaf path $P$ of every decision tree $T_{\pi}$. Since a proof $\pi$ consists of $\polylog (n)$ bits, there are at most $O(2^{\polylog (n)})$ many decision trees of the form $T_{\pi}$. Furthermore,  since every root-to-leaf path is of length at most $\polylog (n)$, each decision tree $T_{\pi}$ has at most $O(2^{\polylog (n)})$ many root-to-leaf paths. These two observations together imply that the set of clauses $\C$ is of size at most $O\left(2^{\polylog (n)} \cdot 2^{\polylog (n)}\right) = O(2^{\polylog (n)})$. Furthermore, as per~\ref{main:def:verifier} there are at most $O(\poly (n))$ many bits in the memory $\mem_{\V_n}$, which means that there are at most $O(\poly (n))$ many variables in $\Z$. Thus, the $\fDNF$ instance $(\Z, \C, \phi, \prec)$ is defined over a set of $N = \poly (n)$ variables and a set of $M = 2^{\polylog (n)}$ clauses (where each clause consists of at most $\polylog (n)$ many literals).  We have assumed that $\A^*$ has quasipolynomial space complexity. Thus, the total space needed by the subroutine $\A^*$ is $O(2^{\polylog (N, M)}) = O(2^{\polylog (n)})$. 

Unfortunately, the bound of $2^{\polylog (n)}$ is too large for us. Instead, we will like to have a space complexity of $O(\poly (n))$. Towards this end, we introduce a new subroutine $\S^*_n$ that acts as an {\em interface} between the subroutine $\A^*$ and the memory $\mem_{\A^*}$ used by $\A^*$ (see~\ref{sec:cor:np-hardness} for details). 
Specifically, as we observed in the preceding paragraph, the memory $\mem_{\A^*}$ consists of $2^{\polylog (n)}$ many bits and we cannot afford to store all these bits during the execution of the algorithm $\A_n$. The subroutine $\S^*_n$ has the nice properties that (a) it has space complexity  $O(\poly (n))$ and (b) it can still  return the content of a given bit in $\mem_{\A^*}$ in $O(\polylog (n))$ time. In other words, the subroutine $\S^*_n$ stores the contents of $\mem_{\A^*}$ in an {\em implicit manner}, and whenever the subroutine $\A^*$ wants to read/write a given bit in $\mem_{\A^*}$, it does that by calling the subroutine $\S^*_n$. This ensures that the overall space complexity of $\A^*$ remains $O(\poly (n))$. However, the subroutine $\S^*_n$ will be able perform its designated task with $\polylog (n)$ update time and $\poly (n)$ space complexity only if the algorithm $\A_n$ is required to handle at most $\poly (n)$ many instance-updates after the preprocessing step. This is why we need~\ref{assume:poly:main} while defining the complexity classes $\dP$ and $\dNP$.

\medskip
To summarize, we have shown that the algorithm $\A_n$  has polylog update time and polynomial space complexity. This implies that the $\fDNF$ problem is $\dNP$-hard.

\newpage
\part{Basic Complexity Classes (in the Bit-probe Model)}\label{part:formalization}

This part is organized as follows. In~\ref{sec:model}, we formally define the notion of a dynamic problem and a dynamic algorithm in the bit-probe model. In~\ref{sec:class}, we define the dynamic complexity classes $\dP$, $\dNP$ and $\dcoNP$. We also present multiple examples of dynamic problems belonging to these classes. See~\ref{sec:problem in rankNP} for more details. In~\ref{sec:reduction}, we formally define the notion of an ``efficient'' reduction between dynamic problems, and define the notion of a complete problem within any given dynamic complexity class. Finally, in~\ref{sec:PH} we introduce the dynamic polynomial hierarchy (denoted by $\dPH$), and show that $\dPH$ collapses to $\dP$ if $\dP = \dNP$. We refer the reader to~\ref{sec:PH:examples} for multiple examples of dynamic problems that happen to belong to $\dPH$ (but are not known to be in $\dNP$ or $\dcoNP$).

\section{The Models}

\label{sec:model}

We work with the bit-probe model of computation. In~\ref{sec:problem}, we formally define the notion of a dynamic problem in this model.  In~\ref{sec:model:dynamic}, we present a formal description of a bit-probe algorithm for solving a dynamic problem. 

\subsection{Dynamic Problems}

\label{sec:problem}

\paragraph{Static Problems.}
We begin by recalling the standard definition of a computational problem in the static setting. A \emph{problem}
is a function $\P:\{0,1\}^{*}\rightarrow\{0,1\}^{*}$
which maps each \emph{instance} $I\in\{0,1\}^{*}$ to an \emph{answer
}$\P(I)\in\{0,1\}^{*}$. We say that $\P$ is a \emph{decision
problem} iff the range of $\P$ is $\{0,1\}$. If $\P$ is a
decision problem, then we respectively refer to $\P^{-1}(0)$ and
$\P^{-1}(1)$ as the set of \emph{no instances} and \emph{yes instances}
of $\P$. 
\begin{example}
Let $\P:\{0,1\}^{*}\rightarrow\{0,1\}$ be the problem that,
given a planar graph $G$, decide whether $G$ is Hamiltonian. Then
$\P^{-1}(0)$ and $\P^{-1}(1)$  are the sets of
non-Hamiltonian planar graphs and Hamiltonian planar graphs,  respectively. 
\end{example}
For any integer $n\ge1$, let $\P_{n}:\{0,1\}^{n}\rightarrow\{0,1\}^{*}$
be obtained by restricting the domain of $\P$ to the set of all bit-strings
of length $n$. We say that $\P_{n}$ an $n$-slice of $\P$, and
we write $\P=\{\P_{n}\}_{n}$. We refer to each bit-string $I\in\{0,1\}^{n}$
as an \emph{instance} of $\P_{n}$.

\paragraph{Dynamic Problems.}
We define a \emph{dynamic problem} to be a graph structure imposed on instances. Formally, a dynamic
problem is an ordered pair $\D=(\P,\G)$, where $\P$ is a static
problem \emph{and} $\G=\{\G_{n}\}_{n}$ is a family of directed graphs
such that the node-set of each $\G_{n}$ is equal to the set of all
instances of $\P_{n}$. Thus, for each integer $n\geq1$, the directed
graph $\G_{n}=(\U_{n},\E_{n})$ has a node-set $\U_{n}=\{0,1\}^{n}$.\footnote{We use $\U_{n}$, instead of $\V_{n}$, to denote the set of nodes
of $\G_{n}$, since later on $\V$ will be used frequently for a ``verifier''.} We refer to the ordered pair $\D_{n}=(\P_{n},\G_{n})$ as the \emph{$n$-slice}
of $\D$, and we write $\D=\{\D_{n}\}_{n}$. Each $I\in\U_{n}$ is
called an \emph{instance} of $\D_{n}$. Each $(I,I')\in\E_{n}$ is
called an \emph{instance-update} of $\D_{n}$. We refer to the graph
$\G_{n}$ as the \emph{update-graph} of $\D_{n}$. We also call $\G$
the \emph{family of update-graphs} of $\D$ or simply the update-graphs
of $\D$. We will show the below definition often:
\begin{defn}
[Instance-sequence]We say that a tuple $(I_{0},\dots,I_{k})$ is an \emph{instance-sequence
of $\D$} iff $(I_{0},\dots,I_{k})$ is a directed path in the graph  $\G_{n}$ for some $n$.
\end{defn}
For each instance $I$ of $\D$, we write $\D(I)=\P(I)$ as the \emph{answer}
of $I$. We say that $\D=(\P,\G)$ is a \emph{dynamic decision} problem
iff $\P$ is a decision problem. From now, we usually use $\D$ to
denote some dynamic problem, and we just call $\D$ a problem for
brevity.

For each integer $n\geq1$, consider the function $u:\E_{n}\rightarrow\{0,1\}^{*}$
that maps each instance-update $(I,I')\in\E_{n}$ to the bit-string
which represents the positions of bits where $I'$ differ from $I$.
We call $u(I,I')$ the \emph{standard encoding} of $(I,I')$ and write
$I'=I+u(I,I')$. The \emph{length} of this encoding is denoted by
$|u(I,I')|$. More generally, for any instance-sequence $(I_{0},\ldots,I_{k})$
of $\D$, we write $I_{k}=I_{0}+u(I_{0},I_{1})+\cdots+u(I_{k-1},I_{k})$.
Note that it is quite possible for two different instance-update $(I_{0},I_{1})\in\E_{n}$
and $(I_{2},I_{3})\in\E_{n}$ to have the same standard encoding,
so that we get $u(I_{0},I_{1})=u(I_{2},I_{3})$.

Let $\lambda_{\D}:\mathbb{N\rightarrow\mathbb{N}}$ be an integer
valued function such that $\lambda_{\D}(n)=\max_{(I,I')\in\E_{n}}|u(I,I')|$
is equal to the maximum length over all standard encoding of instance-updates
in $\G_{n}$, for each positive integer $n\in\mathbb{N}$. We refer
to $\lambda_{\D}(\cdot)$ as the \emph{instance-update-size} of $\D$. 
\begin{fact}
\label{fact:logn update size} We have $\lambda_{\D}(n)\ge\log n$
if there is some instance-update between instances of $\D$ of size
$n$.\end{fact}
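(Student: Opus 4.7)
The plan is immediate from unpacking the definition of the standard encoding. First I would fix a non-trivial instance-update $(I, I') \in \E_n$ guaranteed by the hypothesis. Here the word ``non-trivial'' means $I \neq I'$: if $\E_n$ contained only self-loops $I = I'$, the standard encoding would represent the empty set of positions and carry no information, so the hypothesis is most naturally read as providing a genuine update; the fact would otherwise be vacuous for that degenerate case.

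Since $I \neq I'$, the set of differing coordinates $S = \{j \in \{1,\ldots,n\} : I_j \neq I'_j\}$ is non-empty. By definition, $u(I, I')$ is a bit-string representing the positions in $S \subseteq \{1,\ldots,n\}$, so in particular it must uniquely identify at least one element of $\{1, \ldots, n\}$. Identifying one out of $n$ positions requires a bit-string of length at least $\lceil \log n \rceil$, by the elementary pigeonhole bound that the set of bit-strings of length strictly less than $\lceil \log n \rceil$ has size $2^{\lceil \log n \rceil} - 1 < n$ (for $n \geq 2$). Therefore $|u(I, I')| \geq \lceil \log n \rceil \geq \log n$, and taking the maximum over all instance-updates in $\E_n$ yields $\lambda_{\D}(n) \geq \log n$, as claimed.

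The only subtlety worth flagging is the handling of self-loops, which is addressed in the first paragraph. Otherwise this is a one-line observation about any reasonable position-encoding scheme (whether the standard encoding is interpreted as a concatenation of binary position indices, as a bitmask of length $n$, or any comparable representation), and no deeper argument is needed.
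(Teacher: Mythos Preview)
Your argument is essentially the paper's one-line proof spelled out in more detail: both observe that encoding a single differing bit-position among $n$ possibilities already costs at least $\log n$ bits under the standard encoding. One minor arithmetic slip worth fixing: the inequality $2^{\lceil \log n \rceil} - 1 < n$ fails for non-powers-of-two (e.g.\ $n=3$ gives $3<3$), so either restrict to fixed-length codes of length $\ell \le \lceil \log n\rceil - 1$, for which $2^{\ell} \le 2^{\lceil \log n\rceil - 1} < n$, or simply note that the paper's standard encoding by construction spends $\lceil \log n\rceil$ bits per position.
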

\begin{proof}
The standard encoding $u$ needs at least $\log n$ bits to specify
a single bit-position where two instances of $\D_{n}$ differ from
one another. 
\end{proof}

Recall that in the static setting,  two decision problems $\P$ and $\mathcal{Q}$ are {\em complements} of each other iff the set of YES instances of $\P$ are the set of NO instances of $\mathcal{Q}$, and vice versa.  We now define when two dynamic decision problems are complements of each other.

\begin{defn}
\label{def:complement}
We say that a dynamic decision problem $\D = (\P, \G)$ is the {\em complement} of another dynamic decision problem $\D' = (\mathcal{Q}, \G')$ (and vice versa) iff the following two conditions hold:
\begin{itemize}
\item The corresponding static problems $\P$ and $\mathcal{Q}$ are complements of each other. Specifically, we have $\P^{-1}(1) = \mathcal{Q}^{-1}(0)$ and $\P^{-1}(0) = \mathcal{Q}^{-1}(1)$.
\item For each $n \geq 1$, we have $\G_n = \G'_n$. In other words, the dynamic problems $\D$ and $\D'$ have the same update-graphs for each $n \geq 1$.
\end{itemize}
\end{defn}

\noindent Our formalization captures many dynamic problems -- even the ones
that allow for query operations (in addition to update operations). 
\begin{example}
[Dynamic problems with queries]\label{exa:query}Consider the dynamic
connectivity problem. In this problem, we are given an undirected
graph $G$ with $N$ nodes which is updated via a sequence of edge
insertions/deletions. At any time, given a query $(u,v)$, we have
to decide whether the nodes $u$ and $v$ are connected in the current
graph $G$. We can capture this problem using our formalization.

Set $n=N^{2}+2\log N$, and define the update-graph $\G_{n}=(\U_{n},\E_{n})$
as follows. Each instance $I\in\U_{n}=\{0,1\}^{n}$ represents a triple
$(G,u,v)$ where $G$ is an $N$-node graph and $u,v\in[N]$ are two
nodes in $G$. There is an instance-update $(I,I')\in\E_{n}$ iff
either \{$I=(G,u,v)$ and $I'=(G,u',v')$\} or \{$I=(G,u,v)$ and
$I'=(G',u,v)$ and $G,G'$ differs in exactly one edge\}. Intuitively,
the former case corresponds to a query operation, whereas the latter
case corresponds to the insertion/deletion of an edge in $G$. Since
an $N$-node graph can be represented as a string of $N^{2}$ bits
using an adjacency matrix, a triple $(G,u,v)$ can be represented
as a string of $N^{2}+2\log N=n$ bits. Let $\P_{n}:\{0,1\}^{n}\rightarrow\{0,1\}$
be such that $(G,u,v)\in\P_{n}^{-1}(1)$ is an yes instance if $u$
and $v$ are connected in $G$, and $(G,u,v)\in\P_{n}^{-1}(0)$ is
a no instance otherwise. 
Let $\P=\{\P_{n}\}_{n}$ and $\G=\{\G_{n}\}_{n}$. Then the ordered
pair $\D=(\P,\G)$ captures the dynamic connectivity problem. It is easy to see that $\D$ has an instance-update-size of $\lambda_{\D}(n)=\Theta(\log n)$.
\end{example}

\begin{example}
[Partially dynamic problems]The decremental connectivity problem
is the same as the dynamic connectivity problem, except that the update
sequence consists only of edge deletions. Our formalization captures
this problem in a similar manner as in \ref{exa:query}. The only
difference is this. For each $n\in\mathbb{N}$, there exists an instance-update
$(I,I')\in\E_{n}$ iff either \{$I=(G,u,v)$ and $I'=(G,u',v')$\}
or \{$I=(G,u,v)$ and $I'=(G',u,v)$ and $G'$ is obtained from $G$
by deleting an edge\}. %
\end{example}

In the two examples described above, we observed that $\lambda_{\D}(n) = \Theta(\log n)$. This happens to be the case with almost all the dynamic problems considered in the literature. For example, in a dynamic graph problem an update typically consists of insertion or deletion of an edge in an input graph, and it can be specified using $O(\log N)$ bits if the input graph contains $N$ nodes. Accordingly,  we will make the following assumption throughout the rest of the paper.

\begin{assumption}
\label{assume:logn update size} Every dynamic problem $\D$ has $\lambda_{\D}(n) = \Theta(\log n)$. 
\end{assumption}

\subsection{Dynamic Algorithms in the Bit-probe Model}
\label{sec:model:dynamic}

One of the key ideas in this paper is to work with a nonuniform model
of computation called the \emph{bit-probe} model, which has been studied
since the 1970's by Fredman \cite{Fredman78} (see also a survey by
Miltersen \cite{Miltersen99cellprobe}). This allows us to view a
dynamic algorithm as a clean combinatorial object, which turns out
to be very useful in deriving our main results (defining complexity
classes and showing the completeness and hardness of specific problems
with respect to these classes).

\subsubsection{Formal Description}

\label{sub:formal alg}

An \emph{algorithm-family} $\A=\{\A_{n}\}_{n\ge1}$ is a collection
of algorithms. For each $n$, an \emph{algorithm} $\A_{n}$ \emph{operates
on} an array of bits $\mem\in\{0,1\}^{*}$ called the \emph{memory}.
The memory contains two designated sub-arrays called the \emph{input
memory} $\mem^{\inp}$ and the \emph{output memory} $\mem^{\outp}$.
$\A_{n}$ works in \emph{steps} $t=0,1,\ldots$. At the beginning
of any step $t$, an \emph{input $\inp(t)\in\{0,1\}^{*}$at step $t$}
is written down in $\mem^{\inp}$, and then $\A_{n}$ is \emph{called}.
Once $\A_{n}$ is called, $\A_{n}$ reads and write $\mem$ in a certain
way described below. Then $\A_{n}$ \emph{returns} the call. The bit-string
stored in $\mem^{\out}$ is the \emph{output at step $t$}. Let $\inp(0\rightarrow t)=(\inp(0),\dots,\inp(t))$
denote the \emph{input transcript} up to step $t$. We denote the
output of $\A_{n}$ at step $t$ by $\A_{n}(\inp(0\rightarrow t))$
as it can depend on the whole sequence of inputs it received so far.

After $\A_{n}$ is called in each step, how $\A_{n}$ probes (i.e.
reads or writes) the memory $\mem$ is determined by 1) a \emph{preprocessing
function} $\init_{n}:\{0,1\}^{*}\rightarrow\{0,1\}^{*}$ and 2) a
\emph{decision tree} $T_{n}$ (to be defined soon). At step $t=0$
(also called the \emph{preprocessing step}), $\A_{n}$ initializes
the memory by setting $\mem\gets\init_{n}(\inp(0))$. We also call
$\inp(0)$ an \emph{initial input}. At each step $t\ge1$ (also called
an\emph{ update step}), $\A_{n}$ uses the decision tree $T_{n}$
to \emph{operate on} $\mem$. 

A decision tree is a rooted tree with three types of nodes: \emph{read
nodes}, \emph{write nodes}, and \emph{end nodes}. Each read node $u$
has two children and is labeled with an index $i_{u}$. Each write
node has one child and is labeled with a pair $(i_{u},b_{u})$ where
$i_{u}$ is an index and $b_{u}\in\{0,1\}$. End nodes are simply
leaves of the tree. For any index $i$, let $\mem[i]$ be the $i$-th
bit of $\mem$. To say that $T_{n}$ \emph{operates on} $\mem$, we
means the following: 
\begin{itemize}
\item Start from the root of $T_n$. If the current node
$u$ is a read node, then proceed to the left-child if $\mem[i_{u}]=0$,
otherwise proceed to the right-child. If $u$ is a write node, then
set $\mem[i_{u}]\gets b_{u}$ and proceed to $u$'s only child. Else,
$u$ is a leaf (an end node), then stop.
\end{itemize}

The root-to-leaf path followed by the decision tree  while operating on its memory is called the {\em execution path}. Clearly, the execution path depends on the contents of the memory bits. Also, note that the number of probes made by the algorithm during a call
at an update step is equal to the length of the execution path traversed during the call. Thus, the \emph{update
time of the algorithm} $\A_{n}$ is defined by the depth (the maximum
length of any root to leaf path) of $T_{n}$. Similarly, the {\em space complexity} of the algorithm $\A_n$ is defined to be the  number of bits available in its memory $\mem$.

We denote the \emph{update
time of the algorithm-family} $\A$ by a function $\time_{\A}(n)$
where $\time_{\A}(n)$ is the update time of $\A_{n}$. Similarly, the {\em space complexity of the algorithm-family} $\A$ is denoted by a function $\text{Space}_{\A}(n)$, where $\text{Space}_{\A}(n)$ is the space complexity of $\A_n$.

From now on, whenever we have to distinguish between multiple different
algorithms, we will add the subscript $\A_{n}$ to the notations introduced
above (e.g., $\init_{\A_{n}}$, $T_{\A_{n}}$, $\mem_{\A_{n}}$, $\inp_{\A_{n}}(0)$).

\subsubsection{High-level Terminology}

It is usually too cumbersome to specify an algorithm at the level
of its preprocessing function and decision tree. Hence, throughout this paper,
we usually only describe how $\A_{n}$ reads and writes the memory
at each step, which determines its preprocessing function $\init_{\A_{n}}$
and decision tree $T_{\A_{n}}$.

\paragraph{Solving problems.}
A  problem $\D$ is \emph{solved }by an algorithm-family
$\A$ if, for any $n$, we have:
\begin{enumerate}
\item In the preprocessing step, $\A_{n}$ is given an initial instance
$I_{0}$ of size $n$ (i.e. $\inp(0)=I_{0}$), and it outputs $\A_{n}(\inp(0\rightarrow 0))=\D(I_{0})$.
\item In each update step $t\in \{1, \ldots, \poly (n)\}$, $\A_{n}$ is given an instance-update
$(I_{t-1},I_{t})$ as input (i.e. $\inp(t)=(I_{t-1},I_{t})$), and it outputs $\A_{n}(\inp(0\rightarrow t))=\D(I_{t})$.  
\end{enumerate}
We also say that the algorithm $\A_{n}$ \emph{solves }an $n$-slice
$\D_{n}$ of the problem $\D$. For each step $t$, we say $I_{t}$
is an \emph{instance maintained by $\A_{n}$ at step $t$}. Note that $t \in \{1, \ldots, \poly (n) \}$ in item (2) above. This is emphasized in the assumption below.

\begin{assumption}
\label{assume:poly}
When we say that an algorithm $\A_n$ {\em solves} an $n$-slice $\D_n$ of a problem $\D_n$, we mean that the algorithm gives the correct output for polynomially  many update-steps $t = 1, \ldots, \poly (n)$. 
\end{assumption}

\paragraph{Subroutines.}

Let $\A$ and $\B$ be algorithm-families, and consider two algorithms
$\A_{n}\in\A$ and $\B_{m}\in\B$. We say that the algorithm $\A_{n}$
\emph{uses $\B_{m}$ as a subroutine} iff the following holds. The
memory $\mem_{\A_{n}}$ of the algorithm $\A_{n}$ contains a designated
sub-array $\mem_{\B_{m}}$ for the subroutine $\B_{m}$ to operate
on. As in \ref{sub:formal alg}, $\mem_{\B_{m}}$ has two designated
sub-arrays for input $\mem_{\B_{m}}^{\inp}$ and for $\mem_{\B_{m}}^{\outp}$.
$\A_{n}$ might read and write anywhere in $\mem_{\B_{m}}$. At each
step $t_{\A_{n}}$ of $\A_{n}$, $\A_{n}$ can \emph{call} $\B_{m}$
several times. 

The term ``call'' is consistent with how it is used in \ref{sub:formal alg}.
Let $t_{\B_{m}}$ denote \emph{the step of $\B_{m}$} which is set
to zero initially. When $\A_{n}$ calls $\B_{m}$ with an input $x\in\{0,1\}^{*}$
then the following holds. First, $\A_{n}$ writes an \emph{input $\inp_{\B_{m}}(t_{\B_{m}})=x$}
\emph{at step $t_{\B_{m}}$ of $\B_{m}$ }in $\mem_{\B_{m}}^{\inp}$.
Then $\B_{m}$ reads and write $\mem_{\B_{m}}$ in according to its
preprocessing function $\init_{\B_{m}}$ and decision tree $T_{\B_{m}}$.
Then $\B_{m}$ returns the call with the output $\B_{m}(\inp_{\B_{m}}(0\rightarrow t_{\B_{m}}))$
on $\mem_{\B_{m}}^{\out}$. Finally, the step $t_{\B_{m}}$ of $\B_{m}$
get incremented: $t_{\B_{m}}\gets t_{\B_{m}}+1$.

For each call, the update time of $\B_{m}$ contributes to
the update time of $\A_{n}$. In low-level, we can see that the preprocessing
function $\init_{\A_{n}}$ is defined by ``composing'' $\init_{\B_{m}}$
with some other functions, and the decision tree $T_{\A_{n}}$ is
a decision tree having $T_{\B_{m}}$ as sub-trees in several places.

\paragraph{Oracles.}

Suppose that $\O$ is an algorithm-family which solves some  problem
$\D$. We say that the algorithm $\A_{n}$ \emph{uses $\O_{m}$ as
a oracle} if $\A_{n}$ uses $\O_{m}$ as a subroutines just like above,
except that there are the two differences.
\begin{enumerate}
\item \textbf{(Black-box access):} $\A_{n}$ has very limited access to
$\mem_{\O_{m}}$. $\A_{n}$ can call $\O_{m}$ as before, but must
write only in $\mem_{\O_{m}}^{\inp}$ and can read only from $\mem_{\O_{m}}^{\out}$.
More specifically, suppose that $\A_{n}$ call $\O_{m}$ when the
step of $\O_{m}$ is $t_{\O_{m}}=0$. Then, $\A_{n}$ must write $\inp_{\O_{m}}(0)=I'_{0}$
in $\mem_{\O_{m}}^{\inp}$ where $I'_{0}$ is some instance of the
problem $\D$ and will be called an instance maintained by $\O_{m}$
from then. If the step of $\O_{m}$ is $t_{\O_{m}}\ge1$, then $\A_{n}$
must write $\inp_{\O_{m}}(t_{\O_{m}})=u(I'_{t_{\O_{m}}-1},I'_{t_{\O_{m}}})$
in $\mem_{\O_{m}}^{\inp}$ where $(I'_{t_{\O_{m}}-1},I'_{t_{\O_{m}}})$
is some instance-update of the problem $\D$. After each call, $\A_{n}$
can read the output $\O_{m}(\inp_{\O_{m}}(0\rightarrow t_{\O_{m}})=\D(I'_{t_{\O_{m}}}$)
which is the answer of the instance $I'_{t_{\O_{m}}}$.
\item \textbf{(Free call):} The update time of $\O_{m}$ does \emph{not}
contribute to the update time of $\A_{n}$. We model this high-level
description as follows. We already observed that the decision tree
$T_{\A_{n}}$ is a decision tree which has $T_{\O_{m}}$ as sub-trees
in several places. For each occurrence $T'$ of $T_{\O_{m}}$ in $T_{\A_{n}}$,
we assign the weight of edges between two nodes of $T'$ to be zero.
The update time of $\A_{n}$ is the \emph{weighted }depth of $T_{\A_{n}}$,
i.e. the maximum \emph{weighted} length of any root to leaf path.
\item \textbf{(Space complexity):} The memory $\mem_{\O_m}$ is {\em not} part of the memory $\mem_{\A_n}$. In other words, the space complexity of the oracle does not contribute to the space complexity of  $\A$.
\end{enumerate}

\paragraph{Oracle-families and Blow-up size.}

Let $\A$ be an an algorithm-family. Let $\qsize_{\A}:\mathbb{N}\rightarrow\mathbb{N}$
be a function. We say that $\A$ \emph{uses an oracle-family }$\O$
with \emph{blow-up size $\qsize_{\A}$ }if, for each $n$, $\A_{n}$
uses $\O_{m}$ as an oracle where $m\le\qsize_{\A}(n)$ (or even when
$\A_{n}$ uses many oracles $\O_{m_{1}},\dots,\O_{m_{k}}$ where $m_{i}\le\qsize(n)$
for all $i$). %

\section{Dynamic Complexity Classes $\dP$ and $\dNP$}

\label{sec:class}

We start with an informal description of the complexity class $\dP$, which is a natural analogue of the class P in the dynamic setting. First, we recall that in almost all the dynamic problems known in the literature, an instance-update  $(I_{t-1}, I_t)$  can be specified using $O(\log n)$ bits (see~\ref{assume:logn update size}).  Hence, intuitively, a dynamic {\em decision} problem $\D_n$ should be in the class $\dP$ if it admits an algorithm $\A_n$ whose update time is polynomial in $\log n$ (the number of bits needed to represent an instance-update). Thus, it is natural to say that a dynamic decision problem is in $\dP$ if it admits a dynamic algorithm with $O(\polylog (n))$ update time. In addition, for technical reasons that will become apparent later on, we need to allow the algorithm to have quasipolynomial space complexity. This is summarized in the definition below.

\begin{defn}
[Class $\dP$]\label{def:P} A dynamic decision problem $\D$ is in $\dP$ iff there is an algorithm-family $\A$ for
solving $\D$ with update time $\time_{\A}(n) = O(\polylog (n))$ and space-complexity $\text{Space}_{\A}(n) = O(\poly (n))$.
\end{defn}

Next, in order to define the  class $\dNP$, we first introduce the notion of a {\em verifier} in~\ref{def:verifier}. Note that this  is almost analogous to the definition of a verifier in the static setting, {\em except} the fact that at each step $t$ the verifier $\V_n$ outputs an ordered pair $(x_t, y_t)$ where $x_t \in \{0,1\}$ and $y_t \in \{0,1\}^{\polylog (n)}$ (instead of outputting a single bit). Intuitively, the bit $x_t \in \{0,1\}$ corresponds to the standard single bit output of a verifier in the static setting, whereas $y_t$ -- when thought of as a $\polylog (n)$ bit integer -- captures the {\em reward} obtained by the verifier.

\begin{defn} [Verifier-family]
\label{def:verifier} 
An algorithm-family $\V$ is a {\em verifier-family} for a dynamic decision problem $\D$ iff  the following holds for each $n \geq 1$.
\begin{itemize}
\item {\em Preprocessing:} At step $t = 0$, the algorithm $\V_n$ gets a {\em starting instance} $I_0$ of $\D_n$ as input, and it outputs an ordered pair $(x_0, y_0)$ where $x_0 \in \{0,1\}$ and $y_0 \in \{0,1\}^{\polylog(n)}$.
\item {\em Updates:} Subsequently, at each step $t \geq 1$, the algorithm $\V_n$ gets an {\em instance-update} $(I_t, I_{t-1})$ of $\D_n$ and a {\em proof} $\pi_t \in \{0,1\}^{\polylog (n)}$ as input, and it outputs an ordered pair $(x_t, y_t)$ where $x_t \in \{0,1\}$ and $y_t \in \{0,1\}^{\polylog (n)}$.  Note that the output $(x_t, y_t)$ depends on  the instance-sequence $(I_0, \ldots, I_t)$  and the {\em proof-sequence} $(\pi_1, \ldots, \pi_t)$ seen so far. 
\end{itemize}
\end{defn}

\noindent We  now  define the complexity class $\dNP$. Intuitively, a dynamic decision problem $\D$ is in $\dNP$ iff it admits a verifier-family $\V$ with polylogarithmic update time and polynomial space complexity that satisfy the following two properties for every $n \geq 1$. (1) The verifier $\V_n$  always outputs $x_t = 0$ on the NO instances of $\D_n$, regardless of the proof-sequence given to it as part of the input.  (2) The verifier $\V_n$  always outputs $x_t = 1$ on the YES instances of $\D_n$,  provided the proof-sequence given to it is {\em reward-maximizing}, in the sense that at each step $t \geq 1$ the proof $\pi_t$ is chosen in such a way that maximizes  the {\em reward} $y_t$ (when we think of $y_t$ as a $\polylog (n)$ bit integer).

\begin{defn}
[Class $\dNP$]\label{def:NP} A decision problem $\D$ is in $\dNP$ iff it admits a verifier-family $\V$ with update-time $\time_{\V}(n) = O(\polylog (n))$ and space-complexity $\text{Space}_{\V}(n) = O(\poly (n))$ which satisfy the following properties for each $n \geq 1$.  Fix any instance-sequence $(I_0, \ldots, I_k)$ of $\D_n$. Suppose that  $\V_n$ gets $I_0$ as input at step $t = 0$, and  $((I_{t-1},I_t), \pi_t)$ as input at every step $t \geq 1$. Furthermore, the verifier $\V_n$ outputs $(x_t, y_t)$ at each step $t$. Then:
\begin{enumerate}
\item  For every proof-sequence $(\pi_1, \ldots, \pi_k)$,  we have  $x_t = 0$ for each $t \in \{0, \ldots, k\}$ where $\D_n(I_t) = 0$.
\item If the proof-sequence $(\pi_1, \ldots, \pi_k)$ is {\em reward-maximizing} (defined below), then we have $x_t = 1$ for each $t \in \{0, \ldots, k\}$ with $\D_n(I_t) = 1$, 
\end{enumerate}
The proof-sequence $(\pi_1, \ldots, \pi_k)$ is {\em reward-maximizing} iff at each step $t \geq 1$, given the past history $(I_0, \ldots, I_t)$ and $(\pi_1, \ldots, \pi_{t-1})$, the proof $\pi_t$ is chosen in such a way that maximizes  $y_t$ (when we think of $y_t$ as a $\polylog (n)$ bit integer). We say that such a proof $\pi_t$ is {\em reward-maximizing}.
\end{defn}

We can now define the dynamic complexity class $\dcoNP$ in a natural manner. Intuitively, we get the class $\dcoNP$ if we switch the phrase ``$\D_n(t) = 0$'' with ``$\D_n(t) = 1$'', and the phrase ``$x_t = 0$'' with ``$x_t = 1$'' in~\ref{def:NP}. In other words, a decision problem $\D$ is in $\dcoNP$ iff its complement decision problem $\D'$ (see~\ref{def:complement}) is in $\dNP$.   A more formal definition is given below.

\begin{defn}
[Class $\dcoNP$]
\label{def:coNP}
 A decision problem $\D$ is in $\dcoNP$ iff it admits a verifier-family $\V'$ with update-time $\time_{\V'}(n) = O(\polylog (n))$ and space-complexity $\text{Space}_{\V'}(n) = O(\poly (n))$ which satisfy the following properties for each $n \geq 1$.  Fix any instance-sequence $(I_0, \ldots, I_k)$ of $\D_n$. Suppose that  $\V'_n$ gets $I_0$ as input at step $t = 0$, and  $((I_{t-1},I_t), \pi'_t)$ as input at every step $t \geq 1$.  Furthermore, the verifier $\V'_n$ outputs $(x'_t, y'_t)$ at each step $t$. Then:
\begin{enumerate}
\item  For every proof-sequence $(\pi'_1, \ldots, \pi'_k)$,  we have  $x'_t = 1$ for each $t \in \{0, \ldots, k\}$ where $\D_n(I_t) = 1$.
\item If the proof-sequence $(\pi'_1, \ldots, \pi'_k)$ is {\em reward-maximizing} (defined below), then we have $x'_t = 0$ for each $t \in \{0, \ldots, k\}$ with $\D_n(I_t) = 0$, 
\end{enumerate}
The proof-sequence $(\pi'_1, \ldots, \pi'_k)$ is {\em reward-maximizing} iff at each step $t \geq 1$, given the past history $(I_0, \ldots, I_t)$ and $(\pi'_1, \ldots, \pi'_{t-1})$, the proof $\pi'_t$ is chosen in such a way that maximizes  $y'_t$ (when we think of $y'_t$ as a $\polylog (n)$ bit integer). We say that such a proof $\pi'_t$ is  {\em reward-maximizing}.
\end{defn}

Just as in the static setting, we can easily prove that $\dP \subseteq \dNP \cap \dcoNP$ and we conjecture that $\dP \neq \dNP$. The big question left open in this paper is to resolve this conjecture.

\begin{cor}
\label{cor:pnp}
We have $\dP \subseteq \dNP \cap \dcoNP$.
\end{cor}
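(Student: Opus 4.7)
The plan is to establish $\dP \subseteq \dNP$ and $\dP \subseteq \dcoNP$ separately, each by a direct construction in which the $\dP$-algorithm plays the role of its own verifier and the proof string is simply ignored. So suppose $\D \in \dP$, witnessed by an algorithm-family $\A$ with $\time_\A(n) = O(\polylog(n))$ and $\text{Space}_\A(n) = O(\poly(n))$ that solves $\D$ in the sense of~\ref{assume:poly}.

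For the containment $\dP \subseteq \dNP$, I would build a verifier-family $\V$ as follows. For each $n$, the verifier $\V_n$ maintains internally the memory of $\A_n$, and at every step $t \geq 0$ it simply forwards its own input (the starting instance $I_0$, or the instance-update $(I_{t-1}, I_t)$; the proof $\pi_t$ is discarded) to $\A_n$, collects the bit $b_t = \A_n(\inp(0 \to t))$, and outputs $(x_t, y_t) := (b_t, 0)$. Since $\A$ solves $\D$, we have $b_t = \D_n(I_t)$ for every $t$, so in particular $x_t = 0$ whenever $\D_n(I_t) = 0$ (item~1 of~\ref{def:NP}, which must hold for \emph{every} proof-sequence) and $x_t = 1$ whenever $\D_n(I_t) = 1$ (item~2, which is now trivial since correctness holds on every proof-sequence, reward-maximizing or not). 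The update time and space complexity of $\V$ inherit those of $\A$, which are polylogarithmic and polynomial respectively, as required by~\ref{def:NP}.

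For $\dP \subseteq \dcoNP$, the same construction works verbatim: define $\V'_n$ to simulate $\A_n$ and output $(x'_t, y'_t) := (\A_n(\inp(0 \to t)), 0)$ while ignoring $\pi'_t$. Since $\A$ always produces the correct answer, item~1 of~\ref{def:coNP} holds (on YES instances we get $x'_t = 1$ regardless of the proof-sequence) and item~2 holds trivially (on NO instances we get $x'_t = 0$ regardless of the proof-sequence, so in particular for the reward-maximizing one). Combining the two inclusions yields $\dP \subseteq \dNP \cap \dcoNP$.

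There is no real obstacle here: the only thing to notice is that the reward mechanism introduced in~\ref{def:NP} and~\ref{def:coNP} is satisfied \emph{vacuously} by a constant reward function $y_t = 0$, because once the verifier is correct on every proof-sequence it is automatically correct on the reward-maximizing one. Thus the additional reward structure does not create any obstruction to the standard static-style argument that $\mathrm{P} \subseteq \mathrm{NP} \cap \mathrm{coNP}$; the $\polylog(n)$ update-time budget and $\poly(n)$ space budget of the $\dP$-algorithm are exactly what the definitions of $\dNP$ and $\dcoNP$ grant to the verifier.
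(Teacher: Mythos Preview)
Your proof is correct and follows essentially the same approach as the paper: simulate the $\dP$-algorithm, ignore the proof, and output the algorithm's answer together with an arbitrary reward (you choose $y_t=0$; the paper says ``any arbitrary $\polylog(n)$ bit string''). Your additional remark that the reward mechanism is satisfied vacuously is a nice clarification but does not diverge from the paper's argument.
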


\begin{proof}(sketch)
Consider any dynamic decision problem $\D$ that belongs to the class $\dP$. Then it admits an algorithm-family $\A$ with polynomial space complexity and polylogarithmic update time. For each $n \geq 1$, we can easily modify $\A_n$ to get a verifier $\V_n$ for $\D_n$ which satisfies the conditions stated in~\ref{def:NP}. Specifically, upon receiving an input $((I_{t-1}, I_t), \pi_t)$ at step $t \geq 1$, the verifier $\V_n$  ignores the proof $\pi_t$ and simulates the execution of $\A_n$ on the instance-update $(I_{t-1}, I_t)$. The verifier then outputs the ordered pair $(x_t, y_t)$, where $x_t$ is equal to the output of $\A_n$ and $y_t$ is any arbitrary $\polylog (n)$ bit string. This implies that the problem $\D$ also belongs to the class $\dNP$, and hence we get $\dP \subseteq \dNP$.

Using a similar argument, we can also show that $\dP \subseteq \dcoNP$.
\end{proof}

\subsection{Examples of some dynamic problems in $\dP$ and $\dNP$}

In this section, we classify many dynamic problems that were previously
studied in various contexts into our complexity classes. We will only give a high-level description of each problem\footnote{For each $n$, it should be clear how to describe each problem as the tuple $(\P_{n},\G_{n})$ from \ref{sec:problem}. Also, if a problem
can handle both updates and queries, then we can formalize both as
instance-updates in $\G_{n}$ as shown in \ref{exa:query}. }. We also list a problem which is not decision problem but a correct
answer is a number with logarithmic bits (e.g. what is the number
$a\in[n]$?) as well. This is because there is a corresponding decision
problem (e.g. is $a>k$?). It is easy to see that if a dynamic algorithm
for one problem implies another algorithm for the corresponding problem
with essentially the same update time (up to a logarithmic factor),
and vice versa.  

For $\dP$, we give a list of some problems in \ref{table:P} which
is not at all comprehensive. The only goal is to show that there are
problems from various contexts that are solvable by fast deterministic
dynamic algorithms. For each problem $\D$ in \ref{table:P}, when an instance
can be represented using $n$ bits, it holds that the instance-update
size is $\lambda_{\D}(n)=\Theta(\log n)$. This corroborates~\ref{assume:logn update size}.

Problems in $\dNP$ that are not known to be in $\dP$
are listed in \ref{table:BPP-NP}.  The complexity class $\dNP$ is  huge, in the sense that it contains many problems which are not
known to be in $\dP$. It is easy to show that dynamic connectivity
on general graphs is $\dNP$ by giving a spanning forest as a proof
(see \ref{prop:conn in NP}). For any constant $\epsilon>0$, dynamic
$(1+\epsilon)$-approximate maximum matching is also in $\dNP$ by
giving a short augmenting path of length $O(1/\epsilon)$ as a proof
(see \ref{prop:matching in NP}). There is also a general way to show
that a problem is in $\dNP$: for any problem $\D$ whose the yes-instance
has a ``small certificate'', then $\D\in\dNP$. This includes many
problems such as dynamic subgraph detection, dynamic $uMv$, dynamic
3SUM,  dynamic planar nearest neighbor\footnote{The best known algorithm for planar nearest neighbor is by Chan \cite{Chan10}
which has polylogarithmic \emph{amortized} update time. The algorithm
is randomized, but it is later derandomized using the result by Chan
and Tsakalidis \cite{ChanT16}.}, Erickson's problem, and Langerman's problem (see \ref{prop:small cert in NP}).

\section{Reductions, Hardness and Completeness}

\label{sec:reduction}

We first define the concept of a $\dP$-reduction between two dynamic problems. This notion
is analogous to Turing-reductions for static problems. 
\begin{defn}
[$\dP$-reduction]\label{def:P-reduction}A dynamic problem $\D=(\P,\G)$ is\emph{ $\dP$-reducible}
to another dynamic problem $\D'$ iff there is an algorithm-family $\A$ that solves $\D$ using an oracle-family
$\O$ for the problem $\D'$, and  has update-time $\time_{\A}(n) = O(\polylog (n))$, space complexity $\text{Space}_{\A}(n) = O(\poly (n))$ and blow-up size $\qsize_{\A}(n) = O(\poly (n))$. We write $\D\le\D'$ and refer to the algorithm-family $\A$ as a \emph{ $\dP$-reduction } from $\D$ to $\D'$.
\end{defn}
The above definition is almost the same as showing that
$\D\in\dP$, except that $\A$ can use an oracle-family
for $\D'$. 

We now show some basic properties of $\dP$-reductions. In \ref{thm:easiness transfering} below, item (1) implies that if
$\D$ is reducible to an ``easy'' problem, then $\D$ is ``easy''
as well. Item (2), on the other hand, shows that the reduction is \emph{transitive}. 

The idea behind the proof of  item (1) in \ref{thm:easiness transfering} is straightforward
and standard: Given that $\D\le\D'$, there is an algorithm-family
$\R$ solving $\D$ efficiently using an oracle-family $\O$ for $\D'$.
Now, if $\D'$ can be solved efficiently by some algorithm $\A'$,
then every time $\R$ need to call $\O$, we instead call $\A'$ as
a subroutine, and hence obtain an algorithm $\A$ for solving $\D$
without calling an oracle and we are done. The proof of items (2) is just an extensions of the same argument.

\begin{prop}
\label{thm:easiness transfering}Suppose that $\D\le\D'$. Then we have:
\begin{enumerate}
\item If $\D'\in\dP$, then $\D\in\dP$. 
\item If $\D'\le\D''$, then $\D\le\D''$.
\end{enumerate}
\end{prop}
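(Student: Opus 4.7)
The plan is to prove both parts by explicit construction, reusing the given $\dP$-reduction $\R$ from $\D$ to $\D'$ and rebookkeeping update time, space, and blow-up. The guiding principle is that an oracle call is ``free'' in the oracle model but becomes a paid subroutine call once the oracle is replaced by a real algorithm; because the number of oracle calls per step is bounded by the surrounding (non-free) update time, the costs will compose multiplicatively inside a $\polylog$ budget.

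For part (1), I will fix an algorithm-family $\A'$ solving $\D'$ with $\polylog$ update time and $\poly$ space, which exists by $\D'\in\dP$, and then obtain $\A$ for $\D$ from $\R$ by substituting each access to the oracle $\O^{\D'}_m$ with a subroutine call to $\A'_m$. Correctness is immediate: $\R$ cannot distinguish an oracle from a subroutine that returns the same answers on the same instance-update sequence. For complexity, since $\qsize_\R(n)=\poly(n)$ every instantiated oracle has size $m\le\poly(n)$, so each substituted subroutine call costs $\polylog(m)=\polylog(n)$; together with $\R_n$'s own $\polylog(n)$ probes and at most $\polylog(n)$ oracle calls per step, this yields an overall update time of $\polylog(n)$ and total memory $\poly(n)$.

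For part (2), given $\R:\D\le\D'$ and $\R':\D'\le\D''$, I will compose them into $\R'':\D\le\D''$: the algorithm $\R''_n$ simulates $\R_n$, but every time $\R_n$ would query $\O^{\D'}_m$, $\R''_n$ invokes $\R'_m$ as a subroutine, and the $\D''$-oracle calls made by $\R'_m$ are passed through as the oracle calls of $\R''_n$. The blow-up composes as $\qsize_{\R''}(n)\le \qsize_{\R'}(\qsize_\R(n))=\poly(\poly(n))=\poly(n)$, and the update-time bookkeeping mirrors part (1): each of the $\polylog(n)$ oracle queries of $\R_n$ per step now contributes the $\polylog(m)=\polylog(n)$ non-free cost of $\R'_m$, while the oracle calls that $\R'_m$ itself makes to $\D''$ remain free in $\R''_n$. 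Space complexity similarly stays $\poly(n)$.

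The main technical point, more careful bookkeeping than deep obstacle, will be verifying that the polynomial budget on the number of updates propagates through the substitutions: Assumption~\ref{assume:poly} requires a reduction to work for $\poly(n)$ outer updates, and each outer update triggers at most $\polylog(n)$ inner updates to a substituted subroutine, so the inner total remains $\poly(n)$, as required. A secondary point to confirm from the decision-tree formalization is that an algorithm's number of oracle calls per step is bounded by its own non-free update time, which is what legitimizes the multiplicative composition of costs in both parts.
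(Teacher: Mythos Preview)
Your proposal is correct and follows essentially the same approach as the paper's proof: in both parts you replace the oracle in the given reduction $\R$ by the appropriate subroutine (a $\dP$-algorithm for $\D'$ in part (1), the reduction $\R'$ in part (2)) and then check that update time, space, and blow-up compose within the required $\polylog$/$\poly$ bounds. Your proposal is in fact slightly more careful than the paper in making explicit why the number of oracle calls per step is bounded by the non-free update time and why the $\poly(n)$ bound on the total number of updates survives the substitution; the paper leaves these points implicit.
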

\begin{proof}
Let $\R$ be a $\dP$-reduction from $\D$ to $\D'$ as per~\ref{def:P-reduction}.

\medskip
\noindent
\textbf{(1):} Suppose that $\A'$ is a $\dP$-algorithm-family for
$\D'$. We will show a $\dP$-algorithm-family $\A$ for $\D$. For
each $n$, $\A_{n}$ just simulates $\R_{n}$ step by step, except that whenever
$\R_{n}$ calls an oracle $\O_{m}$ with  $m = \qsize_{\R}(n)$, the algorithm
$\A_{n}$ calls $\A'_{m}$ instead. Thus, the update time of
$\A_{n}$ is given by:
\begin{align*}
\time_{\A}(n) \leq \time_{\R}(n)\times\time_{\A'}(m) & =\polylog (n) \times\polylog (m)\\
 & =\polylog (n) \times\polylog (\qsize_{\R}(n))\\
 & = \polylog (n) \times \polylog (\poly (n)) = \polylog (n).
\end{align*}
The space complexity of $\A_n$ is given by:
\begin{align*}
\text{Space}_{\A}(n) & = \text{Space}_{\R}(n) + \text{Space}_{\A'}(m) \\
& = \poly (n) + \poly (m) \\
& = \poly (n) + \poly (\qsize_{\R}(n)) = \poly (n) + \poly (\poly (n)) = \poly (n).
\end{align*}
As both $\A'_{m}$ and $\O_{m}$ solve $\D'_{m}$,
it must be the case that $\A_{n}$ solves $\D_{n}$. Hence, we have:
$\D\in\dP$.

\medskip
\noindent
\textbf{(2): } The proof is similar in spirit to the argument used to prove (1). Since $\D' \le \D''$, let $\A'$ be an algorithm-family that solves $\D'$ using an oracle-family
$\O'$ for $\D''$. We have $\time_{\A'}(m)= O(\polylog (m))$ and $\qsize_{\A'}(m)=O(\poly (m))$.

Similar to the proof for item (1), we construct an algorithm-family $\A$ for the problem $\D$, which uses the oracle-family $\O'$ for $\D''$ with following parameters.
For each $n$, let $m = \qsize_{\R}(n)$ and $\ell = \qsize_{\A'}(m)$.
The algorithm $\A_{n}$ uses $\O'_{\ell}$ as an oracle to solve $\D_{n}$
with update time $O(\polylog (n))$. Let $\qsize_{\A}$ be
the blow-up size of $\A$ to $\O'$. We infer that:
\begin{align*}
\qsize_{\A}(n) & = \ell\\
 & = \qsize_{\A'}(m) \\
 & = \poly (m) \\
 & = \poly (\qsize_{\R}(n))\\
 & = \poly (\poly (n)) \\
 & =  \poly (n).
\end{align*}
This concludes that $\A$ is a $\dP$-reduction from $\D$ to $\D''$
and hence $\D\le\D''$.
\end{proof}

Next, we define the  notions of  $\dNP$-hardness and completeness.

\begin{defn}
\label{def:completeness}
[Hardness and Completeness]Let  $\D$
be a dynamic problem. 
\begin{enumerate}
\item We say that $\D$ is $\dNP$-hard  iff the following condition holds: If $\D \in \dP$ then $\D' \in \dP$ for every problem $\D' \in \dNP$.
\item We say that $\D$ is $\dNP$-complete  iff $\D\in \dNP$ and
$\D$ is $\dNP$-hard.
\end{enumerate}
\end{defn}

With our notions of hardness, the following is true. 
\begin{cor}
\label{cor:NP not P}Assuming that $\dNP\neq\dP$, if a problem
$\D$ is $\dNP$-hard, then $\D\notin\dP$. 
\end{cor}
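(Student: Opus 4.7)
The plan is to prove this by contraposition directly from the definition of $\dNP$-hardness and the basic inclusion $\dP \subseteq \dNP$ established in Corollary~\ref{cor:pnp}. Suppose for contradiction that $\D$ is $\dNP$-hard and yet $\D \in \dP$. I would then unpack item~(1) of Definition~\ref{def:completeness}: by the $\dNP$-hardness of $\D$, the premise $\D \in \dP$ forces $\D' \in \dP$ for every $\D' \in \dNP$, i.e.\ $\dNP \subseteq \dP$.

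Combining this with Corollary~\ref{cor:pnp}, which gives $\dP \subseteq \dNP \cap \dcoNP \subseteq \dNP$, I obtain $\dP = \dNP$. This contradicts the standing assumption $\dNP \neq \dP$, completing the proof.

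There is essentially no technical obstacle here: the statement is an immediate contrapositive consequence of how $\dNP$-hardness was defined. The only thing to be careful about is to invoke the correct direction of the definition (the implication ``$\D \in \dP \Rightarrow \dNP \subseteq \dP$'') rather than conflating $\dNP$-hardness with $\dNP$-completeness, and to cite Corollary~\ref{cor:pnp} explicitly so that the chain $\dP \subseteq \dNP \subseteq \dP$ is visibly closed.
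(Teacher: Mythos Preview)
Your proof is correct and matches the paper's intended reasoning: the corollary is stated without proof there because it is an immediate contrapositive of Definition~\ref{def:completeness}(1) together with $\dP \subseteq \dNP$ from Corollary~\ref{cor:pnp}, exactly as you argue.
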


\begin{cor}
\label{cor:reductions}
Consider any two dynamic problems $\D$ and $\D'$ such that (a) $\D \leq \D'$ and (b) $\D$ is $\dNP$-hard. Then $\D'$ is also $\dNP$-hard.
\end{cor}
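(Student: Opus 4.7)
The plan is to derive this directly from the transitivity-style consequences of $\dP$-reducibility already proved in Proposition \ref{thm:easiness transfering}, in particular item (1): if $\D \leq \D'$ and $\D' \in \dP$, then $\D \in \dP$. Unpacking the definition of $\dNP$-hardness (Definition \ref{def:completeness}), to show $\D'$ is $\dNP$-hard it suffices to show that whenever $\D' \in \dP$, every $\D'' \in \dNP$ also lies in $\dP$.

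So the argument I would give is a short two-step implication chain. First, assume the hypothesis of $\dNP$-hardness for $\D'$: suppose $\D' \in \dP$. Applying item (1) of Proposition \ref{thm:easiness transfering} to the hypothesis $\D \leq \D'$ immediately yields $\D \in \dP$. Second, since by assumption $\D$ is itself $\dNP$-hard, the fact that $\D \in \dP$ forces every problem $\D'' \in \dNP$ to be in $\dP$. This is exactly the condition required to conclude that $\D'$ is $\dNP$-hard.

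There is no real obstacle here: the whole content is bundled into Proposition \ref{thm:easiness transfering}(1), whose proof handled the delicate accounting of blow-up sizes, quasi-polynomial space, and polylog update time when composing a $\dP$-reduction with an algorithm in $\dP$. Once that proposition is in hand, the present corollary is just a chase through definitions, entirely analogous to the static observation that $\textsf{NP}$-hardness is preserved under Turing-reductions. I would present the proof in two or three sentences without introducing any new parameters or new algorithmic construction.
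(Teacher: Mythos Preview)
Your proposal is correct and follows exactly the same route as the paper: assume $\D'\in\dP$, invoke Proposition~\ref{thm:easiness transfering}(1) to get $\D\in\dP$, then use the $\dNP$-hardness of $\D$ to conclude $\dP=\dNP$. (One harmless slip: the accounting in Proposition~\ref{thm:easiness transfering} involves polynomial, not quasi-polynomial, space.)
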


\begin{proof}
Suppose that $\D' \in \dP$. Then~\ref{thm:easiness transfering} implies that $\D \in \dP$. Since $\D$ is $\dNP$-hard, this leads us to the conclusion that if $\D' \in \dP$ then $\dP = \dNP$. So $\D'$ is $\dNP$-hard.
\end{proof}

\section{The Dynamic Polynomial Hierarchy}
\label{sec:PH}

\newcommand{\sig}{\Sigma^{dy}}
\newcommand{\pid}{\Pi^{dy}}

In this section, we define a hierarchy of dynamic complexity classes that is analogous to the polynomial hierarchy in the static setting. We begin by introducing a useful notation. For any dynamic complexity classe $\C_1$ and $\C_2$, we define the class $\left(\C_1\right)^{\C_2}$ as follows. Intuitively, a dynamic problem $\D_1$ belongs to the class 
$\left(\C_1\right)^{\C_2}$ iff there is an algorithm-family $\A_1$ for $\D_1$ that is allowed have access to an oracle-family $\O_1$ for a problem $\D_2 \in \C_2$. We illustrate this by considering the following example.

\begin{example}
\label{ex:np:oracle}
\label{ex:p:oracle}
Consider any dynamic complexity class $\C$ and a dynamic problem $\D$.
\begin{itemize}
\item We say that the problem $\D$ belongs to the class $\left( \dP \right)^{\C}$ iff there is an algorithm-family $\A$ for $\D$ that uses an oracle-family $\O$ for some problem $\D' \in \C$ with $\qsize_{\A}(n) = O(\poly (n))$,  and satisfies the conditions stated in~\ref{def:P}.
\item We say that the problem $\D$ belongs to the class $\left( \dNP \right)^{\C}$  iff there is a verifier family $\V$ for $\D$ that uses an oracle-family $\O$ for some problem $\D' \in \C$ with $\qsize_{\V}(n) = O(\poly (n))$,  and satisfies the conditions stated in~\ref{def:NP}. 
\item Similarly, we say that the problem $\D$ belongs to the class $\left( \dcoNP \right)^{\C}$ iff there is a verifier family $\V'$ for $\D$ that uses an oracle-family $\O$ for some problem $\D' \in \C$ with $\qsize_{\V'}(n) = O(\poly (n))$,  and satisfies the conditions stated in~\ref{def:coNP}. 
\end{itemize}
\end{example}

We are now ready to introduce the dynamic polynomial hierarchy.
\begin{defn}[Dynamic polynomial hierarchy]
\label{def:ph}
We first inductively define the complexity classes $\sig_i$ and $\pid_i$ in the following manner.
\begin{itemize}
\item For $i = 1$, we have $\sig_1 = \dNP$ and $\pid_1 = \dcoNP$.
\item  For $i >  1$, we have $\sig_i = \left( \dNP \right)^{\sig_{i-1}}$ and $\pid_i = \left( \dcoNP \right)^{\sig_{i-1}}$.
\end{itemize}
Finally, we define $\dPH = \bigcup_{i \geq 1} \left( \sig_i \cup \pid_i \right)$.
\end{defn}

We refer the reader to~\ref{sec:PH:examples} for a list of dynamic problems that belong to $\dPH$. As in the static setting,  the successive levels of $\dPH$ are contained within each other.
\begin{lem}
\label{lem:ph:contain}
For each $i \geq 1$, we have $\sig_i \subseteq \sig_{i+1}$ and $\pid_i \subseteq \pid_{i+1}$.
\end{lem}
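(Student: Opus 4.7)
The plan is to prove both inclusions simultaneously by induction on $i$, using the principle that a verifier which does not need an oracle can always be cast as a verifier that has access to an oracle it simply never invokes. Concretely, I would first check from the formal definitions in~\ref{sec:model:dynamic} and~\ref{ex:np:oracle} that nothing forbids carrying an ``unused'' oracle: one may attach any oracle-family to an algorithm-family, and doing so does not increase update time, space complexity, or blow-up size (since the oracle is never called). This bookkeeping is the only technical point; once it is in hand, the argument is essentially syntactic.

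For the base case $i = 1$, I would take any $\D \in \sig_1 = \dNP$ with verifier-family $\V$ satisfying~\ref{def:NP}. Let $\D'$ be any fixed problem in $\dNP$ (for instance, $\D$ itself) with oracle-family $\O'$. The pair $(\V,\O')$, where $\V$ is reinterpreted as making no calls to $\O'$, satisfies all the requirements in~\ref{ex:np:oracle} for membership in $(\dNP)^{\dNP} = \sig_2$. Hence $\sig_1 \subseteq \sig_2$. Replacing $\dNP$ with $\dcoNP$ throughout the same argument (using~\ref{def:coNP} in place of~\ref{def:NP}) gives $\pid_1 \subseteq \pid_2$.

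For the inductive step, assume $\sig_{i-1} \subseteq \sig_i$. Take any $\D \in \sig_i = (\dNP)^{\sig_{i-1}}$; by definition there is a verifier-family $\V$ for $\D$ together with an oracle-family $\O$ solving some $\D' \in \sig_{i-1}$, obeying the bounds listed in~\ref{ex:np:oracle}. By the inductive hypothesis we have $\D' \in \sig_i$, so the very same pair $(\V,\O)$ witnesses $\D \in (\dNP)^{\sig_i} = \sig_{i+1}$. The identical argument, substituting $\dcoNP$ for $\dNP$, shows $\pid_i = (\dcoNP)^{\sig_{i-1}} \subseteq (\dcoNP)^{\sig_i} = \pid_{i+1}$. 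Note that the two statements share the same inductive hypothesis, so no separate induction is needed for the $\pid$ chain. I do not anticipate a real obstacle beyond verifying the ``unused oracle'' convention; the rest is a direct application of the definitions.
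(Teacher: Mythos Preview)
Your proposal is correct and follows essentially the same approach as the paper: induction on $i$, with the base case handled by attaching an unused oracle and the inductive step using the monotonicity $(\dNP)^{\sig_{i-1}} \subseteq (\dNP)^{\sig_i}$ (and likewise for $\dcoNP$) that follows from the hypothesis $\sig_{i-1} \subseteq \sig_i$. One small wording caveat: for the $\pid$ base case you should not literally ``replace $\dNP$ with $\dcoNP$ throughout,'' since the unused oracle must still solve a problem in $\sig_1 = \dNP$ (not $\dcoNP$) to witness membership in $\pid_2 = (\dcoNP)^{\sig_1}$; but this is a trivial fix and your observation that both chains share the single inductive hypothesis on $\sig$ is exactly right.
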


\begin{proof}
We use induction on $i$. The base case is trivial. For $i = 1$, we have $\sig_1 = \dNP \subseteq \left(\dNP\right)^{\dNP} = \sig_{2}$ and similarly $\pid_1 \subseteq \pid_2$. Suppose that the lemma holds for some $i$. We now observe that $\sig_{i+1} = \left( \dNP \right)^{\sig_i} \subseteq \left( \dNP \right)^{\sig_{i+1}} = \sig_{i+2}$. In this derivation, the second step holds because of our induction hypothesis that $\sig_i \subseteq \sig_{i+1}$. Similarly, we can show that our induction hypothesis implies that $\pid_{i+1} \subseteq \pid_{i+2}$. This completes the proof.
\end{proof}

 Similar to the static setting, we can show that if $\dP = \dNP$, then $\dPH$ collapses to $\dP$.

\begin{thm}
\label{th:ph:collapse:1}
If $\dP = \dNP$, then $\dPH = \dP$.
\end{thm}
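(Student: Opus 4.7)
The plan is to induct on $i$ to show that under the hypothesis $\dP = \dNP$, both $\sig_i \subseteq \dP$ and $\pid_i \subseteq \dP$ hold for every $i \geq 1$. Together with the trivial inclusion $\dP \subseteq \sig_i \cap \pid_i$ (which follows from the base cases of~\ref{cor:pnp} and~\ref{lem:ph:contain}), this yields $\sig_i = \pid_i = \dP$ at every level, so $\dPH = \bigcup_{i \geq 1}(\sig_i \cup \pid_i) = \dP$.

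For the base case $i = 1$, we have $\sig_1 = \dNP = \dP$ by hypothesis. For $\pid_1 = \dcoNP$, the key observation is that $\dP$ is closed under complementation: given a problem $\D \in \dcoNP$, its complement $\D'$ (\ref{def:complement}) lies in $\dNP = \dP$, and an algorithm-family for $\D'$ with the parameters of~\ref{def:P} yields an algorithm-family for $\D$ by flipping the single output bit at every step without affecting update time or space. Hence $\pid_1 = \dcoNP = \dP$.

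For the inductive step, assume $\sig_{i-1} = \dP$. Then $\sig_i = (\dNP)^{\sig_{i-1}} = (\dNP)^{\dP}$, and I claim $(\dNP)^{\dP} = \dNP$. Let $\D \in (\dNP)^{\dP}$, so there is a verifier-family $\V$ for $\D$ using an oracle-family $\O$ for some $\D' \in \dP$, with $\time_{\V}(n) = O(\polylog(n))$, $\text{Space}_{\V}(n) = O(\poly(n))$, and blow-up $\qsize_{\V}(n) = O(\poly(n))$, in the sense of~\ref{ex:np:oracle}. Since $\D' \in \dP$, fix a $\dP$-algorithm-family $\A'$ for $\D'$. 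I construct a new verifier-family $\V'$ which simulates $\V$ step by step, but each time $\V_n$ would call the oracle $\O_m$ on some input (where $m \leq \qsize_{\V}(n) = \poly(n)$), the verifier $\V'_n$ instead calls $\A'_m$ as a subroutine on the same input. Each such subroutine call takes time $\time_{\A'}(m) = \polylog(m) = \polylog(n)$ and the total number of oracle calls within one update of $\V_n$ is at most $\time_{\V}(n) = \polylog(n)$; the update time of $\V'_n$ is therefore $\polylog(n) \cdot \polylog(n) = \polylog(n)$, and its space is $\text{Space}_{\V}(n) + \text{Space}_{\A'}(m) = \poly(n)$. Because the verification guarantees of~\ref{def:NP} depend only on the input/output behavior of the oracle (which $\A'$ reproduces exactly), $\V'$ witnesses $\D \in \dNP = \dP$. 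An entirely symmetric argument using verifier-families as in~\ref{def:coNP} shows $\pid_i = (\dcoNP)^{\dP} = \dcoNP = \dP$.

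The main obstacle to watch for is the mismatch between oracle access (free calls, black-box interface, separate memory) and subroutine use (calls contribute to update time, memory is merged, as in~\ref{sub:formal alg}). The argument above relies crucially on two quantitative facts to survive this substitution: the blow-up $\qsize_{\V}(n)$ is polynomial, so $\polylog(\qsize_{\V}(n)) = \polylog(n)$ keeps each replaced call cheap; and $\V_n$ can issue at most $\time_{\V}(n) = \polylog(n)$ oracle calls per update step, keeping the cumulative overhead polylogarithmic. The same calculation is what drove transitivity in~\ref{thm:easiness transfering}, and the argument for $\pid_i$ proceeds identically, using that the base oracle in $(\dcoNP)^{\dP}$ still comes from $\sig_{i-1} = \dP$ by inductive hypothesis rather than from $\pid_{i-1}$.
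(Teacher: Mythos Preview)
Your proof is correct and follows essentially the same inductive strategy as the paper: assume $\sig_{i-1}=\dP$, replace the oracle in $(\dNP)^{\sig_{i-1}}$ by an actual $\dP$-algorithm as a subroutine, and conclude $\sig_i=\dP$. The only stylistic difference is that the paper first establishes $\sig_i=\dP$ for all $i$ and then derives $\pid_i=\dP$ in one stroke from the fact that every problem in $\pid_i$ is the complement of one in $\sig_i$, whereas you carry $\pid_i$ along inside the induction; both routes are equally valid and rely on the same oracle-to-subroutine substitution (with the polynomial blow-up bound doing the work, exactly as in \ref{thm:easiness transfering}).
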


\begin{proof}
Throughout the proof, we assume that $\dP =  \dNP = \sig_1$.  We use induction on $i$. For the base case, we already have $\sig_1 = \dP$. Now, suppose that $\sig_i = \dP$ for some $i \geq 1$. Then we get: 
$$\sig_{i+1} = \left(\dNP\right)^{\sig_{i}} = \left(\dNP\right)^{\dP} = \left( \dP \right)^{\dP} = \dP.$$
Thus, we derive that $\sig_i = \dP$ for all $i \geq 1$. Since each problem in $\pid_{i}$ is a complement of some problem in $\sig_i$, we also infer that $\pid_i = \dP$ for all $i \geq 1$. This concludes the proof.
\end{proof}

\subsection{Further result regarding  the collapse of $\dPH$}
\label{sec:PH:promise}

In this section, we prove that $\dPH$ collapses to the second level if $\dNP \subseteq \dcoNP$. Towards this end, we first state the following important lemma whose proof appears in~\ref{proof:lem:NP:coNP}.

\begin{lem}
\label{lem:NP:coNP}
We have $(\dNP)^{\dNP \cap \dcoNP} = \dNP$.
\end{lem}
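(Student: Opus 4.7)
The inclusion $\dNP \subseteq (\dNP)^{\dNP \cap \dcoNP}$ is immediate, since any $\dNP$-verifier can be regarded as an oracle-verifier that ignores its oracle. For the substantive direction $(\dNP)^{\dNP \cap \dcoNP} \subseteq \dNP$, I will start from an oracle-verifier $\V$ witnessing $\D \in (\dNP)^{\dNP \cap \dcoNP}$, using an oracle-family for some $\D' \in \dNP \cap \dcoNP$, together with the two verifiers $\V^{\dNP}$ and $\V^{\dcoNP}$ for $\D'$ supplied by~\ref{def:NP} and~\ref{def:coNP}. The goal is to build an oracle-free $\dNP$-verifier $\V^*$ for $\D$ that internally simulates $\V$, replacing each of its oracle calls with a pair of simulated calls to $\V^{\dNP}$ and $\V^{\dcoNP}$ on the same $\D'$-instance-update, using sub-proofs carved out of $\V^*$'s own proof.

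Concretely, at update step $t$ the proof $\pi^*_t$ is parsed as $(\pi_t,\tau_t)$, where $\tau_t$ packages up to $k_t \leq \polylog(n)$ pairs of sub-proofs $(\sigma^{\dNP}_s,\sigma^{\dcoNP}_s)$, each of polylogarithmic size. $\V^*$ runs $\V$ on $((I_{t-1},I_t),\pi_t)$; each time $\V$ issues an oracle call on $(I'_{s-1},I'_s)$, $\V^*$ feeds this update with the next sub-proof pair to $\V^{\dNP}$ and $\V^{\dcoNP}$ and obtains $(x^{\dNP}_s,y^{\dNP}_s)$ and $(x^{\dcoNP}_s,y^{\dcoNP}_s)$. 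When $x^{\dNP}_s=x^{\dcoNP}_s$, this common bit is returned to $\V$ as the oracle answer; when they disagree, a persistent ``mismatch-ever'' flag kept in $\V^*$'s memory is set. After $\V$ returns $(x_t,y_t)$, $\V^*$ outputs $x^*_t=x_t$ if the mismatch-ever flag is clear and $x^*_t=0$ otherwise, together with a polylog-bit reward $y^*_t$ that encodes in decreasing lexicographic priority: (i) the indicator that no mismatch occurred at step $t$, (ii) $y_t$, and (iii) the interleaved sub-verifier rewards $(y^{\dNP}_1,y^{\dcoNP}_1,\ldots,y^{\dNP}_{k_t},y^{\dcoNP}_{k_t})$, padded to a fixed polylogarithmic length.

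The correctness argument pivots on the observation that whenever $x^{\dNP}_s=x^{\dcoNP}_s$, this common bit must equal $\D'(I'_s)$: if $\D'(I'_s)=0$ then $\V^{\dNP}$'s universal NO-guarantee forces $x^{\dNP}_s=0$, ruling out a common value of $1$, and symmetrically if $\D'(I'_s)=1$ then $\V^{\dcoNP}$'s universal YES-guarantee forces $x^{\dcoNP}_s=1$. For the NO-guarantee of $\V^*$, fix any $\pi^*$ and any $t$ with $\D(I_t)=0$: either the mismatch-ever flag was ever triggered (so $x^*_t=0$ by construction), or every oracle call so far matched, in which case every answer ever fed to $\V$ was the true $\D'(\cdot)$ value and $\V$'s own NO-guarantee yields $x_t=0$. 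For the YES-guarantee, I proceed by induction on $t$: assuming past reward-maximizing choices produced no mismatches and reward-maximizing proof sequences for $\V^{\dNP}$ and $\V^{\dcoNP}$, at the current step layer (i) of $y^*_t$ forces the current no-mismatch indicator to be saturated (achievable by reward-maximizing current sub-proofs, which exist since $\D'\in\dNP\cap\dcoNP$); layer (ii) then forces $\pi_t$ to be reward-maximizing for $\V$ given the now-correct oracle answers; layer (iii) tie-breaks in favor of reward-maximizing current sub-proofs, extending the inductive invariant; and $\V$'s YES-guarantee delivers $x_t=1=x^*_t$.

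The main technical obstacle will be justifying the three-level priority order in $y^*_t$. Placing sub-proof rewards at the top would bias the choice of $\pi_t$ toward queries yielding large sub-verifier rewards rather than toward reward-maximizing $\V$; placing $y_t$ at the top would admit matching-but-incorrect sub-proofs that artificially inflate $y_t$; only the order (no-mismatch, $y_t$, sub-proof rewards) simultaneously forces correct current-step oracle simulation, preserves the inductive invariant needed for future correctness of both sub-verifiers, and extracts a reward-maximizing $\pi_t$ for $\V$. The remaining resource accounting is routine: $\V$'s polylogarithmic update time and at most polylogarithmically many oracle calls, combined with polylogarithmic-time calls to $\V^{\dNP}$ and $\V^{\dcoNP}$ on $\D'$-instances of size $\qsize_{\V}(n)=\poly(n)$, give $\V^*$ update time $O(\polylog(n))$, proof size $O(\polylog(n))$, and space complexity $O(\poly(n))$, as required by~\ref{def:NP}.
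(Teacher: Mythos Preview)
Your proof is correct and follows essentially the same route as the paper: simulate the oracle-verifier while replacing each oracle call by paired calls to the $\dNP$- and $\dcoNP$-verifiers for the oracle problem, abort to an all-$0$ output on disagreement, and make a validity bit the leading component of the reward so that a reward-maximizing prover avoids mismatches. Your three-level reward, with the interleaved sub-verifier rewards as a lowest-priority tiebreaker, is in fact slightly more careful than the paper's two-level reward $\hat y_{t^*} = 1\,y^*_{t^*}$: the paper's Claim asserts that a reward-maximizing prover never triggers the invalid flag but does not explicitly argue that matching sub-answers are always achievable at every step, which ultimately requires the sub-proof histories for $\V^{\dNP}$ and $\V^{\dcoNP}$ to remain reward-maximizing---exactly what your layer~(iii) enforces.
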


\begin{thm}
\label{th:PH:collapse:next}
If $\dNP \subseteq \dcoNP$, then $\dPH = \dcoNP \cap \dNP$.
\end{thm}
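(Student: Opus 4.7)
The plan is to proceed by induction on $i \ge 1$ to show that the assumption $\dNP \subseteq \dcoNP$ forces both $\sig_i \subseteq \dNP \cap \dcoNP$ and $\pid_i \subseteq \dNP \cap \dcoNP$. Taking the union over all $i$ will yield $\dPH \subseteq \dNP \cap \dcoNP$, while the reverse inclusion is immediate from $\dNP \cap \dcoNP \subseteq \sig_1 \cup \pid_1 \subseteq \dPH$.

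First I would observe that the hypothesis $\dNP \subseteq \dcoNP$ actually forces $\dNP = \dcoNP$: for any $\D \in \dcoNP$, its complement $\bar{\D}$ lies in $\dNP$ by definition of $\dcoNP$, hence by the assumption $\bar{\D} \in \dcoNP$ as well, which in turn means $\D \in \dNP$. This settles the base case $i = 1$, since $\sig_1 = \dNP = \dcoNP = \pid_1$, so both sit inside $\dNP \cap \dcoNP$.

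For the inductive step, suppose the claim holds at level $i$. Since $\sig_i \subseteq \dNP \cap \dcoNP$, every oracle-family for a problem in $\sig_i$ is also an oracle-family for a problem in $\dNP \cap \dcoNP$; applying \ref{lem:NP:coNP} then yields
\[
\sig_{i+1} \;=\; (\dNP)^{\sig_i} \;\subseteq\; (\dNP)^{\dNP \cap \dcoNP} \;=\; \dNP \;=\; \dNP \cap \dcoNP,
\]
where the last equality uses $\dNP = \dcoNP$ from the base-case observation. To handle $\pid_{i+1}$, I would establish the coNP-analogue $(\dcoNP)^{\dNP \cap \dcoNP} = \dcoNP$, which closes the induction via $\pid_{i+1} = (\dcoNP)^{\sig_i} \subseteq (\dcoNP)^{\dNP \cap \dcoNP} = \dcoNP = \dNP \cap \dcoNP$.

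The main obstacle, therefore, is justifying this coNP-analogue, and I would prove it through a general complementation principle: for any complexity class $\C$, a dynamic problem $\D$ belongs to $(\dcoNP)^{\C}$ if and only if its complement $\bar{\D}$ belongs to $(\dNP)^{\C}$. The point is that an oracle call is semantically neutral---it is a black-box query to an oracle-family, independent of whether we are playing the NP or coNP game---so a $\dcoNP$-verifier for $\D$ with oracle access to some $\D' \in \C$ can be converted into a $\dNP$-verifier for $\bar{\D}$ with the same oracle access simply by flipping the decision bit $x_t \in \{0,1\}$ while leaving the reward output $y_t$ intact; inspecting \ref{def:NP} and \ref{def:coNP} confirms that this flip exactly trades the $\dcoNP$-conditions for the $\dNP$-conditions on $\bar{\D}$. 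Combining this principle with \ref{lem:NP:coNP} then gives $\D \in (\dcoNP)^{\dNP \cap \dcoNP}$ iff $\bar{\D} \in (\dNP)^{\dNP \cap \dcoNP} = \dNP$ iff $\D \in \dcoNP$, as required. I expect this complementation step to be conceptually simple but to require care in matching the formal verifier and oracle definitions, and it is the only substantive new ingredient beyond \ref{lem:NP:coNP}.
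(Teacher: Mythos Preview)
Your proposal is correct and follows essentially the same route as the paper. The paper also reduces to \ref{lem:NP:coNP} via the induction $\sig_{i+1}=(\dNP)^{\sig_i}=(\dNP)^{\dNP\cap\dcoNP}=\dNP$, and then handles the $\pid_i$ levels by the same complementation principle you spell out (the paper simply states ``every problem in $\pid_i$ is a complement of some problem in $\sig_i$'' without rederiving it). The only cosmetic difference is that the paper runs the induction on $\sig_i$ alone and applies complementation once at the end, whereas you carry both $\sig_i$ and $\pid_i$ through the induction; the content is identical.
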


\begin{proof}
Throughout the proof, we assume that $\dNP \subseteq \dcoNP$. This implies that  $\dNP = \dNP \cap \dcoNP$. We now claim: 
\begin{itemize}
\item $\sig_i = \dNP$ for all $i \geq 1$. 
\end{itemize}
To prove this claim, we use induction on $i$. The base case is clearly true, since we have $\sig_1 = \dNP$ by definition. By induction hypothesis, suppose that $\sig_i = \dNP$ for some $i \geq 1$. But this implies that $\sig_{i+1} = (\dNP)^{\sig_i} =  (\dNP)^{\dNP} = (\dNP)^{\dNP \cap \dcoNP} = \dNP$. We thus conclude that:
\begin{equation} 
\label{eq:PH:collapse:100}
\sig_i = \dNP \text{ for all } i \geq 1.
\end{equation} 
Recall that every  problem in $\pi_i$ is a complement of some  problem in $\sig_i$. Hence,~\ref{eq:PH:collapse:100} implies that:
\begin{equation}
\label{eq:PH:collapse:101}
\pid_i = \dcoNP \text{ for all } i \geq 1.
\end{equation}
The theorem follows from~\ref{eq:PH:collapse:100},~\ref{eq:PH:collapse:101} and the observation that $\dPH$ is closed under complements.
\end{proof}

We conclude this section with one more lemma that will be useful later on. Its proof is analogous to the proof of~\ref{lem:NP:coNP} and is therefore omitted.

\begin{lem}
\label{lem:NP:coNP:next}
We have $(\dNP \cap \dcoNP)^{\dNP \cap \dcoNP} = \dNP \cap \dcoNP$.
\end{lem}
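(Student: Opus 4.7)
The plan is to show the two containments separately. The containment $\dNP \cap \dcoNP \subseteq (\dNP \cap \dcoNP)^{\dNP \cap \dcoNP}$ is immediate: any verifier certifying membership in $\dNP \cap \dcoNP$ trivially qualifies as one that (vacuously) has oracle access to some problem in $\dNP \cap \dcoNP$. So the real content is the reverse inclusion $(\dNP \cap \dcoNP)^{\dNP \cap \dcoNP} \subseteq \dNP \cap \dcoNP$, and this is where I would follow the template of the proof of \ref{lem:NP:coNP}.

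Fix a problem $\D \in (\dNP \cap \dcoNP)^{\dNP \cap \dcoNP}$. By unpacking the oracle-class notation from \ref{ex:np:oracle}, this means $\D$ simultaneously lies in $(\dNP)^{\dNP \cap \dcoNP}$ and in $(\dcoNP)^{\dNP \cap \dcoNP}$. The first of these is exactly the hypothesis of \ref{lem:NP:coNP}, which gives us $\D \in \dNP$ for free. The second requires an analogous statement, namely $(\dcoNP)^{\dNP \cap \dcoNP} = \dcoNP$, which I would derive by a symmetry argument: a problem lies in $(\dcoNP)^{\dNP \cap \dcoNP}$ iff its complement lies in $(\dNP)^{\dNP \cap \dcoNP}$, since the oracle class $\dNP \cap \dcoNP$ is closed under complement (the verifier-family witnessing membership in $\dNP \cap \dcoNP$ for a problem $\D'$ automatically witnesses membership for the complementary problem, by swapping the roles of the two verifiers). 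Applying \ref{lem:NP:coNP} to the complement then yields that the complement is in $\dNP$, i.e. $\D \in \dcoNP$.

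Combining the two conclusions gives $\D \in \dNP \cap \dcoNP$, establishing the non-trivial inclusion. The only step that requires any care is the symmetry argument for $(\dcoNP)^{\dNP \cap \dcoNP} = \dcoNP$; strictly speaking this could also be proved by re-running the construction in the proof of \ref{lem:NP:coNP} verbatim with the roles of YES and NO interchanged (the key feature being that the oracle problem is in both $\dNP$ and $\dcoNP$, so the ``inline proofs'' used to replace oracle calls—whether the call returned YES or NO—are exactly the $\dNP$-proofs and $\dcoNP$-proofs of the oracle problem, both of which are available). I do not anticipate any real obstacle; the proof is purely a bookkeeping exercise of composing two already-established facts with a complementation identity, which is why the paper omits it.
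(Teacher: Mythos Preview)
Your proposal is correct and matches what the paper intends: the paper omits the proof entirely, saying only that it is ``analogous to the proof of~\ref{lem:NP:coNP}'', and your decomposition into $(\dNP)^{\dNP\cap\dcoNP}$ and $(\dcoNP)^{\dNP\cap\dcoNP}$ followed by an application of~\ref{lem:NP:coNP} (directly for the first, via complementation for the second) is exactly such an analogous argument. The complementation step is sound since $\dNP\cap\dcoNP$ is closed under complement, so the same oracle works for both the problem and its complement.
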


\subsubsection{Proof of~\ref{lem:NP:coNP}}
\label{proof:lem:NP:coNP}

Since it is clearly the case that $\dNP \subseteq (\dNP)^{\dNP \cap \dcoNP}$, to complete the proof we only need to show that $(\dNP)^{\dNP \cap \dcoNP} = \dNP$. Consider any decision problem $\D^* \in (\dNP)^{\dNP \cap \dcoNP}$. We will show that $\D^* \in \dNP$. We begin by setting up some notations that will be used throughout the proof. By definition, the problem $\D^*$ admits a verifier-family $\V^*$ with the following properties. 
\begin{enumerate}
\item The verifier-family $\V^*$ uses an oracle-family $\O$ for a decision problem $\D \in \dNP \cap \dcoNP$ with $\qsize_{\V^*}(m) = O(\poly (m))$. Let $n(m) = \qsize_{\V^*}(m)$. Thus,  for each $m \geq 1$, the verifier $\V^*_m$ uses the oracle $\O_{n(m)}$. To ease notation, we simply write $n$ instead of $n(m)$. 
\item The verifier-family $\V^*$ has  $\time_{\V^*}(m) = O(\polylog (m))$ and  $\text{Space}_{\V^*}(m) = O(\poly (m))$. 
\item Fix any $m \geq 1$, and consider any instance-sequence $(I^*_0, \ldots, I^*_k)$ of $\D^*_m$. Suppose that $\V^*_m$ gets $I^*_0$ as input at step $t = 0$, and the ordered pair $((I^*_{t-1}, I^*_t), \pi^*_t)$ as input at each step $t \geq 1$. Furthermore, let $(x^*_t, y^*_t)$ denote the output of the verifier $\V^*_n$ at each step $t \geq 0$. Then:
\begin{itemize}
\item For every proof-sequence $(\pi^*_1, \ldots, \pi^*_k)$, we have $x^*_t = 0$ for each $t \in \{0,\ldots, k\}$ with $\D^*_m(t) = 0$.
\item If the proof-sequence is {\em reward-maximizing}, then we have $x^*_t = 1$ for each $t \in \{0,\ldots, k\}$ with $\D^*_m(t) = 1$. 
\end{itemize}
\item Since $\D \in \dNP \cap \dcoNP$, the decision problem $\D$ admits two verifier-families $\V$ and $\V'$ that respectively satisfy~\ref{def:NP} and~\ref{def:coNP}. While referring to the verifier-families $\V$ and $\V'$, we use the same notations that were introduced in~\ref{def:NP} and~\ref{def:coNP}.
\end{enumerate}

\medskip
\noindent We will now construct a verifier-family $\hat{\V}$ for the problem $\D^*$ that does not make any call to an oracle. Fix any $m \geq 1$. At a high level, instead of using the oracle $\O_n$, the verifier $\hat{\V}_m$ uses $\V_n$ and $\V'_n$ as subroutines in order to simulate the behavior of the verifier $\V^*_m$. Each time $\V^*_m$ makes a call to the oracle $O_n$, the verifier $\hat{\V}_m$ makes two calls to the verifiers $\V_n$ and $\V'_n$ for the problem $\D_n$. The verifier $\hat{\V}_m$ also checks that the answers returned by $\V_n$ and $\V'_n$ are {\em consistent} with each other.

\iffalse
\begin{itemize}
\item The only difference is that each time $\V^*_m$ makes a call to the oracle $O_n$, the verifier $\hat{V}_m$ makes two calls to the verifiers $\V_n$ and $\V'_n$ for the problem $\D_n$. Let $(x, y)$ and $(x', y')$ respectively denote the answers returned by $\V_n$ and $\V'_n$ after this call. The verifier $\hat{\V}_n$ checks that the answers returned by $\V_n$ and $\V'_n$ are {\em consistent} with each other. Specifically, the verifier $\hat{\V}_m$ proceeds with the simulation of $\V^*_m$ only if $x = x'$. In this case,~\ref{def:NP} and~\ref{def:coNP} imply that $x = x'$ is also the same as the answer returned by the oracle $\O_n$ after the call made by $\V^*_m$. Otherwise, if $x \neq x'$, then the verifier $\hat{\V}_m$ keeps returning $(0,0)$ as output from this point onward (regardless of the input received by $\hat{\V}_m$).
\end{itemize}
\fi

\paragraph{Constructing the verifier $\hat{\V}_m$:} To be more specific, after receiving an instance $I^*_0$ of $\D^*_m$ as input in the preprocessing step, the verifier $\hat{\V}_m$ simulates the behavior of $\V^*_m$ on the same input $I^*_0$ and returns the same answer as $\V^*_m$. Now, suppose that the verifier $\hat{\V}_m$ has received an instance-sequence $(I^*_0, \ldots, I^*_{t^*-1})$ of $\D^*_m$ and a proof-sequence $(\hat{\pi}_1, \ldots, \hat{\pi}_{t^*-1})$ as input till this point. Furthermore, suppose that the verifier $\hat{\V}_m$ has successfully been able to simulate the behavior of $\V^*_m$ on the same instance-sequence $(I^*_0, \ldots, I^*_{t^*-1})$ and a (different) proof-sequence $(\pi^*_1, \ldots, \pi^*_{t^*-1})$ till this point. Now, at step $t^*$, the verifier $\V^*_m$ gets an ordered pair $((I^*_{t^*-1}, I^*_{t^*}), \pi^*_{t^*})$ as input and the verifier $\hat{\V}_m$ gets an ordered pair $((I^*_{t^*-1}, I^*_{t^*}), \hat{\pi}_{t^*})$ as input. Note that the instance-update part of the input at step $t^*$ remains the same for $\V^*_m$ and $\hat{\V}_m$. In contrast, the proofs $\pi^*_{t^*}$ and $\hat{\pi}_{t^*}$  differ from each other. The proof $\hat{\pi}_{t^*}$ is supposed to consist of $\pi^*_{t^*}$ followed by a sequence of proofs $\{\pi_t, \pi'_t\}$ (for $\V_n$ and $\V'_n$ respectively) corresponding to all the calls to the oracle $\O_n$ made by $\V^*_m$.

We now proceed with the description of the verifier $\hat{\V}_m$. After receiving the input $((I^*_{t^*-1}, I^*_{t^*}), \hat{\pi}_{t^*})$ at step $t^*$, the verifier $\hat{\V}_m$ starts simulating the verifier $\V^*_m$ (also at step $t^*$) on input $((I^*_{t^*-1}, I^*_{t^*}), \pi^*_t)$.
\begin{itemize}
\item During this simulation, whenever $\V^*_m$ calls the oracle $\O_n$ for $\D_n$ at a step (say) $t$ with input $(I_{t-1}, I_t)$, the verifier $\hat{V}_m$ calls $\V_n$ and $\V'_n$ as subroutines respectively with inputs $((I_{t-1}, I_t), \pi_t)$ and $((I_{t-1}, I_t), \pi'_t)$. We emphasize that $\pi_t$ and $\pi'_t$ are specified within the proof $\hat{\pi}_{t^*}$ received by $\hat{\V}_m$. Let $(x_t, y_t)$ and $(x'_t, y'_t)$ respectively denote the answers returned by $\V_n$ and $\V'_n$ at the end of this call. If $x_t \neq x'_t$, then we say that the verifier $\hat{\V}_m$ enters into "invalid'' mode. Specifically, this means that at every future update-step (including step $t^*$), the verifier $\hat{\V}_m$ outputs the answer $(0, 0)$. Otherwise, if $x_t = x'_t$  then we claim that:
\begin{itemize}
\item $x_t$ (or, equivalently, $x'_t$) is equal to the answer returned by the call to  $\O_n$ made by $\V^*_m$.
\end{itemize} 
The claim holds because either the call to the oracle $\O_n$ returns a $0$ (in which case item (1) in~\ref{def:NP} implies that $x_t = 0$), or the call to the oracle $\O_n$ returns a $1$ (in which case item (1) in~\ref{def:coNP} implies that $x'_t = 1$). This claim ensures that if $x_t = x'_t$, then the verifier $\hat{\V}_m$ can continue with its simulation of the behavior of $\V^*_m$. And this is precisely what the verifier $\hat{\V}_m$ does in this case.
\end{itemize}
Suppose that the verifier $\hat{\V}_m$ manages to complete the simulation of $\V^*_m$ at step $t^*$ without ever entering into {\sc Invalid} mode. Let $(x^*_{t^*}, y^*_{t^*})$ denote the answer returned by $\V^*_m$ at step $t^*$. Then the verifier $\hat{\V}_m$ returns the answer $(\hat{x}_{t^*}, \hat{y}_{t^*})$ at step $t^*$, where $\hat{x}_{t^*} = x^*_{t^*}$ and $\hat{y}_{t^*} = 1 y^*_{t^*}$. Here, the ``$1$" in front of $y^*_{t^*}$ represents the fact that the verifier $\hat{\V}_m$ did not ever enter the {\sc Invalid} mode. At this point, the verifier $\hat{\V}_m$ is ready to handle the next update at step $t^*+1$.

\paragraph{Analysis of correctness:} Suppose that the verifier $\hat{\V}_m$ received the instance-sequence $(I^*_0, \ldots, I^*_{t^*})$ and the proof-sequence $(\hat{\pi}_1, \ldots, \hat{\pi}_{t^*})$ as input till this point. Let $(\pi^*_1, \ldots, \pi^*_{t^*})$ denote the corresponding proof-sequence received by the verifier $\V^*_m$ till this point. For each $k \in \{0, \ldots, t^*\}$, let $(\hat{x}_k, \hat{y}_k)$ and $(x^*_k, y^*_k)$ respectively denote the outputs of the verifiers $\hat{\V}_m$ and $\V^*_m$ at step $k$.

\medskip 
In order to prove~\ref{lem:NP:coNP}, we need to show that the verifier $\hat{\V}_m$ for the problem $\D^*_m$ satisfies the properties outlined in~\ref{def:verifier} and~\ref{def:NP}. These properties are shown in~\ref{cl:reward:1},~\ref{cl:reward:2} and~\ref{cl:reward:3}. 

\begin{claim}
\label{cl:reward:1}
If the proof-sequence $(\hat{\pi}_1, \ldots, \hat{\pi}_{t^*})$ is reward-maximizing for the verifier $\hat{\V}_m$ w.r.t. the input-sequence $(I^*_0, \ldots, I^*_{t^*})$, then we have $\hat{x}_k = \D^*_m(k)$ at each step $k \in \{0, \ldots, k\}$. Thus, the verifier $\hat{\V}_m$ always produces the correct output when it works with a reward-maximizing proof-sequence.
\end{claim}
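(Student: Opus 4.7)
The plan is to proceed by induction on $k \in \{0, 1, \ldots, t^*\}$. The inductive hypothesis will have three parts: (i) throughout the first $k$ update steps, the verifier $\hat{\V}_m$ never enters $\mathsf{Invalid}$ mode; (ii) the sub-sequence $(\pi^*_1, \ldots, \pi^*_k)$ that $\hat{\V}_m$ extracts from $(\hat{\pi}_1, \ldots, \hat{\pi}_k)$ and feeds into the simulated $\V^*_m$ is reward-maximizing for $\V^*_m$ on $(I^*_0, \ldots, I^*_k)$; and (iii) every call to the oracle $\O_n$ issued by the simulated $\V^*_m$ returns the true answer $\D_n(I_t)$. Combining (ii) and (iii) with the properties of the verifier-family $\V^*$ will yield $\hat{x}_k = x^*_k = 1 = \D^*_m(I^*_k)$ whenever $\D^*_m(I^*_k) = 1$, while item~1 of~\ref{def:NP} applied to $\V^*$ will give $\hat{x}_k = x^*_k = 0$ whenever $\D^*_m(I^*_k) = 0$ (with no appeal to reward-maximization required). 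The claim then follows.

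The observation that drives the induction step is the structure of the reward $\hat{y}_k$: when $\hat{\V}_m$ is not in $\mathsf{Invalid}$ mode, its output is $(x^*_k, 1 \cdot y^*_k)$ with leading bit equal to $1$, whereas entering $\mathsf{Invalid}$ mode forces the output $(0,0)$. Since $\hat{y}_k$ is interpreted as a polylogarithmic-bit integer, any proof $\hat{\pi}_k$ that keeps $\hat{\V}_m$ valid strictly dominates every proof that triggers $\mathsf{Invalid}$ mode. To certify that such a valid proof exists, I plan to exhibit one explicitly: take an arbitrary $\pi^*_k$, and for each oracle query $(I_{t-1}, I_t)$ that $\V^*_m$ issues at step $k$, include within $\hat{\pi}_k$ the reward-maximizing proofs $\pi_t$ for $\V_n$ and $\pi'_t$ for $\V'_n$. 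By items~1 and~2 of~\ref{def:NP} applied to $\V$ and the corresponding items of~\ref{def:coNP} applied to $\V'$, these choices force $x_t = x'_t = \D_n(I_t)$, so the consistency check passes and $\mathsf{Invalid}$ mode is avoided. Hence a reward-maximizing $\hat{\pi}_k$ must also avoid $\mathsf{Invalid}$ mode, establishing~(i); moreover it must maximize the tail $y^*_k$ of $\hat{y}_k$, which forces the extracted $\pi^*_k$ to be reward-maximizing for $\V^*_m$ (otherwise a local swap would strictly increase $y^*_k$) and simultaneously forces each nested $\pi_t, \pi'_t$ to be reward-maximizing, yielding~(ii) and~(iii).

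The main obstacle is a subtle encoding issue: at step $k$, the single proof $\hat{\pi}_k$ must simultaneously supply the verifier-proofs for every oracle call that $\V^*_m$ will make during step $k$, yet the identity and inputs of those calls depend on memory contents produced by earlier calls within the same simulation. I plan to address this by having $\hat{\pi}_k$ encode a small lookup table indexed by the possible branches of the simulation of $\V^*_m$; since $\V^*_m$ performs at most $\polylog(m)$ bit-probes per step and each nested proof has size $\polylog(n) = \polylog(m)$, the total description length of $\hat{\pi}_k$ remains $\polylog(m)$, which fits the size budget of~\ref{def:NP} applied to $\hat{\V}_m$. Formalizing this encoding and verifying that the three inductive parts indeed propagate from step $k-1$ to step $k$ will close the argument and establish~\ref{cl:reward:1}.
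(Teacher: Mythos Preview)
Your overall structure mirrors the paper's, but there is a genuine gap in how you justify part~(iii). You claim that maximizing $\hat{y}_k$ ``simultaneously forces each nested $\pi_t, \pi'_t$ to be reward-maximizing.'' This is false: by construction $\hat{y}_k = 1\,y^*_k$, and $y^*_k$ is the reward output by the simulated $\V^*_m$, which depends only on $\pi^*_k$ and on the simulated oracle \emph{answers} $x_t$---not on the nested rewards $y_t, y'_t$ at all. So nothing in $\hat{y}_k$ pins down the nested proofs beyond requiring that they yield consistency $x_t = x'_t$; many non-reward-maximizing nested proofs could achieve that. Your inductive existence argument inherits the same problem: you appeal to item~2 of Definitions~\ref{def:NP} and~\ref{def:coNP} for $\V_n$ and $\V'_n$, but item~2 requires the \emph{entire} nested proof-sequence to be reward-maximizing, and you have not established that the history produced by the actual reward-maximizing $\hat{\pi}_1,\ldots,\hat{\pi}_{k-1}$ left the nested verifiers in such a state.

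The fix is the observation the paper states just before the claim: consistency alone already implies correctness of the simulated oracle answer, with no appeal to reward-maximization of the nested proofs. If $\D_n(I_t)=0$, then item~1 of Definition~\ref{def:NP} forces $x_t=0$ for \emph{every} proof-sequence fed to $\V_n$; if $\D_n(I_t)=1$, item~1 of Definition~\ref{def:coNP} forces $x'_t=1$ for every proof-sequence fed to $\V'_n$. Hence $x_t=x'_t$ can only happen with both equal to $\D_n(I_t)$. This gives you~(iii) directly from~(i), and then~(ii) follows as you say, since once the simulated oracle answers are correct the tail $y^*_k$ depends only on $\pi^*_k$. Finally, your ``encoding obstacle'' evaporates for the same reason: because consistency forces the true oracle answers, the sequence of oracle calls made by $\V^*_m$ during step~$k$ is determined once $\pi^*_k$ is fixed, so $\hat{\pi}_k$ can simply list the nested proofs in order---no branching lookup table is needed.
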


\begin{proof}
Throughout the proof, we assume that the proof-sequence $(\hat{\pi}_1, \ldots, \hat{\pi}_{t^*})$ is reward-maximizing for $\hat{\V}_m$ w.r.t.  the input-sequence $(I^*_0, \ldots, I^*_{t^*})$. This implies that the verifier $\hat{\V}_m$ never enters the {\sc Invalid} state during steps $k \in \{1, \ldots, t^*\}$, for the following reason.
\begin{itemize}
\item Consider any step $k \in \{1, \ldots, t^*\}$. If the verifier $\hat{\V}_m$ enters into the {\sc Invalid} state during this step, then it outputs  $(\hat{x}_k, \hat{y}_k)$ where $\hat{x}_k = \hat{y}_k = 0$. Otherwise, the verifier $\hat{\V}_m$ outputs $(\hat{x}_k, \hat{y}_k)$ where $\hat{y}_k$ starts with a ``$1$''. Hence, the value of $\hat{y}_k$ is maximized when the verifier $\hat{\V}_m$ does {\em not} enter into the {\sc Invalid} state. Since the entire sequence $(\pi^*_1, \ldots, \pi^*_{t^*})$ is reward-maximizing, it follows that the verifier $\hat{\V}_m$ does not enter into the {\sc Invalid} state during steps $1, \ldots, t^*$.
\end{itemize}
Next, we note that the corresponding proof-sequence $(\pi^*_1, \ldots, \pi^*_{t^*})$ is also reward-maximizing for the verifier $\V^*_m$ w.r.t. the same instance-sequence $(I^*_0, \ldots, I^*_{t^*})$, for the following reason.
\begin{itemize}
\item We have already shown that the verifier $\hat{\V}_m$ never enters the {\sc Invalid} state during steps $1, \ldots, t^*$. This implies that $\hat{y}_k = 1 y^*_k$ for each $k \in \{1, \ldots, t^*\}$, where $(\hat{x}_k, \hat{y}_k)$ and $(x^*_k, y^*_k)$ are respectively the outputs of the verifiers $\hat{\V}_m$ and $\V^*_m$ at step $k$. Thus,  maximizing the value of $\hat{y}_k$ is equivalent to maximizing the value of $y^*_k$. Since the proof-sequence $(\hat{\pi}_1, \ldots, \hat{\pi}_{t^*})$ is reward-maximizing for the verifier $\hat{\V}_m$ as per our assumption, it necessarily follows that the corresponding proof-sequence $(\pi^*_1, \ldots, \pi^*_{t^*})$ is also reward-maximizing for the verifier $\V^*_m$. 
\end{itemize}
Since the proof-sequence $(\pi^*_1, \ldots, \pi^*_{t^*})$ is reward-maximizing for the verifier $\V^*_m$ w.r.t. the instance-sequence is $(I^*_0, \ldots, I^*_{t^*})$, we infer that $x^*_k = \D^*_m(I^*_k)$ for each $k \in \{0, \ldots, t^*\}$. Furthermore, since the verifier $\hat{\V}_m$ never enters the {\sc Invalid} state during steps $1, \ldots, t^*$, we have $\hat{x}_k = x^*_k = \D^*_m(I^*_k)$ for each $k \in \{0, \ldots, t^*\}$. In other words, the verifier $\hat{\V}_m$ always outputs the correct answer when it receives a reward-maximizing proof-sequence.
\end{proof}

\begin{claim}
\label{cl:reward:2}
Fix any instance-sequence $(I^*_0, \ldots, I^*_{t^*})$. For every proof-sequence $(\hat{\pi}_1, \ldots, \hat{\pi}_{t^*})$, the verifier $\hat{\V}_m$ outputs $\hat{x}_k = 0$ at each step $k \in \{0, \ldots, t^*\}$ where $\D^*_m(I^*_k) = 0$.
\end{claim}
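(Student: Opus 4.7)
The plan is to argue by case analysis on whether the verifier $\hat{\V}_m$ has entered the ``Invalid'' mode at some step in $\{1,\ldots, k\}$. If it has, then by construction $\hat{\V}_m$ outputs $(0,0)$ at every step from that point onward, so in particular $\hat{x}_k = 0$, as required. The interesting case is when $\hat{\V}_m$ never enters the Invalid mode during the first $k$ steps, i.e., at every simulated call to $\O_n$, the two subroutines $\V_n$ and $\V'_n$ returned the same bit $x_t = x'_t$.

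In that case, I first need to establish a ``faithful simulation'' sub-lemma: conditional on $\hat{\V}_m$ not entering Invalid mode through step $k$, the internal simulation of $\V^*_m$ carried out by $\hat{\V}_m$ is bit-for-bit identical to an actual execution of $\V^*_m$ on the same instance-sequence $(I^*_0, \ldots, I^*_k)$ with the proof-sequence $(\pi^*_1, \ldots, \pi^*_k)$ extracted from $(\hat{\pi}_1, \ldots, \hat{\pi}_k)$. The crucial atomic step, already observed in the construction of $\hat{\V}_m$, is that whenever $x_t = x'_t$, this common value must equal $\D_n(I_t)$---the answer the actual oracle $\O_n$ would have returned. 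Indeed, if $\D_n(I_t) = 1$ then item~1 of~\ref{def:coNP} forces $x'_t = 1$, while if $\D_n(I_t) = 0$ then item~1 of~\ref{def:NP} forces $x_t = 0$; either way, the agreed value equals $\D_n(I_t)$. The overall faithful-simulation claim then follows by a straightforward induction on the simulation step, since each subsequent probe performed by $\V^*_m$ depends only on the memory state of $\V^*_m$, which in turn depends only on past (correctly simulated) oracle responses together with the inputs $((I^*_{t-1}, I^*_t), \pi^*_t)$.

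Granting this sub-lemma, we obtain $\hat{x}_k = x^*_k$, where $(x^*_k, y^*_k)$ is the output of $\V^*_m$ on the extracted proof-sequence. Property~3, item~1 in the setup (the NO-instance soundness guarantee for the verifier $\V^*$, which holds for \emph{every} proof-sequence) then immediately yields $x^*_k = 0$ whenever $\D^*_m(I^*_k) = 0$, concluding the case. Combining both cases gives $\hat{x}_k = 0$ whenever $\D^*_m(I^*_k) = 0$, as required.

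I do not anticipate a real obstacle here: the argument is the natural dynamic analogue of the classical observation that an $\mathsf{NP} \cap \mathsf{coNP}$ oracle can be simulated without oracle access by guessing proofs for both sides and verifying consistency. The only mildly delicate point is keeping the induction on the simulation step precise, since the proofs $\pi_t, \pi'_t$ embedded inside $\hat{\pi}_{t^*}$ are indexed by simulation steps of $\V^*_m$, not by update steps of $\hat{\V}_m$; but this is purely a matter of bookkeeping. The rewards $y_t, y'_t$ play no role in the present claim---they will only become relevant when proving the dual claim~\ref{cl:reward:3} and the reward-maximizing correctness statement~\ref{cl:reward:1}.
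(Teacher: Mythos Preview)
Your proposal is correct and follows essentially the same approach as the paper: case-split on whether $\hat{\V}_m$ has entered the \textsc{Invalid} mode, handle that case trivially, and otherwise use the faithful-simulation observation (that $x_t = x'_t$ implies this common value equals the true oracle answer) to conclude $\hat{x}_k = x^*_k$ and then invoke the soundness of $\V^*_m$ on NO-instances. Your write-up is in fact more explicit than the paper's about the faithful-simulation step; one small slip is that your parenthetical reference to \ref{cl:reward:3} is off---that claim concerns update time and space, not rewards---but this does not affect the argument for the present claim.
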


\begin{proof}
If the verifier $\hat{\V}_m$ ever enters the {\sc Invalid} state at some step $k^*$, then it keeps returning the answer $\hat{x}_k = 0$ at every step $k \geq k^*$. So the claim trivially holds in this case. Thus, throughout the rest of the proof, w.l.o.g. we assume that the verifier $\hat{\V}_m$ never enters the {\sc Invalid} state. However, if  this is the case, then we have $\hat{x}_k = x^*_k$ at each step $k \in \{0, \ldots, t^*\}$, where $(x^*_k, y^*_k)$ is the output of the verifier $\V^*_m$ at step $k$ when it receives the same instance-sequence $(I^*_0, \ldots, I^*_{t^*})$ and the proof-sequence $(\pi^*_1, \ldots, \pi^*_k)$ corresponding to $(\hat{\pi}_1, \ldots, \hat{\pi}_{t^*})$. Now, from the definition of $\V^*_m$ it follows that $\hat{x}_k = x^*_k = 0$ on all the instances $I^*_k$ where $\D^*_m(I^*_k) = 0$.
\end{proof}

\begin{claim}
\label{cl:reward:3}
The verifier $\hat{\V}_m$ has space-complexity $\text{Space}_{\hat{\V}}(m) = O(2^{\polylog (m)})$ and update-time $\time_{\hat{\V}}(m) = O(\polylog (m))$.
\end{claim}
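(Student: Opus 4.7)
The plan is to bound separately the update time and the space usage of $\hat{\V}_m$ by tracing through the simulation described above and aggregating costs over a single update step.

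\textbf{Update time.} First I would count how often, in a single update step, the verifier $\V^*_m$ can invoke the oracle $\O_n$. Since $\V^*_m$ has $\time_{\V^*}(m) = O(\polylog(m))$ and the per-call work of writing into the oracle's input tape $\mem_{\O_n}^{\inp}$ is itself charged to $\V^*_m$, and since each such input requires at least one bit-probe, the number of oracle invocations during any step is $O(\polylog(m))$. Each such invocation is replaced, in the construction of $\hat{\V}_m$, by one call to $\V_n$ and one call to $\V'_n$, together with a constant-time consistency check ($x_t = x'_t$?) and, if needed, a transition to the {\sc Invalid} mode. By assumption $\V, \V'$ both have update time $O(\polylog(n))$, and since $n = \qsize_{\V^*}(m) = O(\poly(m))$ we have $\polylog(n) = \polylog(m)$. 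Hence the cost contributed by oracle-call simulation is at most $O(\polylog(m)) \cdot O(\polylog(m)) = O(\polylog(m))$, and adding the $O(\polylog(m))$ spent on simulating $\V^*_m$ itself yields $\time_{\hat{\V}}(m) = O(\polylog(m))$, as required.

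\textbf{Space complexity.} The memory of $\hat{\V}_m$ decomposes into three disjoint regions: (i) the memory used to simulate $\V^*_m$, which has $\text{Space}_{\V^*}(m) = O(\poly(m))$; (ii) the memory of the subroutine $\V_n$, which has $\text{Space}_{\V}(n) = O(\poly(n)) = O(\poly(m))$; and (iii) the memory of the subroutine $\V'_n$, bounded similarly by $O(\poly(m))$. We also store a single bit indicating whether $\hat{\V}_m$ is in {\sc Invalid} mode. Summing, the overall space is $O(\poly(m))$, which is in particular $O(2^{\polylog(m)})$ (so the bound claimed is easily satisfied; in fact we obtain the stronger polynomial bound needed to place $\D^* \in \dNP$ per \ref{def:NP}).

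\textbf{Main obstacle.} The only subtle point is the counting of oracle invocations per step, which relies on the fact that although oracle calls are ``free'' in terms of the oracle's internal work (by the free-call convention in~\ref{sec:model:dynamic}), the act of writing each oracle input still consumes bit-probes of $\V^*_m$. Without this observation one might fear an unbounded number of oracle calls and a correspondingly blown-up update time for $\hat{\V}_m$. Once this is handled, the rest of the accounting is straightforward, and in combination with \ref{cl:reward:1} and \ref{cl:reward:2} it shows that $\hat{\V}$ is a valid $\dNP$-verifier-family for $\D^*$, completing the proof of \ref{lem:NP:coNP}.
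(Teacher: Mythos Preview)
Your proof is correct and follows essentially the same approach as the paper: decompose the space into the three subroutine memories (each $O(\poly(m))$ since $n = O(\poly(m))$), and bound the update time as (number of oracle calls) $\times$ (cost per simulated call). Your explicit justification that the number of oracle invocations is $O(\polylog(m))$ because each call requires at least one bit-probe to write the oracle input is a useful detail that the paper leaves implicit, but otherwise the arguments coincide.
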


\begin{proof}
The space-complexity of the verifier $\hat{\V}_m$ is dominated by the space-complexities of the subroutines $\V_n, \V'_n$, and that of the verifier $\V^*_m$. Thus, we have:
\begin{eqnarray*}
\text{Space}_{\hat{\V}}(m) & = & \text{Space}_{\V^*}(m) + \text{Space}_{\V}(n) + \text{Space}_{\V'}(n)  \\
& = & O\left(\poly (m) + \poly (n) + \poly (n) \right) \\
& = & O\left( \poly (m) \right)
\end{eqnarray*}
The last equality holds since the verifier $\V^*_m$ uses the oracle $\O_n$ for the problem $\D_n$, and $n = \qsize_{\V^*}(m) = O\left(\poly (m)\right)$.

Moving on,  the verifier $\V^*_m$ has update time $O(\polylog (m))$ and it uses the oracle $\O_n$ for the problem $\D_n$. In the verifier $\hat{\V}_m$, each call to the oracle $\O_n$ is replaced by two calls to the verifiers $\V_n$ and $\V'_n$. Since each of the verifiers $\V_n$ and $\V'_n$ has update time $O(\polylog (n))$, we get:
\begin{eqnarray*}
\time_{\hat{\V}}(m) & = & O(\polylog (m)) \cdot O(\polylog (n)) = O(\polylog (m)).
\end{eqnarray*}
Again, the last equality holds since $n = O\left(\poly (m) \right)$.
\end{proof}

\ref{lem:NP:coNP} follows from~\ref{cl:reward:1},~\ref{cl:reward:2} and~\ref{cl:reward:3}.

\newpage{}

\part{$\dNP$-completeness (in the Bit-probe Model)}\label{part:completeness}

This part is organized as follows. In~\ref{sec:DT}, we define a dynamic problem called ``First Shallow Decision Tree" ($\fDT$ for short) and show that this problem is $\dNP$-hard. In~\ref{sec:DNF}, we define another problem called \emph{dynamic narrow DNF evaluation problem} (or $\dDNF$ for short). We show that $\fDT$ is $\dP$-reducible   to $\dDNF$. This means that $\dDNF$ is $\dNP$-complete. We conclude this part  by explaining (in~\ref{sub:list NP hard}) why the $\dNP$-completeness of $\dDNF$ almost immediately implies that many natural dynamic problems are $\dNP$-hard.

\section{First Shallow Decision Tree: an intermediate problem}

\label{sec:DT}

This section is organized as follows. In \ref{sub:DT def}, we define
a dynamic problem called First Shallow Decision Tree ($\fDT$). In
\ref{sub:DT hard}, we show that this problem is $\dNP$-hard.

\subsection{Definition of $\protect\fDT$}
\label{sub:DT def}

We start with a definition of the First Shallow Decision Tree problem
in the static setting. We denote this static problem by $\P'$, and emphasize that this is {\em not} a decision problem. An instance $I \in \P'$ is
an ordered pair $(\mem_{I},\T_{I})$ such that:
\begin{itemize}
\item  $\mem_{I}$ is an array of bits and $\T_{I}$ is a collection of decision trees. 
\item Each leaf node $v$ in each decision tree $T \in \T_I$ is labelled with a $\polylog(|\T_I|)$ bit integer $r(v) \in \{0,1\}^{\polylog (|\T_I|)}$.  We refer to $r(v)$ as the {\em rank} of $v$. This rank $r(v)$ is independent of the contents of the array $\mem_I$.
\item Each decision tree $T \in\T_{I}$
\emph{operates on} the same memory $\mem_{I}$\footnote{Recall the definition of ``operating on'' from \ref{sub:formal alg}.}.
\item All the decision trees in $\T_{I}$ are \emph{shallow},
in the sense that the \emph{depth} (the maximum length of a root to
leaf path) of each tree $T\in\T_{I}$ is at most $O(\text{poly}\log|\T_{I}|)$.
\end{itemize}
For every decision tree $T \in \T_I$, we define $v^*_{T}$ to be the leaf node of the {\em execution-path} of $T$ when it operates on $\mem_I$.\footnote{Recall the definition of ``execution path" from~\ref{sub:formal alg}.} Note that $v^*_T$ depends on the contents of the array $\mem_I$, since the latter determines the execution-path of $T$ when it operates on $\mem_I$. The answer $\P'(I)$ to an instance $I\in\P'$ points to the decision tree $T \in \T_I$ which maximizes the rank $r(v^*_T)$. Thus, the answer $\P'(I)$ can be encoded using $O(\log |\T_I|)$ bits. We emphasize that the answer $\P'(I)$ depends on the contents of the array $\mem_I$ (which determines the leaf node $v^*_T$ for every tree $T \in \T_I$).

Intuitively, in the dynamic version of the problem denoted by $\fDT$,
the instance $I$ keeps changing via a sequence of updates where each
update flips one bit in the memory $\mem_{I}$, and we have
to keep track of the answer $\P'(I)$ at the current instance $I$.
Below, we give a formal description.

For each integer $n\geq1$, let $\P'_{n}$ denote the $n$-slice of
the $\fDT$ problem, which consists of all instances $I\in\P'$ that
are encoded using $n$ bits. In the dynamic setting, we impose the following
 graph structure $\G'_{n}=(\U'_{n},\E'_{n})$ on $\P'_{n}$ with
node-set $\U'_{n}=\{0,1\}^{n}$: For any
two instances $I,I'\in\P'_{n}$, there is an instance-update  $(I,I')\in\E'_{n}$
iff $\T_{I}=\T_{I'}$, and $\mem_{I}$ and
$\mem_{I'}$ differ in exactly one bit. 

We denote the problem $\fDT$ by $\D'=(\P',\G')$, and the $n$-slice
of this problem by $\D'_{n}=(\P'_{n},\G'_{n})$. We now derive a simple 
upper bound on the instance-update-size of $\D'$, which is consistent with~\ref{assume:logn update size}.
\begin{cor}
\label{cor:lambda:dt} for every integer $n\geq1$, we have $\lambda_{\D'}(n)\leq\log n$. \end{cor}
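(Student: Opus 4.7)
The plan is to unpack the definitions and observe that the bound is essentially immediate from how edges in $\G'_n$ are defined. Recall from the discussion following Fact~\ref{fact:logn update size} that the instance-update-size is $\lambda_{\D'}(n) = \max_{(I,I') \in \E'_n} |u(I,I')|$, where $u(I,I')$ is the standard encoding listing the bit-positions where $I$ and $I'$ differ.

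First, I would invoke the definition of the edge-set $\E'_n$: any instance-update $(I,I') \in \E'_n$ satisfies $\T_I = \T_{I'}$ and the condition that $\mem_I, \mem_{I'}$ differ in exactly one bit. Since $\T_I = \T_{I'}$, the bit-strings encoding $I$ and $I'$ agree on all bits encoding the tree-collection, and hence differ in exactly the single bit-position where $\mem_I$ and $\mem_{I'}$ disagree. Second, I would note that specifying a single bit-position among the $n$ bits of the instance encoding requires at most $\lceil \log n \rceil \le \log n$ bits (absorbing the ceiling into standard conventions, as the paper does elsewhere). Therefore $|u(I,I')| \le \log n$ for every $(I,I') \in \E'_n$, and taking the maximum yields $\lambda_{\D'}(n) \le \log n$.

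There is no real obstacle here; the only thing to be careful about is that the ``standard encoding'' convention introduced in Section~\ref{sec:problem} really does allow a single differing position to be written in $\log n$ bits (as opposed to, say, a length-prefixed list that would add a constant). Given the paper's conventions and Assumption~\ref{assume:logn update size}, this is the intended reading, so the corollary follows immediately.
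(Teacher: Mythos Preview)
Your proof is correct and follows essentially the same approach as the paper's own proof. Both arguments observe that an instance-update of $\D'$ changes exactly one bit (since $\T_I = \T_{I'}$ and $\mem_I,\mem_{I'}$ differ in a single position), so the standard encoding needs at most $\log n$ bits to specify that position; your version is simply more explicit about unpacking the definitions.
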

\begin{proof}
There is an instance-update $(I,I')$ of $\D'$ iff $\mem_{I}$ and
$\mem_{I'}$ differ in exactly one bit. Hence, the standard encoding
of $(I,I')$ can be specified using at most $\log n$ bits, and we
get $\lambda_{\D'}(n)\leq\log n$. 
\end{proof}

\subsection{$\protect\dNP$-hardness of $\protect\fDT$ }

\label{sub:DT hard}

We will show that one can efficiently solve any problem
in $\dNP$ using an oracle for $\fDT$. Specifically, we will
prove the following theorem. 
\begin{thm}
\label{th:np-hardness} Every problem $\D\in\dNP$ admits
an algorithm-family $\A$ that solves $\D$ with update time $\time_{\A}(n)=O(\polylog(n))$ and space complexity $\text{Space}_{\A}(n) = O(\poly (n))$, and uses an oracle-family $\O'$ for $\fDT$ with $\qsize_{\A}(n)=O(2^{\polylog(n))})$. 
\end{thm}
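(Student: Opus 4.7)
The plan is to simulate the verifier $\V_n$ for $\D$, using the $\fDT$ oracle as the ``search engine'' that selects a reward-maximizing proof at every update step. First I would fix $n \geq 1$ and any verifier-family $\V$ witnessing $\D \in \dNP$. The defining feature of $\V_n$ is that, on every update call, it makes at most $\polylog(n)$ bit-probes. A standard observation from the cell/bit-probe literature~\cite{Miltersen99cellprobe} is that, if we hardwire the content of the dedicated proof-region of $\V_n$'s memory to a specific string $\pi \in \{0,1\}^{\polylog(n)}$, then $\V_n$'s behavior during an update step is captured by a decision tree $T_\pi$ of depth $\polylog(n)$ operating on the ``non-proof'' portion of $\V_n$'s memory. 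Each leaf $v$ of $T_\pi$ carries the output $(x_v, y_v)$ that $\V_n$ would produce upon reaching it; I would assign the rank $r(v) := y_v \circ \overline{\pi}$, where $\overline{\pi}$ is a fixed encoding of $\pi$ used as a lexicographic tie-breaker. Thus maximizing $r(v)$ is equivalent to maximizing the reward $y_v$, and $r(v)$ fits in $\polylog(n) = \polylog(|\T|)$ bits as required by~\ref{sub:DT def}.

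Next I would set $\T := \{T_\pi : \pi \in \{0,1\}^{\polylog(n)}\}$, a fixed family of $2^{\polylog(n)}$ decision trees, and maintain throughout the simulation an $\fDT$ oracle instance $I^* = (\mem_{I^*}, \T)$, where $\mem_{I^*}$ is a bit-by-bit mirror of the non-proof region of $\V_n$'s memory. The set $\T$ never changes -- it is hardwired into the oracle at preprocessing time; only $\mem_{I^*}$ evolves, and always via single-bit flips, matching the legal instance-updates of $\fDT$. By construction, the answer returned by the oracle on $I^*$ points to a proof $\pi^*$ whose tree $T_{\pi^*}$ reaches a leaf maximizing $y_v$ under the current memory, which is exactly a reward-maximizing proof in the sense of~\ref{def:NP}.

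With these pieces in place, $\A_n$ would handle each update step $t \geq 1$ as follows: (i) overwrite $\V_n$'s input region with $(I_{t-1}, I_t)$, forwarding the $O(\log n)$ induced bit-flips to the oracle on $I^*$; (ii) query the oracle to obtain a reward-maximizing proof $\pi_t$; (iii) write $\pi_t$ into $\V_n$'s proof-region and invoke $\V_n$, which returns $(x_t, y_t)$ and modifies at most $\polylog(n)$ bits of its own non-proof memory; (iv) mirror those bit-flips to the oracle so that $\mem_{I^*}$ remains synchronized; (v) output $x_t$. Preprocessing is analogous: run $\V_n$'s preprocessing on $I_0$ and then initialize the oracle with the resulting memory contents and the fixed family $\T$. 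Correctness is then immediate from~\ref{def:NP}: each $\pi_t$ is reward-maximizing, so on YES-instances $x_t = 1$, while on NO-instances $x_t = 0$ regardless of the proofs supplied.

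For the resource bounds: $\mem_{I^*}$ has $\poly(n)$ bits and $|\T| = 2^{\polylog(n)}$, so the overall $\fDT$ instance has size $2^{\polylog(n)}$, giving $\qsize_{\A}(n) = O(2^{\polylog(n)})$ as claimed. Per update step the oracle receives $O(\polylog(n))$ free single-bit updates and one free query, the call to $\V_n$ costs $\polylog(n)$ probes, and the surrounding bookkeeping is $O(\polylog(n))$, yielding $\time_{\A}(n) = O(\polylog(n))$. The space used by $\A_n$ outside the oracle is dominated by $\text{Space}_{\V}(n) = O(\poly(n))$. I expect the main technical subtlety to be precisely formalizing the ``tree-hardwiring'' step -- verifying that, even though $\V_n$ may both read and write during a single step and its behavior depends on accumulated state, one can still encode its per-step behavior for each fixed proof as a single depth-$\polylog(n)$ decision tree that reads (and hypothetically writes) the non-proof memory. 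This is a routine application of the standard cell-probe-to-decision-tree transformation. The fact that we produce an oracle instance of size $2^{\polylog(n)}$ is not an obstacle here because oracle memory is free; this blow-up will become the central difficulty in~\ref{sec:DNF}, where we must eliminate the oracle via a reduction to $\dDNF$.
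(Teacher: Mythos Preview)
Your proposal is correct and follows essentially the same approach as the paper: build one decision tree $T_\pi$ per proof string by hardwiring the proof region of the verifier's decision tree, set the leaf rank to the reward $y_v$, keep the oracle's memory synchronized with the verifier's memory, and use the $\fDT$ oracle to extract a reward-maximizing proof at each step. The only cosmetic differences are your explicit tie-breaker $y_v \circ \overline{\pi}$ (the paper just uses $r(v)=y_v$) and that you mirror only the non-proof region whereas the paper mirrors all of $\mem_{\V_n}$; neither affects the argument.
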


As a corollary of the above theorem, we will derive that $\fDT$ is $\dNP$-hard.

\renewcommand{\S}{\mathcal{S}}

\begin{cor}
\label{cor:np-hardness} $\fDT$ is $\dNP$-hard.
\end{cor}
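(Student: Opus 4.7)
The plan is to derive the corollary by combining the hypothesis that $\fDT \in \dP$ with Theorem~\ref{th:np-hardness} in essentially the same way that Proposition~\ref{thm:easiness transfering}(1) derives membership in $\dP$ from a $\dP$-reduction. Specifically, if $\fDT \in \dP$ then Definition~\ref{def:P} furnishes an algorithm-family $\A'$ solving $\fDT$ with $\time_{\A'}(m) = O(\polylog m)$ and $\text{Space}_{\A'}(m) = O(\poly m)$. Given any $\D \in \dNP$, Theorem~\ref{th:np-hardness} supplies an algorithm-family $\A$ solving $\D$ with $\time_{\A}(n) = O(\polylog n)$ and $\text{Space}_{\A}(n) = O(\poly n)$, which uses an oracle-family $\O'$ for $\fDT$ with $\qsize_{\A}(n) = O(2^{\polylog n})$. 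Construct a new algorithm-family $\hat{\A}$ for $\D$ by replacing every oracle-call to $\O'_m$ (where $m \le \qsize_{\A}(n) = 2^{\polylog n}$) with a subroutine-call to $\A'_m$.

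For the update-time of $\hat{\A}_n$, each substituted call to $\A'_m$ costs $\time_{\A'}(m) = O(\polylog m) = O(\polylog(2^{\polylog n})) = O(\polylog n)$, and $\A_n$ makes at most $\time_{\A}(n) = O(\polylog n)$ such calls per update-step, so $\time_{\hat{\A}}(n) = O(\polylog n) \cdot O(\polylog n) = O(\polylog n)$. This is the easy half of the argument and follows the same composition pattern used in the proof of Proposition~\ref{thm:easiness transfering}.

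The main obstacle is the space-complexity bookkeeping: the naive substitution of $\A'_m$ as a subroutine would demand $\text{Space}_{\A'}(m) = \poly(m) = \poly(2^{\polylog n}) = 2^{\polylog n}$ bits of memory, which is only quasi-polynomial in $n$ and therefore exceeds the $O(\poly n)$ budget permitted by Definition~\ref{def:P}. To overcome this, we do not store the memory $\mem_{\A'_m}$ explicitly; instead we interpose the interface subroutine $\S^*_n$ sketched in the overview of the $\dNP$-hardness proof (see the paragraph on ``Bounding the space complexity of $\A_n$'' in Section~\ref{main:sec:NP:hard}). The key observation legitimising this trick is Assumption~\ref{assume:poly}: the entire lifetime of $\hat{\A}_n$ consists of at most $\poly(n)$ update-steps, during which $\A'_m$ issues at most $\poly(n) \cdot \polylog(n) = \poly(n)$ bit-probes into $\mem_{\A'_m}$, so $\S^*_n$ only needs to represent the effect of $\poly(n)$ probes on the quasi-polynomial-size memory, which can be done implicitly in $O(\poly n)$ space while still answering each probe in $O(\polylog n)$ time. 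Once this interface is inserted, $\hat{\A}$ witnesses $\D \in \dP$, which by Definition~\ref{def:completeness} proves that $\fDT$ is $\dNP$-hard.
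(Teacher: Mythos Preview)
Your outline matches the paper's approach: replace the $\fDT$-oracle in Theorem~\ref{th:np-hardness} by a $\dP$-algorithm $\A'$ for $\fDT$, observe that update time composes cleanly because $\polylog(2^{\polylog n}) = \polylog n$, and handle the space blow-up via an interface subroutine $\S^*_n$. The update-time calculation is correct.

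However, your justification for why $\S^*_n$ achieves $O(\poly n)$ space has a gap. You argue that over the lifetime of $\hat{\A}_n$ the subroutine $\A'_m$ issues only $\poly(n)$ bit-probes into $\mem_{\A'_m}$, so $\S^*_n$ need only record those probes. But this count covers only the \emph{update} steps; it ignores the \emph{preprocessing} step of $\A'_m$, during which $\A'_m$ may write to all $\text{Space}_{\A'}(m) = 2^{\polylog n}$ cells of its memory to set up its initial state. Recording only the subsequent writes is useless unless you can also answer reads to cells that were set during preprocessing and never touched again. The overview paragraph you cite in Section~\ref{main:sec:NP:hard} merely asserts that $\S^*_n$ has the desired properties and explicitly defers the construction to Section~\ref{sec:cor:np-hardness}, which is exactly the proof you are supposed to supply, so invoking it is circular.

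The missing ingredients, which the paper spells out, are: (i) fix a \emph{canonical} $\fDT$ instance $\mathbf{I}'$ (namely $(\mem_{I'},\T_{I'})$ with $\mem_{I'}$ all zeros), which depends only on $\D_n$ and not on the runtime input $I_0$; (ii) exploit the non-uniformity of the bit-probe model to hardwire the entire fixed state $\mem_{\A'_m}(\mathbf{I}')$ into the leaves of a depth-$\polylog(n)$ decision tree $\Z^*_n$, so any bit of this baseline state can be recovered in $O(\polylog n)$ time with $O(\polylog n)$ working space; and (iii) observe that although the full $\fDT$ instance has size $2^{\polylog n}$, the part $\mem_{I'}$ that can vary has only $\poly(n)$ bits (it is a copy of $\mem_{\V_n}$), so $\A^*_n$ can start $\A'_m$ on $\mathbf{I}'$ and then reach the actual initial instance $(\mem_{\V_n}(0),\T_{I'})$ via $\poly(n)$ ordinary instance-updates. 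Only after (i)--(iii) is your ``record at most $\poly(n)$ changes in a dictionary $Y$'' argument sound, with reads served by first consulting $Y$ and falling back to $\Z^*_n$.
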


The proofs of~\ref{th:np-hardness} and~\ref{cor:np-hardness} appears in~\ref{sec:th:np-hardness} and~\ref{sec:cor:np-hardness}.

\subsection{Proof of~\ref{th:np-hardness}}
\label{sec:th:np-hardness}

Throughout the proof, we use the notations and concepts introduced in~\ref{sec:class}. We fix a problem $\D\in\dNP$ and construct an algorithm-family
$\A$ for \ref{th:np-hardness}. In more details, we define a function
$m:\mathbb{N}^{+}\rightarrow\mathbb{N}^{+}$ of the form $m(n)=O(2^{\polylog(n)})$,
and show that for every $n\in\mathbb{N}^{+}$, there exists an algorithm
$\A_{n}$ that solves $\D_{n}$ with $\time_{\A}(n)=O(\polylog(n))$ and $\text{Space}_{\A}(n) = O(\poly (n))$,
using the oracle $\O'_{m(n)}$ for $\fDT$. To ease notation, henceforth
we  write $m$ instead of $m(n)$.

\subsubsection{High-level Strategy}

Suppose that the algorithm $\A_{n}$ is given the instance-sequence
$(I_{0},\ldots,I_{k})$ of $\D_{n}$. Since $\D\in\dNP$, there
exists a verifier $\V_{n}$ for $\D_{n}$ as per \ref{def:NP}.
At a  high-level, the algorithm $\A_{n}$ uses the verifier $\V_{n}$ as a subroutine and works
as follows.

\paragraph{Preprocessing step.}

At step $t=0$, $\A_{n}$ does the following:
\begin{enumerate}
\item \label{enu:init V}Call the verifier subroutine $\V_{n}$ with input
$\inp_{\V_{n}}(t_{\V_{n}})=I_{0}$ at step $t_{\V_{n}}=0$. Then $\V_{n}$
returns an ordered pair $(x_0, y_0)$.
\item \label{enu:init O}Call the oracle $\O'_{m}$ in a certain way. 
\item Output $\A_{n}(\inp_{\A_{n}}(0))= x_0$. 
\end{enumerate}

\paragraph{Update step.}

Subsequently, at each step $t>0$, $\A_{n}$ does the following:
\begin{enumerate}
\item \label{enu:update 1}Call the oracle $\O'_{m}$ (several times) in
a certain way, and use the outputs of these calls to come up with
a \emph{reward maximizing proof} $\pi_{t}$ as per \ref{def:NP}.
\item \label{enu:update 2}Call the verifier subroutine $\V_{n}$ with input
$\inp_{\V_{n}}(t_{\V_{n}})=\left((I_{t-1},I_{t}),\pi_{t}\right)$
at step $t_{\V_{n}}=t$. The verifier $\V_n$ returns an ordered pair $(x_t, y_t)$.
\item \label{enu:update 3}Call the oracle $\O'_{m}$ (several times) in
a certain way.
\item Output $\A_{n}(\inp_{\A_{n}}(0\rightarrow t))= x_t$. 
\end{enumerate}
As the verifier $\V_{n}$ is given a reward-maximizing proof-sequence as inputs, \ref{def:NP} implies 
that the algorithm $\A_{n}$ solves $\D_{n}$: 
\begin{lem}
\label{lem:correct} 
For every $t\in\{0,\ldots,k\}$, we have
$\A_{n}(\inp_{\A_{n}}(0\rightarrow t))=\D(I_{t})$.\end{lem}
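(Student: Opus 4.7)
The plan is to prove this lemma by induction on $t$, and the heart of the argument will be verifying that the proof-sequence $(\pi_1, \ldots, \pi_t)$ that $\A_n$ feeds into the verifier subroutine $\V_n$ is indeed reward-maximizing in the sense of \ref{def:NP}. Once we have that, the conclusion follows immediately by applying the two guarantees of \ref{def:NP}: on NO instances ($\D(I_t) = 0$) the verifier outputs $x_t = 0$ regardless of the proof-sequence it has seen, and on YES instances ($\D(I_t) = 1$) it outputs $x_t = 1$ provided the proof-sequence is reward-maximizing.

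First I would set up the induction hypothesis: for all $s \le t - 1$, the proof $\pi_s$ chosen by $\A_n$ at step $s$ is a reward-maximizing proof given the past history $(I_0, \ldots, I_s)$ and $(\pi_1, \ldots, \pi_{s-1})$. The base case $t = 0$ is vacuous, since no proof is required at the preprocessing step. For the inductive step, I need to argue that step \ref{enu:update 1} of $\A_n$'s update procedure, which constructs $\pi_t$ by making calls to the oracle $\O'_m$ for $\fDT$, returns a proof $\pi_t$ that maximizes $y_t$ (interpreting $y_t$ as a $\polylog(n)$-bit integer). This is precisely where we use the encoding of $\V_n$'s behavior as a collection of shallow decision trees (as sketched in \ref{sec:overview}): each candidate proof $\pi$ corresponds to a decision tree $T_\pi$ operating on the portion of $\V_n$'s memory outside the proof region, and the leaves are ranked by their associated $y$-values. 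Querying the $\fDT$ oracle on this collection of trees therefore returns the index of a tree whose executed leaf has maximum rank, i.e., a proof $\pi_t$ maximizing $y_t$.

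Having established that $(\pi_1, \ldots, \pi_t)$ is reward-maximizing, I invoke \ref{def:NP} directly. By item (1) of that definition, for every $s \in \{0, \ldots, t\}$ with $\D(I_s) = 0$, the verifier's output satisfies $x_s = 0$, regardless of the proof-sequence. By item (2), since $(\pi_1, \ldots, \pi_t)$ is reward-maximizing, for every $s \in \{0, \ldots, t\}$ with $\D(I_s) = 1$, we have $x_s = 1$. In both cases $x_s = \D(I_s)$. Since $\A_n$'s output at step $t$ is defined to be $x_t$ (the first coordinate of $\V_n$'s return value), we conclude $\A_n(\inp_{\A_n}(0 \rightarrow t)) = x_t = \D(I_t)$, as required.

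The main obstacle, strictly speaking, is the inductive step's hidden appeal to the $\fDT$ encoding of $\V_n$. In this lemma's proof itself there is nothing nontrivial to do beyond invoking \ref{def:NP}; the real technical work lies in the ``certain way'' of calling $\O'_m$ in steps \ref{enu:init O}, \ref{enu:update 1}, and \ref{enu:update 3}, which must simultaneously (a) keep the contents of $\mem_{\V_n}$ synchronized with the $\fDT$ instance maintained by the oracle, (b) preserve the total order on proofs that reflects the ranking by reward, and (c) operate within the declared blow-up size $\qsize_{\A}(n) = O(2^{\polylog(n)})$ and polylogarithmic update time. I would therefore state \ref{lem:correct} as above, deferring the construction of this $\fDT$-interface to subsequent lemmas, and flag clearly that \ref{lem:correct} reduces correctness of $\A_n$ to the single invariant ``the proof-sequence fed to $\V_n$ is reward-maximizing,'' which the interface is responsible for maintaining.
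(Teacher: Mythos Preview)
Your proposal is correct and matches the paper's approach: the paper's proof is essentially the two-line argument you arrive at in your final paragraph, namely that the lemma reduces to the assertion that $(\pi_1,\dots,\pi_k)$ is reward-maximizing, after which \ref{def:NP} gives $x_t=\D(I_t)$ directly. The induction you set up is more scaffolding than the paper uses; the paper simply takes the reward-maximizing property as part of the high-level specification of step~\ref{enu:update 1} and defers its verification to the later subsection (\ref{sub:hardness update}), exactly as you suggest doing.
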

\begin{proof}
As $\A_{n}$ calls $\V_{n}$ exactly once at each step, we always have $t_{\V_{n}}=t$. Recall~\ref{def:NP}, observe that if $\V_{n}$ is always
given a reward-maximizing proof at each step, i.e. $(\pi_{1},\dots,\pi_{k})$
is a reward-maximizing proof-sequence w.r.t. $(I_{0},\dots,I_{k})$, then
$x_t=\D(I_{t_{\V n}})$
for all $t_{\V n}\in\{0,\ldots,k\}$.
Since the algorithm $\A_{n}$ outputs $x_t$  at
every step $t$, the lemma holds.
\end{proof}
It now remains to specify \ref{enu:init O} in the preprocessing
step, and \ref{enu:update 1} and \ref{enu:update 3} in the update
step. These steps are key to constructing the reward-maximizing proof-sequence
for $\V_{n}$. We specify these steps respectively  in the subsections below.

\subsubsection{Initializing the Oracle }

\label{sub:hardness init oracle}

We specify \ref{enu:init O} in the preprocessing step of $\A_{n}$
as follows: $\A_{n}$ calls $\O'_{m}$ by giving the initial input
$I'$ where $I'=(\mem_{I'}, \T_{I'})$ is an instance of
$\fDT$ as defined below.
\begin{itemize}
\item We set $\mem_{I'}=\mem_{\V_{n}}(0)$ where $\mem_{\V_{n}}(0)$ is
just the memory state of $\V_{n}$ after \ref{enu:init V} in the
preprocessing step of $\A_{n}$ (i.e. after $\V_{n}$ returns). 
\item We set $\T_{I'}=\T_{\V_{n}}$, where the collection $\T_{\V_{n}}$
of decision-trees is defined as follows. Recall that the input of
$\V_{n}$ at step $t$ is of the form $((I_{t-1},I_{t}),\pi_{t})$
consisting of the \emph{instance-update} $(I_{t-1}, I_t)$ and the \emph{proof} $\pi_t$,
respectively. Let $T_{\V_{n}}$ be the decision tree of $\V_{n}$.
For each possible proof $\pi\in\{0,1\}^{\polylog(n)}$,
let $T_{\pi}$ be the decision tree obtained from $T_{\V_{n}}$ by
``fixing'' the ``proof-part" of the input be to $\pi$. More
specifically, consider every read node $u\in$$T_{\V_{n}}$ whose
index points to some $i$-th bit $\pi[i]$ of $\pi$ (in the proof-part
of the input).\footnote{Recall the formal description of a decision tree from~\ref{sub:formal alg}.} Let $p(u),r(u),l(u)$ be the parent, right child, and
left child of $u$, respectively. To construct $T_{\pi}$, we remove
$u$ and if $\pi[i]=0$, then add an edge $(p(u),l(u))$, and remove
the subtree rooted at $r(u)$. Else, if $\pi[i]=1$, then add $(p(u),r(u))$
and remove the subtree rooted at $l(u)$. We set $\T_{\V_{n}}=\{T_{\pi}\mid\pi\in\{0,1\}^{\polylog(n)}\}$.
\item We now define the ranks $r(v)$. Consider any proof $\pi \in \{0,1\}^{\polylog (n)}$ and the corresponding decision tree $T_{\pi} \in \T_{\V_n}$.~\ref{def:NP} guarantees that when given any instance-update $(I_{t-1}, I_t)$ and the proof $\pi$ as input, the verifier $\V_n$  outputs some ordered pair $(x, y)$ where $x \in \{0,1\}$ and $y \in \{0,1\}^{\polylog(n)}$. This has the following important implication. 
\begin{itemize}
\item Consider any leaf node $v$ in the decision tree $T_{\pi}$. We can associate  an ordered pair $(x_v, y_v)$ with this leaf-node $v$, where  $x_v \in \{0,1\}$ and $y_v \in \{0,1\}^{\polylog (n)}$, such that whenever the  decision-tree $T_{\pi}$ follows the root-to-leaf execution path ending at the node $v$, it writes $(x_v, y_v)$ in the output memory. Note that the ordered pair $(x_v, y_v)$ does not depend on the contents of the memory the decision tree $T_{\pi}$ operates on.\footnote{The contents of the memory determines which root-to-leaf path  becomes the {\em execution-path} followed by $T_{\pi}$, and every root-to-leaf path  is associated with an ordered pair $(x_v, y_v)$ where $v$ is the end leaf-node of the concerned path.} We define the rank of a leaf-node $v$ in $T_{\pi}$ to be: $r(v) = y_v$.
\end{itemize}
\end{itemize}
In order to prove that $I'$ is indeed an instance of $\fDT$, it remains to show the following:
\begin{lem}
\label{lm:shallow} All the decision trees $T_{\pi}\in\T_{I'}$ are
shallow.\end{lem}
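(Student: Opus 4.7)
The plan is as follows. First, recall from~\ref{def:NP} that the verifier $\V_n$ has update time $\time_{\V}(n) = O(\polylog(n))$, which by the definition of update time in~\ref{sub:formal alg} is precisely the depth of the decision tree $T_{\V_n}$. Hence the depth of $T_{\V_n}$ is at most $O(\polylog(n))$. I would begin by recording this bound explicitly.

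Next, I would analyze how the depth of each $T_\pi \in \T_{\V_n}$ compares to that of $T_{\V_n}$. By construction, $T_\pi$ is obtained from $T_{\V_n}$ by a sequence of local operations: each read-node whose index points into the proof-part of the input is removed, and replaced by the appropriate one of its two children (together with the subtree below that child). This operation only deletes nodes and contracts edges; in particular, for every root-to-leaf path $P$ in $T_\pi$ there is a corresponding root-to-leaf path in $T_{\V_n}$ whose length is at least the length of $P$. Consequently $\text{depth}(T_\pi) \le \text{depth}(T_{\V_n}) = O(\polylog(n))$ for every proof $\pi \in \{0,1\}^{\polylog(n)}$.

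Finally, I would reconcile this bound with the quantity $\polylog(|\T_{I'}|)$ appearing in the definition of ``shallow'' (see~\ref{sub:DT def}). Since $\T_{I'} = \T_{\V_n} = \{T_\pi : \pi \in \{0,1\}^{\polylog(n)}\}$ is indexed by proofs of bit-length $\polylog(n)$, we have $|\T_{I'}| \le 2^{\polylog(n)}$, and therefore
\[
\polylog\left(|\T_{I'}|\right) \;=\; \polylog\!\left(2^{\polylog(n)}\right) \;=\; \polylog(n).
\]
Combining this with the depth bound from the previous paragraph yields $\text{depth}(T_\pi) = O(\polylog(n)) = O(\polylog(|\T_{I'}|))$ for every $\pi$, which is exactly the shallowness condition. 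This completes the proof.

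There is no serious obstacle here: the claim is essentially bookkeeping about how the parameters match up. The only place to be a little careful is the verification that the ``proof-fixing'' procedure can only decrease depth, which I would handle by a one-line induction on the number of proof-bit read-nodes eliminated.
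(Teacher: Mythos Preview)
Your argument is correct and mirrors the paper's proof: the verifier's $O(\polylog(n))$ update time bounds the depth of $T_{\V_n}$ and hence of each $T_\pi$, and since $|\T_{I'}| = 2^{\polylog(n)}$ this depth is $O(\polylog|\T_{I'}|)$. One minor slip: you write $|\T_{I'}| \le 2^{\polylog(n)}$, but the direction you actually need for the final step is a \emph{lower} bound (or equality), so that $\polylog(n) = O(\polylog|\T_{I'}|)$; this holds because the collection is indexed by all $\pi \in \{0,1\}^{\polylog(n)}$.
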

\begin{proof}
Note that there are $2^{\polylog (n)}$
many decision trees in the collection $\T_{I'}$, one for each bit
string $\pi\in\{0,1\}^{\polylog (n)}$. Furthermore, since the verifier $\V_n$ has $O(\polylog (n))$ update time, each decision tree
$T_{\pi}\in\T_{I'}$ has depth $O(\polylog (n))$. Hence,
the depth of each tree $T_{\pi}\in\T_{I'}$ is at most $O(\poly\log|\T_{I'}|)$,
which implies that all the trees in $\T_{I'}$ are shallow according
to our definition. 
\end{proof}

\subsubsection{Constructing the Reward-Maximizing Proof}

\label{sub:hardness update}

We maintain the following two invariants in the beginning of every step $t > 0$.
\begin{enumerate}
\item The sequence of proofs $(\pi_{1},\dots,\pi_{t-1})$ that the verifier $\V_{n}$ received so far is the
\emph{reward-maximizing proof-sequence} w.r.t. $(I_{0},\dots,I_{t-1})$.
\item The $\fDT$ instance $I'=(\mem_{I'},\T_{I'})$ maintained
by $\O'_{m}$ is such that $\mem_{I'}=\mem_{\V_{n}}(t-1)$ where $\mem_{\V_{n}}(t-1)$
is the memory state of $\V_{n}$ after finishing the step $t-1$.
\end{enumerate}
We now describe \ref{enu:update 1} and \ref{enu:update 3} in the
update step of $\A_{n}$. Specifically, we show how to call $\O'_{m}$
several times to keep the invariant. After receiving the input $\inp_{\A_{n}}(t)=(I_{t-1},I_{t})$ at
step $t$, the precise description of \ref{enu:update 1} in the update
step of $\A_{n}$ is as follows:
\begin{itemize}
\item Write $(I_{t-1},I_{t})$ in the input part $\mem_{\V_{n}}^{\inp}$
of $\mem_{\V_{n}}$.
\item Make a sequence of calls to $\O'_{m}$ to update $\mem_{I'}$ so that
$\mem_{I'}=\mem_{\V_{n}}$. 
\end{itemize}
At this point, let $T_{\pi_{t}} \in \T_{I'} = \T_{\V_n}$ be the output of the oracle $\O'_{m}$.
We claim the following:
\begin{claim}
The proof $\pi_{t}$ is the {\em reward-maximizing proof} at step $t$.
\end{claim}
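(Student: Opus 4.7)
The plan is to show that the proof $\pi_t$ returned (implicitly) by the oracle call at step $t$ is exactly a reward-maximizing proof as required by~\ref{def:NP}. By the invariant maintained at the start of step $t$, the proof-sequence $(\pi_1,\dots,\pi_{t-1})$ given to $\V_n$ so far is already reward-maximizing and $\mem_{I'}=\mem_{\V_n}(t-1)$. After the synchronization described in~\ref{sub:hardness update}, the first thing I would record is the identity $\mem_{I'}=\mem_{\V_n}$, where the latter now contains the current instance-update $(I_{t-1},I_t)$ written into $\mem_{\V_n}^{\inp}$. So the state visible to the oracle is indistinguishable from the state the verifier would see just before being invoked with some proof $\pi$.

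Next, I would establish a \emph{simulation lemma} relating, for every $\pi\in\{0,1\}^{\polylog(n)}$, the execution of the restricted tree $T_\pi\in\T_{I'}$ on $\mem_{I'}$ to the execution of the verifier's full decision tree $T_{\V_n}$ on $\mem_{\V_n}$ with the proof bits set to $\pi$. Concretely, by the construction of $T_\pi$ in~\ref{sub:hardness init oracle} (every read node that inspects a proof-bit is spliced out according to the fixed value in $\pi$), the execution path of $T_\pi$ on $\mem_{I'}$ is the unique path of $T_{\V_n}$ traversed when the proof bits read exactly $\pi$. Consequently, the leaf $v^*_{T_\pi}$ reached by $T_\pi$ on $\mem_{I'}$ is precisely the leaf at which $\V_n$ would terminate, and the label $(x_v,y_v)$ associated with that leaf equals the output $(x,y)$ that $\V_n$ would write. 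Combined with the rank assignment $r(v)=y_v$ defined during preprocessing, this yields
\[
r(v^*_{T_\pi}) \;=\; y_\pi,
\]
where $y_\pi$ denotes the reward component the verifier would output on input $((I_{t-1},I_t),\pi)$ in the current history.

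Now I would invoke the definition of $\fDT$: the oracle $\O'_m$ returns a tree $T_{\pi_t}\in\T_{I'}$ maximizing $r(v^*_{T_{\pi_t}})$ over all trees in $\T_{I'}$. Unfolding the identification above, this is the same as maximizing $y_\pi$ over all possible proofs $\pi$; but that is exactly the defining property of a reward-maximizing proof at step $t$ in~\ref{def:NP}, given the current history (which is reward-maximizing by the inductive invariant). This closes the claim and, in the same stroke, reinstates the invariant for step $t+1$ (once~\ref{enu:update 2} actually calls $\V_n$ on $((I_{t-1},I_t),\pi_t)$ and~\ref{enu:update 3} re-synchronizes $\mem_{I'}$ to $\mem_{\V_n}(t)$).

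The step I expect to be the main obstacle is the simulation lemma, since one has to be careful about write nodes: the restricted trees $T_\pi$ may perform writes to memory, and the definition of $v^*_T$ in $\fDT$ assumes each tree operates on $\mem_{I'}$ independently. I would handle this by observing that (a) the verifier, by convention, does not overwrite the proof region $\mem_{\V_n}^{(1)}$, so the elided proof bits cause no loss of information, and (b) since the oracle's semantics only asks about the \emph{leaf reached} by each $T_\pi$ on the fixed input state $\mem_{I'}$, and not about the residual memory after execution, the potential interference between writes across different $T_\pi$'s is irrelevant to which $\pi$ maximizes $r(v^*_{T_\pi})$. With this observation in place, the correspondence between $T_\pi$-executions and verifier-executions is exact, and the claim follows.
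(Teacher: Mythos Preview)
Your proposal is correct and follows essentially the same approach as the paper: both use the invariant that $\mem_{I'}$ mirrors the verifier's memory (with the instance-update already written in), observe that the oracle returns the tree $T_{\pi_t}$ maximizing $r(v^*_{T_{\pi_t}})$, and then identify this rank with the reward $y_\pi$ the verifier would output on proof $\pi$. Your version is more explicit—you spell out the simulation correspondence between $T_\pi$-executions and $\V_n$-executions and you flag (and correctly dismiss) the potential concern about write nodes interfering across trees—whereas the paper compresses all of this into two sentences; but the underlying argument is the same.
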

\begin{proof}
From~\ref{sub:DT def} and~\ref{sub:hardness init oracle}, it follows that the answer from
$\O'_{m}$ is $\P'(I')=T_{\pi_{t}}$, where  $T_{\pi_{t}}$ is the decision tree  $T \in \T_{\V_{n}}$ which maximizes the rank $r(v^*_T)$.

By the invariant, it holds that the sequence of proofs $(\pi_{1},\dots,\pi_{t-1})$ that the verifier
$\V_{n}$ received so far is the reward-maximizing proof-sequence w.r.t.
$(I_{0},\dots,I_{t-1})$. Moreover, recall that $(I_{t-1},I_{t})$ is just written
the input part $\mem_{\V_{n}}^{\inp}$ of $\mem_{\V_{n}}$. This implies
that $\pi_{t}$ is the proof which maximizes $y_t$ (the second part of the output $(x_t, y_t)$ of the verifier $\V_n$ when it is given $((I_{t-1},I_{t}),\pi_{t})$ as input in step $t$). Hence, we conclude that $\pi_{t}$ is the desired reward-maximizing proof at step $t$.
\end{proof}
Finally,~\ref{enu:update 3} in the update step of $\A_{n}$ does the following:
\begin{itemize}
\item Make a sequence of calls to $\O'_{m}$ to update $\mem_{I'}$ so that
$\mem_{I'}=\mem_{\V_{n}}$. 
\end{itemize}
Clearly, this ensures that  the invariant is maintained.

\subsubsection{Analyzing Update Time, Space Complexity and Blow-up Size of $\A_n$}

We are now ready to prove \ref{th:np-hardness}. The theorem holds  since  $\A_{n}$ solves $\D_{n}$
(\ref{lem:correct}) and has small blow-up size (\ref{lm:qsize}) and
update time and space complexity (\ref{lm:update:time}).

\begin{lem}
\label{lm:qsize} It takes $m=O(2^{\polylog (n)})$ bits
to encode the $\fDT$ instance $I'$. In other words, the algorithm
$\A_{n}$ has blow-up size $\qsize_{\A}(n)=O\left(2^{\polylog(n)}\right)$. \end{lem}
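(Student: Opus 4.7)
The plan is to encode $I' = (\mem_{I'}, \T_{I'})$ explicitly and bound the size of each component. First I would bound $|\mem_{I'}|$: since $\mem_{I'} = \mem_{\V_n}$ and the verifier-family $\V$ was guaranteed by~\ref{def:NP} to have space complexity $\text{Space}_{\V}(n) = O(\poly(n))$, we immediately have $|\mem_{I'}| = O(\poly(n))$, which is dominated by $O(2^{\polylog(n)})$.

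Next I would bound the total encoding size of $\T_{I'} = \T_{\V_n} = \{ T_\pi : \pi \in \{0,1\}^{\polylog(n)} \}$. There are two quantities to control: the number of trees, and the size of each tree. For the number, the definition of $\T_{\V_n}$ in~\ref{sub:hardness init oracle} indexes trees by proofs $\pi$ of length $\polylog(n)$, so $|\T_{\V_n}| \le 2^{\polylog(n)}$. For the size of a single $T_\pi$, recall that $T_\pi$ is obtained from the decision tree $T_{\V_n}$ (of depth $O(\polylog(n))$ by~\ref{def:verifier}) by fixing some bits, so its depth is still $O(\polylog(n))$, hence it has at most $2^{O(\polylog(n))} = O(2^{\polylog(n)})$ nodes.

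I would then argue each node can be encoded in $\polylog(n)$ bits: a read node is labelled by an index into $\mem_{I'}$, which takes $O(\log|\mem_{I'}|) = O(\log n)$ bits; a write node by an index plus one bit; and a leaf by its rank $r(v) \in \{0,1\}^{\polylog(n)}$, which takes $\polylog(n)$ bits (this is exactly why rewards were restricted to $\polylog(n)$ bits in~\ref{def:verifier}). Thus each tree $T_\pi$ is encodable in $O(2^{\polylog(n)}) \cdot \polylog(n) = O(2^{\polylog(n)})$ bits, and the whole collection $\T_{I'}$ in $|\T_{I'}| \cdot O(2^{\polylog(n)}) = 2^{\polylog(n)} \cdot O(2^{\polylog(n)}) = O(2^{\polylog(n)})$ bits.

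Combining, $m = |I'| \le |\mem_{I'}| + |\T_{I'}| = O(\poly(n)) + O(2^{\polylog(n)}) = O(2^{\polylog(n)})$, and since the oracle $\O'_{m}$ used by $\A_n$ is indexed by $m$, this gives $\qsize_{\A}(n) = O(2^{\polylog(n)})$ as claimed. There is no real obstacle here — the argument is a purely quantitative bookkeeping step — but the one place to be careful is that the per-node encoding overhead ($\polylog(n)$ bits for indices and ranks) stays absorbed inside the $O(2^{\polylog(n)})$ term, which is immediate since $\polylog(n) \cdot 2^{\polylog(n)} = 2^{\polylog(n)}$.
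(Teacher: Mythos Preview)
Your proposal is correct and follows essentially the same approach as the paper: count the $2^{\polylog(n)}$ trees, bound each by $2^{\polylog(n)}$ nodes via the depth bound, and multiply. The only minor difference is that you bound $|\mem_{I'}|$ directly from the verifier's $O(\poly(n))$ space complexity, whereas the paper bounds it indirectly by the total number of tree nodes (arguing unused memory bits can be dropped); your route is slightly cleaner but the argument is the same bookkeeping either way.
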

\begin{proof}
From the proof of \ref{lm:shallow}, we deduced that there are $O\left(2^{\polylog(n)}\right)$
decision trees in the collection $\T_{I'}$. Furthermore, each tree
$T_{\pi}\in\T_{I'}$ has depth at most $O(\polylog(n))$,
which implies that each tree $T_{\pi}\in\T_{I'}$ contains at most
$O\left(2^{\polylog(n)}\right)$ nodes. Thus, there are
at most $O\left(2^{\polylog(n)}\right)\times O\left(2^{\polylog(n)}\right)=O\left(2^{\polylog(n)}\right)$
many nodes over the collection of trees $\T_{I'}$. This also implies
that the memory $\mem_{I'}$ contains at most $O\left(2^{\polylog(n)}\right)$
bits, for the number of bits in $\mem_{I'}$ can w.l.o.g. be assumed to be upper bounded by the number of nodes
in $\T_{I'}$ (otherwise, there will be some bits in $\mem_{I'}$
that are not accessible to any tree $T_{\pi}\in\T_{I'}$). We therefore
conclude that the running instance $I'=(\mem_{I'},\T_{I'})$
that the algorithm $\A_{n}$ asks the oracle $\O'_{m}$ to maintain
an answer to can be encoded using $m=O\left(2^{\polylog(n)}\right)$
bits. 
\end{proof}
\begin{lem}
\label{lm:update:time} The algorithm $\A_{n}$ has  update time
 $O(\polylog(n))$ and  space complexity $O(\poly (n))$.
\end{lem}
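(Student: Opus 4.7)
The plan is to bound the per-step cost of $\A_n$ by separately accounting for (a) the single call to the subroutine $\V_n$, (b) the calls made to the oracle $\O'_m$ in \ref{enu:update 1} and \ref{enu:update 3}, and (c) any local bookkeeping $\A_n$ does. First I would observe that by~\ref{def:verifier} and~\ref{def:NP}, one call to $\V_n$ costs $O(\polylog(n))$ probes into $\mem_{\V_n}$ and, crucially, modifies at most $O(\polylog(n))$ bits of $\mem_{\V_n}$ during a step. This fact is what makes the oracle synchronisation cheap.

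Next I would analyse the oracle synchronisation. Before the $\V_n$ call, $\A_n$ writes $(I_{t-1},I_t)$ of size $O(\log n)$ into $\mem_{\V_n}^{\inp}$, so only $O(\log n)$ bits of the ``memory'' $\mem_{I'} = \mem_{\V_n}$ need to be flipped in the oracle's copy; after the $\V_n$ call another $O(\polylog(n))$ bits may have changed. Each such bit flip corresponds to a single instance-update for $\O'_m$, whose standard encoding uses $\log m = O(\polylog(n))$ bits by~\ref{lm:qsize} and~\ref{cor:lambda:dt}. Since each oracle call is free (the ``Free call'' clause in~\ref{sec:model:dynamic}), the only cost per oracle call charged to $\A_n$ is writing this $O(\polylog(n))$-bit input into $\mem_{\O'_m}^{\inp}$ and reading the constant-size answer from $\mem_{\O'_m}^{\out}$. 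Summing over the $O(\polylog(n))$ oracle calls gives $O(\polylog(n)) \cdot O(\polylog(n)) = O(\polylog(n))$ total work. Adding the $O(\polylog(n))$ time for the $\V_n$ call, and the $O(\polylog(n))$ time to read the answer $T_{\pi_t}$ returned by $\O'_m$ (which determines the proof $\pi_t$ and is of size $O(\polylog(n))$), yields $\time_{\A}(n) = O(\polylog(n))$.

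For the space bound I would invoke two facts. By the definition of oracle usage, the memory $\mem_{\O'_m}$ is not part of $\mem_{\A_n}$, so the blow-up size $\qsize_{\A}(n) = 2^{\polylog(n)}$ from~\ref{lm:qsize} does not appear in the space accounting of $\A_n$. The space used by $\A_n$ is therefore the space used by the subroutine $\V_n$, which is $\text{Space}_{\V_n}(n) = O(\poly(n))$ by~\ref{def:NP}, plus a constant amount of local scratch for indices and counters used when writing to $\mem_{\O'_m}^{\inp}$. Hence $\text{Space}_{\A}(n) = O(\poly(n))$.

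The only subtlety I anticipate is making sure no step of~\ref{sub:hardness update} secretly requires touching more than $O(\polylog(n))$ bits of $\mem_{I'}$; the invariant that $\mem_{I'} = \mem_{\V_n}$ at the end of every step, together with the polylog update-time guarantee of $\V_n$, is what rules this out, so I would state this invariant explicitly and point back to the observation that $\V_n$ makes at most $O(\polylog(n))$ write-probes per call. Once these pieces are in place the two bounds follow by straightforward addition.
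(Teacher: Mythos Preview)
Your proposal is correct and follows essentially the same approach as the paper's own proof: count the $O(\polylog(n))$ bits that change in $\mem_{\V_n}$ (first from writing the instance-update, then from the verifier's probes), observe that each bit-change incurs one free oracle call whose only cost to $\A_n$ is writing an $O(\log m)=O(\polylog(n))$-bit input, and conclude that the space is just $\text{Space}_{\V}(n)=O(\poly(n))$ because the oracle's memory is excluded. Your write-up is in fact slightly more careful than the paper's (you make the per-call cost of writing to $\mem^{\inp}_{\O'_m}$ explicit), though note the oracle's output is $O(\polylog(n))$ bits rather than constant-size as you say at one point; this does not affect the bound.
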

\begin{proof}
We analyze the time of $\A_{n}$ in the update steps. In \ref{enu:update 1},
the algorithm $\A_{n}$ writes $(I_{t-1},I_{t})$ in the input-part $\mem_{\V_{n}}^{\inp}$. Since $(I_{t-1},I_{t})$ is specified using $O(\polylog (n))$ bits (see~\ref{assume:logn update size}), the algorithm $\A_{n}$ makes
at most $O(\polylog (n))$ many calls to $\O'$ to update $\mem_{I'}=\mem_{\V_{n}}$.
In \ref{enu:update 2}, the algorithm $\A_{n}$ calls $\V_{n}$ which takes $O(\polylog (n))$ time.
Hence, the call to $\V_n$ changes at most $O(\polylog (n))$ bits in $\mem_{\V_n}$. Accordingly, in \ref{enu:update 3}, the algorithm $\A_{n}$ makes at most $O(\polylog(n))$
many calls to update $\mem_{I'}=\mem_{\V_{n}}$. Thus,  in total the algorithm $\A_n$ takes
$O(\polylog(n))$ time.

Finally, the space complexity of $\A_n$ is given by: $\text{Space}_{\A}(n) = O\left( \text{Space}_{\V}(n)\right) = O\left(\poly (n) \right)$. The last equality follows from~\ref{def:NP}.
\end{proof}

\subsection{Proof of~\ref{cor:np-hardness}}
\label{sec:cor:np-hardness}

Consider any $\D \in \dNP$. Throughout this section, we assume that $\fDT \in \dP$. Under this assumption, we will prove: $\D \in \dP$. This will imply that $\fDT$ is $\dNP$-hard.

As per~\ref{th:np-hardness}, there is an algorithm-family $\A$ that solves $\D$ using an oracle $\O'$ for $\fDT$, with update time $\time_{\A}(n) = O(\polylog (n))$, space complexity $\text{Space}_{\A}(n) = O(\poly (n))$, and blow-up size $\qsize_{\A}(n) = O(2^{\polylog (n)})$. Throughout the rest of the proof, fix any integer $n \geq 1$ and let $m = m(n) = \qsize_{\A}(n) = O(2^{\polylog (n)})$. Thus, the algorithm $\A_n$ solves $\D_n$ by using the oracle $\O'_m$ for $\fDT$. As described in~\ref{sub:hardness init oracle}, the oracel $\O'_m$ keeps track of the output of the $\fDT$ problem on the instance $(\mem_{I'}, \T_{I'})$. 

Since $\fDT \in \dP$, it follows that there is an algorithm-family $\A'$ that solves $\fDT$ with pologarithmic update time and polynomial space complexity. Using this fact, we will now design an algorithm $\A^*_n$ for $\D_n$ that has $O(\polylog (n))$ update time and $O(\poly (n))$ space complexity, and does {\em not} use any oracle. This will imply that $\D \in \dP$, thereby concluding the proof of the corollary.

As a first attempt, let us try to design $\A^*_n$ as follows. The algorithm $\A^*_n$ mimics the behavior of the algorithm $\A_n$ (as specified in~\ref{sec:th:np-hardness}). The only difference is that instead of using the oracle $\O'_m$, the algorithm $\A^*_n$ uses $\A'_m$ as a subroutine. To be more specific, whenever the algorithm $\A_n$ calls the oracle $\O'_m$ (see~\ref{sec:th:np-hardness}), the algorithm $\A^*_n$ calls the subroutine $\A'_m$. Clearly, if we design $\A^*_n$ in this manner, then it will always give the same output as $\A_n$.  Unfortunately, however, we have $m = 2^{\polylog (n)}$, and thus if we design $\A^*_n$ in this manner then the space complexity of $\A^*_n$ will be dominated by $\text{Space}_{\A'}(m) = O(\poly (m)) = O(2^{\polylog (n)})$. To address this concern, we ensure 
that  $\A^*_n$  uses the subroutine $\A'_m$ in a {\em white box} manner. In particular:
\begin{itemize}
\item The subroutine $\A'_m$ does not have direct access to its memory $\mem_{\A'_m}$. Indeed, since $m = 2^{\polylog (n)}$ and $\text{Space}_{\A'}(m) = O(\poly (m))$, there are $\poly (m) = 2^{\polylog (n)}$ bits in $\mem_{\A'_m}$. As we want  $\A^*_n$ to have $\poly (n)$ space complexity, we cannot afford to store all the bits of $\mem_{\A'_m}$.
\item During a call to the subroutine $\A'_m$, whenever $\A'_m$ wants to read the content of (say) the $i^{th}$ bit in $\mem_{\A'_m}$,  it passes the value of $i$ to the algorithm $\A^*_n$, and the algorithm $\A^*_n$ returns the content of the $i^{th}$ bit in $\mem_{\A'_m}$ to  $\A'_m$ by calling a different subroutine $\S^*_n$ (to be described below). Similarly, whenever $\A'_m$ wants to write (say) the $i^{th}$ bit of $\mem_{\A'_m}$ with some bit $b \in \{0,1\}$, then it passes the ordered pair $(i, b)$ to the algorithm $\A^*_n$, and the algorithm $\A^*_n$ in turn calls the subroutine $\S^*_n$ with $(i, b)$ as input to handle this operation. 

To summarize, the subroutine $\S^*_n$ acts as an {\em interface} between $\A'_m$ and its memory $\mem_{\A'_m}$. The crucial point is this: Although $\mem_{\A'_m}$ is of size $\poly (m) = 2^{\polylog (n)}$, the memory $\mem_{\S^*_n}$ of the subroutine $\S^*_n$ itself will be of size $\poly (n)$. Hence, the subroutine $\S^*_n$ is able to store only a tiny fraction of the memory bits in $\mem_{\A'_m}$. In spite of this severe restriction, we will show that we can design such a subroutine $\S^*_n$ with $\polylog (n)$ update time to act as an interface between $\A'_m$ and $\mem_{\A'_m}$, {\em as long as $\A^*_n$ has to deal with at most $\poly (n)$ update-steps for $\D_n$.}
\end{itemize}
It now remains to describe the subroutine $\S^*_n$. Towards this end, we first need to define the notion of a {\em canonical instance} for the $\fDT$ problem {\em with respect to the} problem $\D$.

\medskip
\noindent {\bf Canonical instance $\mathbf{I}'$:} From~\ref{sub:hardness init oracle}, recall that the oracle $\O'_m$ (and hence the subroutine $\A'_m$) deals with input instances of the form $(\mem_{I'}, \T_{I'})$. Note that $\mem_{I'}$ contains only $\poly (n)$ many bits. This is because the memory $\mem_{\V_n}$ of the verifier $\V_n$ is of $\poly (n)$ size and $\mem_{I'}$ reflects the state of $\mem_{\V_n}$. The total size of the decision trees in $\T_{I'}$, however, is $2^{\polylog (n)}$, and this is the reason why we have $m = O(2^{\polylog (n)})$. The canonical instance of $\fDT$ with respect to $\D_n$  is defined as $\mathbf{I}' = (\mem_{I'}, \T_{I'})$, {\em where all the bits in $\mem_{I'}$ are set to $0$}. 

Let $\mem_{\A'_m}(\mathbf{I}')$ denote the state of the memory $\mem_{\A'_m}$ when  $\A'_m$ receives the canonical instance $\mathbf{I}'$ as input in the {\em preprocessing step}. Crucially,  note that the contents of $\mem_{\A'_m}(\mathbf{I}')$ {\em is completely determined by} the $n$-slice $\D_n$ of the problem $\D$. Since there are $\poly (m) = 2^{\polylog (n)}$ bits in $\mem_{\A'_m}$, it is easy to design a subroutine $\Z^*_n$ in the bit-probe model that takes as input an index $i \in \{0,1\}^{\polylog (n)}$ and returns the contents of the $i^{th}$ bit of $\mem_{\A'_m}(\mathbf{I}')$. Specifically, the decision tree of $\Z^*_n$ will have depth $\polylog (n)$, and each leaf in this decision tree will correspond to a unique index $i \in \{0,1\}^{\polylog (n)}$. The leaf corresponding to $i \in \{0,1\}^{\polylog (n)}$ will contain the $i^{th}$ bit of $\mem_{\A'_m}(\mathbf{I}')$. Since contents of $\mem_{\A'_m}(\mathbf{I}')$ do not change (it is determined by $\mathbf{I}'$), the subroutine $\Z^*_n$ will  need to access its own memory $\mem_{\Z^*_n}$ only when reading the input $i \in \{0,1\}^{\polylog (n)}$ and producing the desired output $b \in \{0,1\}$. In other words, the subroutine $\Z^*_n$ returns the content of any given bit of $\mem_{\A'_m}(\mathbf{I}')$ in $O(\polylog (n))$ time, and it only uses $O(\polylog (n))$ space.

\medskip
\noindent {\bf The set $Y$:} The subroutine $\S^*_n$ will also store a collection of ordered pairs of the form $(i, b)$, where $i \in \{0,1\}^{\polylog (n)}$ and $b \in \{0,1\}$, in a set $Y$. The set $Y$ will be maintained as a balanced search tree.  The subroutine $\S^*_n$ will use $Y$, along with the subroutine $\Z^*_n$ describe above, to return the content of a given bit of $\mem_{\A'_m}$. Specifically, suppose that the subroutine $\S^*_n$ is asked for the content of the $i^{th}$ bit in $\mem_{\A'_m}$. It will first check if the set $Y$ contains an ordered pair of the form $(i, b)$. If yes, then the subroutine $\S^*_n$ will return $b$ as  output. If no, then the subroutine $\S^*_n$ will return the $i^{th}$ bit of $\mem_{\A'_m}(\mathbf{I}')$ as output (after making a single call to $\Z^*_n$ with input $i$).

\medskip
We are now ready to state the subroutine $\S^*_n$ in details. See~\ref{sec:new:subroutineS}.

\subsubsection{The subroutine $\S^*_n$}
\label{sec:new:subroutineS}
The job of the subroutine $\S^*_n$ is to act as an interface between $\A'_m$ and $\mem_{\A'_m}$. We assume that $\A'_m$ gets $\mathbf{I}'$ as input in the preprocessing step. (In~\ref{sub:sec:wrapup} we will get rid of this assumption.)

\medskip
\noindent {\bf Initialization:}  $\A'_m$ gets $\mathbf{I}'$ as input in the preprocessing step. At this point, we have 
$\mem_{\A'_m} = \mem_{\A'_m}(\mathbf{I}')$, and the subroutine $\S^*_n$ sets $Y = \emptyset$.

\medskip
\noindent {\bf Writing a bit in $\mem_{\A'_m}$:} Suppose that $\A'_m$ wants to write $b \in \{0,1\}$ in the $i^{th}$ bit of $\mem_{\A'_m}$. Accordingly, we call the  subroutine $\S^*_n$ with the ordered pair $(i, b)$ as input. The subroutine $\S^*_n$ first checks if there is any ordered pair of the form $(i, b')$ in the set $Y$, and if yes, then it deletes that ordered pair $(i, b')$ from $Y$. Next, it inserts the ordered pair $(i, b)$ into $Y$.

\medskip
\noindent {\bf Reading a bit from $\mem_{\A'_m}$:} Suppose that $\A'_m$ wants to read the content of the $i^{th}$ bit in $\mem_{\A'_m}$. Accordingly, we call the subroutine $\S^*_n$ with $i$ as input. If this bit was modified by $\A'_m$ after the initialization step, then there is an ordered pair of the form $(i, b)$ in the set $Y$ where $b \in \{0,1\}$ denotes the current content of the $i^{th}$ bit of $\mem_{\A'_m}$. Accordingly, the subroutine $\S^*_n$ first searches for an ordered pair of the form $(i, b), b \in \{0,1\},$ in the set $Y$. If it finds such an ordered pair $(i, b)$ in $Y$, then it returns $b$ as the output. Otherwise, if it fails to find such an ordered pair in $Y$, then $\A'_m$ has not modified the $i^{th}$ bit in $\mem_{\A'_m}$ since the initialization step. Accordingly, in this event $\S^*_n$ returns the content of the $i^{th}$ bit of $\mem_{\A'_m}(\mathbf{I}')$ after calling  the subroutine $\Z^*_n$ with input $i$.

\medskip
\noindent {\bf Update time and space complexity of $\S^*_n$:} The key observation is this. Since the initialization step described above, suppose that   $\A'_m$ has made at most $\poly (n)$ many (read/write) {\em probes} to its memory $\mem_{\A'_m}$. To implement each such probe one call was made to the subroutine $\S^*_n$, for it acts as an interface between $\A'_m$ and its memory $\mem_{\A'_m}$. Then the set $Y$ is of size at most $\poly (n)$, because each call to $\S^*_n$ can add at most one ordered pair to $Y$, and initially we had $Y = \emptyset$.  Now, the space of complexity of the subroutine $\S^*_n$ is dominated by the space needed to store the set $Y$, and its update time is at most $O(\polylog (n)) + O(\log |Y|)$. The $O(\polylog (n))$ term comes from the fact that a call to $\Z^*_n$ takes $O(\polylog (n))$ time, whereas the $O(\log |Y|)$ term comes from the fact that the set $Y$ is stored as a balanced search tree, and hence searching for an element in $Y$ takes $O(\log |Y|)$ time. Thus, as long as we ensure that $\A'_m$ makes $\poly (n)$ many probes to its memory $\mem_{\A'_m}$, it will hold that the subroutine $\S^*_n$ can act as an interface between $\A'_m$ and its memory $\mem_{\A'_m}$ with update time $= O(\polylog (n)) + O(\log |Y|) = O(\polylog (n))$ and space complexity $= O(|Y|) = O(\poly (n))$.

\subsubsection{How the subroutine $\S^*_n$ is used by the algorithm $\A^*_n$}
\label{sub:sec:wrapup}

The algorithm $\A^*_n$ mimics the behavior of the algorithm $\A_n$ from~\ref{sec:th:np-hardness}, with two differences. 
\begin{enumerate}
\item \label{item: cannon 1} Whenever $\A_n$ calls the oracle $\O'_m$,  $\A^*_n$ calls the subroutine $\A'_m$ in a {\em white box} manner.
\item \label{item: cannon 2} The subroutine $\S^*_n$ acts as an interface between $\A'_m$ and $\mem_{\A'_m}$.
\end{enumerate}
From the discussion in~\ref{sec:new:subroutineS}, it becomes clear that as long as  $\A'_m$ makes at most $\poly (n)$ many probes to $\mem_{\A'_m}$, the space complexity of  $\S^*_n$ will be bounded by $O(\poly (n))$. We now show how to enforce this condition.

\medskip
\noindent {\bf Preprocessing step for $\A^*_n$:} Recall the discussion in~\ref{sub:hardness init oracle}. Here, $\A^*_n$ wants to initialize $\A'_m$ with the input $I' = (\mem_{I'}, \T_{I'})$, where $\mem_{I'} = \mem_{\V_n}(0)$. This is done as follows.

As in~\ref{sec:new:subroutineS}, $\A^*_n$ starts  by giving $\mathbf{I}'$ as input to  $\A'_m$. Recall that $\mathbf{I}' = (\mem_{I'}, \T_{I'})$, where each bit in $\mem_{I'}$ is set to $0$. Furthermore, recall that $\mem_{I'}$ consists of $\poly (n)$ many bits, since $\mem_{I'}$ is supposed to reflect the state of $\mem_{\V_n}$, which in turn has at most $\poly (n)$ bits. Accordingly, $\A^*_n$ now asks the subroutine $\A'_m$ to handle $\poly (n)$ many instance-updates on $I' = (\mem_{I'}, \T_{I'})$, where each instance-update changes one bit in $\mem_{I'}$ in such a way that: At the end of these instance-updates, the subroutine $\A'_m$ ends up with the input $(\mem_{\V_n}(0), \T_{I'})$.

This is how $\A^*_n$ initializes the subroutine $\A'_m$. (For comparison, recall how $\A_n$ initializes the oracle $\O'_m$ in~\ref{sub:hardness init oracle}.) Note that until this point, $\A'_m$ clearly makes at most $\poly (n)$ many probes to $\mem_{\A'_m}$, and hence the space complexity of $\S^*_n$ is at most $O(\poly (n))$.

\medskip
\noindent {\bf Update-steps for $\A^*_n$:} The algorithm $\A^*_n$ mimics the behavior of $\A_n$ during an update-step at time $t > 0$ (see~\ref{sub:hardness update}). The only  differences between $\A^*_n$ and $\A_n$ have been emphasized in~\ref{item: cannon 1} and~\ref{item: cannon 2} above. It follows from the proof of~\ref{lm:update:time} that $\A^*_n$ makes $O(\polylog (n))$ many calls to $\A'_m$ during each update-step. Since $\A'_{m}$ has update time $O(\polylog (m)) = O(\polylog (2^{\polylog (n)})) = O(\polylog (n))$, we conclude that $\A'_m$ makes at most $O(\polylog (n)) \cdot O(\polylog (n)) = O(\polylog (n))$ many probes to $\A'_m$ during each update-step of $\A^*_n$.

\medskip
\noindent From the discussion above, we reach the following conclusion. As long as $\A^*_n$ has to deal with at most $\poly(n)$ many update-steps, $\A'_m$ makes at most $\poly(n) \cdot \polylog (n) = \poly(n)$ many probes to $\A'_m$ in total, and so the space complexity and the update time of the subroutine $\S^*_n$ remains at most $\poly (n)$ and $\polylog (n)$ respectively.

\medskip
\noindent {\bf Bounding the space-complexity and the update time of $\A^*_n$:} Consider a sequence of at most $\poly (n)$ many instance-updates given to $\A^*_n$ as input (see~\ref{assume:poly}). The space complexity of $\A^*_n$ is bounded by the sum of the space complexities of the verifier $\V_n$ and the subroutine $\S^*_n$. By definition, we have $\text{Space}_{\V}(n) = O(\poly (n))$ since $\D \in \dNP$. From the discussion above, it follows that the space complexity of $\S^*_n$ is also at most $O(\poly (n))$. Thus, we conclude that the overall space complexity of $\A^*_n$ is also $O(\poly (n))$.
 
To bound the update time of $\A^*_n$, recall the proof of~\ref{lm:update:time}. During each update-step at $t > 0$, the algorithm $\A^*_n$ spends at most $O(\polylog (n))$ time (excluding the calls to $\A'_m$) and makes $O(\polylog (n))$ many calls to $\A'_m$. Since $\A'_m$ has update time $O(\polylog (m)) = O(\polylog (2^{\polylog (n)})) = O(\polylog (n))$, each call to $\A'_m$ requires $O(\polylog (n))$ time (excluding the calls to $\S^*_n$) and requires a further $O(\polylog (n))$ calls to the subroutine $\S^*_n$. Finally, under~\ref{assume:poly} we have already shown in~\ref{sub:sec:wrapup} that each call to $\S^*_n$ takes $O(\polylog (n))$ time. Thus, we conclude that the update time of $\A^*_n$ is at most  $O(\polylog (n))$.

\medskip
To summarize, the algorithm $\A^*_n$ correctly solves $\D_n$ on any sequence of instance-updates of length at most $\poly (n)$. It has $\polylog (n)$ update time and $\poly (n)$ space complexity. Thus, we have $\D \in \dP$ (see~\ref{assume:poly} and~\ref{def:P}). 

In other words, if $\fDT \in \dP$ then every problem $\D \in \dNP$ belongs to the class $\dP$. Hence, we derive that $\fDT$ is $\dNP$-hard.

\section{$\protect\dNP$-complete/hard Problems}

\label{sec:DNF}

In this section, we consider a problem called the \emph{dynamic narrow
DNF evaluation problem} (or $\dDNF$ for short). 
In~\ref{sub:DNF:NPhard}, we show that this problem is $\dNP$-complete, which, in turn, implies $\dNP$-hardness
of many other problems (see  \ref{cor:list rankNP hard} in \ref{sub:list NP hard}).

\medskip
\noindent {\bf DNF formula:}
A $w$-width DNF formula $F$ with $n$ variables and $m$ clauses is defined as follows. Let $\X=\{x_{1},\dots,x_{n}\}$
be the set of variables and $\mathcal{C} = \{C_1, \ldots, C_m\}$ be the set of clauses. We have $F=C_{1}\vee\dots\vee C_{m}$, where each clause
$C_{j}$ is a conjunction (AND) of at most $w$ literals (i.e. $x_{i}$
or $\neg x_{i}$ where $x_{i}\in\X$). Let $\phi:\X\rightarrow\{0,1\}$
be an assignment of variables. Let $C_j(\phi)\in\{0,1\}$ be the
value of a clause $C_j$ after assigning the value of each $x_{i}$ with $\phi(x_{i})$.
If $C_{j}(\phi)=1$, then we say that $C_{j}$ is \emph{satisfied }by $\phi$. Similarly, let $F(\phi) = C_1(\phi) \vee \cdots \vee C_m(\phi)$ be the value of $F$ under the assignment $\phi$. We say that $F$ is satisfied by $\phi$ if $F(\phi) = 1$.
\begin{defn}
\label{def:range DNF}In the dynamic narrow DNF evaluation problem
($\dDNF$), we are first given 
\begin{itemize}
\item a $w$-width DNF formula $F$ over $n$ variables $\X=\{x_{1},\dots,x_{n}\}$ 
and $m$ clauses $\mathcal{C} = \{C_1, \ldots, C_m\}$, where $w=\polylog(m)$, and
\item an assignment $\phi:\X\rightarrow\{0,1\}$. 
\end{itemize}
An {\em update}  is denoted by $(i,b)\in[n]\times\{0,1\}$, which modifies the assignment $\phi$ by setting  $\phi(x_{i})=b$. After each update, we ask if $F(\phi)=1$.
\end{defn}

Note that, for any instance $(F,\phi)$ of $\dDNF$, we can assume
w.l.o.g. that $n\le mw=\tilde{O}(m)$ because we can ignore all variables
that are not in any clause. So we have:
\begin{prop}
\label{prop:size DNF} An instance $(F,\phi)$ where $F$ has $m$ clauses can
be represented using $\tilde{O}(m)$ bits.  Also, there is a trivial dynamic algorithm for the $\dDNF$ problem 
with $\tilde{O}(m)$ update time. 
\end{prop}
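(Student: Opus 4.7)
The plan is to verify both claims essentially by direct counting, since the proposition is really a sanity check on the size of $\dDNF$ instances relative to the number of clauses. As noted immediately before the statement, we may assume $n \leq mw$ (any variable appearing in no clause can be dropped without changing $F(\phi)$), and since $w = \polylog(m)$ this gives $n = \tilde{O}(m)$. This observation drives both parts of the proof.

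For the encoding bound, I would first specify the natural representation of an instance $(F,\phi)$ and then bound the bit-length of each component. Each literal can be encoded by an index in $[n]$ together with a negation bit, costing $O(\log n) + 1 = O(\log m)$ bits given $n = \tilde{O}(m)$. Each clause $C_j$ contains at most $w = \polylog(m)$ literals, so storing $C_j$ (with appropriate delimiters) takes $w \cdot O(\log m) = \polylog(m)$ bits. Summing over the $m$ clauses gives $m \cdot \polylog(m) = \tilde{O}(m)$ bits for $F$. The assignment $\phi : \X \to \{0,1\}$ is a bit-string of length $n = \tilde{O}(m)$. Adding the two contributions yields the claimed $\tilde{O}(m)$ encoding.

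For the trivial dynamic algorithm, I would simply describe the ``recompute from scratch'' procedure: upon receiving an update $(i,b)$, overwrite $\phi(x_i) \gets b$ (which costs $O(1)$ bit-probes), then iterate over the clauses $C_1,\dots,C_m$, evaluating each one under the current $\phi$ by probing the values of its (at most $w$) literals, and output $1$ iff some clause evaluates to $1$. Evaluating a single clause costs $O(w) = \polylog(m)$ probes, so a full sweep costs $m \cdot \polylog(m) = \tilde{O}(m)$ probes per update, matching the claimed bound. The preprocessing step just writes the given instance into memory in the same encoding, which clearly fits within polynomial space.

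There is really no substantive obstacle here; the only thing to be careful about is the reduction $n \leq mw$, because without it the naive bound $n \leq $ (number of variables) could in principle be much larger than $m$ and would break both counts. Once that reduction is in place, the remainder is a routine accounting of $\polylog$ factors, and the algorithm's correctness is immediate from the definition of $F(\phi)$ as $\bigvee_j C_j(\phi)$.
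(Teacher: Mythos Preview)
Your proposal is correct and matches the paper's approach. The paper does not give a separate proof of this proposition; it simply states the key observation $n \le mw = \tilde{O}(m)$ immediately before the statement and treats both claims as immediate, which is exactly the reduction you spell out in detail.
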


\subsection{$\protect\dNP$-completeness of $\protect\dDNF$ }

Our main result in this section is summarized below.

\label{sub:DNF:NPhard}
\begin{thm}
\label{thm:DNF complete}The $\dDNF$ problem is $\dNP$-complete.
\end{thm}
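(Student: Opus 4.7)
The plan is to establish the two parts of the theorem separately: $\dDNF \in \dNP$ and $\dDNF$ is $\dNP$-hard. For membership, I would construct a verifier-family $\V$ in which $\V_n$ maintains the current assignment $\phi$ in its memory (updated by one write per instance-update $(i,b)$) and interprets each proof as a clause-index $j \in [m]$ encoded in $O(\log m) = O(\polylog(n))$ bits. The verifier then probes the at most $\polylog(m)$ literals of $C_j$ and outputs $(1,1)$ if $C_j$ is satisfied by $\phi$ and $(0,0)$ otherwise. On any YES-instance some clause is satisfied, so a reward-maximizing proof yields output $1$; on any NO-instance no proof can force output $1$. Hence $\dDNF \in \dNP$.

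For $\dNP$-hardness, I would assume $\dDNF \in \dP$, realized by an algorithm-family $\A^*$ with update time $\polylog(m)$ and space $\poly(m)$, and show that every $\D \in \dNP$ lies in $\dP$. Following the template of \ref{sec:cor:np-hardness}, for each $\D \in \dNP$ with verifier $\V_n$, I would first build the collection $\T_{\V_n}$ of decision trees obtained from $\V_n$ by fixing each possible proof (exactly as in \ref{sec:th:np-hardness}), and then transform the implicit $\fDT$-instance into a $\dDNF$-instance $F_n$ as follows. Enumerate all root-to-leaf paths across $\T_{\V_n}$ as $P_1, \ldots, P_N$ in decreasing order of leaf-rank, where $N = 2^{\polylog(n)}$; each $P_\alpha$ gives a clause $c_\alpha$ of width $d = \polylog(n)$ over the variables corresponding to $\mem_{\V_n}$. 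To enable range-queries via binary search, I would introduce $2 \log N = O(\polylog(n))$ auxiliary ``disable'' variables $\{q_\ell^{(b)}\}$ and augment each $c_\alpha$ with the literals $\{\neg q_\ell^{(\alpha_\ell)}\}_\ell$, where $\alpha_\ell$ denotes the $\ell$-th bit of $\alpha$. The resulting $F_n$ is a width-$\polylog(n)$ DNF of size $2^{\polylog(n)}$ whose structure depends only on $n$. To recover a reward-maximizing proof at each step, the algorithm performs bit-by-bit binary search from MSB to LSB on the clause-index: at each bit $\ell$, it tentatively sets $q_\ell^{(1)} = 1$ (hiding clauses with $\alpha_\ell = 1$), queries $\A^*$, and based on the answer fixes bit $\ell$ of the final index along with the corresponding permanent disable, then resets the temporary disable. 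This uses $O(\polylog(n))$ variable flips and oracle queries per step, each of cost $\polylog(m) = \polylog(n)$ through $\A^*$; updates to $\mem_{\V_n}$ are forwarded verbatim as variable flips.

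The main obstacle is that $\A^*$'s memory $\mem_{\A^*}$ has $\poly(m) = 2^{\polylog(n)}$ bits, greatly exceeding the $\poly(n)$ space budget required for membership in $\dP$. I would overcome this precisely as in \ref{sec:cor:np-hardness}: since $F_n$ depends only on $n$, the canonical state $\mem_{\A^*}(F_n)$ after $\A^*$'s preprocessing on $F_n$ is fixed per $n$ and can be encoded by a hardwired depth-$\polylog(n)$ decision-tree subroutine $\Z^*_n$ that returns any requested bit of the canonical state. An interface subroutine $\S^*_n$ then simulates $\mem_{\A^*}$ by maintaining a balanced search tree of the modifications accumulated since the canonical state; on a read it first consults the modifications set and otherwise calls $\Z^*_n$, and on a write it inserts the new entry. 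The actual initial contents of $\mem_{\V_n}$ are installed by $\poly(n)$ preparatory variable-flips during preprocessing. Since each step triggers only $\polylog(n)$ probes to $\mem_{\A^*}$ through $\S^*_n$ and by \ref{assume:poly} there are only $\poly(n)$ update-steps, the modifications set never exceeds $\poly(n)$ entries, keeping total space in $\poly(n)$ and per-step update time in $\polylog(n)$. This proves $\D \in \dP$ for every $\D \in \dNP$, hence $\dP = \dNP$, and combined with the membership part establishes the $\dNP$-completeness of $\dDNF$.
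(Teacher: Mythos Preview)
Your proposal is correct and follows essentially the same approach as the paper. The paper factors the hardness proof through two intermediate problems ($\fDT$ and $\fDNF$), first reducing $\fDT$ to $\fDNF$ by turning root-to-leaf paths into clauses ordered by rank, and then reducing $\fDNF$ to $\dDNF$ via the binary-search ``disable'' variables; you merge these into a single direct reduction from an arbitrary $\D\in\dNP$ to $\dDNF$, but your clause construction, your rank-based ordering, your auxiliary variables $q_\ell^{(b)}$ implementing the binary search, and your use of the canonical-state interface $\S^*_n$ to keep space in $\poly(n)$ are exactly the paper's ingredients recombined.
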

It is easy to see that $\dDNF\in\dNP$. Let $(F,\phi)$ be an
instance of $\dDNF$ where $F$ has $n$ variables and $m$ clauses, which is represented using $m' = \tilde{O}(m)$ bits.
After each update, the proof is simply an index $j$ where $C_{j}(\phi)=1$.
Then, a verifier $\V_{m'}$ accepts (i.e. outputs $1$) if $C_{j}(\phi)=1$,
otherwise $\V_{m'}$ rejects (i.e. outputs 0).  In order to check the value of $C_j(\phi)$, the verifier 
 $\V_{m'}$ only needs to read the (at most $w$) literals of $C_{j}$ which takes $O(w) = \polylog(m)=\polylog(m')$ time. Hence, the update time of the verifier $\V_{m'}$ is $O(\polylog (m'))$. This shows
that  $\dDNF\in\dNP$. 

We devote the rest of this section towards showing that the $\dDNF$ problem is $\dNP$-hard. We do this
in steps. We first define an intermediate problem called $\fDNF$.
Then we give a $\dP$-reduction from $\fDT$ to $\fDNF$ (see~\ref{lem:reduction:1}) and then from $\fDNF$ to $\dDNF$ (see~\ref{lem:first to exist}). This implies that there is a $\dP$-reduction from $\fDT$ to $\dDNF$ (see~\ref{thm:easiness transfering}). Since $\fDT$ is $\dNP$-hard (see~\ref{th:np-hardness}), we derive that $\dDNF$ is also $\dNP$-hard (see~\ref{cor:reductions}). We have already established that $\dDNF$ is in $\dNP$, and hence we conclude that $\dDNF$ is $\dNP$-complete.

\medskip
\noindent {\bf The $\fDNF$ problem:}
The definition of $\fDNF$ is the same as $\dDNF$ except the following. There is a total order $\prec$ defined on the set of clauses $\mathcal{C}$. After each update, instead, we must return  the first clause $C_j$ (according to the total order $\prec$) which has $C_j(\phi) = 1$, provided such a clause exists,  and $0$ otherwise (which indicates that $F(\phi)=0$). 
\begin{lem}
\label{lem:reduction:1}
$\fDT$ is $\dP$-reducible to $\fDNF$.\end{lem}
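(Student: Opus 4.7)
The plan is to turn each $\fDT$ instance $(\mem_I, \T_I)$ into an equivalent $\fDNF$ instance $(F, \phi, \prec)$, where each memory bit becomes a boolean variable and each root-to-leaf path of each decision tree becomes a DNF clause, with the clauses ordered so that the first clause satisfied by $\phi$ corresponds to the tree with the highest-ranked execution leaf. Since every $\fDT$ update flips exactly one bit of $\mem_I$, and this maps directly to a single variable flip in $\fDNF$, the reduction will essentially be forwarding updates to the $\fDNF$ oracle and reading off its answer.

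Concretely, I would first set the variables $\Z = \{z_1, \dots, z_N\}$ to be one per bit of $\mem_I$ (so $\phi$ just mirrors the current $\mem_I$). Next, for each tree $T \in \T_I$ and each root-to-leaf path $P$ of $T$, I would construct a clause $C_P$ as follows: traverse $P$ while tracking the running ``simulated memory'' of bits that have been overwritten by write-nodes along $P$; at each read-node on $P$ with label $z$, add the literal $z$ (if $P$ branches right) or $\neg z$ (if $P$ branches left) provided $z$ has not already been overwritten along $P$. If an earlier write on $P$ is inconsistent with the branch taken at a later read of the same bit, then $P$ is unreachable and I would simply discard it. Finally, I would define $\prec$ by sorting the remaining clauses in decreasing order of the rank $r(v_P)$ of the leaf $v_P$ terminating $P$, breaking ties arbitrarily.

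Correctness then rests on two observations. First, for any $\mem_I$ each tree follows exactly one execution path, so exactly one clause per tree is satisfied by the corresponding $\phi$, and by construction $C_P$ is satisfied iff $P$ is the execution path of its tree under $\mem_I$. Second, the $\fDNF$ answer---the first clause satisfied under $\prec$---is therefore the clause of the highest-ranked execution leaf, i.e., the tree that $\fDT$ must output. Thus the reduction $\R$ will, in preprocessing, build $(F, \phi_0, \prec)$ and a clause-index-to-tree lookup table, forward each bit flip as a single variable-flip update to the oracle, and on each step simply read the oracle's answer and report the associated tree.

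The main thing to verify is the blow-up and the width. If $I$ is encoded in $n$ bits then $|\mem_I|$, $|\T_I|$, and the total number of tree nodes are all $O(n)$, so the number of variables $N$ and the number of clauses $M$ are $O(n)$, yielding blow-up $\poly(n)$; each clause has width at most the tree depth $\polylog(|\T_I|) \le \polylog(M)$, exactly as $\fDNF$ requires; update time is $O(\polylog(n))$ per step (one oracle call plus a table lookup) and the lookup table fits in $O(\poly(n))$ space. The hard part will be the bookkeeping around write-nodes---making sure that the clause $C_P$ really does characterise ``$P$ is $T$'s execution path under the current $\mem_I$'' when intermediate writes can change bits that are later read, and checking that discarding paths rendered unreachable by conflicting writes never costs us the unique ``satisfying clause per tree'' property that makes the first-satisfied-clause answer coincide with the tree maximising $r(v^*_T)$.
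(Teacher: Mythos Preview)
Your proposal is correct and follows essentially the same approach as the paper: map each memory bit to a variable, each root-to-leaf path to a clause built from its read nodes, order the clauses by decreasing leaf rank, and forward each bit-flip as a single variable update to the $\fDNF$ oracle. The one difference is that you handle write nodes explicitly (omitting literals for reads of already-overwritten bits and discarding write/read-inconsistent paths), whereas the paper simply takes a literal for every read node---implicitly relying on the standard normalization that no bit is read on a path after being written on it (a WLOG that does not increase depth); under that normalization the two constructions coincide, so your extra bookkeeping is sound but not ultimately needed.
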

\begin{proof}
Let $(\mem,\T)$ be an instance of $\fDT$. We construct an
instance $(F,\phi)$ of $\fDNF$  as follows. For each decision tree
$T\in\T$, we construct a DNF formula  $F_{T}$ using a standard mapping from
decision trees to DNF instances (see e.g. \cite[Proposition 4.5]{ODonnell}).
More precisely, for every root-to-leaf path $P$ in $T$, the DNF formula $F_{T}$ contains a conjunctive clause $C_{P}$. The set of literals in $C_{P}$
correspond to the set of read nodes in $P$, as follows. (Recall the formal description of a decision tree from~\ref{sub:formal alg}.)
\begin{itemize} 
\item Suppose that $u$ is a read node in $P$ with index $i_{u}$, and $v$ is $u$'s
child and $v\in P$. If $v$ is a left child of $u$, then $C_{P}$
contains $\neg x_{i_{u}}$. Otherwise, if $v$ is a right child of
$u$, then $C_{P}$ contains $x_{i_{u}}$.
\end{itemize} 
Finally, we associate a {\em rank} $\tau(C_P) = r(v')$  with this clause that is equal to the rank $r(v')$ of the leaf-node $v'$ in the path $P$. We set $F=\bigvee_{T\in\T}F_{T}$.  The total order $\prec$ on the set of clauses is defined in such a way which ensures that for any two clauses $C_P$ and $C_{P'}$, we have $C_P \prec C_{P'}$ iff $\tau(C_P) \geq \tau(C_{P'})$.

Let $\X=\{x_{1},\dots,x_{n}\}$ be the variables corresponding to each
bit of $\mem$. That is, let $\phi$ be such that $\phi(x_{i})=\mem[i]$.
Given the update to $\mem$ in the problem $\fDT$, we update $\phi$ accordingly. Suppose
that $C_P$ is the first clause among all clauses in $F$ (according to the total order $\prec$) where $C_{P}(\phi)=1$. Furthermore, suppose that the path $P$ corresponds to the decision tree $T$. We can find the path $P$ by calling the oracle for $\fDNF$. Clearly, $P$ is the  {\em execution path} of  $T$ when it operates on $\mem$, and  $v^*_T$ is the leaf-node of the path $P$. Now, the total order $\prec$ is defined in such a way which ensures that $T$
is the  decision tree that maximizes $r(v^*_T)$. Thus, we derive that $T$ is the current output for the problem $\fDT$.
\end{proof}

\begin{lem}
\label{lem:first to exist}$\fDNF$ is $\dP$-reducible to $\dDNF$.\end{lem}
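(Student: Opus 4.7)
The plan is to give a standard binary-search reduction that turns the "does any clause hold?" oracle into a "which is the first clause that holds?" algorithm, using a small number of oracles instead of a single segment-tree (which would be too slow, since the assignment $\phi$ is shared across all oracles so every variable flip would have to be forwarded to $\Theta(m)$ oracles in the worst case).

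At preprocessing time I would construct $1+\log m$ different $\dDNF$ oracle instances, indexed by a "level" $k\in\{0,1,\ldots,\log m\}$. At each level $k$ I partition the clauses $C_{1},\dots,C_{m}$ into $2^{k}$ contiguous blocks in the $\prec$ order, and introduce $k$ fresh auxiliary \emph{selector} variables $b^{(k)}_{1},\dots,b^{(k)}_{k}$ that will encode the index of the currently queried block in binary. The formula at level $k$ is
\[
F^{(k)}\;=\;\bigvee_{j=1}^{m}\Big(C_{j}\wedge\bigwedge_{s=1}^{k}L^{(k)}_{s}(j)\Big),
\]
where $L^{(k)}_{s}(j)$ is $b^{(k)}_{s}$ if the $s$-th bit of the block index of clause $j$ at level $k$ is $1$, and $\neg b^{(k)}_{s}$ otherwise. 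Since each $C_{j}$ already has width $w=\polylog(m)$ and we add at most $k\le \log m$ extra literals, every clause of $F^{(k)}$ still has width $\polylog(m)$, so each $F^{(k)}$ is a legal $\dDNF$ instance. The total size of all $F^{(k)}$ is $\tilde O(m\log m)$, which stays polynomial in the $\tilde O(m)$-bit input, so the blow-up size is $\qsize_{\A}(n)=O(\poly(n))$.

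To simulate an $\fDNF$ update I would proceed in two phases. Phase~1 (synchronise the assignment): whenever the $\fDNF$ input is an update $(i,b)$ setting $\phi(x_{i})=b$, I forward $(i,b)$ to each of the $\log m$ level-oracles so that all shared $x_{i}$ entries stay consistent. This costs $O(\log m)$ oracle calls. Phase~2 (locate the first satisfied clause by top-down binary search over the block tree): maintain throughout the query an index $B$ of a block at level $k$ known to contain the first satisfied clause; starting with $B=0$ at level $0$, at each step $k\to k+1$ I set the selector variables $b^{(k+1)}_{1},\dots,b^{(k+1)}_{k+1}$ to the binary encoding of $2B$ by issuing at most $k+1$ "set variable" updates to the level-$(k+1)$ oracle, query it once, and set $B\leftarrow 2B$ or $B\leftarrow 2B+1$ according to the YES/NO answer. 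After $\log m$ steps $B$ pinpoints the first satisfied clause (or I detect that none exists from the level-$0$ query, which I handle separately via one extra oracle on $C_{1}\vee\cdots\vee C_{m}$). Before returning I reset the selector variables back to $0$ so each level oracle is in a "clean" state for the next update. In total this takes $O(\log^{2}m)$ oracle calls, each transmitting $O(\log m)$ bits, plus $O(\log m)$ bookkeeping, giving update time $O(\polylog n)$ and space $O(\poly n)$ as required by \ref{def:P-reduction}.

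The main obstacle is reconciling two competing constraints. On the one hand, we need to ask "restricted" emptiness queries over arbitrary contiguous prefixes of the $\prec$ order; on the other hand, we cannot afford a segment-tree over clauses, because a single assignment update would then have to be replayed to up to $\Theta(m)$ oracles that share $\phi$. The resolution is exactly the auxiliary-variable encoding above: by using only $k$ selector bits to pick one of $2^{k}$ blocks at level $k$, we keep the per-level clause width at $\polylog(m)$ (so each $F^{(k)}$ is a valid narrow DNF) while collapsing the whole $k$-th level into a single $\dDNF$ oracle, making the number of oracles (and hence the cost of forwarding a variable flip) only $O(\log m)$. Checking that the width bound, the clean-up of selector variables between queries, and the binary-search invariants all fit together is the only delicate routine step; no deep ideas beyond this are needed.
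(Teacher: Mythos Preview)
Your reduction is correct and achieves the required $\polylog(n)$ update time and $\poly(n)$ blow-up, but it is organised differently from the paper's proof.

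The paper uses a \emph{single} $\dDNF$ oracle instead of your $1+\log m$ level oracles. The trick is to add \emph{two} search variables $s_i^0,s_i^1$ per bit position $i\in[\log m]$, and to augment each clause $C_j$ to $C'_j = C_j \wedge \bigwedge_{i=1}^{\log m} s_i^{j_i}$, where $j_1\cdots j_{\log m}$ is the binary expansion of $j$. Initially all $s_i^b$ are set to $1$, so every clause is ``enabled''. Setting $s_i^1\gets 0$ kills all clauses with $j_i=1$; setting $s_i^0\gets 0$ kills those with $j_i=0$; leaving both at $1$ makes bit $i$ a ``don't care''. This lets one oracle simulate every level of your hierarchy simultaneously, and the binary search walks down the bits using $O(\log m)$ oracle updates in total rather than your $O(\log^2 m)$.

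What each approach buys: the paper's two-variables-per-bit encoding is slicker and avoids having to forward each assignment update to $\Theta(\log m)$ separate oracles (your Phase~1), shaving a $\log m$ factor. Your construction, on the other hand, makes the binary-search invariant completely transparent (each level-$k$ oracle literally answers ``is some clause in block $B$ of width $m/2^k$ satisfied?'') and sidesteps the need to reason about partially-set search variables within a single instance. Both are well within the $\polylog$ budget, so for the purposes of the lemma either is fine. Your discussion of why a naive segment tree fails (a single variable flip could touch $\Theta(m)$ leaves) is a point the paper does not spell out, and correctly motivates why the selector-variable trick is needed at all.
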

\begin{proof}
Let $(F,\phi)$ be an instance of $\fDNF$ where $F$
has $n$ variables $\X=\{x_{1},\dots,x_{n}\}$ and $m$ clauses $\mathcal{C} = \{C_1, \ldots, C_m\}$. W.l.o.g., we assume that the total order $\prec$ defined over $\mathcal{C}$ is such that $C_1 \prec \cdots \prec C_m$. In other words, after each update we have to return the {\em first satisfied clause} (the one with the minimum index), provided such a clause exists. We
construct an instance of $\dDNF$ $(F',\phi')$ as follows. Let $F'$
have $n+2\log m$ variables $\X'=(x_{1},\dots,x_{n},s_{1}^{0},\dots,s_{\log m}^{0},s_{1}^{1},\dots,s_{\log m}^{1})$.
We call the variables $s_{i}^{b}$ the \emph{search variables}. They
are for ``searching for the first satisfied clause''. 

$F'$ has $m$ clauses. For each $j\in[m]$, we write the binary expansion
of $j=j_{1}j_{2}\dots j_{\log m}$. For each clause $C_{j}$ in $F$,
we construct 
\[
C'_{j}=C_{j}\wedge\bigwedge_{i=1}^{\log m}s_{i}^{j_{i}}.
\]

Let $\phi'$ be such that $\phi'(x_{i})=\phi(x_{i})$ for all $i\in[m]$
and $\phi'(s_{i}^{b})=1$ for all $i\in[\log m]$ and $b\in\{0,1\}$.
Given an update of $\phi$, we update $\phi'$ accordingly so that
$\phi'$ and $\phi$ agree on $\X$. As we set all search variables
to $1$, at this point we have $F(\phi)=F'(\phi')$. So if $F(\phi)=0$, then we notice
this by looking at $F'(\phi')$. But if $F(\phi)=1$, then we need to find
the first index $j$ where $C_{j}(\phi)=1$. To do this, we apply the following
binary search trick.

\medskip
\noindent
We repeat the following steps from $i=1$ until $\log m$. 
\begin{enumerate}
\item Set $\phi'(s_{i}^{1})=0$ (i.e. all $C_{j}$'s where $j_{1}=1$ are
``killed''). 
\item If $F'(\phi')=0$, then there is no $j$ where $C_{j}(\phi)=1$ and
$j_{1}=0$. Set $\phi'(s_{i}^{0})=0$ and $\phi'(s_{i}^{1})=1$ (i.e.
all $C_{j}$'s where $j_{1}=0$ are ``killed''). 
\item Else, $F'(\phi')=1$, then there is some $j$ where $C_{j}(\phi)=1$ and
$j_{1}=0$. Set $\phi'(s_{i}^{0})=1$ and $\phi'(s_{i}^{1})=0$ (i.e.
all $C_{j}$'s where $j_{1}=1$ are ``killed'').
\end{enumerate}
As in the above binary search, we always have a ``preference'' for
the ``first half''. So we will obtain the first $j$ where $C_{j}(\phi)=1$.
\end{proof}

\subsection{Reformulations of $\protect\dDNF$ }

\label{sub:DNF:equiv}

In this section, we define three problems ($\dOV$, $\dIndep$ and
$\dAW$) which are just different formulations of the same problem
$\dDNF$. However, these different views are useful for showing reductions
between $\dDNF$ and other problems. 

We define a problem called the \emph{dynamic all-white problem ($\dAW$):}
\begin{defn}
In the dynamic all-white problem ($\dAW$), we are first given 
\begin{itemize}
\item a bipartite graph $G=(L,R,E)$ where $|L|=n$, $|R|=m$ and $\deg(u)=\polylog(m)$
for each $u\in R$, and 
\item each node $u\in L$ is colored black or white
\end{itemize}

Then, the color of each node in $L$ can be updated. After each update,
we ask there is a node in $R$ whose neighbors are all white.

\end{defn}
Next, for a matrix $V=(v_{1},\dots,v_{m})\in\{0,1\}^{n\times m}$,
$v_{j}$ denote the $j$-th column of $V$ and $nnz(v_{j})$ be the
number of non-zero in $v_{j}$. The following problem is called the
\emph{dynamic sparse orthogonal vector problem ($\dOV$)}:
\begin{defn}
\label{def:OV}In the dynamic sparse orthogonal vector problem (\emph{$\dOV$}),
we are first given 
\begin{itemize}
\item a matrix $V\in\{0,1\}^{n\times m}$ where $nnz(v_{j})=\polylog(m)$
for each $j\in[m]$, and 
\item a vector $u\in\{0,1\}^{n}$
\end{itemize}

Then, each entry of $u$ updated. After each update, we ask if there
is a column $v$ in $V$ orthogonal to $u$, i.e. $u^{T}v=0$.

\end{defn}
Next, given a hypergraph $H=(V,E)$. We say that a set $S$ in independent
in $H$ if there is no $e\in E$ such that $e\subseteq S$. We define
a problem called the \emph{independent set query problem ($\dIndep$).}
\begin{defn}
In the \emph{independent set query problem }($\dIndep$), we are first
given 
\begin{itemize}
\item a hypergraph $H=(V,E)$ where $|V|=n$, $|E|=m$ and $|e|=\polylog(m)$
for each edge $e\in E$, and 
\item a set of node $S\subseteq V$
\end{itemize}

Then, the set $S$ can be updated by inserting or deleting a node
to/from $S$. After each update, we ask if $S$ is independent in
$H$.

\end{defn}
All three problems are the same problem with different representation,
so it holds that:
\begin{prop}
\label{prop:DNF equiv}An algorithm with the update and query time
are at most $O(u(m))$ for any one of the following problems implies
algorithms with the same update and query time for all other problems:
\begin{enumerate}
\item $\dDNF$ on an formula $F$ with $m$ clauses,
\item $\dAW$ on an graph $G=(L,R,E)$ where $|R|=m$ clauses,
\item $\dIndep$ on a hypergraph $H$ with $m$ edges, and
\item $\dOV$ on a matrix $V\in\{0,1\}^{n\times m}$.
\end{enumerate}
\end{prop}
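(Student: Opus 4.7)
\medskip

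\noindent\textbf{Proof proposal.} The plan is to show that all four problems are really the same problem under a change of encoding, and hence an algorithm for any one of them yields algorithms with the same update/query time (up to constants) for the other three. Rather than prove $4 \times 3 = 12$ reductions, I would fix $\dDNF$ as the ``hub'' and give simple bijective translations between $\dDNF$ and each of $\dAW$, $\dOV$, $\dIndep$, showing that each reformulation preserves the parameters ($m$ clauses $\leftrightarrow$ $m$ ``witness objects'', $\polylog(m)$ width $\leftrightarrow$ $\polylog(m)$ sparsity), the update operation (one variable flip $\leftrightarrow$ one flip in the corresponding structure), and the YES/NO output.

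First I would reduce $\dDNF$ to $\dAW$: given a DNF $F = C_1 \vee \dots \vee C_m$ on variables $x_1,\ldots,x_n$, build a bipartite graph $G=(L,R,E)$ with $R = \{r_1,\ldots,r_m\}$ (one node per clause) and $L = \{\ell_i^+, \ell_i^- : i \in [n]\}$ (two nodes per variable, representing the two polarities). Add the edge $(\ell_i^+, r_j)$ iff $C_j$ contains the literal $x_i$, and $(\ell_i^-, r_j)$ iff $C_j$ contains $\neg x_i$. Color $\ell_i^+$ white iff $\phi(x_i)=1$, and $\ell_i^-$ white iff $\phi(x_i)=0$; a single variable flip translates to recoloring two nodes. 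By construction $\deg(r_j) = |C_j| = \polylog(m)$, and $r_j$ has all-white neighbors iff every literal of $C_j$ evaluates to true, i.e.\ iff $C_j(\phi)=1$. The reverse direction is equally direct: given a $\dAW$ instance, introduce one Boolean variable per node of $L$ (true $=$ white), and for each $r_j \in R$ create a clause that is the conjunction of the corresponding positive literals. In both directions the instance size and update cost blow up by at most a constant factor.

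Next I would reduce $\dAW$ to $\dOV$ and back by observing that a bipartite graph on $L$ and $R$ is exactly the biadjacency matrix: let $V \in \{0,1\}^{|L| \times m}$ with $V[i,j] = 1$ iff $(\ell_i,r_j) \in E$, and let $u \in \{0,1\}^{|L|}$ with $u_i = 1$ iff $\ell_i$ is black. Then column $v_j$ satisfies $u^T v_j = 0$ iff $r_j$ has no black neighbor, i.e.\ iff all its neighbors are white. The sparsity condition $nnz(v_j) = \deg(r_j) = \polylog(m)$ is preserved, a color flip is exactly a single-entry flip of $u$, and the YES/NO outputs agree. The reduction between $\dAW$ and $\dIndep$ is analogous: take $V = L$, and for each $r_j \in R$ create the hyperedge $e_j = N(r_j)$; let $S = \{\ell \in L : \ell \text{ is black}\}$. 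Then $e_j \subseteq S$ iff all neighbors of $r_j$ are black, so swapping the roles of white/black shows the query ``$S$ is independent'' corresponds to ``some $r_j$ has all neighbors white'' up to negating the answer. A single update (inserting/deleting one node from $S$) corresponds to a single color flip.

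No step here is a real obstacle, so I expect the write-up to be routine; the only care needed is to (i) keep track of the polarity/complement issues in the $\dAW \leftrightarrow \dIndep$ direction (which are handled by either flipping the color convention or negating the YES/NO output once), and (ii) verify that the encoding sizes, sparsity bounds, and per-update change budgets all transfer without loss, so that the $O(u(m))$ update/query time bound is preserved verbatim across the four problems.
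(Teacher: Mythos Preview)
Your proposal is correct and takes essentially the same approach as the paper: the same literal-doubling trick for $\dDNF \leftrightarrow \dAW$, the biadjacency-matrix identification for $\dAW \leftrightarrow \dOV$, and the neighborhood-as-hyperedge translation for $\dAW \leftrightarrow \dIndep$ (with the same complement/color-swap caveat you already flagged). The paper organizes the reductions as a cycle $1 \to 2 \to 3 \to 1$ plus $2 \leftrightarrow 4$ rather than your hub picture, but the individual constructions are identical.
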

As the proof is very straightforward, we defer it to \ref{sec:DNF equiv proof}.

\subsection{$\protect\dNP$-hardness of Some Known Dynamic Problems}

\label{sub:list NP hard}

In \cite{AbboudW14}, Abboud and Williams show SETH-hardness for all
of the problems in \ref{table:NP hard}. In their reduction, they
actually show a $\dP$-reduction from $\dOV$ to these problems.\footnote{For 3 vs. 4 diameter and $ST$-reach
		problems, they actually show a stronger reduction from a dynamic version
		of the $3$-OV problem, and not $\dOV$. But $\dOV$ is trivially
		reducible to this dynamic version of 3-OV.} Therefore, we immediately obtain the following:
\begin{cor}
\label{cor:list rankNP hard}All problems in \ref{table:NP hard}
are $\dNP$-hard.
\end{cor}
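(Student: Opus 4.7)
The plan is to leverage the $\dNP$-completeness of $\dDNF$ (Theorem \ref{thm:DNF complete}) together with the equivalence between $\dDNF$ and $\dOV$ (Proposition \ref{prop:DNF equiv}) to first conclude that $\dOV$ itself is $\dNP$-hard, and then transport this hardness to every problem in Table \ref{table:NP hard} via the reductions of Abboud and Williams. First I would observe that Proposition \ref{prop:DNF equiv} is really stating that each direction of the translation between $\dDNF$ and $\dOV$ is a $\dP$-reduction with no blow-up (the vectors $v_j$ correspond to clauses and entries of $u$ to variables). Combined with Theorem \ref{thm:DNF complete} and Corollary \ref{cor:reductions}, this already gives that $\dOV$ is $\dNP$-hard.

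Next I would go through the reductions in \cite{AbboudW14} one by one and verify that each is a $\dP$-reduction in the sense of Definition \ref{def:P-reduction}. Concretely, for each problem $X$ in Table \ref{table:NP hard} the Abboud--Williams reduction takes an $n$-column instance of $\dOV$ (or, for 3-vs-4 diameter and $ST$-reachability, a dynamic variant of 3-OV that is trivially $\dP$-reducible from $\dOV$) and constructs an instance of $X$ of size polynomial in $n$; each flip of an entry of the vector $u$ is then simulated by a constant or polylogarithmic number of atomic updates to $X$. This immediately gives blow-up size $\qsize(n)=\poly(n)$, polylogarithmic update time per call, and polynomial space. Since these reductions are stated in the word-RAM model and our complexity classes are defined in the (stronger) bit-probe model, any such word-RAM reduction transfers automatically, because a word-RAM step touches only polylogarithmically many bits.

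Combining these two steps, for every problem $X$ in Table \ref{table:NP hard} we obtain $\dOV \le X$. Applying Corollary \ref{cor:reductions} with $\D=\dOV$ (which is $\dNP$-hard) and $\D'=X$ then yields that $X$ is $\dNP$-hard, which is exactly the statement of Corollary \ref{cor:list rankNP hard}.

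The main obstacle will be the bookkeeping rather than any conceptual difficulty: one has to confirm, for each of the seven or so problems in the table, that the Abboud--Williams gadgets do not accidentally blow the input size up super-polynomially, that each $\dOV$ update is implemented by only polylogarithmically many atomic updates to $X$, and in the 3-OV-based cases (Pagh's problem with emptiness query, 3-vs-4 diameter, $ST$-reachability), that the extra preliminary reduction from $\dOV$ to dynamic 3-OV also meets the $\dP$-reduction requirements. Once this case-by-case verification is done, the corollary follows immediately and no further new ideas are required.
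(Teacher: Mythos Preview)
Your proposal is correct and follows essentially the same approach as the paper: establish that $\dOV$ is $\dNP$-hard via Theorem~\ref{thm:DNF complete} and Proposition~\ref{prop:DNF equiv}, then invoke the Abboud--Williams reductions as $\dP$-reductions and apply Corollary~\ref{cor:reductions}. The paper's own argument is just the one-line assertion that the reductions in \cite{AbboudW14} are in fact $\dP$-reductions from $\dOV$ (with the same footnote about 3-OV for the diameter and $ST$-reachability cases), so your more explicit verification plan is exactly what is being claimed there; one small discrepancy is that the paper lists only 3-vs-4 diameter and $ST$-reachability as going through 3-OV, not Pagh's problem.
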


\newpage{}

\newpage{}

\part{Further Complexity Classes}\label{part:further_classes}
In this part, we define some more complexity classes including randomized classes and classes of search problems. 
This formalization is needed for arguing about the complexity of connectivity problems in \Cref{part:connectivity}.

\section{Randomized Classes}\label{sec:rand_classes}

In this section, we define the randomized version of the classes $\dP$
and $\dNP$ which are $\dBPP$ and $\dMA$ respectively.
The only difference between algorithms for problems in $\dP$ and
$\dBPP$ is that, for $\dBPP$, algorithms can be randomized, and
we only require the answers of the algorithms to be correct with probability
at least $1-1/n$ when the instance is of size $n$\footnote{Being correct with probability $\ge2/3$ is also enough. By maintaining
multiple data structures and then using the majority vote for each step, we can boost the probability of being correct arbitrarily.}. The same analogy goes for the difference between $\dNP$ and $\dMA$.

First, we formally define \emph{randomized algorithm-families}. A
randomized algorithm-family $\A$ is just a algorithm-family such
that, for each $n\ge1$, at step $t=0$, $\A_{n}$ is additionally
given a random string $r_{0}\in\{0,1\}^{\poly(n)}$,
and at step $t\ge1$, $\A_{n}$ is additionally given a random string
$r_{t}\in\{0,1\}^{\polylog(n)}$. The internal
states and the answers of $\A_{n}$ at each step can depend on previous
random strings. Hence, at the step $t$, the answer $x_{t}$ is a
random variable. We can formally define randomized verifier-families
as a randomized counterpart of deterministic verifier-family defined
in \Cref{def:verifier} exactly the same way.
\begin{defn}
[Class $\dBPP$]\label{def:BPP}A decision problem $\D$ is in $\dBPP$ iff it admits
a randomized algorithm-family $\A$ with update-time $\time_{\A}(n)=O(\polylog(n))$
and space-complexity $\text{Space}_{\A}(n)=O(\poly(n))$. On
an instance of size $n$, for each step $t$, the answer $x_{t}$
of $\A_{n}$ must be correct with probability at least $1-1/n$, i.e.
$\Pr[x_{t}=\D(I_{t})]\ge1-1/n$ for each $t$.
\end{defn}

\begin{defn}
[Class $\dMA$]A decision problem $\D$ is in $\dMA$ iff it admits
a randomized verifier-family $\V$ with update-time $\time_{\A}(n)=O(\polylog(n))$
and space-complexity $\text{Space}_{\A}(n)=O(\poly(n))$ which
satisfy the following properties for each $n\geq1$. Fix any instance-sequence
$(I_{0},\ldots,I_{k})$ of $\D_{n}$. Suppose that $\V_{n}$ gets
$I_{0}$ as input at step $t=0$, and $((I_{t-1},I_{t}),\pi_{t})$
as input at every step $t\geq1$. Then: 
\begin{enumerate}
\item For every proof-sequence $(\pi_{1},\ldots,\pi_{k})$, we have $x_{t}=0$
with probability at least $1-1/n$ for each $t\in\{0,\ldots,k\}$
where $\D_{n}(I_{t})=0$. 
\item If the proof-sequence $(\pi_{1},\ldots,\pi_{k})$ is \emph{reward-maximizing}
(defined as in \Cref{def:NP} and repeated below), then we have $x_{t}=1$ with probability at least
$1-1/n$ for each $t\in\{0,\ldots,k\}$ with $\D_{n}(I_{t})=1$.
\end{enumerate}
The proof-sequence $(\pi_1, \ldots, \pi_k)$ is {\em reward-maximizing} iff at each step $t \geq 1$, given the past history $(I_0, \ldots, I_t)$, $(r_0, \ldots, r_t)$, and $(\pi_1, \ldots, \pi_{t-1})$, the proof $\pi_t$ is chosen in such a way that maximizes  $y_t$ (when we think of $y_t$ as a $\polylog (n)$ bit integer). We say that such a proof $\pi_t$ is {\em reward-maximizing}.
\end{defn}
Note that the above class corresponds to the class of AM (Arthur-Merlin) in the static setting and not MA (Merlin-Arthur) because the proof $\pi_t$ at step $t$ can depend on all the random choices $r_0$,\dots,$r_t$ up to the current step $t$. We emphasize that it is the \emph{prover} who ``sees'' the previous random choices. The \emph{adversary} only sees the answers of the algorithm.

By definition of $\dMA$, we have the following:
\begin{prop}
\label{lem:NP(BPP) in MA}$(\dNP)^{\dBPP}\subseteq\dMA$.%
\end{prop}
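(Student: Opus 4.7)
\textbf{Proof proposal for Proposition~\ref{lem:NP(BPP) in MA}.}
The plan is the natural one: given a decision problem $\D \in (\dNP)^{\dBPP}$, let $\V$ be a deterministic verifier-family for $\D$ with blow-up size $\qsize_{\V}(n) = O(\poly(n))$ that uses an oracle-family $\O$ for some $\D' \in \dBPP$, and let $\A'$ be a randomized $\dBPP$-algorithm-family for $\D'$. I will build a randomized verifier-family $\hat{\V}$ for $\D$ (without oracles) by simulating $\V$ step by step, and each time $\V$ would invoke $\O_m$ (with $m \le \qsize_{\V}(n) = \poly(n)$), having $\hat{\V}$ instead call $\A'_m$ as a randomized subroutine, feeding it the fresh random coins it needs. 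The ordered pair $(\hat{x}_t, \hat{y}_t)$ returned by $\hat{\V}_n$ at step $t$ is simply what $\V_n$'s simulation would return when it treats the outputs of $\A'_m$ as oracle answers.

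First, I would handle the probability of failure. Each call to $\A'_m$ is correct with probability at least $1-1/m$; this is almost good enough but, because one update-step of $\hat{\V}_n$ makes at most $\polylog(n)$ calls and we want a per-step success probability of $1 - 1/n$, I will boost the success probability of $\A'_m$ by standard majority-vote amplification: run $O(\log n)$ independent copies of $\A'_m$ in parallel at each invocation, maintaining them with independent random tapes, and output the majority. This makes each boosted call correct except with probability $1/n^c$ for any chosen constant $c$; by a union bound over the $\polylog(n)$ boosted calls in one step, the probability that $\hat{\V}_n$ fails to faithfully simulate $\V_n$ at the current step is at most $1/n$. The boosted subroutine still has $\polylog(n)$ update time and $\poly(n)$ space, so the resulting $\hat{\V}$ satisfies $\time_{\hat{\V}}(n) = O(\polylog(n))$ and $\text{Space}_{\hat{\V}}(n) = O(\poly(n))$.

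Next I would verify the two $\dMA$ conditions. Condition~(1) (soundness on NO-instances): whenever $\hat{\V}_n$'s simulation of $\V_n$ uses only correct answers from the boosted $\A'_m$, it outputs exactly what $\V_n$ with the true oracle $\O_m$ would output, which is $0$ on every NO-instance and for every proof-sequence by the $\dNP$ guarantee of $\V$; hence $\Pr[\hat{x}_t = 0] \ge 1 - 1/n$ for every proof-sequence when $\D_n(I_t)=0$. Condition~(2) (completeness on YES-instances under reward-maximization) is the more delicate point and I expect it to be the main obstacle, because the prover in $\dMA$ is allowed to see the verifier's random tape, and a reward-maximizing proof for $\hat{\V}$ maximizes the reward defined by the \emph{possibly corrupted} simulation, not by the true oracle-augmented verifier $\V$. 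I will handle this by conditioning on the high-probability event $E$ that, along the entire run, every boosted call to $\A'_m$ returns the correct $\D'$-answer; under $E$, the reward function of $\hat{\V}_n$ is \emph{identical} to the reward function of $\V_n$-with-oracle on every possible proof, so a proof-sequence that is reward-maximizing for $\hat{\V}$ is also reward-maximizing for $\V$, and by the $\dNP$ guarantee of $\V$ yields $\hat{x}_t = 1$ on every YES-instance. Since $\Pr[E_t] \ge 1 - 1/n$ for the step $t$ in question (by the same union bound as above, applied to the calls made during this step), this gives the required $\Pr[\hat{x}_t = 1 \mid \D_n(I_t)=1] \ge 1-1/n$.

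Finally I would wrap up by noting that because $\hat{\V}$ uses no oracle and meets the update-time, space, and correctness bounds of Definition of $\dMA$, we have $\D \in \dMA$, and since $\D$ was an arbitrary element of $(\dNP)^{\dBPP}$ the containment $(\dNP)^{\dBPP} \subseteq \dMA$ follows. The only technical care point beyond the conditioning argument above is to ensure the boosted subroutine $\A'_m$ is treated as a white-box subroutine whose independent random tapes are supplied from $\hat{\V}_n$'s own random string, so that the ``prover sees the coins'' feature of $\dMA$ is respected throughout.
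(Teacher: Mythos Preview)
Your approach is essentially what the paper has in mind: the paper's entire justification is the phrase ``By definition of $\dMA$'', so you are supplying the details the paper omits. The construction (replace the $\dBPP$ oracle by the randomized algorithm as a subroutine, feed it coins from $\hat{\V}$'s own random string, amplify by majority vote) and the analysis via conditioning on the event that the simulated oracle answers are all correct are both right.

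There is, however, one slip in your error accounting. You correctly define $E$ as the event that \emph{every} boosted call along the entire run returns the correct answer, but then you bound $\Pr[E_t] \ge 1 - 1/n$ by a union bound only over the calls made \emph{during step $t$}. That is not enough: if some call at an earlier step $t' < t$ was incorrect, the memory state of $\hat{\V}_n$ has already diverged from that of $\V_n$ with the true oracle, so neither the soundness guarantee of $\V_n$ (condition~(1)) nor the identification of reward-maximizing proofs for $\hat{\V}$ with those for $\V$ (condition~(2)) can be invoked at step $t$. You must union-bound over all calls at all steps $1, \ldots, t$, which is at most $\poly(n) \cdot \polylog(n) = \poly(n)$ calls (recall the standing assumption that only $\poly(n)$ updates occur). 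This just means the per-call failure probability must be driven down to $1/n^{c}$ for a sufficiently large constant $c$; your $O(\log n)$-copy majority vote already achieves this once the constant in the $O(\cdot)$ is chosen appropriately, so the fix is purely in the bookkeeping.
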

Next, we list a randomized counterpart of \Cref{lem:NP:coNP,lem:NP:coNP:next,th:PH:collapse:next}. 
The proofs go exactly the same and hence are omitted.
\begin{prop}
\label{lem:removing intersection randomized oracle}$(\dMA\cap\dcoMA)^{\dMA\cap\dcoMA}=\dMA\cap\dcoMA$
and $(\dMA)^{\dMA\cap\dcoMA}=\dMA$.
\end{prop}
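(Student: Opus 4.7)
The plan is to mirror the proof of Lemma~\ref{lem:NP:coNP} (and its corollary Lemma~\ref{lem:NP:coNP:next}) in the randomized setting, handling the probabilistic error by amplification plus a union bound over the (polynomially many) simulated oracle calls. In both equalities, the containment $\supseteq$ is trivial since any problem in $\dMA\cap\dcoMA$ (respectively $\dMA$) can be viewed as one that never consults its oracle. So the content lies in the reverse inclusions.

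For $(\dMA)^{\dMA\cap\dcoMA}\subseteq\dMA$, fix $\D^*\in(\dMA)^{\dMA\cap\dcoMA}$, witnessed by a randomized verifier-family $\V^*$ that uses an oracle-family $\O$ for some $\D\in\dMA\cap\dcoMA$ with blow-up $n=\qsize_{\V^*}(m)=\poly(m)$. Let $\V$ and $\V'$ be the randomized verifier-families for $\D$ and its complement, respectively. I will construct $\hat\V$ for $\D^*$ exactly as in the proof of Lemma~\ref{lem:NP:coNP}: at each step $t^*$, upon receiving the instance-update and a proof $\hat\pi_{t^*}$ (which packages the proof $\pi^*_{t^*}$ for $\V^*$ together with, for every oracle call that $\V^*$ will make during this step, a pair $(\pi_t,\pi'_t)$ of proofs for $\V$ and $\V'$), $\hat\V$ simulates $\V^*$ and replaces each oracle call by running both $\V_n$ and $\V'_n$; if their bits $x_t,x'_t$ ever disagree, $\hat\V$ enters {\sc Invalid} mode and outputs $(0,0)$ forever. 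Otherwise $\hat\V$ outputs $(x^*_{t^*},\,1\!\cdot\! y^*_{t^*})$, where the leading ``$1$'' in the reward ensures that every reward-maximizing proof-sequence for $\hat\V$ avoids {\sc Invalid} mode (this is the same reward-prefixing trick used in Claim~\ref{cl:reward:1}). Update time and space bounds go through as in Claim~\ref{cl:reward:3}, since $n=\poly(m)$ and each of $\V,\V',\V^*$ is efficient.

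The one genuinely new ingredient is correctness in the randomized sense. By~\ref{assume:poly}, $\hat\V_m$ must be correct for at most $\poly(m)$ update steps, and $\V^*$ makes at most $\polylog(m)$ oracle calls per step, so in total there are $P(m):=\poly(m)$ simulated oracle calls. By standard repetition/majority-vote amplification (exactly as remarked after Definition~\ref{def:BPP}) I may assume each of $\V_n,\V'_n$ errs with probability at most $1/(n\cdot P(m)^2)$ on any individual call. A single union bound then shows that, with probability at least $1-1/m$ over the randomness of $\hat\V$, \emph{every} simulated oracle call returns the correct answer to the underlying instance of $\D$. Conditioned on this event, for the YES-case I invoke the analogue of Claim~\ref{cl:reward:1}: a reward-maximizing proof-sequence for $\hat\V$ induces a reward-maximizing proof-sequence for $\V^*$, so $\hat x_t=x^*_t=1$ for YES-instances; and for the NO-case I invoke the analogue of Claim~\ref{cl:reward:2}: regardless of the proof, either $\hat\V$ is in {\sc Invalid} mode and outputs $0$, or it correctly simulates $\V^*$ which outputs $0$. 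Thus $\hat\V$ meets the $\dMA$ correctness requirement and $\D^*\in\dMA$.

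For $(\dMA\cap\dcoMA)^{\dMA\cap\dcoMA}\subseteq\dMA\cap\dcoMA$, I run the above construction twice in parallel. Given $\D^*\in(\dMA\cap\dcoMA)^{\dMA\cap\dcoMA}$, it has both a YES-verifier $\V^*$ and a NO-verifier $\V'^*$ using the same (w.l.o.g.) oracle for $\D\in\dMA\cap\dcoMA$. Applying the construction above to $\V^*$ yields $\hat\V\in\dMA$ for $\D^*$, and applying it symmetrically to $\V'^*$ (swapping the roles of $\V$ and $\V'$ and using Definition~\ref{def:coNP}'s reward-maximization convention) yields $\hat\V'\in\dcoMA$ for $\D^*$. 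Hence $\D^*\in\dMA\cap\dcoMA$. The main obstacle I anticipate is purely bookkeeping: carefully enforcing~\ref{assume:poly} so that the total number of simulated oracle calls is a fixed polynomial in $m$ before amplification, so that a \emph{single} choice of error parameter $1/(n\cdot P(m)^2)$ for $\V$ and $\V'$ suffices to drive the overall failure probability below $1/m$; once this is set up, the rest is a direct randomized analogue of the deterministic argument.
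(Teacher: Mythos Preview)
Your proposal is correct and follows exactly the approach the paper intends: the paper omits the proof entirely, stating only that it ``goes exactly the same'' as the deterministic \Cref{lem:NP:coNP} and \Cref{lem:NP:coNP:next}. You have spelled out that analogy, and the one genuinely new ingredient you supply---amplifying the per-call error of $\V$ and $\V'$ and then union-bounding over the $\poly(m)$ total simulated oracle calls guaranteed by \Cref{assume:poly}---is precisely the standard adaptation needed to carry the consistency-check argument into the randomized setting.
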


\begin{prop}
\label{lem:if NP in coMA}If $\dNP\subseteq\dcoMA$, then $\dPH\subseteq\dMA\cap\dcoMA$.
\end{prop}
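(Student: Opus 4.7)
The plan is to mimic the structure of the proof of \ref{th:PH:collapse:next}, replacing the deterministic collapse (\ref{lem:NP:coNP:next}) with its randomized analogue (\ref{lem:removing intersection randomized oracle}). I will prove by induction on $i \geq 1$ that $\sig_i \cup \pid_i \subseteq \dMA \cap \dcoMA$; taking the union over all $i$ then yields $\dPH \subseteq \dMA \cap \dcoMA$ as desired.

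For the base case, first observe that $\dNP \subseteq \dMA$ trivially, since any deterministic $\dNP$-verifier is a randomized verifier that happens to ignore its random tape. Combined with the hypothesis $\dNP \subseteq \dcoMA$, this gives $\dNP \subseteq \dMA \cap \dcoMA$. Since $\dMA \cap \dcoMA$ is self-complementary (the complement of $\dMA$ is $\dcoMA$ and vice versa), taking complements yields $\dcoNP \subseteq \dMA \cap \dcoMA$ as well. Together these handle $\sig_1 = \dNP$ and $\pid_1 = \dcoNP$.

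For the inductive step, assume $\sig_i \subseteq \dMA \cap \dcoMA$. I will invoke two monotonicity facts about relativization that follow directly from \ref{ex:np:oracle}: enlarging the class of verifiers, or enlarging the class from which the oracle problem is drawn, can only enlarge the resulting relativized class. Using these together with the base case and the inductive hypothesis, I obtain
\[
\sig_{i+1} \;=\; (\dNP)^{\sig_i} \;\subseteq\; (\dMA \cap \dcoMA)^{\dMA \cap \dcoMA} \;=\; \dMA \cap \dcoMA,
\]
where the last equality is \ref{lem:removing intersection randomized oracle}. The identical chain with $\dcoNP$ in place of $\dNP$ gives $\pid_{i+1} \subseteq \dMA \cap \dcoMA$, completing the induction.

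I do not expect any deep obstacle here; once the relativization collapse $(\dMA \cap \dcoMA)^{\dMA \cap \dcoMA} = \dMA \cap \dcoMA$ is in hand (already stated as \ref{lem:removing intersection randomized oracle}), the rest is a routine induction. The only technicality to check carefully is the monotonicity claim used above, namely that a deterministic $\dNP$-verifier making oracle calls to a problem in $\sig_i$ may be viewed, without modification, as a $\dMA \cap \dcoMA$-verifier making oracle calls to a problem in $\dMA \cap \dcoMA$. This reduces to observing that the same algorithm-family, with the same blow-up size, meets the (weaker) randomized verifier specification, and that an oracle call is black-box and does not depend on which class the oracle problem is certified to lie in.
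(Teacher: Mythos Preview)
Your proof is correct and follows essentially the same approach the paper intends: the paper explicitly states that the proof of \ref{lem:if NP in coMA} ``goes exactly the same'' as that of \ref{th:PH:collapse:next}, and your argument is precisely the randomized transcription of that proof, with \ref{lem:removing intersection randomized oracle} playing the role of \ref{lem:NP:coNP} (or \ref{lem:NP:coNP:next}). The monotonicity observations you spell out are routine and exactly what is implicitly used in the deterministic version.
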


\section{Search Problems}

\label{sec:search}

In this section, we define complexity classes of \emph{dynamic search
problems}. Intuitively, dynamic search problems are problems of maintaining
some objects under updates. For example, in graphs, there are problems
of maintaining spanning forests, matchings, shortest-paths trees,
etc. The important characteristic of dynamic search problems is they
the size of maintained object are usually larger than the update time.
Hence, at each step, the algorithm only outputs how the object should
be changed. 
Search problems are very natural in the literature of dynamic algorithms
and, perhaps, even more well-studied than their decision version.

\paragraph{Dynamic search problems and algorithms. }

A dynamic search problem $\D$ is a dynamic decision problem except
that the \emph{answer} on the instance $I$ denoted by $\D(I)$ may not be
only $0$ or $1$. For each $I$, $\D(I)$ can be an arbitrary set.
For any $x\in\D(I)$, we say that $x$ is a \emph{correct} answer for $I$.
An algorithm $\A$ for the search problem $\D$ is defined as an
algorithm for decision problems except for one difference. The difference
is that $\A$ will designate a part of its memory for the \emph{answer
	for the search problem}. At step $t$, we denote such answer by $x_{t}$
and $\A$ can update by reading and writing on $x_{t}$ at each step.
Note that $x_{t}$ can be a large string. We say that $\A$ solves
$\D$ if, every step $t$, $x_{t}\in\D(I_{t})$ for any instance sequence
$(I_{0},I_{1},\dots)$. We define a verifier and a randomized
verifier for search problems in similar way.

\paragraph{Complexity Classes.}

The search version of $\dP$, $\dNP$ and $\dBPP$ are denoted by
$\dFP$, $\dFNP$ and $\dFBPP$ respectively. Another important class
of search problems is the class $\dTFNP$. We can intuitively think
of this class as a search version of $\dNP\cap\dcoNP$. The definitions
are motivated from the definitions in the static setting (see e.g.
\cite{MegiddoP91,GoldwasserGH18} and \cite{Rich2008automata}).
\begin{defn}
[Class $\dFP$]A search problem $\D$ is in $\dFP$ iff it admits
a algorithm-family $\A$ with update-time $\time_{\A}(n)=O(\polylog(n))$
and space-complexity $\text{Space}_{\A}(n)=O(\poly(n))$. On
an instance of size $n$, for each step $t$, the answer $x_{t}$
of $\A_{n}$ must be correct, i.e.
$x_{t}\in\D(I_{t})$.
\end{defn}

\begin{defn}
[Class $\dFBPP$]A search problem $\D$ is in $\dFBPP$ iff it admits
a randomized algorithm-family $\A$ with update-time $\time_{\A}(n)=O(\polylog(n))$
and space-complexity $\text{Space}_{\A}(n)=O(\poly(n))$. On
an instance of size $n$, for each step $t$, the answer $x_{t}$
of $\A_{n}$ must be correct with probability at least $1-1/n$, i.e.
$\Pr[x_{t}\in\D(I_{t})]\ge1-1/n$ for each $t$.\end{defn}
\begin{rem}
[Oblivious vs. adaptive adversary]\label{rem:adversary}For search
problems that can be solved with randomized algorithms, there is an
important distinction between the two models of adversaries. We say
that an adversary is oblivious if the updates it generates must \emph{not}
depend on the previous outputs of the algorithms. Otherwise, we say
that an adversary is \emph{adaptive}. Observe that, in all definitions
in this paper, we never assume oblivious adversaries and so algorithms
must work against with adaptive adversaries.\end{rem}
\begin{defn}[Class $\dFNP$]\label{def:dFNP}
A search problem $\D$ is in $\dFNP$ iff it admits
a verifier-family $\V$ with update-time $\time_{\A}(n)=O(\polylog(n))$
and space-complexity $\text{Space}_{\A}(n)=O(\poly(n))$ which
satisfy the following properties for each $n\geq1$. Fix any instance-sequence
$(I_{0},\ldots,I_{k})$ of $\D_{n}$. Suppose that $\V_{n}$ gets
$I_{0}$ as input at step $t=0$, and $((I_{t-1},I_{t}),\pi_{t})$
as input at every step $t\geq1$. 
At each step $t \in\{0,\ldots,k\}$, $\V_{n}$ maintains the answer $x_t$ or outputs $\bot$ where $\bot$ is a special symbol.
\begin{itemize}
    \item If $\D_{n}(I_{t}) = \emptyset$, then $\V_{n}$ 
outputs $\bot$.  
    \item If $\D_{n}(I_{t}) \neq \emptyset$ and the proof-sequence $(\pi_{1},\ldots,\pi_{k})$ 
is \emph{reward-maximizing} (defined as in \Cref{def:NP}), then $x_{t}\in\D_{n}(I_{t})$.
    \item If $\D_{n}(I_{t}) \neq \emptyset$ and the proof-sequence $(\pi_{1},\ldots,\pi_{k})$ 
is not \emph{reward-maximizing}, then $x_t \in\D_{n}(I_{t})$ or $\bot$ is outputted.\qedhere
\end{itemize}
\end{defn}
In other words, if $V_n$ does not output $\bot$, then $x_t$ is always a correct answer. $V_n$ is allowed to output $\bot$ when there is no answer or when the proof sequence is not reward-maximizing.

Note that it make sense to say that a search problem $\D$ is $\dNP$-hard
under $\dP$-reduction. This just means that $\dNP\subseteq(\dP)^{\D}$,
i.e. any decision problem in $\dNP$ can be solved efficiently given
an oracle to $\D$. 
\begin{example}
Recall the decision problem $\dDNF$ from \Cref{def:range DNF}.  Define a {\em search} version of the same problem, $\sDNF$, where for each instance $I = (F, \phi)$, the set of feasible solutions $\mathcal{D}(I)$ is given by the collection of clauses $C_i \in \mathcal{C}$ that are satisfied under the assignment $\phi$. Since $\dDNF$ is $\dNP$-complete (see \Cref{thm:DNF complete}), it immediately follows that $\sDNF$ is $\dNP$-hard. It is easy to check that $\sDNF$ is in $\dFNP$.\footnote{We point out that the $\fDNF$ problem, as defined in \Cref{sub:DNF:NPhard}, is {\em not} known to be in $\dFNP$. This is because it is not clear how to design a verifier-family for $\fDNF$ which satisfies the last condition of \Cref{def:dFNP}.}
\end{example}

\iffalse
The search problems $\fDT$ and $\fDNF$ are $\dNP$-hard. This is
shown in \Cref{sec:DT,sec:DNF}. Note that, both $\fDT$ and $\fDNF$ are in $\dFNP$. 
\fi

We say that a search problem $\D$ is\emph{ total }if for any instance
$I$, $\D(I)\neq\emptyset$. 
\begin{defn}
[Class $\dTFNP$]A search problem $\D$ is in $\dTFNP$ iff $\D\in\dFNP$
and $\D$ is total.
\end{defn}

In other words, every problem in $\dTFNP$ admits a verifier  that, given a proof sequence, maintains a correct answer  $x_t \in\D_{n}(I_{t})$ at all steps $t$, otherwise returns $\bot$, indicating the proof sequence is not reward-maximizing.

The class $\dTFNP$ is a search version of $\dNP\cap\dcoNP$ for the
same reason as in the static setting (see \cite{MegiddoP91}). 
\begin{example}
Dynamic spanning tree is not total, because there is no spanning tree
if a graph is not connected. However, dynamic spanning forest is total.
Dynamic spanning \emph{tree} is in $\dFNP$ using the same algorithm, which shows that dynamic connectivity $\conn \in \dNP$ from \Cref{prop:conn in NP}.  

It is not known if dynamic
spanning forest is in $\dTFNP$. 
Note that the verifier $\V_{NP}$ in the proof that $\conn\in\dNP$ from \Cref{prop:conn in NP} is not strong enough to show that dynamic spanning forest is in $\dTFNP$. This is because, although the $\V_{NP}$ does maintain a spanning forest whenever the proof sequence is reward-maximizing, when the proof is not reward-maximizing, $\V_{NP}$ does not guarantee to still correctly maintain a spanning forest or return $\bot$. (It might maintain a non-spanning forest.)
We will show in \ref{sec:is conn in coNP}
that dynamic spanning forest is in $\dTFNP$ iff dynamic connectivity
is in $\dcoNP$.
\end{example}
The proof of the following observation is basically the same as \Cref{lem:NP:coNP,lem:NP:coNP:next}.
We just have an oracle for search problem.
\begin{prop}
\label{prop:P to TNFP }$(\dP)^{\dTFNP}\subseteq\dNP\cap\dcoNP$. 
\end{prop}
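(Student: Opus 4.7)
The plan is to adapt the proof of~\ref{lem:NP:coNP} (which established $(\dNP)^{\dNP\cap\dcoNP}=\dNP$) to the search-oracle setting, following the author's remark that the argument is ``basically the same'' except that we now have an oracle for a search problem. Suppose $\D^{*}\in(\dP)^{\dTFNP}$, so that $\D^{*}$ admits a deterministic algorithm-family $\A^{*}$ with polylog update time and polynomial space using an oracle-family $\O$ for some total search problem $\D\in\dTFNP$, and let $\V$ be the $\dFNP$-verifier-family witnessing $\D\in\dTFNP$. By totality of $\D$, every instance $I$ admits a reward-maximizing proof sequence under which $\V$ outputs some $x\in\D(I)$.

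To show $\D^{*}\in\dNP$, I will construct a verifier $\hat{\V}$ that simulates $\A^{*}$ step by step, replacing each oracle query to $\O$ by a subroutine call to $\V$ on a designated sub-portion of the current input proof; the answer returned by $\V$ is then fed back to $\A^{*}$ as the oracle's answer. The reward output by $\hat{\V}$ at each step is formed by concatenating the sub-rewards emitted by these $\V$-calls, so that a reward-maximizing proof for $\hat{\V}$ corresponds precisely to a reward-maximizing sub-proof for every $\V$-call. Under such a proof, $\V$ always returns a valid $x\in\D(I)$, so $\A^{*}$ sees correct oracle answers and decides $\D^{*}$ correctly, establishing item~2 of the $\dNP$ definition. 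The membership $\D^{*}\in\dcoNP$ is obtained by running the same construction with the roles of the output bits $0$ and $1$ swapped in the final decision step of $\hat{\V}$.

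Item~1 of the $\dNP$ definition -- that $\hat{\V}$ must output $0$ on every NO-instance regardless of the proof -- is the main obstacle, exactly as it was in~\ref{lem:NP:coNP}. I plan to equip $\hat{\V}$ with an ``invalid'' mode: whenever a call to $\V$ returns an answer that fails a consistency check confirming it truly lies in $\D(I)$, $\hat{\V}$ enters invalid mode and outputs $0$ from that point on. The reward is designed, as in~\ref{lem:NP:coNP}, so that entering invalid mode strictly lowers the reward via a leading validity-bit in $y_{t}$; hence reward-maximizing proofs never trigger invalid mode (preserving item~2), while any other proof either leads to a faithful simulation of $\A^{*}$ (yielding the correct NO output) or is caught and rejected. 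The update-time and space bounds follow the same accounting as in~\ref{lem:NP:coNP}: each oracle call of $\A^{*}$ is replaced by one polylog-time $\V$-call, and there are only polylog many such calls per step of $\A^{*}$, so $\hat{\V}$ inherits polylog update time and polynomial space.

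The delicate point -- the analog of the $\V$/$\V'$ consistency check used in~\ref{lem:NP:coNP} -- is formalizing the validity test inside the $\dFNP$-verifier for $\D$, and will rely on taking a $\dTFNP$-verifier WLOG to accompany its claimed answer with an independently re-verifiable flag. This structural convention is what makes $\dTFNP$ the natural search-setting analogue of $\dNP\cap\dcoNP$ (see also~\ref{lem:NP:coNP:next}), and it is the only place where the proof diverges meaningfully from the decision-oracle argument.
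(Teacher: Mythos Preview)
Your approach is exactly what the paper intends: its own proof is the single remark that the argument is ``basically the same as'' \ref{lem:NP:coNP} and \ref{lem:NP:coNP:next} with a search oracle in place of the $\dNP\cap\dcoNP$ oracle, and that is precisely the adaptation you carry out.

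You have also correctly isolated the one place where the analogy is not automatic, but your proposed fix does not actually close it. In \ref{lem:NP:coNP} the consistency check $x_t=x'_t$ works because the $\dNP$-verifier $\V$ carries a \emph{universal} soundness guarantee (output $0$ on every NO-instance under \emph{every} proof sequence), and dually for $\V'$; agreement therefore forces the common value to equal the true oracle answer. The paper's definition of $\dFNP$, by contrast, imposes \emph{no} constraint on the verifier's output under a non-reward-maximizing proof sequence: it may emit an arbitrary string $x_t\notin\D(I_t)$, and nothing in the definition lets $\hat{\V}$ detect this. Your ``WLOG'' that a $\dTFNP$-verifier can be taken to carry a re-verifiable validity flag is thus not a consequence of the stated definition; it is an additional structural hypothesis. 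It does hold for the concrete search problems the paper actually uses downstream (spanning forest, $d$-weight MSF, tree packings), where the maintained object can be kept \emph{syntactically} valid under any proof and only its \emph{maximality} requires the reward-maximizing prover --- but the general statement is not justified by the formal definition as written. Since the paper's one-line proof glosses over exactly this point, the gap is shared rather than yours alone.
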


\begin{prop}
\label{prop:NP to TNFP }$(\dNP)^{\dTFNP}=\dNP$ .\end{prop}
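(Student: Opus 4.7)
The inclusion $\dNP \subseteq (\dNP)^{\dTFNP}$ is immediate, since any $\dNP$-verifier-family may simply ignore its oracle. For the nontrivial direction $(\dNP)^{\dTFNP} \subseteq \dNP$, I would fix a decision problem $\D \in (\dNP)^{\dTFNP}$ together with a witnessing verifier-family $\V^*$ that uses an oracle-family $\O$ for some search problem $\D' \in \dTFNP$. Since $\dTFNP \subseteq \dFNP$, the problem $\D'$ itself admits a verifier-family $\V'$. The plan, as signalled by the paper's remark preceding the proposition, is to follow the template of~\ref{lem:NP:coNP} almost verbatim: construct an oracle-free verifier-family $\hat{\V}$ for $\D$ that internally simulates $\V^*$ and replaces each oracle query by a call to the subroutine $\V'$, treating $\V'$'s output as the oracle's response.

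At step $t$, the combined proof will have the form $\hat{\pi}_t = (\pi_t^*, \pi'_{t,1}, \ldots, \pi'_{t,k_t})$, where $\pi_t^*$ is a proof for $\V^*$ and each $\pi'_{t,j}$ is a proof supplied to $\V'$ for the $j$-th of the $k_t = O(\polylog n)$ oracle queries $\V^*$ issues during step $t$. The verifier $\hat{\V}$ maintains synchronized simulations of $\V^*$ and $\V'$: whenever $\V^*$ makes its $j$-th oracle query with input $I'_{t,j}$, the combined verifier passes $(I'_{t,j}, \pi'_{t,j})$ to $\V'$, reads $\V'$'s output, and hands it back to $\V^*$ as the oracle's answer. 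Having finished simulating $\V^*$'s step and obtained $(x_t^*, y_t^*)$, the combined verifier outputs $(\hat{x}_t, \hat{y}_t) := (x_t^*, y'_{t,1}\cdot y'_{t,2}\cdots y'_{t,k_t}\cdot y_t^*)$, where $\hat{y}_t$ is the concatenation written in lexicographic order with the $\V'$-rewards in the most significant positions, so that every reward-maximizing $\hat{\pi}_t$ forces each $\pi'_{t,j}$ to be reward-maximizing for $\V'$ and, conditional on that, forces $\pi_t^*$ to be reward-maximizing for $\V^*$. The update time is $O(\polylog n) \cdot O(\polylog \qsize_{\V^*}(n)) = O(\polylog n)$ and the space complexity is $O(\poly n)$, exactly as in the accounting of~\ref{lem:NP:coNP}.

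Completeness would then follow quickly from the lexicographic reward design: along a reward-maximizing $\hat{\pi}$, each $\pi'_{t,j}$ is reward-maximizing for $\V'$, and since $\D' \in \dTFNP$ is total we have $\D'(I'_{t,j}) \neq \emptyset$ at every query, so $\dFNP$-completeness of $\V'$ guarantees its output is a legitimate element of $\D'(I'_{t,j})$. Consequently $\V^*$ is simulated against a bona-fide oracle for $\D'$ under a reward-maximizing $\pi_t^*$, and the completeness of $\V^*$ delivers $\hat{x}_t = x_t^* = 1$ on YES-instances of $\D$, as required by~\ref{def:NP}.

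The main obstacle is the soundness clause, which requires $\hat{x}_t = 0$ on NO-instances for every combined proof, including non-reward-maximizing ones that could in principle force $\V'$ to emit outputs outside $\D'(I')$ and thereby fool $\V^*$. My plan is to import the ``invalid-mode'' latching trick from~\ref{lem:NP:coNP} almost verbatim: duplicate each $\pi'_{t,j}$ into two independent $\dFNP$-proofs run through two separate copies of $\V'$, and latch $\hat{\V}$ into a sink state outputting $(0,0)$ forever the moment the two copies ever disagree. Soundness would then split into the same casework as in~\ref{lem:NP:coNP}: on a NO-instance either the sink triggers and $\hat{x}_t = 0$ trivially, or the two copies agree at every oracle call, in which case totality of $\D'$ combined with the lexicographic reward structure pins the common answer to lie in $\D'(I'_{t,j})$, reducing the remainder of the argument to soundness of $\V^*$ itself against a legitimate oracle. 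Verifying this ``agreement plus totality implies correctness of the common answer'' step — which plays the role that the coupled $\dNP/\dcoNP$ soundness bookkeeping played in~\ref{lem:NP:coNP} — is the technical kernel of the proof and the only place where the search-problem flavor of $\dTFNP$ forces any care; the rest of the argument is, as the paper announces, ``exactly the same'' as for~\ref{lem:NP:coNP}.
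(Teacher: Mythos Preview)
Your overall scaffolding (simulate $\V^*$, splice in the $\dFNP$-verifier $\V'$ for oracle calls, concatenate proofs and rewards lexicographically) matches what the paper has in mind, and your completeness argument is fine. The gap is in your soundness plan.

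In~\ref{lem:NP:coNP} the agreement trick works because the two inner verifiers are \emph{different} and carry \emph{complementary unconditional} guarantees: the $\dNP$-verifier outputs $0$ on every NO-instance of the oracle problem regardless of the proof, and the $\dcoNP$-verifier outputs $1$ on every YES-instance regardless of the proof; hence whenever they agree the common bit is forced to equal the true oracle answer. Your two copies of the \emph{same} $\dFNP$-verifier $\V'$ have no unconditional guarantee whatsoever --- the definition of $\dFNP$ only asserts $x_t \in \D'(I'_t)$ under a reward-maximizing proof-sequence. A malicious prover can simply hand identical non-reward-maximizing proofs to both copies; they then agree on an output lying outside $\D'(I'_{t,j})$, the latch never fires, and $\V^*$ is simulated against a bogus oracle on which its soundness clause no longer applies. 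Your sentence ``totality of $\D'$ combined with the lexicographic reward structure pins the common answer to lie in $\D'(I'_{t,j})$'' is precisely the step that fails: the lexicographic reward structure constrains only \emph{reward-maximizing} combined proofs, whereas soundness must hold for \emph{every} proof-sequence, and totality ($\D'(I'_{t,j}) \neq \emptyset$) places no constraint on what $\V'$ emits under a bad proof.

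What is actually needed, in place of duplication, is a mechanism by which $\hat\V$ can \emph{detect} (or render harmless) any $\V'$-output falling outside $\D'(I')$ --- that is the genuine analogue of the $\dcoNP$-verifier's role in~\ref{lem:NP:coNP}, and your proposed mechanism does not supply it.
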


\newpage{}

\part{A Coarse-grained Approach to Connectivity Problems.}
\label{part:connectivity}

In contrast to the \emph{fine-grained} complexity which is problem-centric,
the \emph{coarse-grained} approach in this paper is resource-centric.
The goal of this part, however, is to show that this approach is helpful
for understanding the complexity of specific problems as well (even
though these problems are not known to be complete for any class).

As a show-case, we give new observations to the well-studied dynamic
connectivity and $k$-edge connectivity problems. More specifically,
we show a new ``nondeterministic equivalence'' between dynamic connectivity
and dynamic spanning forest and its consequences. We show that non-determinism
together with randomization can help speeding up the best current dynamic
$k$-edge connectivity algorithms from $\tilde{O}(\sqrt{n})$ to polylogarithmic
time. Then, by applying our results on coarse-grained complexity theory,
this implies a certain fine-grained ``non-hardness'' result of dynamic
$k$-edge connectivity and other problems.

\section{Is dynamic connectivity in $\protect\dcoNP$?}

\label{sec:is conn in coNP}

Dynamic connectivity ($\conn$) is one of the most well-studied dynamic
graph problems. In this section, we study this problem from non-deterministic
perspective.

\paragraph{Previous works.}

Here, we briefly review the best update time of this problem in the
worst-case setting. Kapron, King and Mountjoy \cite{KapronKM13} show
a Monte Carlo algorithm with polylogarithmic worst-case time for $\conn$.
In other words, they show the following (see \ref{prop:conn in NP}
for the formal proof):
\begin{thm}
[\cite{KapronKM13}]\label{lem:conn in BPP main}$\conn\in\dBPP$.
\end{thm}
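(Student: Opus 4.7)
The plan is to invoke the randomized worst-case algorithm of Kapron, King, and Mountjoy~\cite{KapronKM13} and verify that it meets each of the three conditions in~\ref{def:BPP}: polylogarithmic worst-case update time, polynomial space, and per-step error probability at most $1/n$. First I would recall the KKM guarantee in its most convenient form: on an $N$-node graph there is a Monte Carlo data structure supporting edge insertions, edge deletions, and connectivity queries in $O(\polylog N)$ worst-case time per operation, with $O(N \polylog N)$ space, such that each individual answer is correct with probability at least $1 - 1/N^c$ for any desired constant $c$. In our formalization (\ref{exa:query}) an instance of $\conn_n$ encodes a graph together with a query pair $(u,v)$ using $n = N^2 + 2\log N$ bits, and each instance-update either flips a single edge or changes the query pair. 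Thus $N = \poly(n)$ and the KKM update time $O(\polylog N) = O(\polylog n)$ and space $O(N \polylog N) = O(\poly n)$ directly match the $\dBPP$ bounds.

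Second, I would argue correctness. At each step $t$, the algorithm outputs the answer $x_t \in \{0,1\}$ to the current query, and by choosing the KKM parameter $c$ sufficiently large we can ensure $\Pr[x_t = \D_n(I_t)] \geq 1 - 1/n$. If the stock statement of KKM only gives $\Pr[\text{correct}] \geq 2/3$, standard probability amplification suffices: run $\Theta(\log n)$ independent copies of the KKM structure in parallel and output the majority vote on every step. The parallel copies each take $O(\polylog n)$ time and $O(\poly n)$ space, so the resulting algorithm still fits in $\dBPP$.

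The only conceptual subtlety is the adversary model: the KKM result is stated against an oblivious adversary, whereas~\ref{def:BPP} (and~\ref{rem:adversary}) demand correctness against an adaptive adversary. For a decision problem this is a non-issue, which is precisely the content of the remark at the top of~\ref{sec:intro:contribution} (and the footnote \texttt{\textbackslash repOblivious}): since the output after each update is a single bit chosen to equal the true answer w.h.p., it reveals essentially no information about the algorithm's internal randomness, so the adaptive adversary gains no advantage over the oblivious one. I would spell this out by a short argument: fix any adaptive strategy and consider the distribution over instance sequences it induces; conditioned on the algorithm always answering correctly (an event of probability $\geq 1 - k/n$ over $k \leq \poly(n)$ steps by a union bound), this distribution coincides with one produced by an oblivious adversary, so per-step correctness transfers.

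I do not expect any serious obstacle: the hard work is entirely on the algorithmic side and is done by KKM. The main items to be careful about in writing the formal proof are (i) translating the $N$-node complexity of KKM into the $n$-bit instance size used by our formal model, (ii) boosting the success probability to $1 - 1/n$ by repetition if needed, and (iii) a clean one-paragraph justification that the oblivious-to-adaptive reduction is free for decision problems.
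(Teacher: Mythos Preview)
Your proposal is correct and covers the essential point---invoke KKM and check the resource bounds---but you and the paper emphasize different subtleties.

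The paper's proof focuses on one issue you do not mention: the raw KKM guarantee is that the data structure is correct on \emph{all of the first} $\poly(N)$ updates with probability $1-1/N^c$, not that it is correct at every step of an arbitrarily long sequence. The paper reads~\ref{def:BPP} as requiring correctness at each step of an unbounded sequence, and closes this gap by the standard periodic-rebuild trick: every $\tilde O(N^3)$ updates, rebuild the KKM structure from scratch (preprocessing time $\tilde O(N^2)$), spreading the rebuild work across updates so the worst-case update time remains $\tilde O(1)$. Each period is then correct with high probability, hence so is each individual step. Your write-up instead leans implicitly on the $\poly(n)$-update restriction of~\ref{assume:poly:main}, under which periodic rebuilding is unnecessary; that is a legitimate reading of the framework, but be aware that the paper's own proof takes the more conservative route.

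Conversely, you add two points the paper omits in this particular proof: probability amplification via $\Theta(\log n)$ parallel copies, and the oblivious-to-adaptive argument for decision problems. The amplification is already handled by the paper in a footnote attached to~\ref{def:BPP} (majority vote), and the oblivious-adversary issue is dispatched globally by the remark in~\ref{sec:intro:contribution}, so the paper does not repeat either here. Your discussion of them is correct but could be shortened to a reference.
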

Nanongkai, Saranurak and Wulff-Nilsen \cite{NanongkaiSW17} show a
Las Vegas algorithm for dynamic minimum spanning forest (which implies
dynamic connectivity). On an $n$-node graph, their algorithm has
$n^{o(\log\log\log n/\log\log n)}$ worst-case update time, which
is later slightly improved to $n^{O(\log\log n/\sqrt{\log n}})$ \cite{SaranurakW19}.
However, sub-polynomial update time are not captured by our complexity
classes. For deterministic algorithm, the best update time is still
$O(\sqrt{n}\cdot\frac{\log\log n}{\sqrt{\log n}})$ \cite{Kejlberg-Rasmussen16},
slightly improving the long-standing $O(\sqrt{n})$ bound by \cite{Frederickson85,EppsteinGIN92}.
The ultimate goal in this line of research is to show that there is
deterministic algorithm with $\mbox{polylog}(n)$ worst-case update
time, as in the case when amortized update time is allowed \cite{HolmLT98}.
In our language, this question translates to the following:
\begin{question}
Is $\conn\in\dP$? 
\end{question}
When we allow non-determinism, it is easy to show that $\conn\in\dNP$
(see \ref{prop:conn in NP}). 
\begin{prop}
\label{prop:conn in NP main}$\conn\in\dNP$.
\end{prop}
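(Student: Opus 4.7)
The strategy is to formalize the sketch already outlined in Example~\ref{ex:intro:connectivity in NP}: design a verifier $\V$ that maintains a subforest $F \subseteq G$ of the current graph and uses the proof at each step to update $F$ after an edge deletion. The verifier outputs $x_t = 1$ iff $F$ is a spanning tree of the current graph $G$. The reward structure will be designed so that, under a reward-maximizing proof-sequence, $F$ is always a maximum spanning forest of $G$; in particular $F$ spans $G$ iff $G$ is connected.

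Concretely, $\V$ maintains $F$ using a link/cut tree data structure of Sleator and Tarjan~\cite{SleatorT81}, which supports edge insertion, edge deletion, and same-component queries in $O(\polylog n)$ time per operation. On an insertion of $e=(u,v)$ into $G$, the verifier ignores the proof and inserts $e$ into $F$ iff $u$ and $v$ are in different components of $F$, outputting reward $y_t = 0$. On a deletion of $e$ from $G$: if $e \notin F$, simply delete $e$ from $G$ and set $y_t = 0$; if $e \in F$, delete $e$ from $F$, then read the proof $\pi_t$, which is either the symbol $\bot$ (reward $y_t = 0$) or an edge $e' = (u',v') \in E(G)$. In the latter case, the verifier checks in $O(\polylog n)$ time via the link/cut structure whether $u'$ and $v'$ currently lie in different components of $F$; if so, insert $e'$ into $F$ with reward $y_t = 1$, otherwise reject this proof with reward $y_t = -1$. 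Finally, to answer, the verifier maintains a counter $c$ of the number of tree edges in $F$ and outputs $x_t = 1$ iff $c = n-1$, which is checkable in $O(1)$ time.

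For correctness, I would argue by induction on $t$ that if the proof-sequence $(\pi_1,\ldots,\pi_t)$ is reward-maximizing then $F$ is a maximum spanning forest of the current $G$. The only nontrivial step is the deletion of a tree edge $e$: if $G$ remains connected across the cut induced by removing $e$ from $F$, then some replacement edge $e' \in E(G)$ exists that reconnects the two components of $F \setminus \{e\}$, so the proof $\pi_t = e'$ achieves reward $+1$ while $\bot$ or any invalid $e'$ yield $0$ or $-1$; hence reward maximization forces $\pi_t$ to be such a replacement edge. Conversely, if no such replacement exists then every edge fails the cut test and reward $0$ (from $\bot$) is optimal, matching the fact that $F$ loses a spanning edge exactly when $G$ loses a connected component. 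Thus on YES-instances $F$ is spanning and $x_t = 1$, establishing item~(2) of Definition~\ref{def:NP}. Item~(1) is automatic: since $F \subseteq E(G)$ is always maintained as a forest, $c = n-1$ implies $F$ is a spanning tree of $G$, so $G$ is connected; hence no proof-sequence can produce $x_t = 1$ on a NO-instance.

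For the resource bounds, the verifier performs $O(1)$ link/cut operations per step, each costing $O(\polylog n)$ time, so $\time_{\V}(n) = O(\polylog n)$; the counter and link/cut structure together occupy $O(n\polylog n) = O(\poly(n))$ bits of memory, matching the requirements of Definition~\ref{def:NP}. The only subtlety, and the main item to get right, is the bookkeeping across multiple updates: when a non-tree edge is later deleted from $G$, we must also delete it from the verifier's stored adjacency information so that proofs can only reference edges currently in $G$; this is handled by storing $G$ explicitly alongside $F$ in memory, which still fits within $\poly(n)$ space. Once these are in place, the verifier meets Definitions~\ref{def:verifier} and~\ref{def:NP}, proving $\conn \in \dNP$.
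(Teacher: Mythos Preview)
Your proposal is correct and follows essentially the same approach as the paper's proof (which is itself a restatement of Example~\ref{ex:intro:connectivity in NP}): maintain a forest $F$ via link/cut trees, use the proof to supply a replacement edge after a tree-edge deletion, and set rewards so that reward-maximization forces $F$ to remain a spanning forest. Your write-up is in fact slightly more careful than the paper's sketch, since you explicitly address verifying $e' \in E(G)$ and the resource bounds.
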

Surprisingly, it is not clear at all whether $\conn\in\dcoNP$. As
a stepping stone towards the answer whether $\conn\in\dP$, we believe
that studying this question might lead to further insight.
\begin{question}
Is $\conn\in\dcoNP$?
\end{question}
For the rest of this section, we study some consequences if $\conn\in\dcoNP$.
These consequences will later help us argue why some problems should
not be $\dNP$-hard in a non-straightforward way in \ref{sec:kconn}.

\subsection{Nondeterministic Equivalence of Connectivity and Spanning Forest}

Dynamic connectivity ($\conn$) and dynamic spanning forest ($\spanningForest$)
are closely related problems. If we can maintain a spanning
forest $F$ of a graph $G$, then we can implement the top tree \cite{AlstrupHLT05} on $F$ to count the number of
connected components; consequently, we know whether $G$ is connected or not. Moreover,
we can answer a connectivity query for any pair of nodes in logarithmic
time. Conversely, given that we can maintain if $G$ is connected,
it is not clear how to maintain a spanning forest of $G$. Interestingly,
all the previous dynamic connectivity algorithms actually maintain
dynamic spanning forest\footnote{There is one exception in a more restricted sensitivity setting \cite{PatrascuT07}.}.
It turns out that, when non-determinism is allowed, the two problems
are indeed equivalent :
\begin{thm}
$\conn\in\dNP\cap\dcoNP$ iff $\spanningForest\in\dTFNP$.\label{thm:conn spanning equiv}\end{thm}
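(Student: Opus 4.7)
The plan is to prove each direction via a short reduction, relying on the machinery from \ref{sec:class}, \ref{sec:PH}, and \ref{sec:search}.

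For direction $(\Leftarrow)$, I would first establish a $\dP$-reduction from $\conn$ to $\spanningForest$, i.e., show $\conn\in(\dP)^{\spanningForest}$, and then invoke the general containment $(\dP)^{\dTFNP}\subseteq\dNP\cap\dcoNP$ from \ref{prop:P to TNFP }. The reduction will work as follows: the $\dP$-algorithm forwards each edge update of $G$ to the $\spanningForest$-oracle, reads the $O(\polylog n)$ cells of the oracle's output memory that were touched in that step---these encode the at most two edges added to or removed from the maintained spanning forest $F$---and mirrors those edge-changes into a top tree \cite{AlstrupHLT05} it keeps on $F$. The top tree reports the current number of trees in $F$, and $G$ is connected iff that count equals $1$. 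Chaining $\conn\in(\dP)^{\spanningForest}\subseteq(\dP)^{\dTFNP}\subseteq\dNP\cap\dcoNP$ then gives $\conn\in\dNP\cap\dcoNP$, with the $\dcoNP$-membership being the real content here.

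For direction $(\Rightarrow)$, only $\conn\in\dNP$ is actually used, and we already have it unconditionally from \ref{prop:conn in NP main}. I would take the verifier of \ref{ex:intro:connectivity in NP}---which already maintains internally a forest $F$ that, under reward-maximizing proofs, is always a spanning forest of the current $G_t$---and convert it into a $\dFNP$-verifier for the search problem $\spanningForest$ by redirecting the at most two edge-changes to $F$ per step into the designated output region $x_t$. These writes cost $O(\log n)$ probes and fit within the $O(\polylog n)$ update budget; under reward-maximizing proofs $x_t$ is a valid spanning forest of $G_t$, so $\spanningForest\in\dFNP$. Totality is immediate since every graph admits a spanning forest, yielding $\spanningForest\in\dTFNP$.

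The substantive step will be direction $(\Leftarrow)$, and in particular the appeal to $(\dP)^{\dTFNP}\subseteq\dNP\cap\dcoNP$. The one technical point to get right is the oracle output interface: the reducing algorithm needs to identify, in $\polylog n$ probes per update, the constant-many output-memory cells whose contents have just changed and translate them into top-tree edge-insertions and deletions. Given the bit-probe framework of \ref{sec:model:dynamic} and the fact that an oracle with $\polylog n$ update time can touch only $\polylog n$ cells of its output memory per step, this is routine but requires some care in fixing the output format.
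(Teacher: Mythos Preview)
Your plan inverts where the paper puts the work. The paper treats $(\Leftarrow)$ as a one-liner --- a spanning-forest oracle plus a top tree counts components, so $\conn\in(\dP)^{\spanningForest}$, and the containment of \ref{prop:P to TNFP } finishes it --- and spends the entire proof on $(\Rightarrow)$. You do the reverse: you call $(\Leftarrow)$ the substantive step and dismiss $(\Rightarrow)$ as needing only $\conn\in\dNP$.

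That dismissal is the red flag. You say the $(\Rightarrow)$ direction ``only uses $\conn\in\dNP$'', which is unconditional by \ref{prop:conn in NP main}. But then your two directions together, via \ref{prop:P to TNFP }, would yield $\conn\in\dcoNP$ unconditionally --- and the paper explicitly poses ``Is $\conn\in\dcoNP$?'' as open, and states right before the $\dTFNP$ definition that ``it is not known if dynamic spanning forest is in $\dTFNP$''. So either your one-line observation has resolved the open question, or something is off in the chain from the \ref{ex:intro:connectivity in NP} verifier to $\spanningForest\in\dTFNP$ to \ref{prop:P to TNFP }. You should at minimum flag and investigate this tension rather than build on it.

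The paper's $(\Rightarrow)$ argument genuinely \emph{uses} the $\dcoNP$ hypothesis. It maintains, alongside the forest $F_t$ with components $T^1,\dots,T^{k_t}$, an auxiliary graph $G'_t$ obtained from $G_t$ by adding a fresh vertex $s$ joined to one representative $u^i$ per component; thus $G'_t$ is always connected. The key claim is: when a tree edge $e\in T^j$ is deleted (splitting $T^j$ into $T^j_L,T^j_R$), the graph $G'_t-e$ is connected iff a replacement edge for $e$ exists in $G_t$. The verifier calls the hypothesised $\dNP\cap\dcoNP$ algorithm for $\conn$ on $G'_t-e$; if ``connected'', the prover supplies a replacement $f=(u_L,u_R)$ which is locally checked ($f\in E_t$, $f\neq e$, $u_L\in T^j_L$, $u_R\in T^j_R$), and if ``not connected'', $F_{t+1}=F_t-e$ with no replacement. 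It is precisely the $\dcoNP$ half that certifies the ``not connected'' branch, i.e.\ the \emph{absence} of a replacement edge --- the information you are trying to extract for free from the reward mechanism of the \ref{ex:intro:connectivity in NP} verifier. That is the idea your proposal is missing.
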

\begin{proof}
($\impliedby$) Suppose that $\spanningForest\in\dTFNP$; i.e. there is a verifier-family $\V$ as in \Cref{def:dFNP}. Below we construct a verifier-family $\V'$ to show that $\conn\in \dcoNP$ (\Cref{def:coNP}). This implies $\conn\in\dNP\cap\dcoNP$ since we know that $\conn\in\dNP$ (\Cref{prop:conn in NP}).

Our verifier $\V'$ uses the same proof and reward as $\V$. If the graph maintained by $\V$ is a spanning tree\footnote{We can use the top tree to maintain this.} 
or $\V$ returns $\bot$, then $\V'$ returns 1 (to indicate that the input dynamic graph is connected); otherwise, $\V'$ returns 0 (``not connected''). 
%not a spanning tree, then $\V'$ returns 0 (i.e. the input dynamic graph is {\em not} connected); otherwise, it returns 1 (``connected''). 
Now we analyze the correctness of $\V'$.
%, consider the following. 
\begin{itemize}
    \item If the proof sequence is reward-maximizing, then the subgraph maintained by $\V$ must be a spanning forest. Our verifier $\V'$ always returns the correct answer in this case. In particular, 
    %the two conditions in \Cref{def:coNP} are always satisfied. 
    the second condition in  \Cref{def:coNP} is always satisfied. 
    \item As noted under \Cref{def:dFNP}, if $\V$ does {\em not} return $\bot$ then the subgraph  maintained by $\V$ must be a spanning forest. In particular, for any proof sequence, when the dynamic input graph $G$ is connected, either $\V$ returns $\bot$ or maintains a spanning tree of $G$. $\V'$ returns 1 in both cases.  The first condition in  \Cref{def:coNP} is thus satisfied. 
    %If the proof sequence is {\em not} reward-maximizing and  $\V$ does {\em not} return $\bot$, then the subgraph  maintained by $\V$ is must be a spanning forest 
    %or $\V$ returns $\bot$. If $\V$
\end{itemize}

%First, note that spanning forest exists on any graph. As argue above,
%it is clear that $\spanningForest\in\dTFNP$ implies that $\conn\in\dNP\cap\dcoNP$.
%We will only prove the converse. 

($\implies$)
Suppose that $\conn\in\dNP\cap\dcoNP$. 
We construct an algorithm that given a proof sequence maintains a spanning forest $F$ of an input graph $G$ satisfying conditions in \Cref{def:dFNP} as follows. 
In our algorithm, if we ever return $\bot$ we will always return $\bot$ in the subsequent step. Consider step $t$ where we have not yet returned $\bot$. 
Suppose that
we have maintained a graph $G_{t}$ and its spanning forest $F_{t}$
up to time step $t$. Suppose that $F_{t}=\{T^{1},\dots,T^{k_{t}}\}$
has $k_{t}$ connected components. For $1\le i\le k_{t}$, let $u^{i}\in T^{k_{i}}$
be arbitrary node in $T^{k_{i}}$. We will maintain a graph $G'_{t}$
where $V(G'_{t})=V(G_{t})\cup\{s\}$ and $E(G'_{t})=E(G{}_{t})\cup\{(s,u^{i})\mid1\le i\le k_{t}\}$. 
Intuitively, $G'_t$ is obtained from $G_t$ by connecting each connected component represented 
by $u^i$ in $G$ to the node $s$ in $G'_t$. So $G'_t$ is connected at every time.

Now, given an edge insertion, it is easy to maintain $G_{t+1}$, $F_{t+1}$
and $G'_{t+1}$ using link-cut tree \cite{SleatorT81}. 
For edge deletion,
it is also easy if the deleted edge $e$ is not in $F_{t}$. 
So it
remains to deal with the case where $e\in F_{t}$. Suppose that 
%$e\in T^{j}$and we write $T^{j}=T_{L}^{j}\cup\{e\}\cup T_{R}^{j}$ where 
%
%That is, 
deleting
$e$ disconnects $T^{j}$ into two trees $T_{L}^{j}$ and $T_{R}^{j}$.
Our goal is to find a \emph{replacement edge} $f=(u_{L},u_{R})$ where
$f\neq e$ and $u_{L}\in T_{L}^{j}$ and $u_{R}\in T_{R}^{j}$. 
%We have the following key observation:

% We will describe a dynamic
% nondeterministic algorithm for dynamic spanning forest. Suppose that
% we have maintained a graph $G_{t}$ and its spanning forest $F_{t}$
% up to time step $t$. Suppose that $F_{t}=\{T^{1},\dots,T^{k_{t}}\}$
% has $k_{t}$ connected components. For $1\le i\le k_{t}$, let $u^{i}\in T^{k_{i}}$
% be arbitrary node in $T^{k_{i}}$. We will maintain a graph $G'_{t}$
% where $V(G'_{t})=V(G_{t})\cup\{s\}$ and $E(G'_{t})=E(G{}_{t})\cup\{(s,u^{i})\mid1\le i\le k_{t}\}$. 
% Intuitively, $G'_t$ is obtained from $G_t$ by connecting each connected component represented 
% by $u^i$ in $G$ to the node $s$ in $G'_t$. So $G'_t$ is connected at every time.

% Now, given an edge insertion, it is easy to maintain $G_{t}$ $F_{t}$
% and $G'_{t}$ using link-cut tree \cite{SleatorT81}. For edge deletion,
% it is also easy if the deleted edge $e$ is not in $F_{t}$. So it
% now we deal with the case where $e\in F_{t}$. Suppose that $e\in T^{j}$
% and we write $T^{j}=T_{L}^{j}\cup\{e\}\cup T_{R}^{j}$. That is, deleting
% $e$ disconnects $T^{j}$ into two parts $T_{L}^{j}$ and $T_{R}^{j}$.
% Our goal is to find a \emph{replacement edge} $f=(u_{L},u_{R})$ where
% $f\neq e$ and $u_{L}\in T_{L}^{j}$ and $u_{R}\in T_{R}^{j}$. 
We use the following observation:
\begin{claim}
$G'_{t}-e$ is connected iff there is a replacement edge.
\end{claim}

Informally, we can call a $\dNP\cap\dcoNP$-algorithm $\A$ for dynamic
connectivity on $G'_{t}$. If $\A$ returns that $G'_{t}-e$ is not
connected, then we do not need to update $G_{t+1}=G_{t}-e$ and $F_{t+1}=F_{t}-e$.
Either $T_{L}^{j}$ or $T_{R}^{j}$ is a new connected component,
and so we can update $G'_{t+1}$ accordingly. If $\A$ returns that
$G'_{t}-e$ is connected, then there \emph{is }a replacement edge
$f=(u_{L},u_{R})$ which a prover can provide us, and, more importantly,
we can check if $f\in E_{t}$, $f\neq e$, $u_{L}\in T_{L}^{j}$ and
$u_{R}\in T_{R}^{j}$, i.e. $f$ is indeed a replacement edge. Then,
we can update $G_{t+1}$, $F_{t+1}$ and $G'_{t+1}$ accordingly. This allows us to verify quickly that at each step $t$, $F_{t}$
is indeed a spanning forest.

More formally, we simulate some algorithms for verifying that $G'_{t}$ is connected or not as follows. 
%
%$\V_{NP}$ on its input graph $G$ and simulates $\V_{coNP}$ on $H$ as follows. 
Let $\V_{NP}$ be the verifier as in the proof of \Cref{prop:conn in NP} where we show that $\conn\in \dNP$; recall that $\V_{NP}$ maintains a forest $F'_t$ of the input dynamic graph ($G'_{t}$ in this case) that is a spanning forest when the proof sequence is reward maximizing.
%i.e. it takes an edge and returns a reward of $y=1$ if the edge can be used to increase the size of the maintained spanning forest, $y=0$ if the proof is $\bot$, and $y=-1$ otherwise.\danupon{Is this a correct summary?}
%
Let $\V_{coNP}$ be the verifier satisfying \Cref{def:coNP}. ($\conn \in \dcoNP$ implies that $\V_{coNP}$ exists.) 
At each step $t$, $\V_{NP}$ and $\V_{coNP}$ are expected to receive some updates to $G_t'$ and proofs $\pi_t^{NP}$  and $\pi_t^{coNP}$. 
They then return whether $H$ is connected or not and the rewards, say $r_t^{NP}$  and $r_t^{coNP}$. To simulate them, we take $(\pi_t^{NP},\pi_t^{coNP})$ as our own proof and give $\pi_t^{NP}$ and $\pi_t^{coNP}$ to $\V_{NP}$ and $\V_{coNP}$ respectively. We also update $G_t'$ in the same way we update $G_t$ (i.e. we add/delete edges in $G'_t$ as we add/delete edges in $G'_t$). 
%as described above (i.e. we add/delete edges in $G'_t$ as we add/delete edges in $G'_t$ and additionally we add an edge to $G_t'$ if it is necessary to keep $G'_t$ connected). 
The reward of our algorithm is $r_t^{NP}+r_t^{coNP}$. Observe that the proof $(\pi_t^{NP},\pi_t^{coNP})$ is a reward-maximizing proof for our algorithm if and only if $\pi_t^{NP}$ and $\pi_t^{coNP}$  are reward-maximizing proofs for $\V_{NP}$ and $\V_{coNP}$ 
 respectively.

 Now, consider when an edge $e\in F'_t$ is deleted (other cases can be easily handled as described earlier). (i) If  $\V_{NP}$  returns 1 (indicating that $G'_{t+1}$ is still connected after the edge deletion), then 
 %$k_{t+1}=k_t$, i.e. no new connected component was created. 
 $F'_{t+1}$ is a spanning forest of $G'_{t+1}$ (this follows from the proof of \Cref{prop:conn in NP});  
 we can argue by induction that  $F_{t+1}=F'_{t+1}\cap G_{t+1}$ is a spanning forest of $G_t$. (We can easily maintain such an $F_{t+1}$.)
(ii) If  $\V_{coNP}$  returns 0, we know that $G'_t$ is not connected. Thus we add an edge from $s$ to one of the end-vertices of $e$ to keep $G'_t$ connected and call $\V_{NP}$ and $\V_{coNP}$  with appropriate updates and proofs (e.g., the update and the proof for $\V_{NP}$ according to the proof of \Cref{prop:conn in NP} is simply the edge we just insert).  
(iii) If $\V_{NP}$ returns 0 and  $\V_{coNP}$  returns 1, we can conclude that one of the proof sequences for $\V_{NP}$ and $\V_{coNP}$  are {\em not} reward maximizing (otherwise, $\V_{NP}$ and $\V_{coNP}$ must return correct answers); thus, we return $\bot$ from now on. 
(Note that it is impossible that $\V_{NP}$ returns 1 and  $\V_{coNP}$  returns 0 since the former case happens only if $G'_t$ is connected and the latter happens only if $G'_t$ is not connected).
% To conclude, we can verify quickly that at each step $t$, $F_{t}$
% is indeed a spanning forest. Hence $\spanningForest\in\dTFNP$.
%Observe that if the proof sequence  $\left((\pi_1^{NP},\pi_1^{coNP}),\ldots,(\pi_t^{NP},\pi_t^{coNP})\right))$ is reward maximizing, then $\V_{NP}$ and $\V_{coNP}$ must answer correctly and 
\end{proof}

\subsection{Consequences of $\protect\conn\in\protect\dcoNP$}
\begin{cor}
[\cite{HenzingerK99}]\label{cor:consequence conn in coNP}Suppose
that $\conn\in\dcoNP$. Dynamic bipartiteness is $\dNP\cap\dcoNP$.
Moreover, the following problems are in $\dTFNP$:
\begin{enumerate}
\item dynamic minimum spanning forest on $d$ distinct-weight graphs where
$d=\polylog(n)$
\item dynamic $(1+\epsilon$)-approximate minimum spanning forest where
$\epsilon>1/\polylog(n)$, and
\item dynamic $k$-edge connectivity certificate where $k=\polylog(n)$\footnote{A $k$-connectivity certificate $H$ of $G$ is a subgraph of $G$
where 1) $H$ has $O(kn)$ edges, and 2) $H$ is $k$-edge connected
iff $G$ is $k$-edge connected. }.
\end{enumerate}
The number $n$ above is the number of nodes in the underlying graphs. \end{cor}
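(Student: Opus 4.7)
The plan is to first upgrade the hypothesis to $\spanningForest \in \dTFNP$, then reduce each listed problem to dynamic spanning forest maintenance. Combining $\conn \in \dcoNP$ with the unconditional $\conn \in \dNP$ (\ref{prop:conn in NP main}) gives $\conn \in \dNP \cap \dcoNP$, and \ref{thm:conn spanning equiv} then yields $\spanningForest \in \dTFNP$. This hands us a nondeterministic subroutine that explicitly maintains a spanning forest $F$ with polylog update time. Since $(\dP)^{\dTFNP} \subseteq \dTFNP$ for total search problems (the search analogue of \ref{prop:P to TNFP}), it suffices to give a $\dP$-reduction from each listed search problem to $\spanningForest$.

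For the three search problems I would use classical chain-of-spanning-forests reductions. For MSF on a graph with $d = \polylog n$ distinct weights, I partition $E = E_1 \cup \cdots \cup E_d$ by weight and inductively maintain $F_i$ as a spanning forest of $G[E_1 \cup \cdots \cup E_i]$, running the oracle on the multigraph obtained by contracting $F_{i-1}$ and appending $E_i$; an edge update cascades through at most $d$ layers, each at polylog cost. The $(1+\epsilon)$-approximate MSF case reduces to this one by bucketing weights into consecutive powers of $(1+\epsilon)$, yielding $O(\epsilon^{-1} \log W) = \polylog n$ buckets when weights are polynomially bounded. For the $k$-edge connectivity certificate, I maintain the Nagamochi--Ibaraki chain $F_1, \ldots, F_k$ where $F_i$ is a spanning forest of $G \setminus \bigcup_{j<i} F_j$; the union $H = \bigcup_i F_i$ has $O(kn)$ edges and preserves $k$-edge connectivity. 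Each of the $k = \polylog n$ layers uses an independent spanning forest oracle, and each of these objects exists on every graph, so totality is preserved.

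For dynamic bipartiteness, which is a decision problem, I would show membership in both $\dNP$ and $\dcoNP$ by piggybacking on the spanning forest oracle. Maintain $F$ along with a per-tree 2-coloring $\chi$; the graph is bipartite iff every non-tree edge is bichromatic under $\chi$. The $\dNP$ verifier uses the prover to propagate $\chi$ consistently across tree swaps, supplying polylog parity bits whenever the oracle replaces a tree edge, and can verify bichromaticity of any particular edge via a link--cut or top-tree structure built on $F$. The $\dcoNP$ verifier is symmetric: the prover designates a specific monochromatic non-tree edge as an odd-cycle witness after each update, and the verifier accepts ``non-bipartite'' exactly when such a witness currently holds. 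Both sides fit naturally into the reward-maximizing nondeterminism of \ref{def:NP} and \ref{def:coNP}, and together they place bipartiteness in $(\dNP)^{\dTFNP} \cap (\dcoNP)^{\dTFNP} = \dNP \cap \dcoNP$ by oracle composition.

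The main obstacle is reconciling the per-tree 2-coloring with the spanning forest oracle's black-box behavior, since a single tree swap can flip parities on an entire subtree and the verifier cannot afford to read them all. I expect this to be handled exactly as in the connectivity verifier of \ref{ex:intro:connectivity in NP}: the prover supplies a constant number of endpoint parities after each oracle-driven tree change, rewards favor consistent updates over no-op $\bot$ updates, and the top tree lets the verifier read off just the two relevant parities to validate consistency in polylog time. Once this bookkeeping is made precise, the three $\dTFNP$ memberships and the $\dNP \cap \dcoNP$ membership for bipartiteness all follow as direct applications of the spanning forest oracle supplied by \ref{thm:conn spanning equiv}.
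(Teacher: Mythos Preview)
Your proposal is correct and follows essentially the same route as the paper: combine $\conn\in\dNP$ with the hypothesis to get $\conn\in\dNP\cap\dcoNP$, invoke \ref{thm:conn spanning equiv} to obtain $\spanningForest\in\dTFNP$, and then reduce each listed problem to dynamic spanning forest. The paper's proof simply cites Henzinger and King \cite{HenzingerK99} for those reductions, whereas you spell them out (layered forests for $d$-weight MSF, weight bucketing for approximate MSF, the Nagamochi--Ibaraki chain for the certificate); your constructions match the classical ones.

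One remark on bipartiteness: your direct verifier construction works, but the ``obstacle'' you flag is largely a non-issue once you observe that parity along a tree path is a polylog-time top-tree query, so no explicit per-vertex color needs to be stored or flipped. Alternatively, the Henzinger--King route avoids all of this by reducing bipartiteness to connectivity on the bipartite double cover $G'$ (two copies of each vertex, cross edges only): $G$ is bipartite iff no vertex is connected to its twin in $G'$, which is a pure connectivity statement and hence immediately in $\dNP\cap\dcoNP$ under the hypothesis. Either argument suffices.
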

\begin{proof}
As we already know that $\conn\in\dNP$, if $\conn\in\dcoNP$, then
$\spanningForest\in\dTFNP$ by \ref{thm:conn spanning equiv}. Henzinger
and King \cite{HenzingerK99} show several reduction from spanning
forest to the problems  above. 
\end{proof}
From this, we can infer the ``non-hardness'' of the above problems
in \ref{cor:consequence conn in coNP} unless $\dPH$ collapses:
\begin{cor}
\label{cor:conn easy or something easy}Unless $\dPH=\dNP\cap\dcoNP$,
the following two statements cannot hold simultaneously
\begin{enumerate}
\item $\conn\in\dcoNP$.
\item Some problem $\D$ in the list of \ref{cor:consequence conn in coNP}
is $\dNP$-hard.
\end{enumerate}
\end{cor}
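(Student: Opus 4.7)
The plan is to argue by contraposition: assuming both (1) $\conn\in\dcoNP$ and (2) some problem $\D$ from the list is $\dNP$-hard, I will show $\dNP\subseteq\dcoNP$, which by Theorem~\ref{th:PH:collapse:next} gives $\dPH=\dNP\cap\dcoNP$.

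The first step is to combine (1) with Corollary~\ref{cor:consequence conn in coNP}, which tells us that under the assumption $\conn\in\dcoNP$, dynamic bipartiteness lies in $\dNP\cap\dcoNP$ and the three remaining candidates (distinct-weight MSF, $(1+\epsilon)$-approximate MSF, and the $k$-edge connectivity certificate) all lie in $\dTFNP$. So in either case, the $\dNP$-hard problem $\D$ belongs to one of the ``low'' classes $\dNP\cap\dcoNP$ or $\dTFNP$. Next, since $\D$ is $\dNP$-hard, by the oracle formulation of hardness we have $\dNP\subseteq(\dP)^{\D}$ (this is the definition spelled out in the search-problem discussion of Section~\ref{sec:search}, and it is equally the content of the decision-case Definition~\ref{def:completeness} combined with the reduction framework of Section~\ref{sec:reduction}).

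The third step is to collapse the oracle: since $(\dP)^{\D}\subseteq(\dNP\cap\dcoNP)^{\D}$, we can apply Lemma~\ref{lem:NP:coNP:next} in the decision case, which yields $(\dNP\cap\dcoNP)^{\dNP\cap\dcoNP}=\dNP\cap\dcoNP$, and Proposition~\ref{prop:P to TNFP } in the search case, which yields $(\dP)^{\dTFNP}\subseteq\dNP\cap\dcoNP$. Either way, we conclude $\dNP\subseteq\dNP\cap\dcoNP$, i.e., $\dNP\subseteq\dcoNP$. Then Theorem~\ref{th:PH:collapse:next} delivers $\dPH=\dNP\cap\dcoNP$, as required.

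The only potentially delicate point is making sure that ``$\dNP$-hard'' gives the strong oracle-closure $\dNP\subseteq(\dP)^{\D}$ and not merely the implicational statement ``$\D\in\dP\Rightarrow\dP=\dNP$''; for the four candidate problems this is not really an obstacle because the reductions from $\fDT$ (respectively from $\dOV$ in the sense of Abboud--Williams, cf. Corollary~\ref{cor:list rankNP hard}) that one would use to witness $\dNP$-hardness are $\dP$-reductions in the oracle sense. Once this is in place the argument is just an application of the oracle absorption lemmas already proved in Part~\ref{part:formalization} together with the closure property established in Corollary~\ref{cor:consequence conn in coNP}, and no further technical work is required.
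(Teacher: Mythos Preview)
Your proof is correct and follows essentially the same route as the paper's: assume both (1) and (2), use \ref{cor:consequence conn in coNP} to place $\D$ in $\dNP\cap\dcoNP$ (bipartiteness case) or $\dTFNP$ (search case), then absorb the oracle via \ref{lem:NP:coNP:next} or \ref{prop:P to TNFP } to get $\dNP\subseteq\dNP\cap\dcoNP$, and finish with \ref{th:PH:collapse:next}. Your parenthetical remark about the oracle formulation of $\dNP$-hardness versus the implicational wording of Definition~\ref{def:completeness} is a fair point---the paper's own proof tacitly uses the oracle form $\dNP\subseteq(\dP)^{\D}$ as well---but as you note, the concrete hardness results in play are all obtained via $\dP$-reductions, so no gap arises.
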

\begin{proof}
Suppose that $\conn\in\dcoNP$. Then, dynamic bipartiteness is in
$\dNP\cap\dcoNP$, and all the search problems $\D$ from \ref{cor:consequence conn in coNP}
are in $\dTFNP$. If dynamic bipartiteness is $\dNP$-hard, then $\dNP\subseteq(\dP)^{\dNP\cap\dcoNP}=\dNP\cap\dcoNP$
by \ref{lem:NP:coNP:next}. If any problem $\D$ from \ref{cor:consequence conn in coNP} is $\dNP$-hard,
then $\dNP\subseteq(\dP)^{\dTFNP}\subseteq\dNP\cap\dcoNP$ by \ref{prop:P to TNFP }.
Both of these imply that $\dPH$ collapses to $\dNP\cap\dcoNP$ by
\ref{th:PH:collapse:next}.
\end{proof}

\section{Complexity of dynamic $k$-edge connecitivty}

\label{sec:kconn}

Recall that the dynamic $k$-edge connectivity problem is to maintain
whether a graph is $k$-edge connected. The current best worst-case
upper bound for this problem is $\tilde{O}(\sqrt{n})$ by a deterministic
algorithm \cite{Thorup01}. For $k\le3$, the problem has been extensively
studied in the amortized update time setting (see \cite{HolmLT98}
for the history). However, for $k\ge4$, there is no better algorithms
with amortized update time. It remains open whether the $\tilde{O}(\sqrt{n})$-bound
can be improved.

It turns out that non-determinism and randomization can significantly
speed up the update time to polylogarithmic worst-case. More precisely,
we show that the following:
\begin{thm}
\label{thm:kconn in MAcoMA}$\kconn\in\dMA\cap\dcoMA$.
\end{thm}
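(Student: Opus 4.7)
The plan is to establish the two containments $\kconn \in \dcoMA$ and $\kconn \in \dMA$ separately, each by constructing a randomized verifier that piggybacks on the Kapron--King--Mountjoy dynamic connectivity algorithm.

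For $\kconn \in \dcoMA$: I would show that the complement problem $(<k)$-edge connectivity lies in $\dMA$, which is the easier direction. The idea is to adapt the verifier from Example~\ref{ex:intro:k-con in Pi2}: after each update, the prover supplies a set $S \subseteq E$ of at most $k-1$ edges, and the verifier must check whether $G \setminus S$ is disconnected. In place of the deterministic connectivity oracle invoked in that example, I would use the randomized Kapron--King--Mountjoy algorithm, which by~\ref{lem:conn in BPP main} places $\conn$ in $\dBPP$. The resulting verifier witnesses $(<k)$-edge connectivity $\in (\dNP)^{\dBPP}$, which by~\ref{lem:NP(BPP) in MA} is contained in $\dMA$; hence $\kconn \in \dcoMA$.

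For $\kconn \in \dMA$: The verifier must positively certify that $G$ is $k$-edge connected. The plan is to have the prover maintain a Nagamochi--Ibaraki style sparse $k$-edge connectivity certificate $H = F_1 \cup \cdots \cup F_k$, where each $F_i$ is a spanning forest of $G \setminus (F_1 \cup \cdots \cup F_{i-1})$. Since $H$ preserves all cuts of $G$ of size up to $k$, it is $k$-edge connected iff $G$ is, and $|E(H)| = O(kn)$. Each local prover update to an $F_i$ can be validated in polylogarithmic time by the verifier: use the $\dBPP$ connectivity subroutine of~\ref{lem:conn in BPP main} in place of the deterministic oracle in the proof of~\ref{thm:conn spanning equiv}, obtaining a randomized analogue in which spanning forest admits a Merlin--Arthur verifier with polylog update time (this is the randomized counterpart of the reduction underlying~\ref{cor:consequence conn in coNP}).

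The remaining (and main) step is verifying that the sparse certificate $H$ itself is $k$-edge connected in polylog time with prover help. Since $|E(H)| = O(kn) = \tilde O(n)$, I would have the prover additionally maintain, alongside $H$, a cactus-type representation of all cuts of $H$ of size $<k$ (an object of size $O(n)$) and use KKM-style linear sketches for the verifier to spot-check its correctness under updates. The verifier accepts iff every cut encoded in the cactus has weight $\geq k$; the prover witness at each step is just the local change to this representation.

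The hard part will be precisely this last step, namely pushing the update time of the $k$-cut structure down to polylog with the help of an Arthur--Merlin proof. The Las Vegas dynamic MSF algorithms of Nanongkai--Saranurak--Wulff-Nilsen, Wulff-Nilsen, and Nanongkai--Saranurak already yield $n^{o(1)}$ update time for the underlying maintenance tasks, and the expected role of nondeterminism is to shave this sub-polynomial slack down to polylog by letting the prover supply the sketch-reconstruction hints that those algorithms spend extra time to compute. Formalizing this shortcut --- essentially the $\dMA$ analogue of the $\dTFNP$-style reduction in~\ref{cor:consequence conn in coNP} --- is the delicate part, and verifying that the adaptive adversary (see~\ref{rem:adversary}) cannot defeat the randomized checks even when the prover is also adversarial is where the argument has to be set up most carefully.
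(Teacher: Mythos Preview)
Your $\dcoMA$ direction matches the paper exactly: the prover hands over a cut set of size $<k$, the verifier deletes it and calls the Kapron--King--Mountjoy connectivity algorithm, and $(\dNP)^{\dBPP}\subseteq\dMA$ finishes it.

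Your $\dMA$ direction, however, has a real gap. Building a Nagamochi--Ibaraki certificate $H=F_1\cup\cdots\cup F_k$ is fine, but it buys you nothing: you still have to certify that $H$ is $k$-edge connected, which is the same problem you started with (sparsity does not help update time). Your proposed fix --- a ``cactus-type representation of all cuts of $H$ of size $<k$'' maintained by the prover and spot-checked via linear sketches --- is not an object that is known to be dynamically maintainable or locally verifiable in polylog time, and you correctly flag this as ``the delicate part'' without actually carrying it out. As stated, the argument is circular.

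The paper takes a completely different route for $\kconn\in\dMA$, going through Thorup's deterministic $\tilde O(\sqrt{n})$ algorithm rather than Nagamochi--Ibaraki. Thorup maintains a \emph{greedy tree packing} $\T=\{F_1,\dots,F_d\}$ with $d=\poly(k\log n)$ forests and shows that $G$ is $k$-edge connected iff $G$ is connected and $\min_{e\in F}cover_{(G,F)}(e)\ge k$ for every $F\in\T$. The paper names this per-forest test the $\kminTreeCut$ problem and proves $\kminTreeCut\in\dNP$: the $\tilde O(\sqrt{n})$ bottleneck in Thorup's algorithm is precisely the step of finding, when two top-tree clusters merge, $k$ non-tree edges covering the cluster path as far as possible --- and the prover can simply supply these edges, with the reward set to the coverage length $s_e$ so that reward-maximization forces the right choices. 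The greedy tree packing itself is maintained by the KKM spanning-forest algorithm (extended to $d$-weight MSF); although KKM assumes an oblivious adversary, the paper argues this is harmless here because the overall algorithm only outputs YES/NO, so an adaptive adversary learns nothing about the internal randomness. The combination yields $\kconn\in\dMA$.

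So the missing idea is Thorup's tree-packing reduction to $\kminTreeCut$; once you have that, nondeterminism removes the $\sqrt{n}$ bottleneck cleanly, whereas your sparsifier route never escapes the original $k$-connectivity question.
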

Moreover, randomization can be removed if $\conn\in\dcoNP$:
\begin{thm}
\label{thm:kconn in NPcoNP}If $\conn\in\dcoNP$, then $\kconn\in\dNP\cap\dcoNP$. 
\end{thm}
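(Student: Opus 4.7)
The plan is to re-prove \Cref{thm:kconn in MAcoMA} but swap every randomized subroutine for a deterministic nondeterministic one, using the hypothesis $\conn\in\dcoNP$ to make that substitution legitimate. Concretely, combining the hypothesis with $\conn\in\dNP$ from \Cref{prop:conn in NP main} gives $\conn\in\dNP\cap\dcoNP$, and then \Cref{thm:conn spanning equiv} puts $\spanningForest$ in $\dTFNP$. This replaces the KKM-style randomized connectivity oracle used inside the $\dMA\cap\dcoMA$ argument with a $\dTFNP$ oracle, and by \Cref{prop:NP to TNFP} such an oracle can be freely composed with any $\dNP$ verifier without leaving $\dNP$.

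The next step is to maintain the standard Nagamochi--Ibaraki $k$-edge connectivity certificate: iteratively peel off spanning forests $F_1$ of $G$, $F_2$ of $G-F_1$, and so on through $F_k$ of $G-(F_1\cup\cdots\cup F_{k-1})$, and set $H=F_1\cup\cdots\cup F_k$. Each layer is an instance of $\spanningForest$, so all $k=\polylog(n)$ layers are maintained in $\dTFNP$; the certificate has only $O(kn)=O(n\polylog n)$ edges, and every cut of $G$ of size less than $k$ is preserved intact in $H$, so $G$ is $k$-edge connected iff $H$ is. All subsequent verification is carried out on $H$.

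For the $\dcoNP$ direction, the witness that $G$ is not $k$-edge connected is a set $S\subseteq E(H)$ with $|S|<k$ together with a sub-proof $\sigma$ that $H-S$ is disconnected. Membership of $S$ in $E(H)$ is checked against the $\dTFNP$ maintenance of $H$, and disconnection of $H-S$ is certified via the $\dNP$ verifier for the complement of $\conn$ that the hypothesis supplies. Composing these pieces through \Cref{prop:NP to TNFP} delivers a $\dNP$ verifier for the complement of $\kconn$, which is exactly $\kconn\in\dcoNP$.

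For the $\dNP$ direction I would follow the YES-side protocol used to show $\kconn\in\dMA$ in \Cref{thm:kconn in MAcoMA}, substituting every randomized connectivity/spanning-forest oracle with its $\dNP\cap\dcoNP$ counterpart afforded by the hypothesis; by \Cref{lem:NP:coNP} the wrapping keeps the overall verifier in $\dNP$. The main obstacle is the bookkeeping of this substitution: one must verify that randomness in the $\dMA$ protocol is used only inside oracles for $\conn$ or $\spanningForest$, rather than for something like edge sampling that would genuinely require a $\dBPP$ subroutine. Once that check goes through, the substitution completes the $\dNP$ half and, combined with the $\dcoNP$ half above, establishes $\kconn\in\dNP\cap\dcoNP$.
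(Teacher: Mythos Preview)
Your proposal is correct, and your $\dNP$ direction matches the paper's exactly. The paper's $\dMA$ argument uses randomness only to maintain the greedy tree packing, which (via \Cref{lem:packing to MST}) reduces to dynamic $d$-weight minimum spanning forest; under the hypothesis this lands in $\dTFNP$ by \Cref{cor:consequence conn in coNP}, and the remaining $d$-AND-$\kminTreeCut$ verification (\Cref{thm:kMinTreeCut}) is already a deterministic $\dNP$ verifier. So your bookkeeping worry dissolves---there is no Karger-style edge sampling inside the $\kconn$ protocol itself; that sampling appears only in Thorup's separate reduction from approximate mincut to $\kconn$, which is irrelevant here. Composing via \Cref{prop:NP to TNFP } is exactly what the paper does.

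For the $\dcoNP$ direction you take a detour the paper avoids. The paper simply observes $\kconn\in(\dcoNP)^{\conn}$: the prover hands over a cut $C\subset E(G)$ with $|C|<k$, and the verifier temporarily deletes $C$ and queries a connectivity oracle on $G-C$. Since the hypothesis puts $\conn\in\dNP\cap\dcoNP$, \Cref{lem:NP:coNP} (applied to the complement) gives $\kconn\in\dcoNP$ immediately. Your Nagamochi--Ibaraki certificate $H$ is correct but buys nothing here---you end up invoking the same connectivity oracle on $H-S$ that the paper invokes directly on $G-C$, so the sparsification step is redundant for this half of the argument.
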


\subsection{Non-reducibility from $\protect\dDNF$ to $\protect\kconn$}

Before proving the above theorems, we discuss that this shows some
``evidences'' that dynamic $k$-edge connectivity should not be
$\dNP$-hard. This also implies some interesting consequences to the
fine-grained complexity of $\kconn$.

\ref{thm:kconn in NPcoNP} adds another problem to the list of \ref{cor:consequence conn in coNP}.
Using the same proof as \ref{cor:conn easy or something easy}, we
have
\begin{cor}
Unless $\dPH=\dNP\cap\dcoNP$, either $\conn\in\dcoNP$ or $\kconn$
is \emph{not} $\dNP$-hard.
\end{cor}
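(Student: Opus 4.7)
The plan is to mirror the proof template of Corollary~10.2, but with Theorem~11.2 playing the role that Theorem~10.1/Corollary~10.3 played there. I interpret the statement through its contrapositive: supposing simultaneously that (a)~$\conn\in\dcoNP$ and (b)~$\kconn$ is $\dNP$-hard, I will derive the collapse $\dPH=\dNP\cap\dcoNP$. (Note: as phrased, the disjunction in the corollary appears to have a typo in the sign of the first disjunct, since the implication chain flows from $\conn\in\dcoNP$, not from $\conn\notin\dcoNP$; the contrapositive I establish is precisely what rules out the simultaneous occurrence of these two conditions whenever $\dPH\neq\dNP\cap\dcoNP$.)

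First, I would invoke Theorem~11.2 directly on hypothesis~(a): because $\conn\in\dNP$ (Proposition~10.4) and $\conn\in\dcoNP$ by assumption, Theorem~11.2 gives $\kconn\in\dNP\cap\dcoNP$. Next, I would unpack hypothesis~(b): $\kconn$ being $\dNP$-hard means that every problem in $\dNP$ is $\dP$-reducible to $\kconn$, so $\dNP\subseteq(\dP)^{\kconn}$. Composing these inclusions yields $\dNP\subseteq(\dP)^{\dNP\cap\dcoNP}$.

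The third step is to close off the oracle. Since $\dP\subseteq\dNP\cap\dcoNP$ (Corollary~4.4), a $\dP$-machine with a $\dNP\cap\dcoNP$ oracle is in particular a $\dNP\cap\dcoNP$-machine with a $\dNP\cap\dcoNP$ oracle, and Lemma~4.7 states that $(\dNP\cap\dcoNP)^{\dNP\cap\dcoNP}=\dNP\cap\dcoNP$. Hence $(\dP)^{\dNP\cap\dcoNP}\subseteq\dNP\cap\dcoNP$, and we obtain $\dNP\subseteq\dNP\cap\dcoNP$, which is equivalent to $\dNP\subseteq\dcoNP$. Finally, applying Theorem~4.6 collapses the dynamic polynomial hierarchy: $\dPH=\dNP\cap\dcoNP$. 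Contrapositively, if $\dPH\neq\dNP\cap\dcoNP$ then the two hypotheses cannot co-exist, which is the content of the corollary.

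The proof itself is short and essentially a one-line application of Theorem~11.2 combined with the same oracle-closure machinery used in Corollary~10.2; the main ``obstacle'' is really just bookkeeping—verifying that Lemma~4.7 and Corollary~4.4 apply verbatim, and correctly interpreting the English disjunction ``either $\conn\in\dcoNP$ or $\kconn$ is not $\dNP$-hard'' as the negation of the simultaneous conjunction used in the contrapositive. No new technical ingredients beyond Theorem~11.2 (proved in the next subsection) and the hierarchy-collapse theorem are required.
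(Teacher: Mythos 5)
Your proposal is correct and is essentially the paper's own argument, which just reuses the proof of the earlier corollary: from $\conn\in\dcoNP$ one gets $\kconn\in\dNP\cap\dcoNP$, so if $\kconn$ were $\dNP$-hard then $\dNP\subseteq(\dP)^{\dNP\cap\dcoNP}=\dNP\cap\dcoNP$, and the collapse $\dPH=\dNP\cap\dcoNP$ follows from the theorem that $\dNP\subseteq\dcoNP$ collapses $\dPH$. Your reading of the statement as ruling out the simultaneous truth of ``$\conn\in\dcoNP$'' and ``$\kconn$ is $\dNP$-hard'' (so that the first disjunct should really be $\conn\notin\dcoNP$) matches the intent and the phrasing of the earlier corollary.
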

The ``evidence'' that $\kconn$ should not be $\dNP$-hard is relatively
more interesting than other problems in \ref{cor:consequence conn in coNP}.
This is because the problems in \ref{cor:consequence conn in coNP}
already admits very fast algorithms. Using the dynamic spanning forest algorithm 
against an oblivious adversary from \cite{KapronKM13}, there
are Monte Carlo algorithms against an oblivious adversary with polylogarithmic
worst-case update time. Using \cite{NanongkaiSW17}, there are Las
Vegas algorithms against adaptive adversaries with sub-polynomial
worst-case update update time for these problems. On the contrary,
the current best update time for $\kconn$ is still $\tilde{O}(\sqrt{n})$
by \cite{Thorup01}. 

We can also show a similar theorems without the condition whether
$\conn\in\dcoNP$:
\begin{cor}
Unless $\dPH\subseteq\dMA\cap\dcoMA$, then $\kconn$ is not $\dNP$-hard.\end{cor}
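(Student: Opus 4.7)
The plan is to prove the contrapositive: assuming $\kconn$ is $\dNP$-hard, I will deduce $\dPH \subseteq \dMA \cap \dcoMA$. The argument is short and bolts together three earlier results, so the ``plan'' is really a chain of inclusions; the main work has already been done in establishing Theorem~\ref{thm:kconn in MAcoMA} and Proposition~\ref{lem:if NP in coMA}.

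First, I would invoke the hypothesis that $\kconn$ is $\dNP$-hard. By Definition~\ref{def:completeness} and the formulation used throughout the paper, this means that every problem in $\dNP$ admits a $\dP$-reduction to $\kconn$, i.e.\ $\dNP \subseteq (\dP)^{\kconn}$. Next, Theorem~\ref{thm:kconn in MAcoMA} tells us that $\kconn \in \dMA \cap \dcoMA$, so any algorithm using an oracle for $\kconn$ can be viewed as using an oracle for some problem in $\dMA \cap \dcoMA$. Hence
\[ \dNP \;\subseteq\; (\dP)^{\kconn} \;\subseteq\; (\dP)^{\dMA \cap \dcoMA} \;\subseteq\; (\dMA \cap \dcoMA)^{\dMA \cap \dcoMA}. \]
Applying the ``oracle collapse'' result in Proposition~\ref{lem:removing intersection randomized oracle}, namely $(\dMA \cap \dcoMA)^{\dMA \cap \dcoMA} = \dMA \cap \dcoMA$, the right-hand side simplifies to $\dMA \cap \dcoMA$. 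Thus $\dNP \subseteq \dMA \cap \dcoMA$, and in particular $\dNP \subseteq \dcoMA$.

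Finally, Proposition~\ref{lem:if NP in coMA} says that $\dNP \subseteq \dcoMA$ already forces the full collapse $\dPH \subseteq \dMA \cap \dcoMA$, which is exactly the conclusion we want. The contrapositive statement in the corollary follows immediately.

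I do not anticipate any genuine obstacle: the only point where one has to be a bit careful is verifying that the inclusion $(\dP)^{\kconn} \subseteq (\dP)^{\dMA \cap \dcoMA}$ is legitimate given how oracles are defined in Section~\ref{sec:model:dynamic}. Since $\kconn$ itself is a single problem lying in the class $\dMA \cap \dcoMA$, any algorithm-family that uses $\kconn$ as an oracle with polynomial blow-up trivially qualifies as an algorithm-family using an oracle to some problem in $\dMA \cap \dcoMA$, so this step is indeed a formality. The rest of the argument is a direct citation of previously established lemmas, so no additional technical machinery is needed.
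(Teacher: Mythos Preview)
Your proof is correct and follows essentially the same route as the paper's: assume $\kconn$ is $\dNP$-hard, use Theorem~\ref{thm:kconn in MAcoMA} to get $\dNP \subseteq (\dP)^{\dMA\cap\dcoMA}$, collapse the oracle via Proposition~\ref{lem:removing intersection randomized oracle}, and then invoke Proposition~\ref{lem:if NP in coMA}. The only difference is cosmetic: the paper writes the collapse as $(\dP)^{\dMA\cap\dcoMA}=\dMA\cap\dcoMA$ directly, whereas you insert the intermediate inclusion $(\dP)^{\dMA\cap\dcoMA}\subseteq(\dMA\cap\dcoMA)^{\dMA\cap\dcoMA}$ to match the literal statement of Proposition~\ref{lem:removing intersection randomized oracle}.
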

\begin{proof}
If $\kconn$ is $\dNP$-hard, then \ref{thm:kconn in MAcoMA} implies
that $\dNP\subseteq(\dP)^{\dMA\cap\dcoMA}=\dMA\cap\dcoMA$ by \ref{lem:removing intersection randomized oracle}.
This implies that $\dPH\subseteq\dMA\cap\dcoMA$ by \ref{lem:if NP in coMA}.
\end{proof}

\paragraph{Consequences to fine-grained complexity of $\protect\kconn$.}

Suppose that we believe that $\kconn$ is not $\dNP$-hard. By definition,
this means that there is no reduction from $\dDNF$ to (many
instances of) $\kconn$ \emph{with any polynomial-size blow up}. 
This is a useful information about fine-grained complexity of $\kconn$ because of the following reason.

Without this information, it is conceivable that there might exist
a reduction from
$\dDNF$ to $\kconn$ where the size of the instance is blown up by
a quadratic factor. That is, we can reduce $\dDNF$ with $m$ clauses
to $\kconn$ with $m^{2}$ nodes. By SETH, this would immediately imply
that there is a tight lower bound of $\Omega(n^{0.5-o(1)})$ for $\kconn$.
In this section, we show that such reduction is unlikely assuming that $\dPH\not\subseteq\dMA\cap\dcoMA$ for example.

\subsection{Proof outline}

First, we prove the easy part of the above theorems:
\begin{lem}
$\kconn\in(\dcoNP)^{\conn}$. In particular, $\kconn\in\dcoMA$ and
$\kconn\in\dPi_{2}$. Moreover, if $\conn\in\dcoNP$, then $\kconn\in\dcoNP$.\end{lem}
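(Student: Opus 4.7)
The plan is to prove this lemma in four steps, essentially formalising the argument sketched in Example~\ref{ex:intro:k-con in Pi2} and then layering on known containment facts about oracle classes.

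First I would establish the core claim $\kconn \in (\dcoNP)^{\conn}$ directly. Since the complement of $\kconn$ is the ``$G$ has connectivity less than $k$'' problem, I need a verifier $\V$ for this complement that may use an oracle-family for $\conn$. The verifier, at each update step, reads a proof that encodes a subset $S\subseteq E(G)$ of at most $k-1$ edges (which fits in $O(k\log n)=O(\polylog(n))$ bits since $k=\polylog(n)$). It then (i) forwards the current edge update to the oracle; (ii) instructs the oracle to delete the edges of $S$ one by one; (iii) reads the oracle's connectivity answer; (iv) instructs the oracle to re-insert the edges of $S$ so that the oracle's state again reflects $G$ alone. It outputs $(x,y)=(1,1)$ if the oracle answered ``disconnected,'' and $(0,0)$ otherwise. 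Because $|S|\leq k-1=\polylog(n)$ and oracle calls are free, both the update time and blow-up size are polylogarithmic. The reward-maximising proof clearly picks some disconnecting $S$ whenever one exists, and no proof can fool the verifier on a $k$-edge connected $G$; this matches~\ref{def:NP} in the oracle setting of~\ref{ex:p:oracle}, so $\kconn\in(\dcoNP)^{\conn}$.

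Second, the three listed consequences follow by substituting increasingly strong containments for $\conn$ into the oracle slot. For $\kconn\in\dPi_2$, use $\conn\in\dNP$ (\ref{prop:conn in NP main}), so $(\dcoNP)^{\conn}\subseteq(\dcoNP)^{\dNP}=\dPi_2$ by~\ref{def:ph}. For $\kconn\in\dcoMA$, use $\conn\in\dBPP\subseteq\dMA\cap\dcoMA$ (Theorem~\ref{lem:conn in BPP main}), and then apply the $\dcoMA$ analogue of~\ref{lem:removing intersection randomized oracle}, namely $(\dcoMA)^{\dMA\cap\dcoMA}=\dcoMA$, to obtain $(\dcoNP)^{\conn}\subseteq(\dcoMA)^{\dMA\cap\dcoMA}=\dcoMA$. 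Finally, under the hypothesis $\conn\in\dcoNP$, since $\conn\in\dNP$ is already known we get $\conn\in\dNP\cap\dcoNP$, so $(\dcoNP)^{\conn}\subseteq(\dcoNP)^{\dNP\cap\dcoNP}=\dcoNP$ by the $\dcoNP$-form of~\ref{lem:NP:coNP} (obtained by complementing that lemma).

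The main obstacle I expect is bookkeeping rather than conceptual. The verifier's state in the oracle $\O$ must at all times reflect exactly the current $G$, whereas the proof from step to step can change $S$ arbitrarily; hence steps (ii) and (iv) above must undo each other so that, between calls, $\O$ always maintains $G$ (not $G\setminus S$). Care is also needed to check that the reward mechanism is consistent with the reward-maximising definition of~\ref{def:NP}: a simple $y\in\{0,1\}$ reward equal to the accept bit suffices, because if some disconnecting $S$ exists the prover can point to it and thereby make $y=1$, while on $k$-edge connected instances no $S$ can make $y=1$. Once these details are in place, the three derived containments are one-line applications of previously established oracle identities.
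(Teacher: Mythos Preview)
Your proposal is correct and follows essentially the same approach as the paper: the same cut-set verifier with a connectivity oracle, and the same class containments for the corollaries. The only cosmetic difference is in deriving $\kconn\in\dcoMA$: the paper invokes \ref{lem:NP(BPP) in MA} directly (so $(\dcoNP)^{\dBPP}\subseteq\dcoMA$), whereas you route through $\dBPP\subseteq\dMA\cap\dcoMA$ and the oracle-removal identity \ref{lem:removing intersection randomized oracle}; both are valid one-liners.
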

\begin{proof}
[Proof sketch]At any time step, if a graph $G=(V,E)$ is not $k$
connected, then a prover sends the cut set $C\subset E$ of size less
than $k$ to a verifier. Then, a verifier can try deleting edges from
$C$ and see if $G-C$ is connected using the dynamic connectivity
oracle. That is, we can verify quickly that $G$ is \emph{not }$k$-edge
connected. As $\conn\in\dBPP$, $\kconn\in\dcoMA$ by \ref{lem:NP(BPP) in MA}.
As $\conn\in\dNP$, $\kconn\in\dPi_{2}$ by definition. If $\conn\in\dcoNP$,
then $\conn\in\dNP\cap\dcoNP$ by \ref{prop:conn in NP main}. 
We we have $\kconn\in (\dcoNP)^{\dNP\cap\dcoNP}=\dcoNP$ by \ref{lem:NP:coNP}.
\end{proof}
It remains to prove the following:
\begin{lem}
\label{lem:kconn in MA}$\kconn\in\dMA$. Moreover, if $\conn\in\dcoNP$,
then $\kconn\in\dNP$.
\end{lem}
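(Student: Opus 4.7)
The plan is to combine the Kapron--King--Mountjoy randomized dynamic connectivity algorithm (Theorem~\ref{lem:conn in BPP main}) with the Nagamochi--Ibaraki sparsification to reduce $k$-edge connectivity verification on $G$ to $k$ parallel connectivity checks on auxiliary layer graphs. Since the statement only concerns $k=\polylog(n)$, running $k$ copies of a polylog-time subroutine still fits the time budget. The second assertion, that $\conn\in\dcoNP$ upgrades the conclusion from $\dMA$ to $\dNP$, will then follow by replacing the randomized KKM oracle with the $\dNP\cap\dcoNP$ verifier for $\conn$ guaranteed by the hypothesis, and invoking Corollary~\ref{cor:consequence conn in coNP} to get the certificate deterministically.

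In more detail, I would have the prover send, after each update to $G$, the polylogarithmically many incremental edits to a Nagamochi--Ibaraki forest packing $F_{1},\ldots,F_{k}$, where $F_{i}$ is a maximal spanning forest of $G\setminus(F_{1}\cup\cdots\cup F_{i-1})$. The verifier runs $k$ independent copies of KKM, the $i$-th copy dynamically maintaining a spanning forest of the layer $G\setminus(F_{1}\cup\cdots\cup F_{i-1})$, and uses the component counts reported by KKM to validate that each $F_{i}$ as sent by the prover is a genuine maximal spanning forest of the claimed layer: any edge the prover wrongly omits from $F_{i}$ would cause the layer's component count to fall below the component count of $F_{i}$. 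All such checks cost $\polylog(n)$ per update. By the Nagamochi--Ibaraki theorem the sparsifier $H=F_{1}\cup\cdots\cup F_{k}$ satisfies $\lambda_{H}(u,v)=\min(k,\lambda_{G}(u,v))$, so $G$ is $k$-edge connected iff $H$ is $k$-edge connected. Because $H$ has only $O(kn)=\tilde{O}(n)$ edges and undergoes $\polylog(n)$ edits per step (inherited from KKM's worst-case recourse bound), the verifier can afford to store $H$ explicitly in polynomial space and track its updates in polylog time.

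The main obstacle is the final step: verifying $k$-edge connectivity of the dynamically maintained sparsifier $H$ in polylog time per update with a polylog-size proof. Here my plan is to have the prover attach a succinct structural certificate for $H$---for instance, the $k$ edge-disjoint rooted arborescences of an Edmonds-style packing in a suitable orientation of $H$, or a cactus/Gomory--Hu style representation of $H$'s small cuts---maintained incrementally alongside $H$; the verifier spot-checks polylogarithmically many conditions per step using the already-running KKM subroutines. The hard part is ensuring that such a structural witness has polylog recourse and can be validated locally from $H$ using only connectivity queries; a delicate point is that $H$ still has $\tilde{O}(n)$ edges, so the verification must avoid scanning $H$ and instead exploit the polylog-per-step change pattern. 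Once this is in place, KKM's high-probability correctness against adaptive adversaries gives $\kconn\in\dMA$; and under the additional hypothesis $\conn\in\dcoNP$, Corollary~\ref{cor:consequence conn in coNP} already provides the $k$-edge connectivity certificate in $\dTFNP$, so both the Nagamochi--Ibaraki packing and the structural witness above can be maintained nondeterministically without randomness, and every call to KKM is replaced by the deterministic $\dNP\cap\dcoNP$-verifier for $\conn$, yielding $\kconn\in\dNP$.
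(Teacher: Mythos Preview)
Your proposal has a genuine gap at the step you yourself flag as ``the main obstacle'': verifying $k$-edge connectivity of the sparsifier $H$ in polylogarithmic time per update. The Nagamochi--Ibaraki reduction buys you nothing here, because $k$-edge connectivity of a graph with $O(kn)$ edges is not easier in this dynamic/nondeterministic sense than $k$-edge connectivity of the original $G$; you have simply restated the problem on a sparser input. Your sketched certificates do not work either: an Edmonds/Nash--Williams packing of $k$ edge-disjoint spanning trees does \emph{not} certify $k$-edge connectivity (only roughly $k/2$ trees are guaranteed), and even if it did, there is no argument that such a packing, or a cactus/Gomory--Hu representation, can be maintained with polylogarithmic recourse and locally spot-checked by the verifier. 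Similarly, invoking Corollary~\ref{cor:consequence conn in coNP} for the conditional part only gives you a $k$-connectivity \emph{certificate} (a sparse subgraph preserving $k$-connectivity) in $\dTFNP$---it does not decide whether $G$ is $k$-edge connected, so you are back where you started.

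The paper's proof takes a different and more delicate route. It uses Thorup's \emph{greedy tree packing} theorem (Theorem~\ref{thm:kconn to many kMinTreeCut}): for a packing $\T=\{F_{1},\dots,F_{d}\}$ with $d=\poly(k\log n)$, the graph $G$ is $k$-edge connected iff $G$ is connected and $\min_{e\in F}\mathrm{cover}_{(G,F)}(e)\ge k$ for every $F\in\T$. The right-hand side is an AND of $d$ instances of an auxiliary problem $\kminTreeCut$, and the paper shows $\kminTreeCut\in\dNP$ by adapting Thorup's top-tree data structure: the only $\tilde O(\sqrt n)$ bottleneck in Thorup's algorithm is finding, when two clusters merge, $k$ non-tree crossing edges that cover the cluster path as far as possible---and this is exactly what the prover can supply (with the cover length as the reward). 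The greedy tree packing itself is maintained via KKM (for the $\dMA$ result, with a careful argument that the adversary effectively remains oblivious because the algorithm only outputs YES/NO), or via the $\dTFNP$ consequence of $\conn\in\dcoNP$ through Lemma~\ref{lem:packing to MST} (for the $\dNP$ result). The point is that greedy tree packing plus the cover-number criterion gives a \emph{locally checkable} witness for $k$-edge connectivity; Nagamochi--Ibaraki sparsification alone does not.
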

The proof is based on the previous algorithm by Thorup \cite{Thorup01}.
There are three parts. The first part is to review Thorup's algorithm.
The second part is to show that the above lemma follows if we can
show that another dynamic problem called $\kminTreeCut$ is in $\dNP$.
The last part is to show that the YES-instance of $\kminTreeCut$
can be verified quickly, i.e. $\kminTreeCut\in\dNP$. This is done
by adjusting the algorithm of \cite{Thorup01} and use non-determinism
to bypass the $\tilde{O}(\sqrt{n})$-time bottleneck in \cite{Thorup01}.

\subsection{Reviewing Thorup's reduction to $\protect\kminTreeCut$}

Before formally defining the problem $\kminTreeCut$, we need the
following definitions. Let $G=(V,E)$, $F\subseteq E$ and $e\in F$.
Suppose $T\ni e$ is the connected component of $F$. Write $T=T_{L}\cup\{e\}\cup T_{R}$.
The \emph{cover number} of $e$ is $cover_{(G,F)}(e)=|\{(u,v)\in E\mid u\in T_{L}$
and $v\in T_{R}\}|$.
\begin{defn}
[$\kminTreeCut$]In the dynamic problem called dynamic $k$ min tree
cut problem ($\kminTreeCut$), we have to maintain a data structure
on a graph $G$ and a forest $F$ that can handle the following operations:
\begin{enumerate}
\item Update edge insertion or deletions in $G$.
\item Update edge insertion or deletions in $F$ as long as $F\subseteq G$
and $F$ is a forest.
\end{enumerate}
After each update, we must return whether all edges $e\in F$ have
cover number at least $k$, i.e. $\min_{e\in F}cover_{(G,F)}(e)\ge k$. 
\end{defn}

Now, we review how Thorup reduces $\kconn$ to $\kminTreeCut$. The
main goal in \cite{Thorup01} is to show a dynamic randomized algorithm
for solving $(1+\epsilon)$-approximate mincut with $\tilde{O}(\sqrt{n})$
worst-case update time on a graph with $n$ nodes. To do this, Thorup
first shows a randomized reduction based on Karger's sampling \cite{Karger99}
from $(1+\epsilon)$-approximate mincut to $k$-edge connectivity
problem where $k=O(\log n)$. This is the only randomized part in
his algorithm. Then, he actually gives a dynamic deterministic algorithm
for $\kconn$ with $O(\poly(k\log n)\sqrt{n})$ worst-case update
time. 

To solve $\kconn$, he maintains the \emph{greedy tree packing }with
$d=\poly(k\log n)$ many forests. The greedy tree packing \emph{$\T=\{F_{1},\dots,F_{d}\}$}
on $G$ is a collections of forests in $G$, i.e. $F_{i}\subseteq G$.
If $G$ is connected, then each $F_{i}$ is a tree. We omit the precise
definition here. We only need to state the crucial property of the greedy tree packing $\T$
as follows:
\begin{thm}
[\cite{Thorup01}]For some $d=\poly(k\log n)$, let $\T$ be
a greedy tree packing on $G$ with $d$ forests. $G$ is $k$-edge
connected iff $G$ is connected and $\min_{e\in F}cover_{(G,F)}(e)\ge k$
for all $F\in\T$.\label{thm:kconn to many kMinTreeCut}
\end{thm}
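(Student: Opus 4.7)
The plan is to prove the two implications separately. The forward direction is a direct consequence of the definitions: if $G$ is $k$-edge connected, then in particular $G$ is connected (so every $F\in\T$ is a spanning tree); and for any $F\in\T$ and $e\in F$, removing $e$ partitions $V(F)=V(G)$ into $V(T_L)$ and $V(T_R)$, and the cover number $cover_{(G,F)}(e)$ is by definition exactly the number of $G$-edges crossing this bipartition, which is at least $k$ by assumption.

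The backward direction is the substantive content, and I would prove it by contrapositive. Assume $G$ is connected but not $k$-edge connected, so there is a cut $(S,\bar S)$ with $c:=|E(S,\bar S)|<k$. Each $F\in\T$ is a spanning tree and hence must cross $(S,\bar S)$ at least once. The key structural observation is: if some $F\in\T$ crosses the cut exactly once via an edge $e\in F$, then removing $e$ from its tree partitions its vertex set into exactly $S$ and $\bar S$, so $cover_{(G,F)}(e)=c<k$ and we are done. The whole task therefore reduces to the following: for $d=\poly(k\log n)$ sufficiently large, the greedy tree packing on $G$ must contain some tree crossing every cut of size $<k$ exactly once.

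This last reduction is the hard part, and is where Thorup's work does the heavy lifting. The argument exploits the load-balancing property of greedy tree packing: after $d$ iterations, one obtains an approximately optimal fractional packing, whose value on any cut is controlled by Nash--Williams/Tutte duality. Concretely, if every tree in $\T$ crossed a fixed $(<k)$-cut at least twice, then the $c$ cut edges would collectively absorb $\ge 2d$ tree-edge incidences, forcing some cut edge to appear in $\ge 2d/c>2d/k$ trees; for $d$ polynomial in $k\log n$ this violates the upper bound on any edge's load enforced by the greedy swapping rule, giving the desired contradiction. I expect the main obstacle to be calibrating $d$ so that this load-concentration holds simultaneously over all cuts of size $<k$, of which there can be exponentially many. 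The standard route is to combine the multiplicative-weights/LP-duality analysis of greedy packings with Karger's bound of $n^{O(\alpha)}$ on the number of cuts of size at most $\alpha\lambda(G)$, and take a union bound. In writing this up I would follow (or simply cite) Thorup's argument rather than reinvent the calibration.
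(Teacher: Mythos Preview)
The paper does not prove this theorem; it is stated as a black-box citation from \cite{Thorup01}. Your proposal is therefore already doing more than the paper does, and the overall structure you give is correct: the forward direction is exactly as you say, and for the backward direction the reduction to ``some tree in the packing crosses a sub-$k$ cut exactly once'' is the right move, with the load-balancing analysis of greedy packings supplying that fact.

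One minor correction to your sketch: your worry about needing the $1$-respecting property ``simultaneously over all cuts of size $<k$,'' and hence about a union bound via Karger's cut-counting, is misplaced. For the contrapositive you fix a single cut $(S,\bar S)$ of size $c<k$ and need only exhibit \emph{one} tree-edge $e$ with $cover_{(G,F)}(e)<k$; it suffices that some tree crosses \emph{that particular} cut exactly once. Thorup's analysis of greedy packings handles a fixed cut directly, with no enumeration of small cuts required. (The cut-counting route is the randomized Karger argument, not the deterministic greedy-packing one.)
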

That is, given a greedy tree packing, $\kconn$ is reduced to the
``AND'' of many instances of $\kminTreeCut$. Another property 
about maintaining a greedy tree packing is the following:
\begin{lem}
[\cite{ThorupK00,Thorup01}]For any $d=\polylog n$, dynamic
greedy tree packing with $d$ forests is $\dP$-reducible to dynamic
minimum spanning forests on graph with $d$-distinct edge weights.\label{lem:packing to MST}
\end{lem}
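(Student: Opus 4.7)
The plan is to represent the greedy tree packing as $d$ simultaneously maintained minimum spanning forests, one per copy of $G$ with suitably chosen weights, and to drive all of them by calls to an oracle-family for dynamic MSF on graphs with $d$-distinct edge weights. More concretely, I would maintain $d$ weighted graphs $G_1,\dots,G_d$ that share the vertex and edge set of $G$, and in $G_i$ assign to each edge $e$ a weight $w_i(e) \in \{0,1,\dots,d-1\}$ that encodes how many of the other forests $F_j$, $j\neq i$, currently contain $e$. Using the standard characterization of the greedy tree packing from \cite{ThorupK00,Thorup01}, the $i$-th forest in the packing is precisely the (unique) minimum spanning forest of $G_i$ under $w_i$, so it suffices to maintain the $d$ MSFs using $d$ independent instances of the dynamic $d$-distinct-weight MSF oracle.

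The reduction itself is straightforward once the oracle interface is fixed: a single edge insertion or deletion in $G$ is forwarded as an edge insertion or deletion in each of the $d$ copies, and weight changes are implemented as delete-then-insert pairs with the new weight. What needs checking is that each update to $G$ triggers only $\polylog(n)$ total oracle calls. The first step here is the standard fact that a single edge update in any $G_i$ changes $F_i$ by at most $O(1)$ edges (one swap), so the direct effect of an update to $G$ is $O(d)$ oracle calls. The non-trivial step is to bound the \emph{cascade}: a swap in $F_i$ changes $w_j$ for $j\neq i$ at $O(1)$ edges, which triggers reweighs in other copies and, potentially, further swaps.

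The hard part will be showing that this cascade terminates after $\polylog(n)$ oracle calls per original update. My intended route is a potential-function argument on $\Phi=\sum_{i}\sum_{e\in F_i} w_i(e)$, which captures the total ``load'' of the current packing. A single edge update in $G$ can change $\Phi$ by only $O(d)$, while every propagation step strictly decreases $\Phi$ (this is exactly the greediness criterion that makes the packing a local minimum of the load). Since $\Phi$ is a non-negative integer bounded by $d(n-1)$, the relevant quantity is how much ``credit'' we can afford per update; combining this with the $O(d)$ direct effect and charging cascade work against $\Phi$-decrements, I expect a worst-case bound of $\polylog(n)$ oracle calls per update, as long as we use the standard trick of spreading bounded reconfiguration work across steps while preserving the invariant that each $F_i$ is an MSF of $G_i$. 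If the naive amortized bound does not de-amortize cleanly, I would fall back to the explicit worst-case construction of \cite{Thorup01}, which bakes the bounded-cascade property directly into how the $G_i$ are updated (a round-robin update schedule with $O(\log n)$ layers).

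Packaging this up gives an algorithm-family $\A$ that solves dynamic greedy tree packing using oracle-access to the dynamic MSF problem on $d$-distinct-weight graphs on $O(n)$ vertices, with $\time_\A(n)=O(\polylog(n))$, $\text{Space}_\A(n)=O(\poly(n))$, and $\qsize_\A(n)=O(\poly(n))$. By \ref{def:P-reduction}, this is a $\dP$-reduction, establishing the lemma; \ref{thm:easiness transfering} then lets us chain this with subsequent reductions in \ref{sec:kconn} as needed.
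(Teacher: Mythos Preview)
The paper does not prove this lemma at all; it is cited directly from \cite{ThorupK00,Thorup01} and used as a black box in \Cref{sec:kconn}. So there is no ``paper's own proof'' to compare against, and your sketch is effectively a reconstruction of what you expect those references to do.

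That said, two points in your sketch deserve correction. First, your weight assignment $w_i(e)=|\{j\neq i: e\in F_j\}|$ is the symmetric (Nash-equilibrium) variant, whereas the greedy tree packing in \cite{ThorupK00,Thorup01} is \emph{sequential}: $F_i$ is a minimum spanning forest with respect to $w_i(e)=|\{j<i: e\in F_j\}|$. The sequential definition is what gives at most $d$ distinct weights and, crucially, makes the dependency acyclic: a change in $F_i$ can only affect $F_j$ for $j>i$. Your symmetric version introduces two-way dependencies that are harder to control.

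Second, the potential argument has a gap even as stated. The claim that ``every propagation step strictly decreases $\Phi$'' is not obvious: a swap in $F_i$ removes one edge and adds another, and the resulting changes to the $w_j$ for $j\neq i$ can increase some terms in $\Phi$ while decreasing others. With the correct sequential weights, the cascade is one-directional, but a naive count still allows the number of swaps to grow geometrically across levels (two weight changes at level $i{+}1$ per swap at level $i$), which for $d=\polylog n$ would exceed the $\polylog(n)$ budget. Thorup's actual bound relies on a more careful charging specific to how load changes interact with MSF swaps, not on a generic monovariant. You were right to flag this as the hard part and to fall back on \cite{Thorup01}; just be aware that neither your symmetric setup nor the $\Phi$-monotonicity claim goes through as written.
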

Recall that the above two problems in \ref{lem:packing to MST} are search problems. But the statement
makes sense by the discussion in \ref{sec:search}.

\subsection{Proof of \ref{lem:kconn in MA}}

We will show later in the next section that $\kminTreeCut\in\dNP$
in \ref{thm:kMinTreeCut}. Given this, our goal in this section is
to prove \ref{lem:kconn in MA} which implies the main results (\ref{thm:kconn in MAcoMA}
and \ref{thm:kconn in NPcoNP}).

\paragraph{Proof: if $\protect\conn\in\protect\dcoNP$, then $\protect\kconn\in\protect\dNP$.}

Here, we assume that $\conn\in\dcoNP$. By \ref{cor:consequence conn in coNP}
and \ref{lem:packing to MST}, there are $\dTFNP$-algorithms for
dynamic minimum spanning forests on graph with $d$-distinct edge
weights where $d=\mbox{polylog}(n)$, and for dynamic greedy tree
packing $\T$ with $d$ forests, respectively. 

From \ref{thm:kconn to many kMinTreeCut}, we need to verify if $\min_{e\in F}cover_{(G,F)}(e)\ge k$
for every $F\in\T$. Let us denote this problem (i.e. the ``AND''
of $d=\mbox{polylog}(n)$ many instances of $\kminTreeCut$) by ``$d$-AND-$\kminTreeCut$''.
The key observation is that, if $\kminTreeCut\in\dNP$ by \ref{thm:kMinTreeCut}
as will be shown in \ref{thm:kMinTreeCut}, then $d$-AND-$\kminTreeCut$
is also in $\dNP$. This is true simply by running $d$ instances
of the verifiers for $\kminTreeCut$.

\ref{thm:kconn to many kMinTreeCut} implies that, to solve $\kconn$,
it is enough to maintain a dynamic greedy tree packing $\T$ with
$d$ forests, feed the graph and $d$ forests from $\T$ as an input
to the subroutine for $d$-AND-$\kminTreeCut$, and then return the
\emph{same} YES/NO answer returned by the subrouine for $d$-AND-$\kminTreeCut$,
in every step. 

We claim that $\kconn\in\dNP$. 
To see this, first, since $\T$ can be maintained by a $\dTFNP$-algorithm, this means that, given a proof-sequence, we can always maintain  $\T$ correctly, otherwise we can correctly detect that the proof sequence is not reward-maximizing and return $\bot$. If we detect that once the proof sequence is not reward-maximizing, we can simply return $0$ as an answer for $\kconn$ from now.\thatchaphol{This is where I edit.}
%This is because, first, $\T$ can be maintained by a $\dTFNP$-algorithm so the prover only needs to provide the proof so that $\T$ is correctly maintained. 
Second, assuming that $\T$ is correctly maintained at every step, given a sequence of reward-maximizing proof sequence for $d$-AND-$\kminTreeCut$, the verifier algorithm must return a correct answer for $d$-AND-$\kminTreeCut$ (as $d$-AND-$\kminTreeCut\in\dNP$). This answer is the same as the one for $\kconn$, so we obtain an $\dNP$-algorithm for $\kconn$.

\paragraph{Proof: $\protect\kconn\in\protect\dMA$.}

The proof goes in almost the same way as above. However, we need to
be more very careful about the notion of oblivious adversary.

We first claim that dynamic greedy tree packing $\T$ with $d=\polylog(n)$
forests can be maintained in $\polylog(n)$ time \emph{against an
	oblivious adversary}. This is true by observing that the algorithm
by Kapron, King and Mountjoy \cite{KapronKM13} which shows $\conn\in\dBPP$
can actually maintain dynamic spanning forest \emph{against oblivious
	adversary}. This result further can be extended to maintaining dynamic
minimum spanning forest \emph{against oblivious adversary }with $d$-distinct
edge weights where $d=\mbox{polylog}(n)$\footnote{We note that this cannot be obtain using the reduction in \cite{HenzingerK99}
	because the reduction require the algorithm to work against adaptive
	adversary. Nevertheless, the algorithm in \cite{KapronKM13} can be
	directed extend to $d$-weight minimum spanning forest (roughly by
	leveling edges of each weight in the increasing order and finding
	a spanning forest on a graph with small weight first).}. As greedy tree packing is reducible to $d$-weight minimum spanning
forest by \ref{lem:packing to MST}, we obtain the claim. 

As before, \ref{thm:kconn to many kMinTreeCut} implies that, to solve
$\kconn$, it is enough to maintain a dynamic greedy tree packing
$\T$ with $d$ forests, feed the graph and $d$ forests from $\T$
as an input to the subroutine for $d$-AND-$\kminTreeCut$, and then
return the \emph{same} YES/NO answer returned by the subrouine for
$d$-AND-$\kminTreeCut$, in every step. 

We claim that, in this algorithm, the greedy tree packing $\T$ can
be correctly maintained with high probability although the algorithm
by \cite{KapronKM13} only works against an oblivious adversary. This
is because (1) the update sequence generated to the dynamic greedy
tree packing algorithm only comes from the adversary. (We do not \emph{adaptively}
generate more updates that depend on the answer of the subroutine
for $d$-AND-$\kminTreeCut$.) and (2) assuming that the algorithm
has been returning only correct YES/NO answers, the adversary learns
nothing about the internal random choices of the algorithm. So adaptive
adversaries do not have more power than oblivious adversaries against
this algorithm. So, we have that $\T$ is correctly maintained with
high probability.

Next, we claim that $\kconn\in\dMA$. This is because, we can assume
with high probability that $\T$ is correctly maintained at every
step. So given a sequence of reward-maximizing proof sequence for
$d$-AND-$\kminTreeCut$, the verifier algorithm must return a correct
answer for $d$-AND-$\kminTreeCut$ (as $d$-AND-$\kminTreeCut\in\dNP$).
This answer is the same as the one for $\kconn$, so we obtain an
$\dMA$-algorithm for $\kconn$.

\subsection{$\protect\kminTreeCut\in\protect\dNP$}

In this section, it remains to prove the following:
\begin{thm}
\label{thm:kMinTreeCut}$\kminTreeCut\in\dNP$.
\end{thm}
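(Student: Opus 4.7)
My plan is to build a $\dNP$-verifier for $\kminTreeCut$ by augmenting a top-tree data structure on the forest $F$ with a nondeterministically maintained covering certificate. The verifier will keep $F$ in a top tree (Alstrup et al.) supporting lazy path-add and path-min aggregation in $O(\log n)$ per operation. An insertion or deletion of a non-tree edge $f=(u,v)$ in $G$ simply adds $\pm 1$ to every tree edge on the $u$--$v$ path in $F$, which costs $O(\log n)$ without any proof. The hard cases are link/cut updates of $F$ itself, during which $\Omega(n)$ cover values may shift simultaneously; here nondeterminism is indispensable.

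To finesse those, the verifier will not attempt to track exact cover values. It answers YES/NO by checking a separate \emph{witness table} $W$ supplied and updated through the proof stream: for every tree edge $e\in F$, $W$ stores $k$ distinct non-tree edges $w(e,1),\dots,w(e,k)\in E(G)\setminus\{e\}$ whose $F$-paths pass through $e$. The verifier outputs YES iff every slot $(e,i)$ is currently valid, which it checks by one top-tree path-membership query at cost $O(\log n)$. Soundness is immediate because a valid $W$ directly certifies $\mathrm{cover}(e)\ge k$ for every $e$. Completeness holds because, whenever $\min\mathrm{cover}\ge k$, some valid $W$ exists; the reward is defined as the number of currently valid slots encoded as a polylog-bit integer, so a reward-maximizing proof-sequence will always restore a fully valid $W$ whenever one is possible.

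Per step the proof supplies a $\polylog(n)$-sized delta to $W$ and the verifier re-checks only the touched slots. For a $G$-insertion nothing becomes invalid; for a $G$-deletion only slots naming the deleted edge become invalid, and the proof swaps each one for a fresh witness in $O(\polylog n)$ verifier time per slot; for an $F$-link existing witnesses remain valid since $F$-paths only grow. The remaining and hardest case is an $F$-cut, where many witnesses may become invalid at once because their $F$-paths break. To keep the per-step proof polylogarithmic even in this case, I intend to reformulate the witness table as $k$ edge-disjoint \emph{cover forests} $W_1,\dots,W_k\subseteq E(G)\setminus F$, where each $W_i$ covers every tree edge of $F$ (i.e.\ adjoining $W_i$ to $F$ introduces no bridge among $F$-edges); the existence of such $W_1,\dots,W_k$ is the structural equivalent of $\min\mathrm{cover}\ge k$ by a Nash-Williams/matroid-union style argument on the cotree cut structure. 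Each $W_i$ can then be maintained by the replacement-edge protocol already used to place dynamic spanning forest into $\dNP$ (\ref{ex:intro:connectivity in NP}), generating a $\polylog(n)$-size proof per update even across $F$-cuts. Establishing this equivalence rigorously, and showing that the total number of witness-slot replacements per step remains $\polylog(n)$ after an $F$-cut, are the technical steps I expect to require the most care.
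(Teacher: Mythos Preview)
Your proposal contains a genuine gap: the claimed structural equivalence---that $\min_{e\in F}\mathrm{cover}_{(G,F)}(e)\ge k$ holds iff $E(G)\setminus F$ contains $k$ pairwise edge-disjoint sets $W_1,\dots,W_k$ each covering every edge of $F$---is false. A small counterexample: let $F$ be a star with centre $c$ and leaves $v_1,v_2,v_3$, and let $G=F\cup\{(v_1,v_2),(v_2,v_3),(v_1,v_3)\}$. Then each tree edge $(c,v_i)$ has cover number exactly $3$, so the instance is YES already for $k=2$. But any set of non-tree edges covering all three tree edges must contain at least two of the three triangle edges, so two disjoint covers would need at least four non-tree edges, whereas only three are available. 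Hence no two edge-disjoint cover forests exist. The Nash--Williams / matroid-union analogy does not apply here: that theory packs \emph{bases} (spanning trees of $G$), while covering a fixed tree $F$ by non-tree-edge fundamental cycles is a set-cover-type constraint with no such integrality guarantee. Without this equivalence, your plan for handling $F$-cuts collapses, and you are back to the situation where $\Omega(n)$ witness slots may be invalidated by a single cut of $F$, so the proof-per-step cannot be polylogarithmic.

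The paper takes a completely different route that sidesteps this obstacle. It does not maintain per-edge witnesses at all. Instead it reuses Thorup's deterministic top-tree data structure for $\kminTreeCut$, which maintains \emph{local cover numbers} on a hierarchy of clusters and whose only super-polylogarithmic step is, when a new cluster $C=C_L\cup C_R$ is created, finding $k$ non-tree edges in $E'(C_L,C_R)$ that cover the cluster path $\pi\subseteq C_R$ as far as possible (this is where Thorup pays $\tilde O(\sqrt n)$ via Frederickson's 2D topology trees). The verifier lets the prover supply those $k$ edges; for each proffered edge $e$ it checks $e\in E'(C_L,C_R)$ and computes the reach $s_e$ along $\pi$, and it uses (the concatenation of) the $s_e$ values as the reward. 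A reward-maximising prover therefore hands over exactly the edges Thorup's algorithm would have found. Since only $O(\log n)$ clusters are created per update and each needs $O(k)$ edges, the total proof and verification cost is $O(k\,\polylog n)$ per step. The point is that Thorup's cluster machinery already localises the effect of any $F$-update to $O(\log n)$ joins, so nondeterminism only has to patch a polylog-sized bottleneck rather than rebuild a global witness table.
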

We essentially use the algorithm in \cite{Thorup01}, so we will refer
most of the definitions to \cite{Thorup01}. We will only point out
which part of the algorithm can be speed up using non-determinism.

\paragraph{Review of \cite{Thorup01}.}

Let $G=(V,E)$ and $F\subseteq E$ be a graph and a forest we are
maintaining. We will implement \emph{top tree} on $F$. The top tree \cite{AlstrupHLT05} 
hierarchically decomposes $F$ into \emph{clusters}. Each cluster $C$ is
a connected subgraph of $F$. Clusters are edge disjoint. The hierarchy
of clusters forms a binary tree where the root cluster $C$ corresponds
to $F$ itself. For each cluster $C$, there are two children $C_{L}$
and $C_{R}$ which partitions edge of $C$ into two connected parts.
The important property of clusters is such that each cluster $C$
share at most $2$ nodes with other clusters which are not descendant
of $C$. Hence, there are two types of clusters: \emph{path clusters}
(i.e. ones that share two nodes) and \emph{point clusters} (i.e. ones
that share one node). For a path cluster $C$, let $a$ and $b$ be
the two nodes which $C$ share with other non-descendant clusters.
We call $a$ and $b$ \emph{boundary nodes} of $C$. Let $\pi=(a,\dots,b)\subseteq C$
be a path connecting $a$ and $b$. We call $\pi$ a \emph{cluster
path} of $C$.

After each update to $G$ and $F$, there is an algorithm which dictates
how the hierarchy of top tree clusters should change so that the binary
tree corresponds to the hierarchy has logarithmic depth. This is done
by ``joining'' and ``destroying'' $O(\log n)$ clusters. This
normally implies efficiency except that we will also maintain some
information on each cluster. Then, the problem reduces to how to obtain
such information when a new cluster $C$ is created from joining its
two children clusters $C_{L}$ and $C_{R}$.

Recall that our goal is to know if $\min_{e\in F}cover_{(G,F)}(e)\ge k$.
So it suffices if for each edge $e\in F$ we implicitly maintain $cover'_{(G,F)}(e)=\min\{cover_{(G,F)}(e),k\}$
such that the implicit representation allows us to $\min_{e\in F}cover'_{(G,F)}(e)$,
because $\min_{e\in F}cover'_{(G,F)}(e)\ge k$ iff $\min_{e\in F}cover_{(G,F)}(e)\ge k$.

Towards this implicit representation, we do the following. For every
cluster $C$, we implicitly maintain a \emph{local cover number w.r.t.
$C$} (using top tree as well). This local cover numbers
are such that, for a root cluster $C$, the local cover number w.r.t.
$C$ of $e$ denoted by $lcover_{(G,C)}(e)=cover'_{(G,F)}(e)$. Moreover,
we can obtain $\min_{e\in C}lcover{}_{(G,C)}(e)=\min_{e\in F}cover'_{(G,F)}(e)$
when $C$ is the root cluster. Therefore, the main task is how to
maintain local cover numbers on each cluster. 

As described above, to maintain information on each cluster, the only
task which is non-trivial is when we create a new cluster $C$ from
two clusters $C_{L}$ and $C_{R}$ where $C=C_{L}\cup C_{R}$. It
turns out that if $C_{L}$ and $C_{R}$ are point clusters, then we
need to do nothing. We only need to describe what to do when $C_{L}$
and/or $C_{R}$ are path clusters. We only describe the case for $C_{R}$
because it is symmetric for $C_{L}$. Let $\pi=(a_{0},\dots,a_{s})$
be a path cluster of $C_{R}$ where $C_{L}$ and $C_{R}$ share a
node at $a_{0}$. Let $E'(C_{L},C_{R})=\{(u_{L},u_{R})\in E-F\mid u_{L}\in C_{L},u_{R}\in C_{R}\}$
be the set of non-tree edges whose endpoints are in $C_{L}$ and $C_{R}$. 

We will state without proof that what we need is just to find $k$ non-tree
edges from $E'(C_{L},C_{R})$ that ``cover'' $\pi$ as much as possible
(see \cite{Thorup01} for the argument). More precisely, for $e\in E'(C_{L},C_{R})$,
let $P_{e}\subseteq F$ be the path in $F$ connecting two endpoints
of $e$. Observe that $P_{e}\cap\pi=(a_{0},\dots,a_{s_{e}})$ for
some $0\le s_{e}\le s$. We want to find different $k$ edges $e$
from $E'(C_{L},C_{R})$ whose $s_{e}$ is as large as possible. This
is the bottleneck in \cite{Thorup01}. To do this, Thorup builds a
data structure based on the 2-dimensional topology tree of Frederickson
\cite{Frederickson85,Frederickson97} for obtaining these edges. This
results in $\tilde{O}(\sqrt{n})$ update time.

\paragraph{Using proofs from the prover.}

We will let the prover give us such $k$ non-tree edges from $E'(C_{L},C_{R})$
that ``cover'' $\pi$ as much as possible. Given an edge $e$, the
verifier can check if $e\in E'(C_{L},C_{R})$ and can compute the
value $s_{e}$. Then, the verifier will use $s_{e}$ as a ``reward''
for the prover (see \ref{def:NP}). Therefore, the ``honest'' prover
will always try to maximize $s_{e}$. 

To summarize, after each update, there will be $O(\log n)$ many new
clusters created. For each cluster, the verifier needs $O(k)$ edges
maximizing the rewards as defined above. By, for example, concatenating
the rewards of all $O(k\log n)$ into one number, the prover can provide
these $O(k\log n)$ edges as we desired. If at any time $\min_{e\in F}cover_{(G,F)}(e)\ge k$,
with the help from the prover, the verifier can indeed verify that
$\min_{e\in F}cover_{(G,F)}(e)\ge k$. Hence, $\kminTreeCut\in\dNP$.

\part{Back to RAM}
\label{part:RAM}

\section{Completeness of $\protect\dDNF$ in RAM with Large Space}

\label{sec:RAM complete}
The main goal of this section to prove that $\dNP$-completeness of
$\dDNF$ also holds in the word-RAM model in the setting 
when we allow quasipolynomial preprocessing time and space.
Given this main result
and that other results never exploit that non-uniformity of the bit-probe
model, we conclude that other results in this paper also transfer
to the word-RAM model.

\paragraph{The word-RAM model and complexity classes.}

The standard assumption for most algorithms in the word-RAM model
is that an index to a cell in memory fits a word. More precisely,
if the space is $s$, then the word size $w\ge\lg s$. We will also
adopt this assumption here. So when the space can be $s=2^{\polylog(n)}$,
the word size is $w=\polylog(n)$. Recall that the cost in the word-RAM
model is the number of ``standard'' operations (e.g. addition, multiplication)
on words. From now, we just say the RAM model.

For each complexity classes formally defined in the bit-probe model
in this paper, we can easily define an analogous class in the word-RAM
model. The only difference is just that there is a concept of preprocessing
time in the word-RAM model. When given a size-$n$ instance, we allow
algorithms to use $2^{\polylog(n)}$ preprocessing time and space.
To be more precise, we denote the class $\dP$ in the RAM model as
$\dPram$ and we do similarly for other classes. 
As it will be very tedious to formally rewrite all the definitions of complexity classes
in the word-RAM model again, we will omit it in this version. 

The main result we prove this in section is as follows: 
\begin{thm}
	\label{thm:DNF complete RAM}If $\dDNF\in\dPram$, then $\dPram=\dNPram$. 
\end{thm}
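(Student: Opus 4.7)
The plan is to mirror the three-stage strategy used to prove the bit-probe analogue (Theorem~\ref{thm:DNF complete}): (i) show that $\fDT$ is $\dNPram$-hard; (ii) give a $\dPram$-reduction from $\fDT$ to $\fDNF$; and (iii) give a $\dPram$-reduction from $\fDNF$ to $\dDNF$. Since $\dDNF\in\dNPram$ is immediate (the verifier reads $\polylog(n)$ literals of the guessed clause), combining the three stages yields $\dNPram$-completeness of $\dDNF$, and in particular if $\dDNF\in\dPram$ then $\dPram=\dNPram$.

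The core of the argument is stage (i), and here the quasipolynomial preprocessing/space allowance in the definition of $\dPram$ is exactly what lets us port the bit-probe proof. Fix any problem $\D\in\dNPram$ with verifier $\V$ that has $\polylog(n)$ update time. During a single update step, $\V$ makes $\polylog(n)$ word-RAM operations on words of size $w=\polylog(n)$, so it touches at most $\polylog(n)\cdot w=\polylog(n)$ bits of its memory $\mem_\V$ and any address it computes is a $\polylog(n)$-bit value. Consequently, for each fixed proof $\pi\in\{0,1\}^{\polylog(n)}$, the update procedure of $\V$ on input $\pi$ can be ``compiled'' into a decision tree $T_\pi$ of depth $\polylog(n)$ that operates on the bits of $\mem_\V$, with leaves labelled by the pair $(x,y)$ that $\V$ would output (and rewards $r(v)=y$ as in Section~\ref{sub:hardness init oracle}). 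Building this collection $\T_\V=\{T_\pi\}_\pi$ takes quasipolynomial preprocessing time and quasipolynomial space, both of which are permitted in $\dPram$. The $\fDT$ oracle is then invoked exactly as in Section~\ref{sub:hardness update}: we keep $\mem_{I'}$ synchronized with $\mem_\V$ via $\polylog(n)$ oracle updates per step, read off the decision tree $T_{\pi_t}$ that the oracle returns, feed $\pi_t$ to $\V$, and output the first coordinate of $\V$'s answer. Correctness follows from the same reward-maximization argument as in Lemma~\ref{lem:correct}.

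Stages (ii) and (iii) carry over essentially verbatim from Lemma~\ref{lem:reduction:1} and Lemma~\ref{lem:first to exist}: the decision-tree-to-DNF translation produces one clause per root-to-leaf path (with rank inherited from the leaf), and the binary-search gadget on $2\log m$ auxiliary variables converts $\dDNF$ (``is any clause satisfied?'') into $\fDNF$ (``which is the first satisfied clause?''). Both constructions are trivially implementable in word-RAM with $\polylog(n)$ overhead, and both proofs never exploit non-uniformity of the bit-probe model.

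The principal obstacle I anticipate is not complexity-theoretic but bookkeeping: one has to verify carefully that the ``compilation'' of a single word-RAM update of $\V$ into a bit-level decision tree of depth $\polylog(n)$ is actually well-defined — in particular, that computed addresses never exceed $\polylog(n)$ bits during an update (which follows from the observation that only $\polylog(n)$ bits have been read so far, so any address is a function of those bits), and that the quasipolynomial preprocessing needed to enumerate all $2^{\polylog(n)}$ proof strings and build all the $T_\pi$'s fits into the $\dPram$ budget. Notably, the subroutine $\S^*_n$ from Section~\ref{sec:new:subroutineS}, which in the bit-probe proof was needed to \emph{avoid} quasipolynomial space, is unnecessary here since quasipolynomial space is available directly; so the RAM proof is in fact somewhat simpler than its bit-probe counterpart.
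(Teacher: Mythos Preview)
Your proposal is correct and follows essentially the same route as the paper: compile the verifier's update step into a family of $\polylog(n)$-depth decision trees using the quasipolynomial preprocessing budget, invoke the $\fDT$ algorithm exactly as in Section~\ref{sub:hardness update}, and then reuse the $\fDT\to\fDNF\to\dDNF$ reductions verbatim; the paper likewise drops the space-saving interface $\S^*_n$ since quasipolynomial space is now available. One minor expository slip: the fact that computed addresses are $\polylog(n)$ bits follows from the word-size convention $w\ge\lg s$ stated at the top of Section~\ref{sec:RAM complete}, not from ``only $\polylog(n)$ bits have been read so far'' --- an address being a \emph{function} of few bits does not by itself make the address \emph{short}.
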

\ref{thm:DNF complete RAM} implies that, unless $\dPram=\dNPram$,
there is no algorithm for $\dDNF$ with polylogarithmic update time,
even when allowing quasi-polynomial space and preprocessing time.

It is clear that $\dDNF\in\dNPram$. 
So, this will imply that $\dDNF$ is a $\dNPram$-complete in the sense of \Cref{def:completeness}.

\paragraph{Why one might believe that $\protect\dPram\protect\neq\protect\dNPram$.}

Consider the following problem: We are given a graph $G$ with $n$
nodes undergoing edge insertions and deletions. Then, after each update,
check if $G$ contains a clique of size at least $s=\polylog(n)$
in $G$. This problem is in $\dNPram$\footnote{In the preprocessing step, we check all set of nodes of size $s$
	trivially if they form a clique. In each update step, given a proof
	indicating the set $S$ of nodes forming a clique, we can check in
	polylogarithmic time of $S$ is indeed a clique. This show that the
	problem is in $\dNPram$. }. Now, if $\dNPram=\dPram$, then this implies that we can do the
following. Given an \emph{empty }graph with $n$ nodes, we preprocess
this empty graph using quasi-polynomial space and time. Then, given
an online sequence of arbitrary graphs $G_{1},G_{2},\dots$ with $n$
nodes and $m$ edges, we can decide if $G_{i}$ contains a clique
of size $\polylog(n)$ in $O(\polylog(n))$ time before $G_{i+1}$
arrives. This means that, by spending quasi-polynomial time and space
for preprocessing even without any knowledge about a graph, we can
then solve the $\polylog(n)$-clique problem in near-linear time,
which would be quite surprising. Moreover, the same kind of argument
applies to every problem in $\dNPram$

The rest of this section is for proving \ref{thm:DNF complete RAM}.
The proof is essentially the same as the $\dNP$-completeness of $\dDNF$
in the bit-probe model. This is because all the reduction are readily
extended to the relaxed RAM model.

\subsection{Proof of the Completeness.}

Suppose that $\dDNF\in\dPram$. There is an $\dPram$-algorithm $\O$
for the First Shallow Decision Tree ($\fDT$) problem (recall the
definition from \ref{sub:DT def}). This is because all reductions
in \ref{sub:DNF:NPhard} extend to the RAM model, 

Let $\D\in\dNPram$. Let $\V$ be the $\dNPram$-verifier for $\D$.
We want to devise a $\dPram$-algorithm $\A$ for $\D$ using $\V$
and $\O$. The idea from now is just to repeat the reduction which
shows that $\fDT$ is $\dNPprb$-hard \ref{sub:DT hard}. Most of
the reductions are readily extended to the RAM model. We only need
to bound the preprocessing time (which gives the bound for space).
We describe how $\A$ works as follows.

\paragraph{Preprocessing.}

In the preprocessing step, given an $n$-size initial instance $I_{0}$
of the problem $\D$. $\A$ calls $\V$ to preprocess $I_{0}$ which
takes time $2^{\polylog(n)}$. We know that when $\V$ is given an
input in each update step from now, $\V$ takes time at most $\polylog(n)$.
Hence, the process how $\V$ works can be described with the decision
tree $T_{\V}$. By simulating $\V$ step by step and by branching
whenever $\V$ reads any cell in the memory, $\A$ can construct the
decision tree $T_{\V}$ in time $2^{\polylog(n)}$ (as $T_{\V}$ can
contains at most $2^{\polylog(n)}$ nodes). Now, we construct the
collection $\T_{\V}=\{T_{\pi}\mid\pi\in\{0,1\}^{\polylog(n)}\}$ where
$T_{\pi}$ is obtained from $T_{\V}$ by ``fixing'' the proof-update
cell in $\mem_{\V}$ to be $\pi$. See \ref{sub:hardness init oracle}
for the same construction.

Next, we basically follow the same process as in \ref{sub:hardness update}.
That is, we construct an initial instance $I'$ of the problem $\fDT$
(recall the definition of $\fDT$ from \ref{sub:DT def}) for feeding
to the algorithm $\O$. The approach is the same as in \ref{sub:hardness init oracle}.
Let $I'=(\mem_{I'},\T_{I'})$ where $\mem_{I'}=\mem_{\V}(0)$
is the memory state of $\V$ after $\V$ preprocess $I_{0}$, and $\T_{I'}=\T_{\V}$.
Observe that the instance $I'$ can be described using $2^{\polylog(n)}$
bits, and $I$ can be constructed in $2^{\polylog(n)}$ time as well.
Now, we call $\O$ to preprocess $I'$. This takes time $2^{\polylog(2^{\polylog(n)})}=2^{\polylog(n)}$.
In total the preprocessing time is $2^{\polylog(n)}$.

\paragraph{Update.}

We follow the same process as in \ref{sub:hardness update}. That
is, at step $t$, given an instance-update $u(I_{t-1}I_{t})$ of the
problem $\D$, $\A$ does the followings: 
\begin{enumerate}
	\item $\A$ write $u(I_{t-1}I_{t})$ in the input-memory part $\mem_{\V}^{\inp}$
	of $\V$. 
	\item $\A$ calls $\O$ $|u(I_{t-1}I_{t})|=\lambda_{\D}(n)$ many times
	to update $\mem_{I'}=\mem_{\V}$. 
	\item $\O$ returns the guaranteed proof-update $\pi_{t}$. 
	\item $\A$ calls $\V$ with input $(u(I_{t-1}I_{t}),\pi_{t})$. 
	\item $\A$ calls $\O$ $\polylog(n)$ many times to update $\mem_{I'}=\mem_{\V}$. 
	\item $\A$ outputs the same output as $\V$'s. 
\end{enumerate}
As $\V$ always get the guaranteed proof-update, the answer from $\V$
is correct and so is $\A$. Now, we analyze the update time. Each
call to $\O$ takes 
\[
\polylog(2^{\polylog(n)}))=\polylog(n).
\]
The total number of calls is $\polylog(n)$. So the total time is
$\polylog(n)$. Other operations are subsumed by this.

To conclude, $\A$ takes $2^{\polylog(n)}$ preprocessing time and
$\polylog(n)$ update time, and $\A$ returns a correct answer for
$\D$ at every step. Therefore, $\D\in\dPram$. Hence, $\dPram=\dNPram$.

\section*{Acknowledgement} 
Nanongkai and Saranurak thank Thore Husfeldt for bringing \cite{HusfeldtR03} to their attention.  We thank Thomas Schwentick for pointing out the mistakes in the definition of FNP and TFNP defined in the previous version of this paper.

This project has received funding from the European Research Council (ERC) under the European Union's Horizon 2020 research and innovation programme under grant agreement No 715672. Nanongkai and Saranurak were also partially supported by the Swedish Research Council (Reg. No. 2015-04659.)

\part{Appendices}

\appendix
\section{Extending our results to promise problems}
\label{sec:promise}

For simplicity of exposition, we focussed on decision problems throughout this paper. However, almost all the results derived in this paper -- {\em except} the ones from~\ref{sec:PH:promise} -- hold for decision problems {\em with a promise} as well.\footnote{A decision problem {\em with a promise} allows for ``don't care'' instances, in addition to YES and NO instances. An algorithm for such a problem is allowed to provide any arbitrary output on don't care instances. } The results from~\ref{sec:PH:promise} don't hold for promise problems because of the following reason:~\ref{lem:NP:coNP} about $\dNP \cap \dcoNP$ does not hold for promise problems (see \Cref{sec:issue promise} for deeper explanation).
For the similar reason, we have that \Cref{lem:removing intersection randomized oracle,lem:if NP in coMA} do not hold for promise problems.

\section{Proofs: putting problems into classes}

\label{sec:app:classification}

\subsection{$\protect\dBPP$}

\begin{prop}
	\label{lem:conn in BPP}Dynamic connectivity is in $\dBPP$.\end{prop}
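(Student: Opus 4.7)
The plan is to directly invoke the randomized dynamic connectivity algorithm of Kapron, King and Mountjoy~\cite{KapronKM13}, and verify that it matches the conditions of \Cref{def:BPP}. Their algorithm maintains a spanning forest of an $n$-node dynamic graph under edge insertions and deletions with worst-case update time $\polylog(n)$ and total space $\tilde{O}(n)=\poly(n)$; connectivity queries of the form ``are $u$ and $v$ in the same component?'' can then be answered in $O(\log n)$ time via standard link–cut–tree style queries on the maintained spanning forest. Since we have formalized dynamic connectivity as a decision problem whose instance encodes the underlying graph together with the pair $(u,v)$ of queried nodes (see \Cref{exa:query}), this gives a randomized algorithm-family $\A$ with update time $\time_\A(n)=O(\polylog(n))$ and space-complexity $\text{Space}_\A(n)=O(\poly(n))$.

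Next I would verify the error bound. The algorithm of~\cite{KapronKM13} is correct on any fixed sequence of polynomially many updates with probability at least $1-1/n^c$ for any chosen constant $c$; if only a $2/3$ bound were available one can anyway boost to $1-1/n$ via independent repetitions and majority vote at each step, as noted just after \Cref{def:BPP}. So the per-step success probability condition $\Pr[x_t=\D(I_t)]\ge 1-1/n$ in \Cref{def:BPP} is met.

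The one subtlety to address is that the KKM guarantee is stated against an \emph{oblivious} adversary, whereas \Cref{def:BPP} places no such restriction (cf.\ \Cref{rem:adversary}). However, as recalled in the introduction, this distinction is irrelevant for decision problems: the output of such an algorithm is just the correct YES/NO answer with high probability, and hence reveals no information about the internal randomness, so an adaptive adversary gains no advantage. Therefore the KKM algorithm, viewed as a randomized decision procedure, is also correct w.h.p.\ against adaptive adversaries, and we conclude $\conn\in\dBPP$. I do not expect any technical obstacle here; the whole statement is essentially a restatement of~\cite{KapronKM13} in our framework, and the only point requiring a sentence of care is the oblivious-vs-adaptive adversary issue.
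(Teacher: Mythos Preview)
Your proposal is correct and follows the same high-level approach as the paper: invoke the Kapron--King--Mountjoy algorithm and check it fits \Cref{def:BPP}. The two write-ups differ only in which technical wrinkle they choose to highlight. You spend your care on the oblivious-versus-adaptive adversary distinction (correctly observing it is moot for decision problems, as the paper itself remarks in the introduction). The paper's proof instead ignores that point and spends its care on a different one: it reads \Cref{def:BPP} as requiring correctness at \emph{every} step of an unbounded update sequence, whereas KKM only guarantees correctness over the first $\poly(n)$ updates; it then closes this gap by the standard trick of periodically rebuilding the data structure every $\tilde O(n^3)$ updates and spreading the rebuild work to keep the worst-case update time polylogarithmic. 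If one regards \Cref{assume:poly} as already restricting to $\poly(n)$ updates (as you implicitly do), your version needs nothing more; if one reads \Cref{def:BPP} more expansively (as the paper's proof does), you would want to add a sentence about periodic rebuilding. Either way there is no substantive gap.
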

\begin{proof}
	The goal of this proof is give only to verify that the algorithm by
	Kapron, King and Mountjoy \cite{KapronKM13} really satisfies our
	definition of $\dBPP$ in \ref{def:BPP}. To point out some small
	difference, from \ref{def:BPP}, the algorithm needs to handle infinite
	sequence of updates, and the algorithm should be correct on \emph{each}
	update with good probability. However, from \cite{KapronKM13}, there
	is an dynamic connectivity algorithm such that on $n$-node graphs,
	the algorithm is correct on \emph{all }of the first $\mbox{poly}(n)$
	updates with $1-1/n^{c}$ for any constant $c$. This algorithm has
	$\tilde{O}(cm)=\tilde{O}(cn^{2})$ preprocessing time if the initial
	graph has $m$ edges and has $\tilde{O}(1)$ update time. 
	
	Because the preprocessing time is small enough, we can apply a standard
	technique of periodically rebuilding a data structure every, say $\tilde{O}(n^{3})$
	updates. The work for rebuilding the data structure is spread over
	man updates so that the worst-case update time is still $\tilde{O}(1)$.
	For each period, the answer of all updates is correct with high probability.
	Hence, each answer is trivially correct high probability as well.
	This holds for every period. So this shows that dynamic connectivity
	is in $\dBPP$.\end{proof}

\begin{prop}
\label{prop:approx mincut in BPP}For any constant $\epsilon>0$,
there is a dynamic algorithm that can maintain the $(2+\epsilon)$-approximate
size of min cut in an unweighted graph with $n$ nodes undergoing
edge insertions and deletions with $O(\mbox{polylog}(n))$ update
time. The answer at each step is correct with probability at least
$1-1/\mbox{poly}(n)$. In particular, the dynamic $(2+\epsilon)$-approximate
mincut problem is in $\dBPP$.\end{prop}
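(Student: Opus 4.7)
The plan is to reduce the dynamic $(2+\epsilon)$-approximate mincut problem to $O(\polylog(n))$ parallel instances of dynamic connectivity via Karger's random edge-sampling theorem \cite{Karger99}, and then invoke \ref{lem:conn in BPP} (the KKM algorithm) on each instance. Since approximate mincut, thresholded at a particular value, is a decision problem, the oblivious-vs-adaptive adversary distinction collapses (as noted in the preamble to \ref{def:BPP}), so we only need each decision to be correct w.h.p.\ at every step.

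First, I would set up a geometric family of thresholds $k_i = (1+\epsilon/4)^i$ for $i = 0,1,\dots,O(\epsilon^{-1}\log n)$; determining, for each $i$, whether the current mincut is above or below $k_i$ (up to a constant factor) clearly suffices to report a $(2+\epsilon)$-approximation. For each threshold $k_i$, maintain $r = \Theta(\log n)$ independent \emph{sampled subgraphs} $H_i^{(1)},\dots,H_i^{(r)}$ of the current graph $G$, where every edge of $G$ is included in $H_i^{(j)}$ independently with probability $p_i = \min\{1,\, C\log n / k_i\}$ for a sufficiently large constant $C$. Handling each edge insertion/deletion in $G$ triggers, for every pair $(i,j)$, the corresponding insertion/deletion in $H_i^{(j)}$ with probability $p_i$ using freshly drawn random bits; each such change is then forwarded to an independent copy of the KKM dynamic-connectivity data structure from \ref{lem:conn in BPP}, which reports whether $H_i^{(j)}$ is connected in $O(\polylog(n))$ worst-case time and $1-1/\poly(n)$ probability of correctness per step.

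By Karger's sampling theorem, if $G$ has mincut value $c$, then with probability $1-1/\poly(n)$ the mincut of each $H_i^{(j)}$ lies in $[(1-\delta)p_i c,\,(1+\delta)p_i c]$ for any fixed $\delta$, provided $C$ is chosen large enough. Tuning $C$ so that (a) $c \geq 2 k_i$ implies $p_i c \geq 2 C \log n$, making $H_i^{(j)}$ connected w.h.p., and (b) $c \leq k_i / (1+\epsilon/2)$ implies that $H_i^{(j)}$ is disconnected with constant probability, and then amplifying this over the $r = \Theta(\log n)$ independent copies, we obtain: with probability $\geq 1 - 1/\poly(n)$ at each step, the largest $k_i$ for which \emph{all} $H_i^{(j)}$ are connected is a $(2+\epsilon)$-approximation of $c$. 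Outputting this $k_i$ gives the desired algorithm.

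Finally, update time and correctness: each update to $G$ induces at most $O(r \cdot \log_{1+\epsilon/4} n) = O(\polylog(n))$ updates across the $O(\polylog(n))$ connectivity structures, each of which runs in $O(\polylog(n))$ worst-case time per operation by \ref{lem:conn in BPP}, giving total $O(\polylog(n))$ worst-case update time. A union bound over the $O(\polylog(n))$ connectivity instances, the $\poly(n)$ update steps (\ref{assume:poly}), and the Karger concentration events (with constants tuned to drive each individual failure probability down to $1/n^{c'}$ for sufficiently large $c'$) yields per-step correctness probability $\geq 1 - 1/\poly(n)$, as required by \ref{def:BPP}. The main obstacle is the accounting in the last step: choosing the sampling constants and the number of trials $r$ precisely so that the ``connected'' regime and the ``some $H_i^{(j)}$ disconnected'' regime are separated by at most a $(2+\epsilon)$ factor in $k_i$, while keeping the total per-step failure probability below $1/\poly(n)$ despite the polylogarithmic number of simultaneously maintained random structures.
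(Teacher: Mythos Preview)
Your reduction has a real gap in step (b). With $p_i = C\log n / k_i$ and $c \le k_i/(1+\epsilon/2)$, you have $p_i c \le \frac{C\log n}{1+\epsilon/2} = \Theta(\log n)$. The probability that the minimum cut of $G$ is entirely absent from $H_i^{(j)}$ is $(1-p_i)^c \approx e^{-p_i c} = n^{-\Theta(1)}$, not a constant; and Karger's cut--counting bound gives no better upper bound on the disconnection probability in this regime. Thus the sampled graph remains connected with high probability until $k_i$ is roughly $\Theta(c\log n)$, not $(2+\epsilon)c$. The threshold ``largest $k_i$ with all $H_i^{(j)}$ connected'' therefore pins down $c$ only up to a $\Theta(\log n)$ factor, and $r=\Theta(\log n)$ independent trials cannot close a $1/\mathrm{poly}(n)$ gap in disconnection probability. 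In short, connectivity testing of uniform edge samples yields an $O(\log n)$-approximation, not a $(2+\epsilon)$-approximation, in $\mathrm{polylog}(n)$ update time.

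The paper's proof avoids this by invoking the Thorup--Karger reduction \cite{ThorupK00} to dynamic \emph{minimum spanning forest} on a graph with $\mathrm{polylog}(n)$ distinct edge weights, rather than to bare connectivity; MSF with polylogarithmic weights is exactly what \cite{KapronKM13} maintains in $\mathrm{polylog}(n)$ worst-case time with high probability. The extra structural information carried by the spanning forests across the sampled levels (as opposed to a single connected/disconnected bit) is what buys the $(2+\epsilon)$ factor.
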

\begin{proof}
[Proof sketch]In \cite{ThorupK00}, Karger and Thorup shows that
a dynamic algorithm for $(2+\epsilon)$-approximating size of min
cut with polylogarithmic worst-case update time can be reduced to
a dynamic algorithm for maintaining a minimum spanning forest in a
graph of size $n$ where the weight of each edge is between $1$ and
$\mbox{polylog}n$. Kapron, King and Mountjoy \cite{KapronKM13} show
an algorithm for this problem with polylogarithmic worst-case time
and their algorithm is correct at each step with high probability\footnote{In \cite{KapronKM13}, they only claim that their algorithm can handle
polynomial many updates. But this restriction can be easily avoided
using a standard trick of periodically rebuilding the data structure.}. 
\end{proof}

\subsection{$\protect\dNP$}
\label{sec:problem in rankNP}

In this section, we show that all the problems listed in~\ref{table:BPP-NP} belong to the class $\dNP$. In particular,  we need to show that all these problems admit verifier-families satisfying~\ref{def:NP}. We say that the \emph{proof} at each step is given to
the verifier from a \emph{prover} $P$. We think of ourselves as a
verifier $\V$. At each step $t$, the prover gives us some proof $\pi_t$. At
some step, the prover also claims that the current instance is an YES instance,
and must convince us that it is indeed is the case.

\begin{prop}
\label{prop:conn in NP}The dynamic graph connectivity problem is in $\dNP$.
\end{prop}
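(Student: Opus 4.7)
The plan is to follow the construction sketched in Example~\ref{ex:intro:connectivity in NP}: the verifier $\V_n$ will maintain, as part of its memory, a forest $F \subseteq E(G)$ that is intended to be a spanning forest of the current graph $G$. The bulk of $\mem_{\V_n}$ will be devoted to a link/cut tree data structure \cite{SleatorT81} representing $F$, so that basic operations (testing whether two nodes are in the same tree of $F$, inserting/deleting a tree edge) can be performed in $O(\polylog n)$ time per probe, consistent with~\ref{def:verifier}. Each proof $\pi_t$ consists of $O(\log n)$ bits encoding either an edge $e' \in E(G)$ or the distinguished symbol $\bot$.

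The core of the proof is specifying how $\V_n$ handles an instance-update $(I_{t-1},I_t)$ together with its proof $\pi_t$. If the update inserts an edge $e$ into $G$, the verifier queries $F$ for the endpoints of $e$; if they lie in different trees, it adds $e$ to $F$, and it outputs reward $y_t=0$ regardless (the prover has nothing to contribute here). If the update deletes an edge $e$ from $G$, $\V_n$ first checks whether $e \in F$; if not, it outputs reward $0$. Otherwise $\V_n$ reads $\pi_t$: if $\pi_t = \bot$, it removes $e$ from $F$ and outputs reward $0$; if $\pi_t$ is an edge $e' = (u,v)$, it verifies (in $O(\polylog n)$ time using the link/cut tree) that $e' \in E(G) \setminus \{e\}$ and that $u$ and $v$ lie in the two different subtrees produced by removing $e$ from $F$. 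If the check succeeds, $\V_n$ sets $F \gets (F \setminus \{e\}) \cup \{e'\}$ and outputs reward $y_t = +1$; if it fails, $F$ is updated by only removing $e$ and $\V_n$ outputs $y_t = -1$. After any update, $\V_n$ outputs $x_t = 1$ iff $F$ is a spanning tree of $G$, i.e.\ the link/cut tree contains exactly one component covering all nodes; this too is a polylogarithmic-time check.

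The key correctness argument proceeds by induction on $t$, showing that if the proof-sequence is reward-maximizing in the sense of~\ref{def:NP}, then $F$ is always a \emph{maximal} spanning forest of the current $G$. Indeed, after deletion of an edge $e \in F$, whenever a valid replacement $e'$ exists, supplying such an $e'$ yields reward $+1$ while $\bot$ yields only $0$, so a reward-maximizing proof must supply a valid replacement. (Supplying an invalid edge gives $-1$, which is dominated by $\bot$, so the prover never does this.) Inductively maintaining the invariant ``$F$ is a maximal spanning forest of $G$'' then gives $x_t = 1$ whenever $G$ is connected, establishing property~(2) of~\ref{def:NP}. Property~(1) is easier: no matter what proofs are presented, the verifier only ever adds an edge $e'$ to $F$ after confirming $e' \in E(G)$, so $F \subseteq E(G)$ is always a subforest of $G$; hence if $G$ is disconnected then $F$ cannot be a spanning tree, and $\V_n$ outputs $x_t = 0$.

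The main obstacle I anticipate is verifying that every operation above can indeed be implemented in $O(\polylog n)$ bit-probes with $O(\poly n)$ space in the formal bit-probe model of~\ref{sub:formal alg}, and that the memory of $\V_n$ can be partitioned so that rewards, the link/cut structure, and the designated input regions $\mem^{\inp}$ all coexist without blowing up the update time. This is standard but requires some care, since link/cut trees are normally stated in the word-RAM model; however, because each standard link/cut operation touches only $O(\polylog n)$ machine words of total size $O(\polylog n)$ bits, it translates to $O(\polylog n)$ bit-probes as required.
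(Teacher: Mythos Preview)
Your proposal is correct and follows essentially the same approach as the paper: maintain a forest $F$ via link/cut trees, let the proof be a candidate replacement edge (or $\bot$), assign rewards $+1/0/-1$ so that a reward-maximizing prover keeps $F$ a spanning forest, and output $x_t$ based on whether $F$ spans $G$. Your write-up is in fact slightly more careful than the paper's sketch in handling what happens to $F$ when an invalid proof is supplied and in justifying the bit-probe cost of link/cut operations.
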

\begin{proof}(Sketch) 
This problem is in $\dNP$ due to the following verifier. After every update, the verifier with proofs from the prover maintain a forest $F$ of $G$. A proof (given after each update) is an edge insertion to $F$ or an $\bot$ symbol indicating that there is no update to $F$.  It handles each update as follows.
\begin{itemize}[noitemsep]
	\item After an edge $e$ is inserted into $G$, the verifier checks if $e$ can be inserted into $F$ without creating a cycle. This can be done in $O(\log(n))$ time using a link/cut tree data structure \cite{SleatorT81}. It outputs reward $y=0$. (No proof from the prover is needed in this case.)
	\item After an edge $e$ is deleted from $G$,  the verifier checks if $F$ contains $e$. If not, it outputs  reward $y=0$ (no proof from the prover is needed in this case). If $e$ is in $F$, the verifier reads the proof (given after $e$ is deleted). If the proof is $\bot$ it outputs reward $y=0$. Otherwise, let the proof be an edge $e'$. The verifier checks if $F'=F\setminus\{e\}\cup \{e'\}$ is a forest; this can be done in $O(\log(n))$ time using a link/cut tree data structure \cite{SleatorT81}. If $F'$ is a forest, the verifier sets $F\gets F'$ and outputs reward $y=1$; otherwise, it outputs reward $y=-1$.  
\end{itemize}
When the verifier receives two nodes $u, v$ in $G$ as part of a query, it outputs $x=YES$ if and only if both $u$ and $v$ belong to the same component of $F$ (this again can be checked in $O(\log (n))$ time using a link/cut tree data structure).

Observe that if the prover gives a proof that maximizes the reward after every update, the forest $F$ will always be a spanning forest (since inserting an edge $e'$ to $F$ has higher reward than giving $\bot$ as a proof). Thus, the verifier will always output $x=YES$ for YES-instances in this case. It is not hard to see that the verifier never outputs $x=YES$ for NO-instances, no matter what the prover does. 
\end{proof}

\begin{prop}
\label{prop:matching in NP}The dynamic problem of maintaining a $(1+\epsilon)$-approximation to the size of the maximum matching  is in $\dNP$.
\end{prop}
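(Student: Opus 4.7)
My plan adapts~\ref{ex:intro:connectivity in NP}, replacing the spanning forest by an approximately maximum matching. The verifier will maintain a matching $M \subseteq E(G)$ in memory (computed exactly during preprocessing, which is unconstrained in time). At each step, the proof $\pi_t$ will encode either the symbol $\bot$ or an augmenting path $P$ of length at most $L = O(1/\epsilon) = O(1)$, specified as an ordered list of edges and therefore of total size $O(L \log n) = O(\polylog n)$ bits. After each update, the verifier first synchronizes $M$: if the deleted edge belonged to $M$, it is removed; otherwise $M$ is left unchanged. The verifier then checks whether $P$ is a valid augmenting path for $M$ in the current graph $G$ (alternating matching/non-matching edges, starting and ending at $M$-unmatched vertices), and if so flips $P$ to increase $|M|$ by one. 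All these checks run in $O(\polylog n)$ time using an auxiliary data structure tracking which vertices are matched. The verifier outputs $x_t = 1$ iff $|M| > k$, with reward $y_t = |M|$.

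The NO case (opt $\leq k$) is immediate: $|M| \leq $ opt $\leq k$ for every proof-sequence, so $x_t = 0$ throughout, as required by condition~(1) of~\ref{def:NP}. For the YES case (opt $\geq (1+\epsilon)k$), I will argue that the reward-maximizing prover can sustain the invariant $|M|(1+\epsilon/3) \geq $ opt at every step. This invariant implies $|M| \geq (1+\epsilon)k/(1+\epsilon/3) > k$ (for $k$ at least some constant depending on $\epsilon$), giving $x_t = 1$ as required by condition~(2) of~\ref{def:NP}. The induction step works as follows: after a single update, $|M|$ drops by at most one and opt rises by at most one, so the ``deficit'' $\text{opt} - |M|(1+\epsilon/3)$ grows by at most $1+\epsilon/3$ per step. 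Whenever the deficit becomes positive (invariant violated), the classical Hopcroft--Karp lemma---no augmenting path of length $< 2r+1$ implies $\text{opt} \leq |M|(1+1/r)$, applied with $r = \lceil 3/\epsilon\rceil$---guarantees an augmenting path of length at most $L$. Supplying this path strictly increases $y_t = |M|$, so the reward-maximizing prover must supply it; applying the path decreases the deficit by exactly $1+\epsilon/3$, restoring the invariant.

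The main obstacle I expect is the quantitative bookkeeping above: verifying that a \emph{single} short augmenting path per step suffices to restore the invariant despite the prover being myopic (maximizing only the current $y_t$, not future rewards). Once this deficit analysis is carried out carefully, the remaining boundary cases are routine: when $k$ is smaller than the constant depending on $\epsilon$ above, the proof can instead directly encode $k+1$ disjoint matched edges of $G$ (fitting in $O(\polylog n)$ bits since $k = O(1/\epsilon) = O(1)$ in this regime), and the verifier checks them explicitly.
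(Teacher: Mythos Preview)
Your approach is correct and essentially the same as the paper's: maintain a matching $M$, have the prover supply an $O(1/\epsilon)$-length augmenting path after each update, reward the prover for successful augmentation, and invoke the Hopcroft--Karp short-augmenting-path lemma to conclude $|M|>k$ on YES instances. Your deficit invariant $|M|(1+\epsilon/3)\ge\opt$ makes rigorous precisely the step the paper's sketch asserts without proof, your reward $y_t=|M|$ differs only cosmetically from the paper's $y\in\{-1,0,1\}$, and the separate small-$k$ case you add is unnecessary (the inequality $(1+\epsilon)k/(1+\epsilon/3)>k$ holds for all $k>0$) but harmless.
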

\begin{proof}(Sketch)
We consider the gap version of the problem. Let $G$
be an $n$-node graph undergoing edge updates and $k$ be some number.
Let $\opt(G)$ be the size of maximum matching in $G$. We say that
$G$ is an  YES-instance if $\opt(G)>(1+\epsilon)k$, $G$ is a NO-instance
if $\opt(G)\le k$. Suppose that $M$ is a matching of size $|M|\le k$.
If $\opt(G)>(1+\epsilon)k$, then it is well-known that there is an augmenting
path $P$ for $G$ of length at most $O(1/\epsilon)$. Now, our verifier for this problem 
works as follows.

In the preprocessing step, we compute the maximum matching $M$ of
$G$. After each edge update, the prover gives us an augmenting path
$P$ to $M$ of length $O(1/\epsilon)$. We (as the verifier) can check if $P$ is
indeed an augmenting path for $M$ of length $O(1/\epsilon)$ in $O(1/\epsilon)$ time. Then we do the following.
\begin{itemize}
\item If $P$ is not an augmenting path, then we output a reward $y = -1$.
\item Else if $P$ is a valid and nontrivial augmenting path of length $O(1/\epsilon)$, then we output a reward $y = 1$. In this scenario, we also augment $P$ to $M$ in $O(1/\epsilon)$
time.
\item Otherwise, if $P$ is a valid but trivial augmenting path (i.e., it has length zero), then we output a reward $y = 0$.
\end{itemize}
At this point, we check the size of our matching $M$. If we find that $|M|>k$, then we
(as a verifier) output $x = 1$ (YES). Otherwise, we output $x = 0$ (NO).

If $\opt(G)\le k$, then obviously we always output $x = NO$. If $\opt(G)>(1+\epsilon)k$
and the prover so far has always maximized the reward $y$ (by giving us an augmenting path of length
$O(1/\epsilon)$ whenever it is possible), then it must follow that
$|M|>k$ and so we will output $x = YES$. This shows that this problem is
in $\dNP$.
\end{proof}

\begin{prop}
\label{prop:small cert in NP}The following dynamic problems from
\ref{table:BPP-NP} are in $\dNP$ (and hence in $\dNP$)
\begin{itemize}
\item \textup{subgraph detection,}
\item \textup{$uMv$,}
\item \textup{3SUM, and}
\item \textup{planar nearest neighbor.}
\item \textup{Erickson's problem.}
\item \textup{Langerman's problem.}
\end{itemize}
\end{prop}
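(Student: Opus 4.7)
The plan is to exhibit, for each of the six problems, a verifier-family that fits the $\dNP$ template of~\ref{def:NP}, following a single unified recipe. For a problem $\D$, the verifier $\V_n$ allocates a region of its polynomial memory to store a data-structure representation of the current instance $I_t$; since each instance-update is specified by $O(\log n)$ bits (see~\ref{assume:logn update size}), $\V_n$ can apply the update to its internal copy in $O(\polylog(n))$ probes. The prover then supplies a proof $\pi_t \in \{0,1\}^{\polylog(n)}$ that is intended to encode a succinct \emph{certificate} of membership in the current YES-set. The verifier uses $O(\polylog(n))$ probes to check the certificate against its internal copy, and outputs $(1,1)$ if the check succeeds and $(0,0)$ otherwise. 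Because the reward is binary, the reward-maximizing proof-sequence is precisely ``give a valid certificate whenever one exists'', so $\V_n$ will output $x_t=1$ on every YES instance under the reward-maximizing proof-sequence (condition (2) of~\ref{def:NP}), while on NO instances no proof can make the check succeed and so $x_t=0$ always (condition (1) of~\ref{def:NP}).

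The certificates themselves are standard and compact. For \emph{subgraph detection}, the certificate is an injection $f:V(H)\rightarrow V(G)$ with $|V(H)|=\polylog(n)$; $\V_n$ verifies it by probing the $O(|E(H)|)=O(\polylog(n))$ bits of the adjacency matrix of $G$ corresponding to the images of the edges of $H$. For the \emph{$uMv$} problem, the certificate is a pair $(i,j)$ and the verifier checks $u_i=M_{ij}=v_j=1$ with three probes. For \emph{3SUM}, the certificate is a triple of indices $(i,j,k)$ into $S$ and the verifier checks $S[i]+S[j]=S[k]$ in $O(\log n)$ time. For the \emph{planar nearest neighbor} decision/gap version, the certificate is a point $p\in S$ claimed to witness a near neighbor of the query within the required distance threshold, which the verifier checks by probing the coordinates of $p$ and evaluating the distance. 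For the two problems from \cite{Patrascu10}, the verifier needs to augment its internal copy with a simple aggregate structure: for \emph{Erickson's problem}, $\V_n$ maintains two arrays $R[1..n]$ and $C[1..n]$ of accumulated row- and column-increments (updatable in $O(1)$ per operation), and accepts a cell-certificate $(i,j)$ by checking $R[i]+C[j]\ge k$; for \emph{Langerman's problem}, $\V_n$ maintains a balanced BST or segment tree over $A$ that supports point updates and prefix-sum queries in $O(\log n)$, and accepts an index-certificate $k$ by querying the prefix sum and testing whether it equals $0$.

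The main technical point worth care, and the step where I expect to spend the most write-up effort, is just confirming that each auxiliary data structure fits the resource bounds of~\ref{def:verifier}: polynomial space and $O(\polylog(n))$ probes per update. For subgraph detection, $uMv$, 3SUM and planar nearest neighbor this is immediate because the verifier simply mirrors the input instance. For Erickson's and Langerman's problems, the extra structures (increment counters and a prefix-sum segment tree) have size $O(n)$ and support the required operations in $O(\log n)$ time, so the bounds are met. A minor subtlety is that the certificate-validity check must be a pure read from the verifier's copy and must not depend on any state that the prover can influence through earlier proofs; since our verifier only ever writes to memory during its own update-handling routine (never based on $\pi_t$), this is automatic. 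Combining this with the general recipe above establishes~\ref{def:NP} for each of the six problems, yielding the proposition.
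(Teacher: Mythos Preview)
Your proposal is correct and follows essentially the same approach as the paper: both use the ``small certificate'' recipe with binary reward $(1,1)/(0,0)$, and your certificates for each of the six problems match the paper's (index pair for $uMv$, triple for 3SUM, point for nearest neighbor, a prefix index plus a balanced-tree prefix-sum structure for Langerman, etc.). One tiny slip: in Erickson's problem the check should be $M_0[i][j]+R[i]+C[j]\ge k$ rather than $R[i]+C[j]\ge k$, since the initial matrix entries are not zero; this is a one-symbol fix and does not affect the argument.
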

\begin{proof}(Sketch) 
The general idea for showing that the above problems
are in $\dNP$ is as follows. The prover is supposed to say nothing when the
current instance is a NO-instance. On the other hand, when the current instance is a
YES-instance, the prover is supposed to give the whole certificate (which is
small) and we quickly verify the validity of this certificate. 

\medskip
\noindent \textbf{(subgraph detection)}: This problem is in $\dNP$ due to the following verifier: the verifier outputs $x=YES$ if and only if the proof (given after each update) is a mapping of the edges in $H$ to the edges in a subgraph of $G$ that is isomorphic to $H$. With output $x=YES$, the verifier gives  reward $y=1$. With output $x=NO$, the verifier gives reward $y=0$.  Observe that the proof  is of polylogarithmic size (since $|V(H)| = \polylog (|V(G)|)$), and the verifier can calculate its outputs $(x,y)$ in polylogarithmic time. Observe further that:
\begin{itemize}
\item (1) If the current input instance is a YES-instance, then the reward-maximizing proof is a mapping between $H$ and the subgraph of $G$ isomorphic to $H$, causing the verifier to output $x=YES$.
\item (2) If the current input instance is a NO-instance, then no proof will make the verifier output $x=YES$.  
\end{itemize}

\medskip
\noindent The arguments for the problems below are exactly similar to the one sketched above.

\medskip
\noindent 
\textbf{($uMV$):} After each update, if $uMv=1$, then the prover
gives indices $i$ and $j$ and we check if $u_{i}\cdot M_{ij}\cdot v_{j}=1$.

\medskip
\noindent
\textbf{(3SUM): }After each update, if there are $a,b,c\in S$ where
$a+b=c$, then the prover just gives pointers to $a,b$ and $c$.

\medskip
\noindent
\textbf{(planar nearest neighbor):} We consider the decision version
of the problem. Given a queried point $q$, we ask if there is a point
$p$ where $d(p,q)\le k$ for some $k$. After each update, if there
is a point $p$ where $d(p,q)\le k$, then the prover gives the point
$p$ to us and we verified that $d(p,q)\le k$ in $O(1)$ time.

\medskip
\noindent
\textbf{(Erickson's problem):} We consider the decision version of the problem. Given a queried value $\alpha$, we need to answer if there is an entry in the matrix with value at least $\alpha$. After each update, if there is an entry in the matrix with value at least $\alpha$, then the prover simply points to that entry in the matrix.

\medskip
\noindent
\textbf{(Langerman's problem):} After each update, if there is an index $k$ such that $\sum_{i=1}^k A[i] = 0$, then the prover simply points to such an index $k$. We (as the verifier) then check if it is indeed the case that $\sum_{i=1}^k A[i] = 0$. This takes $O(\log n)$ time if we use a standard balanced tree data structure built on top of the array $A[1 \ldots n]$.
\end{proof}

\subsection{$\dPH$}
\label{sec:PH:examples}

In this section, we show that all the problems listed in~\ref{table:PH} are in $\dPH$.

\paragraph{Small dominating set:} An instance of this problem consists of an undirected graph $G = (V, E)$ on $|V| = N$ nodes (say) and a parameter $k = O(\polylog (N))$. It is an YES instance if there is a dominating set $S \subseteq V$ in $G$ that is of size at most $k$, and a NO instance otherwise. An instance-update consists of inserting/deleting an edge in the graph $G$. 

\begin{lem}
\label{lem:small dominating set} 
The small dominating set problem described above is in the complexity class $\sig_2$. 
\end{lem}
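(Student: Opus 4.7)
The plan is to show that Small Dominating Set belongs to $\sig_2 = (\dNP)^{\sig_1} = (\dNP)^{\dNP}$ by constructing a verifier that invokes an oracle for an auxiliary $\dNP$ problem. The auxiliary problem, which I will call $\text{NotDominating}$, takes as input a dynamic graph $G$ together with a dynamic vertex set $S$ of size at most $k$, with updates consisting of edge insertions/deletions in $G$ and vertex insertions/deletions in $S$. It is a YES-instance iff there exists a vertex $v \in V$ that is neither in $S$ nor adjacent to any vertex of $S$.

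First I would argue that $\text{NotDominating} \in \dNP$. Its verifier maintains $G$ and $S$ explicitly in memory using standard balanced BST structures that support edge insertion/deletion, set membership tests, and single-edge existence queries in $O(\log N)$ time, so each update takes $O(\polylog N)$ time and total space is $O(\poly N)$. Upon receiving a proof consisting of a single vertex $v$, the verifier checks (i) that $v \notin S$ by a lookup in the BST for $S$, and (ii) that for every $u \in S$ the pair $(u,v)$ is not an edge of $G$, via $|S| \le k$ edge-queries. Both checks take $O(k \log N) = O(\polylog N)$ time. The verifier outputs $x=1$ iff both checks succeed, and sets the reward $y=1$ in that case and $y=0$ otherwise. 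On NO-instances no proof can cause $x=1$, while on YES-instances the reward-maximizing proof is any undominated vertex, which causes $x=1$. So $\text{NotDominating} \in \dNP$.

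Next I would construct the $\sig_2$ verifier $\V$ for Small Dominating Set, which uses $\text{NotDominating}$ as its oracle. At each step, $\V$ forwards the edge update on $G$ to the oracle. The proof $\pi_t$ it reads is an explicit set $S \subseteq V$ of at most $k$ vertices, which fits in $O(k \log N) = O(\polylog N)$ bits. The verifier then synchronizes the oracle's stored set to $S$ by issuing at most $2k = O(\polylog N)$ vertex insert/delete updates (removing the previous $S$ and inserting the new one). Finally it reads the oracle's answer: if the oracle returns $0$ (no undominated vertex exists), $\V$ outputs $(x_t, y_t) = (1,1)$; otherwise it outputs $(0,0)$. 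The total update time of $\V$ is $O(\polylog N)$, the blow-up size fed to the oracle is $O(\poly N)$, and the space is $O(\poly N)$.

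For correctness, if $G$ admits a dominating set of size $\le k$, any such set is a reward-maximizing proof and it forces the oracle to answer NO, so $\V$ outputs $x_t = 1$; if $G$ has no such dominating set, then for every choice of $S$ with $|S|\le k$ there is an undominated vertex, the oracle (by its own $\dNP$ correctness) answers YES, and $\V$ outputs $x_t=0$. Thus $\V$ satisfies the conditions of \ref{def:NP} relative to an oracle in $\dNP = \sig_1$, establishing that Small Dominating Set is in $\sig_2$. The main delicate point I expect is item (ii) in the $\text{NotDominating}$ verifier: without the right adjacency data structure the edge queries could blow up to $\Omega(\deg(v))$, so careful choice of representation (per-vertex balanced BST of neighbors) is essential; everything else is just bookkeeping.
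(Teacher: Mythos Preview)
Your proposal is correct and follows essentially the same approach as the paper: you define the identical auxiliary problem (the paper calls it $\D$, you call it $\text{NotDominating}$), show it is in $\dNP$ via a single-vertex witness, and then build the $\sig_2$ verifier for Small Dominating Set that passes a candidate set $S$ of size at most $k$ to the oracle and flips the answer. Your write-up is in fact slightly more explicit than the paper's sketch about the data structures needed for $O(\polylog N)$ edge queries and about how the oracle's set is resynchronized between steps, but the underlying argument is the same.
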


\begin{proof}(Sketch)
We first define a new dynamic problem $\D$ as follows. 
\begin{itemize}
\item {\bf $\D$:} An instance of $\D$ is  an ordered pair $(G, S)$, where $G = (V, E)$ is a graph on $|V| = N$ nodes (say), and $S \subseteq V$ is a subset of nodes in $G$ of size at most $k$ (i.e., $|S| \leq k$). It is an YES instance if $S$ is {\em not} a dominating set of $G$, and a NO instance otherwise. An instance-update consists of either inserting/deleting an edge in $G$, or inserting/deleting a node in $S$ (the node-set $V$ remains unchanged).

It is easy to check that the problem $\D$ is in $\dNP$ as per~\ref{def:NP}. Specifically, in this problem a proof $\pi_t$ for the verifier  encodes a node  $v \in V$. Upon receiving this proof, the verifier checks whether or not $v \notin S$ and it also goes through all the neighbors of $v$ in $S$ (which takes $O(\polylog (N))$ time since $|S| \leq k = O(\polylog (N))$.  The verifier then outputs $(x_t, y_t)$ where $x_t = y_t \in \{0,1 \}$, and  $x_t = 1$ iff $v \notin S$ and $v$ does not have any neighbor in $S$  (YES instance). 
\end{itemize}
The lemma holds since  the small  dominating set problem is in $\dNP$ if the verifier $\V'$ (for small dominating set) can use an oracle for the problem $\D$. We now explain why this is the case. Specifically, a proof $\pi'_t$ for $\V'$ encodes a subset $S \subseteq V$ of nodes in $G$ of size at most $k$. Since $k = O(\polylog (N))$, the proof $\pi'_t$ also requires only $O(\polylog (N))$ bits. Upon receiving this proof $\pi'_t$, the verifier $\V'$ calls the oracle for $\D$ on the instance $(G, S)$. If this oracle returns a NO answer, then we know that $S$ is  a dominating set in $G$, and the verifier $\V'$ outputs $(x'_t, y'_t)$ where $x'_t = y'_t = 1$. Otherwise, if the oracle returns YES, then the verifier $\V'$ outputs $(x'_t, y'_t)$ where $x'_t = y'_t = 0$.
\end{proof}

\paragraph{Small vertex cover:} An instance of this problem consists of an undirected graph $G = (V, E)$ on $|V| = N$ nodes (say) and a parameter $k = O(\polylog (N))$. It is an YES instance if there is a vertex cover $S \subseteq V$ in $G$ that is of size at most $k$, and a NO instance otherwise. An instance-update consists of inserting/deleting an edge in the graph $G$. 

\begin{lem}
\label{lem:small vertex cover} 
The small vertex cover problem described above is in the complexity class $\sig_2$. 
\end{lem}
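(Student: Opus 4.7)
The plan is to mimic the two-level construction used for small dominating set in~\ref{lem:small dominating set}. First I would introduce an intermediate dynamic problem $\D$ whose instances are ordered pairs $(G, S)$ with $G = (V,E)$ an $N$-node graph and $S \subseteq V$ a subset of size at most $k$; the instance is YES iff $S$ is \emph{not} a vertex cover of $G$, and an instance-update either inserts/deletes an edge of $G$ or inserts/deletes a node of $S$ (the node set $V$ being fixed). The natural encoding of such an instance uses $O(\poly(N))$ bits, consistent with~\ref{assume:logn update size}.

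The key step is to argue that $\D \in \dNP$. A reward-maximizing proof $\pi_t$ supplied after each update encodes a single edge $e = (u,v) \in E$, which takes only $O(\log N)$ bits. Upon reading $\pi_t$, the verifier checks (i) that $e$ is indeed currently present in $G$, and (ii) that neither $u$ nor $v$ lies in $S$; since $|S| \le k = O(\polylog(N))$, membership in $S$ can be tested in polylogarithmic time. The verifier then outputs $(x_t, y_t)$ with $x_t = y_t = 1$ if both tests succeed and $x_t = y_t = 0$ otherwise. This clearly satisfies the two conditions of~\ref{def:NP}: if $S$ is a vertex cover (a NO instance) then no edge has both endpoints outside $S$, so no proof can make the verifier output $1$; while if $S$ is not a vertex cover (a YES instance) then the reward-maximizing $\pi_t$ will point to an uncovered edge and the verifier will output $1$.

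Finally, I would show that the small vertex cover problem lies in $(\dNP)^{\D} \subseteq \sig_2$. The verifier $\V'$ for small vertex cover receives as its proof $\pi'_t$ a subset $S \subseteq V$ of size at most $k$, which again fits into $O(\polylog(N))$ bits. Using the oracle for $\D$ on the instance $(G, S)$, the verifier checks in polylogarithmic time whether $S$ fails to be a vertex cover: if the oracle returns NO (so $S$ is a vertex cover), then $\V'$ outputs $(x'_t, y'_t) = (1,1)$; otherwise it outputs $(0,0)$. Correctness and the reward-maximizing condition follow exactly as in the dominating set argument. Since $\D \in \dNP = \sig_1$ and the verifier $\V'$ for small vertex cover is a $\dNP$-verifier that calls an oracle for $\D$ with blow-up size $O(\poly(N))$, we conclude that small vertex cover is in $(\dNP)^{\sig_1} = \sig_2$.

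The routine obstacle here is none deeper than in~\ref{lem:small dominating set}; the only thing to be careful about is that the intermediate problem $\D$ must itself satisfy~\ref{def:NP} with respect to the reward mechanism, which is immediate because the output bit and the reward can be taken to coincide.
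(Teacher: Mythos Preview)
Your proposal is correct and follows essentially the same approach as the paper: define the auxiliary problem $\D$ (``$S$ is not a vertex cover''), show $\D\in\dNP$ via an uncovered-edge witness with $x_t=y_t$, and then place small vertex cover in $(\dNP)^{\D}\subseteq\sig_2$ using a $k$-subset proof and an oracle call. The only cosmetic difference is that the paper does not impose the bound $|S|\le k$ in the definition of $\D$, but this is immaterial.
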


\begin{proof}(Sketch)
We first define a new dynamic problem $\D$ as follows. 
\begin{itemize}
\item {\bf $\D$:} An instance of $\D$ is  an ordered pair $(G, S)$, where $G = (V, E)$ is a graph on $|V| = N$ nodes (say), and $S \subseteq V$ is a subset of nodes in $G$. It is an YES instance if $S$ is {\em not} a vertex cover of $G$, and a NO instance otherwise. An instance-update consists of either inserting/deleting an edge in $G$, or inserting/deleting a node in $S$ (the node-set $V$ remains unchanged).

It is easy to check that the problem $\D$ is in $\dNP$ as per~\ref{def:NP}. Specifically, in this problem a proof $\pi_t$ for the verifier  encodes an edge $(v, u) \in E$. The verifier outputs $(x_t, y_t)$ where $x_t = y_t \in \{0,1 \}$, and  $x_t = 1$ iff $u \notin S$ and $v \notin S$  (YES instance). 
\end{itemize}
The lemma holds since  the small vertex cover problem is in $\dNP$ if the verifier $\V'$ (for small vertex cover) can use an oracle for the problem $\D$. We now explain why this is the case. Specifically, a proof $\pi'_t$ for $\V'$ encodes a subset $S \subseteq V$ of nodes in $G$ of size at most $k$. Since $k = O(\polylog (N))$, the proof $\pi'_t$ also requires only $O(\polylog (N))$ bits. Upon receiving this proof $\pi'_t$, the verifier $\V'$ calls the oracle for $\D$ on the instance $(G, S)$. If this oracle returns a NO answer, then we know that $S$ is  a vertex cover in $G$, and the verifier $\V'$ outputs $(x'_t, y'_t)$ where $x'_t = y'_t = 1$. Otherwise, if the oracle returns YES, then the verifier $\V'$ outputs $(x'_t, y'_t)$ where $x'_t = y'_t = 0$.
\end{proof}

\paragraph{Small maximal independent set:} An instance of this problem consists of an undirected graph $G = (V, E)$ on $|V| = N$ nodes (say) and a parameter $k = O(\polylog (N))$. It is an YES instance if there is a maximal independent set $S \subseteq V$ in $G$ that is of size at most $k$, and a NO instance otherwise. An instance-update consists of inserting/deleting an edge in the graph $G$. 

\begin{lem}
\label{lem:small maximal independent set} 
The small maximal independent set problem described above is in the complexity class $\sig_2$. 
\end{lem}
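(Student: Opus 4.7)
The plan is to mimic the two-step pattern used for the small dominating set and small vertex cover problems. First I would introduce an auxiliary dynamic problem $\D$ whose instance is an ordered pair $(G, S)$ where $G = (V, E)$ is an $N$-node graph and $S \subseteq V$ is a candidate solution of size at most $k = O(\polylog(N))$, with YES meaning that $S$ is \emph{not} a maximal independent set in $G$. An instance-update is either an edge insertion/deletion in $G$ or the insertion/deletion of a single node in $S$ (leaving the vertex set $V$ fixed). Then I would show that $\D \in \dNP$, and conclude by exhibiting a verifier for the small maximal independent set problem that uses $\D$ as an oracle and guesses a candidate set $S$ of size at most $k$ as its proof, answering YES iff the oracle returns NO on $(G, S)$.

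The key step is arguing $\D \in \dNP$ via~\ref{def:NP}. Failure to be a maximal independent set has exactly two possible witnesses: either (a) an edge $(u,v) \in E$ with $u, v \in S$ (violation of independence), or (b) a node $v \in V \setminus S$ having no neighbor in $S$ (violation of maximality). In either case the witness can be encoded in $O(\log N) = O(\polylog (N))$ bits. The verifier will interpret the proof $\pi_t$ as such a witness, together with a single bit flag indicating which of the two cases is claimed. In case (a), it looks up in $O(\polylog (N))$ time whether $(u,v) \in E$ and whether $u, v \in S$. In case (b), it verifies $v \notin S$ and then iterates over all at most $k = O(\polylog(N))$ elements of $S$, checking for each whether it is adjacent to $v$; crucially, this loop runs in $O(\polylog(N))$ time precisely because $|S| \leq k$ is enforced as an invariant of $\D$. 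The verifier outputs $(x_t, y_t)$ with $x_t = y_t = 1$ if the witness is valid and $(0, 0)$ otherwise. As in the vertex cover argument, the reward-maximizing proof forces $x_t = 1$ whenever a witness exists, while no proof fools the verifier on NO instances.

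Given $\D \in \dNP$, the outer verifier $\V'$ for small maximal independent set reads a proof $\pi'_t$ encoding a set $S \subseteq V$ with $|S| \leq k$ (this fits in $O(k \log N) = O(\polylog (N))$ bits), synchronizes $S$ with the oracle's internal state by issuing at most $O(\polylog (N))$ membership updates, and then queries the oracle on $(G, S)$. If the oracle answers NO, then $S$ is a maximal independent set of size at most $k$, so $\V'$ outputs $(x'_t, y'_t) = (1, 1)$; otherwise it outputs $(0, 0)$. This places the problem in $(\dNP)^{\dNP} = \sig_2$ by~\ref{def:ph}.

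The main obstacle I anticipate is purely bookkeeping rather than conceptual: carefully handling the $O(\polylog (N))$ synchronization updates needed to align the oracle's maintained set $S$ with the set encoded in the current proof $\pi'_t$, while respecting the polylogarithmic update-time budget and the blow-up restriction $\qsize_{\V'}(n) = O(\poly (n))$ from~\ref{ex:np:oracle}. This is handled identically to the vertex cover and dominating set cases, so no new idea is required.
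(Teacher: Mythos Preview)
Your proposal is correct and follows essentially the same approach as the paper: both define the auxiliary problem $\D$ with YES meaning ``$S$ is not a maximal independent set,'' show $\D \in \dNP$ via a short witness, and then use $\D$ as an oracle for the outer verifier that guesses $S$. The only cosmetic difference is that the paper encodes the independence-violation witness as a single node $v \in S$ (and then scans $S$ for a neighbor of $v$), whereas you encode it directly as an edge $(u,v)$ with both endpoints in $S$; both checks run in $O(\polylog(N))$ time and the arguments are otherwise identical.
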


\begin{proof}(Sketch)
We first define a new dynamic problem $\D$ as follows. 
\begin{itemize}
\item {\bf $\D$:} An instance of $\D$ is  an ordered pair $(G, S)$, where $G = (V, E)$ is a graph on $|V| = N$ nodes (say), and $S \subseteq V$ is a subset of nodes in $G$ of size at most $k$ (i.e., $|S| \leq k$). It is an YES instance if $S$ is {\em not} a maximal independent set of $G$, and a NO instance otherwise. An instance-update consists of either inserting/deleting an edge in $G$, or inserting/deleting a node in $S$ (the node-set $V$ remains unchanged).

It is easy to check that the problem $\D$ is in $\dNP$ as per~\ref{def:NP}. Specifically, in this problem a proof $\pi_t$ for the verifier  encodes a node  $v \in V$. Upon receiving this proof, the verifier checks whether or not $v \notin S$ and it also goes through all the neighbors of $v$ in $S$ (which takes $O(\polylog (N))$ time since $|S| \leq k = O(\polylog (N))$.  The verifier then outputs $(x_t, y_t)$ where $x_t = y_t \in \{0,1 \}$, and  $x_t = 1$ iff either (1) $v \notin S$ and $v$ does not have any neighbor in $S$ or (2) $v \in S$ and $v$ has at least one neighbor in $S$  (YES instance). 
\end{itemize}
The lemma holds since  the small  maximal independent set problem is in $\dNP$ if the verifier $\V'$ (for small maximal independet set) can use an oracle for the problem $\D$. We now explain why this is the case. Specifically, a proof $\pi'_t$ for $\V'$ encodes a subset $S \subseteq V$ of nodes in $G$ of size at most $k$. Since $k = O(\polylog (N))$, the proof $\pi'_t$ also requires only $O(\polylog (N))$ bits. Upon receiving this proof $\pi'_t$, the verifier $\V'$ calls the oracle for $\D$ on the instance $(G, S)$. If this oracle returns a NO answer, then we know that $S$ is  a maximal independent set in $G$ of size at most $k$, and the verifier $\V'$ outputs $(x'_t, y'_t)$ where $x'_t = y'_t = 1$. Otherwise, if the oracle returns YES, then the verifier $\V'$ outputs $(x'_t, y'_t)$ where $x'_t = y'_t = 0$.
\end{proof}

\paragraph{Small maximal matching:} An instance of this problem consists of an undirected graph $G = (V, E)$ on $|V| = N$ nodes (say) and a parameter $k = O(\polylog (N))$. It is an YES instance if there is a maximal matching $M \subseteq E$ in $G$ that is of size  at most $k$ (i.e., $|M| \leq k$), and a NO instance otherwise. An instance-update consists of inserting/deleting an edge in the graph $G$.

\begin{lem}
\label{lem:small maximal matching} 
The small maximal matching problem described above is in the complexity class $\sig_2$. 
\end{lem}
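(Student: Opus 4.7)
(Sketch)
The plan is to mimic the proofs of \ref{lem:small dominating set}, \ref{lem:small vertex cover} and \ref{lem:small maximal independent set}. Specifically, we introduce an auxiliary dynamic problem $\D$ in $\dNP$ that serves as the ``inner'' verification, and then show that the small maximal matching problem lies in $(\dNP)^{\D}$, which by definition places it in $\sig_2$.

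First I would define $\D$ as follows. An instance of $\D$ is an ordered pair $(G, M)$, where $G = (V, E)$ is a graph on $N$ nodes and $M \subseteq E$ is an edge set of size at most $k$ (with $k = O(\polylog(N))$). The instance is an YES instance of $\D$ iff $M$ is \emph{not} a maximal matching of $G$, i.e., either (i) $M$ is not a matching (some two edges of $M$ share an endpoint), or (ii) $M$ is a matching but there exists an edge $e \in E$ both of whose endpoints are uncovered by $M$. Instance-updates either insert/delete an edge in $G$ or insert/delete an edge in $M$.

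Next I would argue that $\D \in \dNP$ along the lines of \ref{def:NP}. A reward-maximizing proof $\pi_t$ for $\D$ consists of a pointer either to a pair of edges of $M$ sharing an endpoint (witnessing (i)), or to a single edge $e \in E$ witnessing (ii). In both cases the proof fits in $O(\polylog(N))$ bits, and the verifier can check its validity in $O(\polylog(N))$ time by scanning the (at most $k = \polylog(N)$) edges of $M$ to test incidence. The verifier outputs $(x_t, y_t)$ with $x_t = y_t = 1$ iff the proof is a valid witness of non-maximality.

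Finally, the small maximal matching problem itself lies in $(\dNP)^{\D}$: a proof $\pi'_t$ for the outer verifier $\V'$ encodes a candidate edge set $M \subseteq E$ with $|M| \le k$, which requires $O(k \log N) = O(\polylog(N))$ bits. Upon receiving $\pi'_t$, the verifier $\V'$ feeds the instance $(G, M)$ to the oracle for $\D$. If the oracle returns NO, then $M$ is a maximal matching of size at most $k$, so $\V'$ outputs $(x'_t, y'_t) = (1,1)$; otherwise $\V'$ outputs $(0, 0)$. Thus the problem is in $(\dNP)^{\dNP} = \sig_2$. The only mild subtlety is handling instance-updates for $\D$ induced by changes to $M$ between oracle calls made by $\V'$, but this is treated exactly as in the earlier lemmas by using the oracle's update interface to synchronize $M$ one edge at a time, a standard move that contributes only a $\polylog(N)$ factor to the update time.
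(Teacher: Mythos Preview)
Your proof is correct and follows essentially the same approach as the paper's: define an auxiliary problem $\D \in \dNP$ that checks whether a given $M$ fails to be a maximal matching, then place the outer problem in $(\dNP)^{\D} \subseteq \sig_2$. The only cosmetic difference is that the paper restricts $M$ in $\D$ to already be a matching (so the witness is just a single uncovered edge), whereas you allow arbitrary $M$ and additionally handle the case where $M$ itself is not a matching; both formulations work equally well.
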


\begin{proof}(Sketch)
We first define a new dynamic problem $\D$ as follows. 
\begin{itemize}
\item {\bf $\D$:} An instance of $\D$ is  an ordered pair $(G, M)$, where $G = (V, E)$ is a graph on $|V| = N$ nodes (say), and $M \subseteq E$ is a matching in $G$ of size at most $k$ (i.e., $|M| \leq k$). It is an YES instance if $M$ is {\em not} a maximal matching in $G$, and a NO instance otherwise. An instance-update consists of either inserting/deleting an edge in $G$, or inserting/deleting an edge in $M$ (ensuring that $M$ remains a matching in $G$).

It is easy to check that the problem $\D$ is in $\dNP$ as per~\ref{def:NP}. Specifically, in this problem a proof $\pi_t$ for the verifier  encodes an edge $(u, v) \in E$. Upon receiving this proof, the verifier checks whether or not both the endpoints $u, v$ are currently matched in $M$.  The verifier then outputs $(x_t, y_t)$ where $x_t = y_t \in \{0,1 \}$, and  $x_t = 1$ iff both $u, v$ are currently unmatched in $M$  (YES instance). 
\end{itemize}
The lemma holds since  the small maximal matching problem is in $\dNP$ if the verifier $\V'$ (for small maximal matching) can use an oracle for the problem $\D$. We now explain why this is the case. Specifically, a proof $\pi'_t$ for $\V'$ encodes a matching $M \subseteq E$  in $G$ of size at most $k$. Since $k = O(\polylog (N))$, the proof $\pi'_t$ also requires only $O(\polylog (N))$ bits. Upon receiving this proof $\pi'_t$, the verifier $\V'$ calls the oracle for $\D$ on the instance $(G, M)$. If this oracle returns a NO answer, then we know that $M$ is  a maximal matching in $G$ of size at most $k$, and the verifier $\V'$ outputs $(x'_t, y'_t)$ where $x'_t = y'_t = 1$. Otherwise, if the oracle returns YES, then the verifier $\V'$ outputs $(x'_t, y'_t)$ where $x'_t = y'_t = 0$.
\end{proof}

\paragraph{Chan's subset union problem:} An instance of this problem consists of a collection of sets $X_1, \ldots, X_n$ from universe $[m]$, and a set $S \subseteq [n]$. It is an YES instance if $\cup_{i \in S} X_i = [m]$, and a NO instance otherwise. An instance-update consists of inserting/deleting an index in $S$.

\begin{lem}
\label{lem:subset union}
Chan's subset union problem described above is in the complexity class $\pid_2$.
\end{lem}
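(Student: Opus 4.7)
The plan is to exploit the natural $\forall\exists$ structure of Chan's problem: the instance is YES iff $\forall j \in [m], \exists i \in S : j \in X_i$. This matches the definition of $\pid_2 = (\dcoNP)^{\sig_1} = (\dcoNP)^{\dNP}$, where the outer $\dcoNP$-verifier handles the universal quantifier over $j$ and the inner $\dNP$-oracle handles the existential quantifier over $i$.

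First, I would introduce the following auxiliary dynamic problem $\D$, which mirrors the pattern used in the previous lemmas of this section. An instance of $\D$ is a tuple $(X_1, \ldots, X_n, S, j)$ where $j \in [m]$, and it is an YES instance iff $\exists i \in S$ with $j \in X_i$. An instance-update either inserts/deletes an index in $S$ (identical to updates in Chan's problem) or flips one bit of the binary encoding of $j$. I would then verify that $\D \in \dNP$: a proof $\pi_t$ encodes an index $i \in [n]$, which requires only $O(\log n)$ bits; upon receiving this proof the verifier looks up the bit indicating $i \in S$ and the bit indicating $j \in X_i$ (using indexing structures built in the preprocessing step), which takes $O(\polylog)$ time. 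The verifier outputs $(x_t, y_t)$ with $x_t = y_t = 1$ iff both bits are $1$. This matches~\ref{def:NP}.

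Next, I would exhibit a $\dcoNP$-verifier $\V'$ for Chan's subset union problem that uses an oracle-family $\O$ for $\D$ with blow-up size $O(\poly)$. At each update-step $\V'$ first propagates the Chan's instance-update ($\pm i$ in $S$) to the oracle, then reads a proof $\pi'_t = j \in [m]$ of size $O(\log m) = O(\polylog)$, overwrites the $j$-field of the oracle's instance by making $O(\log m)$ bit-flip instance-updates to $\O$, queries the oracle, and finally outputs $(x'_t, y'_t) = (1, 0)$ if the oracle returns YES (meaning $j$ is covered) and $(x'_t, y'_t) = (0, 1)$ if the oracle returns NO (meaning $j$ is uncovered). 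Total oracle calls per step: $O(\polylog)$.

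The correctness follows immediately from~\ref{def:coNP}: on an YES instance of Chan's every $j \in [m]$ is covered, so the oracle returns YES regardless of the proof-sequence and hence $x'_t = 1$ always; on a NO instance there exists an uncovered $j^*$, and the reward-maximizing proof picks $\pi'_t = j^*$, forcing the oracle to return NO and yielding $(x'_t, y'_t) = (0, 1)$, which dominates the alternative reward $0$. The space complexity is $O(\poly)$ and the update time is $O(\polylog)$, so $\V'$ satisfies~\ref{def:coNP} relative to an oracle in $\dNP$, placing Chan's subset union in $(\dcoNP)^{\dNP} = \pid_2 \subseteq \dPH$. The main obstacle I anticipate is simply the bookkeeping around the oracle's instance: since a single proof at step $t$ may specify a value of $j$ differing in $\Theta(\log m)$ bits from the previous step, we must carefully argue that the $O(\log m)$ bit-flip updates to $\O$ keep both the update-time and the blow-up size within the required $\polylog$ and $\poly$ bounds, and that the underlying update-graph $\G_n$ for $\D$ indeed permits such bit-flip edges on the $j$-field.
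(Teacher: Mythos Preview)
Your proposal is correct and follows essentially the same approach as the paper: you define the same auxiliary problem $\D$ (the paper calls the element $e$ rather than $j$), show it is in $\dNP$ via a proof that names a witnessing index $i$, and then build a $\dcoNP$-verifier for Chan's problem that uses an oracle for $\D$ and takes as proof a candidate uncovered element. Your write-up is slightly more careful than the paper's sketch in two respects—you explicitly check $i\in S$ inside the $\dNP$-verifier for $\D$, and you spell out the $O(\log m)$ bit-flip updates needed to rewrite the $j$-field of the oracle instance—but these are elaborations of the same argument, not a different route.
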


\begin{proof}(Sketch)
We first define a new dynamic problem $\D$ as follows.
\begin{itemize}
\item {\bf $\D$:} An instance of $\D$ is a triple $(\X, S, e)$, where $\X = \{X_1, \ldots, X_n\}$ is a collection of $n$ sets from the universe $[m]$, $S\subseteq [n]$ is a set of indices, and $e \in [m]$ is an element in the universe $[m]$. It is an YES instance if $e \in \cup_{i \in S} X_i$ and a NO instance otherwise. An instance-update consists of inserting/deleting an index in $S$ and/or changing the element $e$.

It is easy to check that the problem $\D$ is in $\dNP$ as per~\ref{def:NP}. Specifically, in this problem a proof $\pi_t$ for the verifier encodes an index $j \in S$. The verifier outputs $(x_t, y_t)$ where $x_t = y_t \in \{0,1\}$, and $x_t = 1$ iff $e \in X_j$.
\end{itemize}
The lemma holds since  Chan's subset union problem is in $\dcoNP$ if the verifier $\V'$ (for Chan's subset union) can use an oracle for the problem $\D$. We now explain why this is the case. Specifically, a proof $\pi'_t$ for $\V'$ encodes an element $e \in [m]$.  Upon receiving this proof $\pi'_t$, the verifier $\V'$ calls the oracle for $\D$ on the instance $(\X, S, e)$. If this oracle returns a NO answer, then we know that $e \notin \cup_{i \in S} X_j$, and hence the verifier $\V'$ outputs $(x'_t, y'_t)$ where $x'_t = 0$ and $y'_t = 1$. Otherwise, if the oracle returns YES, then the verifier $\V'$ outputs $(x'_t, y'_t)$ where $x'_t = 1$ and $y'_t = 0$.
\end{proof}

\paragraph{$3$-vs-$4$ diameter:} An instance of this problem consists of an undirected graph $G = (V, E)$ on $|V| = N$ nodes (say). It is an YES instance if the diameter of $G$ is at most $3$, and a NO instance if the diameter of $G$ is at least $4$. An instance-update consists of inserting/deleting an edge in $G$.

\begin{lem}
\label{lem:small diameter} 
The $3$-vs-$4$ diameter problem described above is in the complexity class $\pid_2$. 
\end{lem}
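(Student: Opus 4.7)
The plan is to mimic the template used in \ref{lem:subset union} for Chan's subset union, adapted for ``small distance'' instead of ``membership''. Concretely, I would show the problem lies in $(\dcoNP)^{\dNP} = \pid_2$ by (i) packaging ``is there a short $u$--$v$ path?'' as an auxiliary $\dNP$-problem and (ii) using its oracle inside a $\dcoNP$-verifier that reads a witness pair of far-apart vertices.

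First, define the auxiliary dynamic decision problem $\D$: an instance is a triple $(G,u,v)$, and it is an YES-instance iff there exists a $u$--$v$ walk of length at most $3$ in $G$. The instance-updates are either an edge insertion/deletion in $G$ or a change of one endpoint in the pair $(u,v)$; each update takes $O(\log N)$ bits. The first step is to argue $\D \in \dNP$ in the sense of \ref{def:NP}: a proof $\pi_t$ is a sequence of at most $3$ edges, fitting in $O(\polylog N)$ bits; the verifier checks in $O(\polylog N)$ time that the listed edges form a $u$--$v$ path in the current $G$, outputting $(x_t,y_t) = (1,1)$ in that case and $(0,0)$ otherwise. Reward-maximizing proofs force $x_t = 1$ whenever $\D(G,u,v)=1$; and on NO-instances no such list of edges exists, so $x_t = 0$ for every proof.

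Second, using an oracle-family for $\D$, construct a $\dcoNP$-verifier $\V'$ for the $3$-vs-$4$ diameter problem as in \ref{def:coNP}. A proof $\hat{\pi}'_t$ for $\V'$ encodes a pair $(u,v) \in V \times V$ in $O(\log N)$ bits. After each update-step, $\V'$ propagates the received edge insertion/deletion into the oracle's copy of $G$ and sets the oracle's pair to the provided $(u,v)$ using $O(\polylog N)$ oracle instance-updates; then it reads the oracle's answer. If the oracle returns $0$ (no short path), $\V'$ outputs $(x'_t, y'_t) = (0, 1)$; otherwise $\V'$ outputs $(1, 0)$. Checking \ref{def:coNP}: on YES-instances ($\mathrm{diam}(G) \le 3$) every pair has a short path, so the oracle always returns $1$ and thus $x'_t = 1$ irrespective of the proof; on NO-instances ($\mathrm{diam}(G) \ge 4$) a reward-maximizing proof must exhibit a witness pair $(u,v)$ with $d_G(u,v) > 3$, making the oracle return $0$ and forcing $x'_t = 0$.

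The proof is essentially routine because the gap between $3$ and $4$ is built into the auxiliary problem by the hard-coded depth-$3$ bound on the witness path; no binary search or iterated quantification is needed. The only mildly delicate point, and the place I would be most careful, is the oracle bookkeeping: $\V'$ must keep the oracle's state synchronized with its own current graph while also rewriting the stored pair $(u,v)$ after every update-step, and this must fit within $O(\polylog N)$ time per step. This follows from \ref{assume:logn update size} together with the fact that $(u,v)$ is only $O(\log N)$ bits, so at most $O(\polylog N)$ oracle instance-updates suffice per step, giving the required polylogarithmic update time for $\V'$.
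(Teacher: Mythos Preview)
Your proposal is correct and follows essentially the same approach as the paper: define the auxiliary $\dNP$ problem ``does $(G,u,v)$ admit a $u$--$v$ path of length at most $3$?'' with a short witness path as proof, then build a $\dcoNP$-verifier for $3$-vs-$4$ diameter that receives a pair $(u,v)$ as proof and queries this oracle, outputting $(0,1)$ if the oracle says NO and $(1,0)$ otherwise. The paper's sketch is nearly identical, and your version is in fact slightly more careful about the oracle-state synchronization.
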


\begin{proof}(Sketch)
We first define a new dynamic problem $\D$ as follows.
\begin{itemize}
\item {\bf $\D$:} An instance of $\D$ is a triple $(G, u, v)$, where $G = (V, E)$ is a graph on $|V| = N$ nodes (say), and $u, v \in V$ are two nodes in $G$. It is an YES instance if the distance (length of the shortest path) between $u$ and $v$ is at most $3$, and a NO instance otherwise. An instance-update consists of inserting/deleting an edge in the graph $G$ and changing the nodes $u, v$ (the node-set $V$ remains unchanged).

It is easy to check that the problem $\D$ is in $\dNP$ as per~\ref{def:NP}. Specifically, in this problem a proof $\pi_t$ for the verifier  encodes a path $P$ between $u$ and $v$ in $G$. The verifier outputs $(x_t, y_t)$ where $x_t = y_t \in \{0,1 \}$, and  $x_t = 1$ iff the length of  $P$ is at most $3$  (YES instance). 
\end{itemize}
The lemma holds since  the $3$-vs-$4$ diameter problem is in $\dcoNP$ if the verifier $\V'$ (for $3$-vs-$4$ diameter) can use an oracle for the problem $\D$. We now explain why this is the case. Specifically, a proof $\pi'_t$ for $\V'$ encodes two nodes $u, v \in V$ in $G$.  Upon receiving this proof $\pi'_t$, the verifier $\V'$ calls the oracle for $\D$ on the instance $(G, u, v)$. If this oracle returns a NO answer, then we know that the distance between $u$ and $v$ is at least $4$ (hence, the diameter of $G$ is also at least $4$), and the verifier $\V'$ outputs $(x'_t, y'_t)$ where $x'_t = 0$ and $y'_t = 1$. Otherwise, if the oracle returns YES, then the verifier $\V'$ outputs $(x'_t, y'_t)$ where $x'_t = 1$ and $y'_t = 0$.
\end{proof}

\paragraph{Euclidean $k$ center:} An instance of this problem consists of a set  of points $X \subseteq \mathbb{R}^d$ and a threshold $T \in \mathbb{R}$. It is an YES instance if there is a subset $C \subseteq X$ of points of size at most $k$ (i.e., $|C| \leq k$)  such that every point in $X$ is within a distance of $T$ from some point in $C$; and it is a NO instance otherwise. The parameter $k$ is such that $k = O(\polylog (n))$ where $n$ bits are needed to encode an instance. An instance-update consists of inserting or deleting a point in $X$.

\begin{lem}
\label{lem:small k center} 
The Euclidean $k$ center problem described above is in the complexity class $\sig_2$. 
\end{lem}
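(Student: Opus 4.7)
The plan is to mimic the pattern used for \ref{lem:small dominating set}, \ref{lem:small vertex cover}, and \ref{lem:small maximal independent set}. The idea is that verifying a proposed center set $C$ directly is hard (it requires checking a universal statement over all points of $X$), but the \emph{complement} of this check is easy to witness, so it sits in $\dNP$; wrapping this in an outer $\dNP$-verifier that guesses $C$ places the original problem in $(\dNP)^{\dNP}=\sig_2$.

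First, I would introduce an auxiliary dynamic problem $\D$ whose instances are triples $(X, C, T)$, where $X \subseteq \mathbb{R}^d$ is the point set, $C \subseteq X$ is a candidate center set with $|C| \le k = O(\polylog(n))$, and $T \in \mathbb{R}$ is the threshold. An instance is an \textsc{Yes} instance of $\D$ iff $C$ is \emph{not} a valid $k$-center cover, i.e.\ there exists some point $u \in X$ with $\min_{v \in C} d(u,v) > T$. An instance-update consists of inserting or deleting a point in $X$, adding or removing an element from $C$ (subject to the size constraint), or changing $T$. I would check that $\D \in \dNP$ by giving a verifier whose proof $\pi_t$ encodes a single candidate witness point $u \in X$. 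The verifier reads the $|C| \le k = O(\polylog(n))$ center coordinates, computes $d(u,v)$ for each $v \in C$, and outputs $(x_t,y_t) = (1,1)$ iff all these distances exceed $T$, and $(0,0)$ otherwise. Since the entire check costs $O(\polylog(n))$ time, the conditions of \ref{def:NP} are satisfied.

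Next, I would construct a verifier $\V'$ for Euclidean $k$-center that uses an oracle for $\D$ with polynomial blow-up. The proof $\pi'_t$ given to $\V'$ encodes a subset $C \subseteq X$ of size at most $k$; because $k = O(\polylog(n))$ and each point is specified using $O(\polylog(n))$ bits, the proof has size $O(\polylog(n))$ as required. Upon receiving $\pi'_t$, the verifier $\V'$ synchronizes the candidate set $C$ maintained inside the oracle for $\D$ (applying at most $k = O(\polylog(n))$ updates of the form ``add point $v$ to $C$'' or ``remove point $v$ from $C$'' to the oracle), then queries the oracle on the resulting instance $(X,C,T)$. If the oracle returns NO (so $C$ \emph{is} a valid cover), $\V'$ outputs $(x'_t,y'_t)=(1,1)$; otherwise it outputs $(0,0)$. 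A reward-maximizing proof $\pi'_t$ will supply an actual valid cover $C$ whenever one exists, so $\V'$ will output $1$ on every \textsc{Yes}-instance and $0$ on every \textsc{No}-instance, matching \ref{def:NP}. Concluding via the definition $\sig_2 = (\dNP)^{\sig_1} = (\dNP)^{\dNP}$, we obtain the lemma.

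The only step requiring a small amount of care is the bookkeeping for the proof size and for keeping the oracle's copy of $C$ synchronized after each update: we must confirm that encoding a point of $X$ takes $O(\polylog(n))$ bits (true since $n$ bits describe the whole instance and hence each individual coordinate), and that the difference between the previous proof's $C$ and the current one can be pushed to the oracle in $O(\polylog(n))$ updates. Since $|C| \le k = O(\polylog(n))$, completely replacing $C$ from scratch already fits within the $O(\polylog(n))$ update-time budget, so this bookkeeping is routine and no genuine obstacle arises.
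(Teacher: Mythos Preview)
Your proposal is correct and follows essentially the same approach as the paper: define the auxiliary problem $\D$ that asks whether a given candidate center set $C$ \emph{fails} to cover $X$ (in $\dNP$ via a single uncovered-point witness), then place Euclidean $k$-center in $(\dNP)^{\D}\subseteq\sig_2$ by having the outer verifier guess $C$ and negate the oracle's answer. Your extra remarks on proof size and on synchronizing the oracle's copy of $C$ are fine and slightly more explicit than the paper's sketch.
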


\begin{proof}(Sketch)
We first define a new dynamic problem $\D$ as follows.
\begin{itemize}
\item {\bf $\D$:} An instance of $\D$ is a triple $(X, C, T)$, where $X \subseteq \mathbb{R}^d$ is a set of points, $C \subseteq X$ is a subset of points of size at most $k$ (i.e., $|C| \leq k$), and $T \in \mathbb{R}$ is a threshold. It is an YES instance if there is a point $x \in X$ that is more than $T$ distance away from every point in $C$, and a NO instance otherwise. An instance-update consists of inserting/deleting a point in $X$ and/or  inserting/deleting a point in $C$ (ensuring that $C$ remains a subset of $X$).

It is easy to check that the problem $\D$ is in $\dNP$ as per~\ref{def:NP}. Specifically, in this problem a proof $\pi_t$ for the verifier  encodes a point $p \in X \setminus C$. The verifier outputs $(x_t, y_t)$ where $x_t = y_t \in \{0,1 \}$, and  $x_t = 1$ iff the point $p$ is more than $T$ distance away from every point in $C$  (YES instance). Note that the verifier can check if this is the case in $O(k) = O(\polylog (n))$ time, since there are at most $k$ points in $C$. 
\end{itemize}
The lemma holds since  the Euclidean $k$ center problem is in $\dNP$ if the verifier $\V'$ (for Euclidean $k$ center) can use an oracle for the problem $\D$. We now explain why this is the case. Specifically, a proof $\pi'_t$ for $\V'$ encodes a subset $C \subseteq X$ of size at most $k$.  Upon receiving this proof $\pi'_t$, the verifier $\V'$ calls the oracle for $\D$ on the instance $(X, C, T)$. If this oracle returns a NO answer, then we know that every point in $X$ is within a distance of $T$ from some point in $C$, and the verifier $\V'$ outputs $(x'_t, y'_t)$ where $x'_t = y'_t = 1$. Otherwise, if the oracle returns YES, then the verifier $\V'$ outputs $(x'_t, y'_t)$ where $x'_t = y'_t = 0$.
\end{proof}

\paragraph{$k$-edge connectivity:} An instance of this problem consists of an undirected graph $G = (V, E)$ on $|V| = N$ nodes (say). It is an YES instance if the graph $G$ is $k$-edge connected, and a NO instance otherwise. We assume that $k = O(\polylog (N))$. An instance-update consists of inserting/deleting an edge in $G$ (the node-set $V$ remains unchanged).

\begin{lem}
\label{lem:k edge connectivity} 
The $k$-edge connectivity problem described above is in the complexity class $\pid_2$. 
\end{lem}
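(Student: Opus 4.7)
(Proposal.)
The plan is to follow the template used in~\ref{lem:small diameter} and~\ref{ex:intro:k-con in Pi2}: we put $k$-edge connectivity in $\pid_2 = (\dcoNP)^{\dNP}$ by designing a $\dcoNP$-verifier that uses a $\dNP$-oracle. The guiding idea is that a graph fails to be $k$-edge connected precisely when a small set of edges can be removed to disconnect it, and the prover in a $\dcoNP$-protocol will supply that small set.

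First, I would introduce an auxiliary dynamic problem $\D$ whose instances are ordered pairs $(G, S)$, where $G = (V,E)$ is an $N$-node graph and $S \subseteq E$ is a set of at most $k-1$ edges. Declare $(G,S)$ to be a YES instance of $\D$ iff $G \setminus S$ is connected, and a NO instance otherwise. An instance-update either inserts/deletes an edge of $G$ or inserts/deletes an edge of $S$ (with $|S| \leq k-1$ maintained as a promise). Since $k = O(\polylog (N))$, a single instance-update of $\D$ is still encodable in $O(\log N)$ bits.

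Next, I would argue $\D \in \dNP$ by adapting the spanning-forest verifier from~\ref{ex:intro:connectivity in NP} and~\ref{prop:conn in NP}. The verifier maintains a spanning forest $F$ of the current graph $G \setminus S$; on each update it uses $O(\polylog (N))$ time and polylogarithmic-size proofs (an edge to add to $F$, or $\bot$) together with a link/cut tree~\cite{SleatorT81} to keep $F$ a maximal spanning forest, using rewards exactly as in~\ref{ex:intro:connectivity in NP}. The small novelty is handling deletions of edges from the "virtual" set $S$ (which can reactivate an edge of $G$) and insertions into $S$ (which may remove a forest edge and trigger a replacement search), but each case is again resolved by a single link/cut operation plus at most one polylog-size proof.

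Finally, I would construct the $\dcoNP$-verifier $\V'$ for $k$-edge connectivity using $\D$ as an oracle. On each update, $\V'$ reads a proof $\pi'_t$ interpreted as a set $S \subseteq E$ of at most $k-1$ edges (which fits in $O(\polylog (N))$ bits); it then sends the corresponding instance-updates to the $\D$-oracle on $(G, S)$ and reads its answer. If the oracle says $G \setminus S$ is disconnected, $\V'$ outputs $(x'_t,y'_t)=(0,1)$; otherwise it outputs $(1,0)$. If $G$ is $k$-edge connected, then every $S$ with $|S|\le k-1$ leaves $G\setminus S$ connected, so the oracle always answers YES and $\V'$ always outputs $x'_t=1$, as required by~\ref{def:coNP}. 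If $G$ is not $k$-edge connected, some $S$ with $|S|\le k-1$ disconnects $G$; a reward-maximizing proof points to such an $S$, forcing $x'_t=0$. Hence $k$-edge connectivity lies in $(\dcoNP)^{\dNP}=\pid_2$.

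The only mildly delicate point I anticipate is formalizing how $\V'$ "feeds" the helper instance $(G,S)$ to the oracle in the dynamic setting, in particular synchronizing the updates to $G$ and the edits to $S$ (which change between steps) via a polylogarithmic number of oracle calls per update; this is handled by the same black-box oracle conventions used throughout~\ref{sec:PH} and~\ref{ex:np:oracle}, so I do not expect any genuine obstacle beyond bookkeeping.
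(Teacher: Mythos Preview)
Your proposal is correct and follows essentially the same approach as the paper: show $k$-edge connectivity is in $(\dcoNP)^{\dNP}=\pid_2$ by having a $\dcoNP$-verifier receive a small edge set as proof and consult a $\dNP$ oracle to test connectivity after removing that set. The only cosmetic difference is that the paper uses the dynamic connectivity problem itself (already known to be in $\dNP$ by~\ref{prop:conn in NP}) as the oracle and has the verifier feed the $\le k$ deletions directly to it, whereas you package ``is $G\setminus S$ connected?'' into a new auxiliary problem $\D$ and reprove $\D\in\dNP$; both routes yield the same argument.
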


\begin{proof}(Sketch)
The lemma holds since  the $k$-edge connectivity problem is in $\dcoNP$ if the verifier $\V'$ (for $k$-edge connectivity) can use an oracle for the dynamic connectivity problem (which is in $\dNP$). We now explain why this is the case. Specifically, a proof $\pi'_t$ for $\V'$ encodes a subset $E' \subseteq E$ of at most $k$ edges in $G$.  Upon receiving this proof $\pi'_t$, the verifier $\V'$ calls the oracle for dynamic connectivity on the instance $G' = (V, E \setminus E')$. If this oracle returns a NO answer, then we know that the graph $G$ becomes disconnected if we delete the (at most $k$) edges $E'$ from $G$, and hence the graph $G$ is {\em not} $k$-edge connected. Therefore, in this scenario the verifier $\V'$ outputs $(x'_t, y'_t)$ where $x'_t = 0$ and $y'_t = 1$. Otherwise, if the oracle returns YES, then the verifier $\V'$ outputs $(x'_t, y'_t)$ where $x'_t = 1$ and $y'_t = 0$.
\end{proof}

\section{Reformulations of $\protect\dDNF$: proof}

\label{sec:DNF equiv proof}
\begin{prop}
[Restatement of \ref{prop:DNF equiv}]An algorithm with the update
and query time are at most $O(u(m))$ for any one of the following
problems implies algorithms with the same update and query time for
all other problems:
\begin{enumerate}
\item $\dDNF$ on an formula $F$ with $m$ clauses,
\item $\dAW$ on an graph $G=(L,R,E)$ where $|R|=m$ clauses,
\item $\dIndep$ on a hypergraph $H$ with $m$ edges, and
\item $\dOV$ on a matrix $V\in\{0,1\}^{n\times m}$.
\end{enumerate}
\end{prop}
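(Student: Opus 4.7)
The plan is to prove the proposition by exhibiting a cycle of reductions $\dOV \to \dAW \to \dIndep \to \dDNF \to \dOV$, each of which has only $O(1)$ overhead per update/query, so that an $O(u(m))$ algorithm for any one problem yields $O(u(m))$ algorithms for the other three. Each reduction will be essentially a syntactic relabeling of the input.

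First I would set up the reduction $\dOV \to \dAW$. Given a matrix $V \in \{0,1\}^{n \times m}$ with sparse columns and vector $u \in \{0,1\}^n$, construct a bipartite graph $G = (L, R, E)$ where $L$ has one node per row and $R$ has one node per column, with $(i, j) \in E$ iff $v_j[i] = 1$. Color node $i \in L$ white iff $u_i = 0$. Then $v_j \perp u$ iff every neighbor of $j \in R$ is white; column-sparsity of $V$ translates directly to degree $\polylog(m)$ on the $R$ side, and each bit-flip of $u$ corresponds to a color flip of one node in $L$. Next, for $\dAW \to \dIndep$, identify $V(H) = L$ and $E(H) = \{N(j) : j \in R\}$, and let $S \subseteq V$ be the set of white nodes. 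Then ``some $j \in R$ has all-white neighbors'' is exactly ``some hyperedge lies in $S$,'' i.e.\ $S$ is not independent, so the answer of $\dIndep$ is the negation of $\dAW$ (and color/insertion-deletion updates correspond one-to-one).

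Third, for $\dIndep \to \dDNF$, introduce one variable $x_v$ per $v \in V(H)$ and one clause $C_e = \bigwedge_{v \in e} x_v$ per hyperedge $e \in E(H)$. Taking $\phi(x_v) = 1$ iff $v \in S$, the formula $F = \bigvee_e C_e$ is satisfied iff some hyperedge is contained in $S$ (i.e.\ $S$ is not independent), clause widths inherit the bound $\polylog(m)$, and each insert/delete to $S$ is a single variable flip. Finally, for $\dDNF \to \dOV$, use the standard trick of doubling variables to kill negations: introduce coordinates $i_0$ and $i_1$ for each variable $x_i$, maintain $u[i_b] = 1 - [\phi(x_i) = b]$ (so exactly one of the two is $0$), and for each clause $C_j$ let the column $v_j$ have a $1$ in coordinate $i_1$ for every positive literal $x_i \in C_j$ and a $1$ in coordinate $i_0$ for every negative literal $\neg x_i \in C_j$. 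Then $v_j^T u = 0$ iff $C_j$ is satisfied, sparsity is preserved, and flipping a variable is two entry-flips of $u$, which is absorbed in the $O(1)$ overhead.

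I do not expect any of these four reductions to be the main obstacle — they are all essentially bijective relabelings with at most a constant-factor blowup in instance size and an additive $O(1)$ in update/query time. The only mildly non-trivial step is the handling of negative literals in $\dDNF \to \dOV$ via variable doubling; once that is in place, composing the cycle gives the claimed equivalence.
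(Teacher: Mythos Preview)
Your proposal is correct and takes essentially the same approach as the paper: both arguments exhibit a cycle of syntactic relabelings among the four problems, with the only nontrivial ingredient being the variable-doubling trick to eliminate negations when leaving $\dDNF$. The paper structures its reductions as a 3-cycle $\dDNF\to\dAW\to\dIndep\to\dDNF$ together with a direct $\dAW\leftrightarrow\dOV$ identification (placing the doubling in $\dDNF\to\dAW$), whereas you use a single 4-cycle and place the doubling in $\dDNF\to\dOV$; since $\dAW$ and $\dOV$ are bijectively identified in both write-ups, this is a cosmetic difference only.
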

\begin{proof}
We will show how to map any instance of each problem to an instance
of another problem. It is clear from the mapping how we should handle
the updates. 

($1\rightarrow2$): Given an instance $(F,\phi)$ of $\dDNF$ where
$F$ has $n$ variables and $m$ clauses, we can construct $G=(L,R,E)$
where $|L|=2n$ and $|R|=m$. We write $L=(l_{1},\dots,l_{2n})$ and
$R=(r_{1},\dots,r_{m})$. If a clause $C_{j}$ contains $x_{i}$,
then $(l_{2i-1},r_{j})\in E$. If $C_{j}$ contains $\neg x_{i}$,
then $(l_{2i},r_{j})$. If $\phi(x_{i})=1$, then $l_{2i-1}$ is white
and $l_{2i}$ is black. Otherwise, $l_{2i-1}$ is black and $l_{2i}$
is white. It is easy to see that $F(\phi)$ iff there is a node in
$R$ whose neighbors are all white. 

($2\rightarrow3$): Given an instance $G=(L,R,E)$ of $\dAW$ where
nodes in $L$ are colored, we construct a hypergraph $H=(V,E_{H})$
where $V=L$ and, for each $j$, the edge $e_{j}$ of $E_{H}$ is
a set of neighbors of $r_{j}$ in $G$. Let $S\subseteq V$ be the
set corresponding to white nodes in $L$. It holds that $S$ is \emph{not
}independent in $H$ iff there is a node in $R$ whose neighbors are
all white. 

($3\rightarrow1$): Given an instance $(H,S)$ of $\dIndep$ where
$H=(V,E)$ where $V=(v_{1},\dots,v_{n})$ and $E=(e_{1},\dots,e_{m})$,
we construct a formula $F$ with variables $\X=(x_{1},\dots,x_{n})$
and $m$ clauses. For each $e_{j}\in E$, we construct a clause $C_{j}$
where $C_{j}$ contains $x_{i}$ iff $v_{i}\in e_{j}$. For each $v_{i}\in S$,
we set $\phi(x_{i})=1$, otherwise $\phi(x_{i})=0$. Therefore, $F(\phi)=1$
iff $S$ is \emph{not }independent in $H$.

($2\rightleftarrows4$): Given an instance $G=(L,R,E)$ of $\dAW$
where nodes in $L$ are colored, the corresponding instance of $\dOV$
is the matrix $V\in\{0,1\}^{n\times m}$ where $V$ is a bi-adjacency
matrix of $G$, i.e. $v_{ij}=1$ iff $(l_{i},r_{j})\in E$. The vector
$u$ corresponds to all \emph{black }nodes i.e. $u_{i}=1$ iff $l_{i}$
is black. It hols that a node $r_{j}\in R$ whose neighbors are all
white iff $r_{j}$ has no black neighbor iff a vector $v_{j}\in V$
where $u^{T}v_{j}=0$.
\end{proof}

\section{An Issue Regarding Promise Problems and $\protect\dP$-reductions}

\label{sec:issue promise}

Even, Selman and Yacobi \cite[Theorem 4]{EvenSY84} show that there
is a promise problem $\P$ where $\P\in\mathsf{NP}\cap\mathsf{coNP}$
but $\P$ is $\mathsf{NP}$-hard under Turing reduction (see also
the survey by Goldreich \cite{Goldreich06a}). As $\dP$-reduction
can be viewed as a dynamic version of Turing reduction, in this section,
we prove that there is an dynamic problem with similar properties:
\begin{thm}
	\label{thm:intersection but hard}There is a promise problem $\D$
	such that $\D\in\dNP\cap\dcoNP$ but $\D$ is $\dNP$-hard.
\end{thm}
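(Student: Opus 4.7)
The plan is to adapt the classical Even--Selman--Yacobi construction to the dynamic setting, leveraging the fact that $\dDNF$ is $\dNP$-complete (\ref{main:thm:NP:complete}) and has the self-reducibility needed to recover a witness bit-by-bit from a promise oracle. Concretely, I would define the promise problem $\D$ as follows: an instance is a triple $(F, \phi, i)$ where $(F,\phi)$ is a $\dDNF$ instance with $m$ clauses and $i \in \{1, \dots, \lceil \log m \rceil\}$, and the promise is that $F(\phi) = 1$; the instance is a yes-instance iff the $i$-th bit of the least index $j^*$ with $C_{j^*}(\phi) = 1$ equals $1$, and a no-instance iff that bit equals $0$. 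An instance-update flips one variable of $\phi$ (and possibly updates $i$), so $\D$ fits the dynamic framework of~\ref{sec:problem} with $\lambda_{\D}(n) = \Theta(\log n)$.

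Next I would show $\D \in \dNP \cap \dcoNP$. The two verifier-families are essentially the same: by~\ref{lem:reduction:1} and~\ref{lem:first to exist}, the $\fDNF$ problem admits a search verifier-family in $\dFNP$ that, with a reward-maximizing proof-sequence, maintains precisely the least index $j^*$ of a satisfied clause after every update. Given this index, the verifier for $\D$ reads the $i$-th bit of $j^*$ in $O(\polylog n)$ time and outputs $1$ (resp. $0$) if that bit is $1$ (resp. $0$); this matches~\ref{def:NP} for membership in $\dNP$, and the symmetric verifier witnesses membership in $\dcoNP$. Here we crucially use the fact that under the promise $F(\phi) = 1$ such a $j^*$ always exists, so the verifier's output is well-defined and consistent on promise-instances.

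For $\dNP$-hardness, I would give a $\dP$-reduction from $\dDNF$ to $\D$ (which suffices by~\ref{thm:DNF complete} and~\ref{cor:reductions}). Given a $\dDNF$ instance $(F, \phi)$, the reducing algorithm simulates $\lceil \log m\rceil$ parallel oracle-families $\O_1, \dots, \O_{\lceil \log m\rceil}$ for $\D$, where $\O_i$ is initialized on $(F, \phi, i)$ and receives each update to $\phi$. After each update the algorithm collects the output bits $b_1, \dots, b_{\lceil \log m \rceil}$ from the oracles, interprets them as a candidate index $j$, and verifies in $O(\polylog n)$ time whether $C_j(\phi) = 1$ (possible because every clause has only $\polylog n$ literals, per~\ref{def:range DNF}); it outputs $1$ if verification succeeds and $0$ otherwise. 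Correctness follows from a case analysis: if $F(\phi) = 1$ the promise holds, each $\O_i$ returns the correct bit of $j^*$, verification succeeds, and we output $1$; if $F(\phi) = 0$ the promise fails and the oracles may return arbitrary bits, but since no clause is satisfied under $\phi$ the verification step necessarily fails regardless of $j$, so we output $0$.

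The main obstacle I anticipate is the interaction between the promise and the dynamic aspect: when the adversary drives $\phi$ across instances where $F(\phi)$ toggles between $0$ and $1$, we must ensure that each $\D$ oracle continues to function correctly once the promise is re-established, even though it may have received updates while in a ``don't-care'' regime. This is handled because the oracles are used in a black-box manner through~\ref{def:P-reduction} and because the final verification step is purely a static check on the candidate index $j$; nowhere do we rely on oracle outputs being meaningful on non-promise instances. A secondary care is to verify that the extension of our complexity-class definitions to promise problems (as noted at the start of~\ref{sec:promise}) preserves~\ref{cor:reductions}, so that a $\dP$-reduction from the $\dNP$-complete $\dDNF$ to the promise problem $\D$ genuinely transfers $\dNP$-hardness.
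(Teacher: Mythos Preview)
Your hardness reduction is sound, but the membership argument for $\D \in \dNP \cap \dcoNP$ has a genuine gap. Definition~\ref{def:NP} imposes two conditions: (completeness) a reward-maximizing proof-sequence makes the verifier accept every yes-instance, and (soundness) \emph{every} proof-sequence makes the verifier reject every no-instance. Your verifier wraps a $\dFNP$ verifier for $\fDNF$ and outputs the $i$-th bit of whatever index $j$ that inner verifier currently maintains. But the class $\dFNP$ (see~\ref{sec:search}) guarantees nothing under a non-reward-maximizing proof-sequence, so a dishonest prover can steer the $\fDNF$ verifier to any satisfied-clause index $j \neq j^*$; if the $i$-th bit of $j$ is $1$ while the $i$-th bit of $j^*$ is $0$, your verifier accepts a no-instance, violating soundness. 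More conceptually, certifying that a given satisfied clause has the \emph{least} index is a universal statement (``all smaller-indexed clauses are unsatisfied''), and there is no evident polylogarithmic-size proof for it. So your chosen promise problem does not obviously lie in $\dNP$ at all.

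The paper avoids this by taking $\D$ to be a \emph{pair} of $\dDNF$ instances $((F_1,\phi_1),(F_2,\phi_2))$: yes if $F_1(\phi_1)=1$ and $F_2(\phi_2)=0$, no if $F_1(\phi_1)=0$ and $F_2(\phi_2)=1$, don't-care otherwise. Now $\dNP$-membership is trivially sound (the proof is a satisfied clause of $F_1$, which simply cannot exist when $F_1(\phi_1)=0$), and symmetrically for $\dcoNP$ via $F_2$. Hardness then follows by the binary-search idea of~\ref{lem:first to exist}: to decide $\dDNF$ on $(F,\phi)$, repeatedly split the surviving clauses in half and query $\D$ on the pair (lower half, upper half); on don't-care queries (both halves satisfied, or neither) the oracle's arbitrary answer still keeps you in a satisfied half whenever one exists, and a final single-clause check handles the unsatisfiable case exactly as in your own reduction.
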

Before proving, we give some discussion. This seemingly contradictory
situations, for both static and dynamic problems, are only because
the reduction can ``exploit'' don't care instances of a promise
problem $\D$. If $\D$ does not contain any don't care instance,
then \ref{thm:intersection but hard} cannot hold assuming that $\dNP\not\subseteq\dcoNP$. 

Suppose by contradiction that $\D\in\dNP\cap\dcoNP$, $\D$
is $\dNP$-hard, and $\D$ does not contain any don't care instance.
\ref{thm:easiness transfering} (3) implies that $\dNP\subseteq\dNP\cap\dcoNP$,
contradicting the assumption that $\dNP\not\subseteq\dcoNP$.
In other words, assuming that $\dNP\not\subseteq\dcoNP$, this
explains why we need the condition that $\D$ does not contain any
don't care instance in \ref{thm:easiness transfering} (3). 

Now, we prove \ref{thm:intersection but hard}.
\begin{proof}
	The proof are analogous to the statement for static problems in \cite[Theorem 4]{EvenSY84}.
	The problem $\D$ is the problem where we are given two instances
	$(F_{1},\phi_{1}),(F_{2},\phi_{2})$ from the $\dDNF$ problem. $(F_{1},\phi_{1}),(F_{2},\phi_{2})$
	is a yes-instance if $F_{1}(\phi_{1})=1$ and $F_{2}(\phi_{2})=0$.
	It is a no-instance if $F_{1}(\phi_{1})=0$ and $F_{2}(\phi_{2})=1$.
	Otherwise, it is a don't care instance. At each step, we can update
	either a variable assignment $\phi_{1}$ or $\phi_{2}$. It is easy
	to see that this problem is in $\dNP\cap\dcoNP$. Now, there
	is a $\dP$-reduction from $\fDNF$ to $\D$ essentially by using
	the same proof as in \ref{lem:first to exist}. So we omitted it in
	this version.
\end{proof}

\bibliographystyle{alpha}
\bibliography{paper}

\end{document}